
\documentclass[a4paper]{amsart} 
\usepackage[utf8]{inputenc}
\usepackage[english]{babel}
\usepackage[T1]{fontenc}

\usepackage{latexsym,amsmath,amsfonts,amssymb,amsthm,amstext,textcomp,mathabx,relsize}
\usepackage{mathrsfs,dutchcal,bbm}
\usepackage{enumerate}
\usepackage{hyperref}

\usepackage[capitalize]{cleveref}



\makeatletter
\def\@seccntformat#1{%
	\protect\textup{\protect\@secnumfont
		\ifnum\pdfstrcmp{subsection}{#1}=0 \bfseries\fi
		\ifnum\pdfstrcmp{subsubsection}{#1}=0 \itshape\fi
		\csname the#1\endcsname
		\protect\@secnumpunct
	}%
}
\renewcommand{\@upn}{}
\DeclareRobustCommand{\crefnosort}[1]{%
	\begingroup\@cref@sortfalse\cref{#1}\endgroup
}
\makeatother

%
%
%
\newcommand{\EE}{\mathbb{E}}
\newcommand{\CC}{\mathbb{C}}

\newcommand{\NN}{\mathbb{N}}
\newcommand{\PP}{\mathbb{P}}

\newcommand{\RR}{\mathbb{R}}

%
%
\newcommand{\ii}{\mathrm{i}}
\newcommand{\eul}{\mathrm{e}}

\newcommand{\dist}{\mathrm{dist}}

\newcommand{\loc}{\mathrm{loc}}
\newcommand{\reg}{\mathrm{reg}}

\newcommand{\Id}{\mathrm{d}}

%
%
%
\newcommand{\restr}{\mathord{\upharpoonright}}

\newcommand{\LO}{\mathscr{B}}
\newcommand{\dom}{\mathcal{D}}
\newcommand{\fdom}{\mathcal{Q}}
\newcommand{\HP}{\mathfrak{k}}
\newcommand{\Fock}{\mathcal{F}}
\newcommand{\NO}{N}
\newcommand{\ad}{a^\dagger}
\newcommand{\expv}[1]{\epsilon(#1)}
\newcommand{\Geb}{\Lambda}
\newcommand{\id}{\mathbbm{1}}
\newcommand{\UV}{\kappa}
\newcommand{\bbeta}{\gamma}
\newcommand{\ball}[1]{\mathcal{B}_{#1}}
\newcommand{\ap}{{\dot a}}
%
%
\newcommand{\ve}{\varepsilon}
\newcommand{\vp}{\varphi}

\newcommand{\vr}{\varrho}
\newcommand{\vt}{\vartheta}
\newcommand{\vs}{\varsigma}

%
%
\newcommand{\wt}[1]{\widetilde{#1}}
\newcommand{\ol}[1]{\overline{#1}}

\newcommand{\mr}[1]{\mathring{#1}} 
%
%

\newcommand{\fr}[1]{\mathfrak{#1}}
\newcommand{\mc}[1]{\mathcal{#1}}
\newcommand{\scr}[1]{\mathscr{#1}}
%
%
\renewcommand{\le}{\leqslant}
\renewcommand{\ge}{\geqslant}
%

%
%
\theoremstyle{plain}
\newtheorem{thm}{Theorem}[section]

\newtheorem{lem}[thm]{Lemma}
\newtheorem{cor}[thm]{Corollary}
\newtheorem{prop}[thm]{Proposition}
\theoremstyle{definition}
\newtheorem{defn}[thm]{Definition}

\theoremstyle{remark}
\newtheorem{example}[thm]{Example} 
\newtheorem{rem}[thm]{Remark}
\numberwithin{equation}{section}
%
\crefname{equation}{}{}
\Crefname{equation}{}{}
\crefname{enumi}{}{}
\Crefname{enumi}{}{}
\crefname{lem}{Lemma}{Lemmas}
\Crefname{lem}{Lemma}{Lemmas}
\crefname{thm}{Theorem}{Theorems}
\Crefname{thm}{Theorem}{Theorems}
\crefname{prop}{Proposition}{Propositions}
\Crefname{prop}{Proposition}{Propositions}
\crefname{cor}{Corollary}{Corollaries}
\Crefname{cor}{Corollary}{Corollaries}
\crefname{defn}{Definition}{Definitions}
\Crefname{defn}{Definition}{Definitions}

\title[Feynman--Kac formulas for multi-polaron semigroups]{Feynman--Kac formulas for semigroups generated by
multi-polaron Hamiltonians in magnetic fields and on general domains}
\author{Benjamin Hinrichs}
\address{Benjamin Hinrichs, Universit\"at Paderborn, Institut f\"ur Mathematik, Institut f\"ur Photonische Quantensysteme, Warburger Str. 100, 33098 Paderborn, Germany}
\email{benjamin.hinrichs@math.upb.de}

\author{Oliver Matte}
\address{Oliver Matte, Aalborg Universitet, Institut for Matematiske Fag, Skjernvej 4a, 9220 Aalborg, Denmark}
\email{oliver@math.aau.dk}
%
\usepackage{xcolor}\usepackage{cancel}

\begin{document}

\begin{abstract} 
	\noindent 
	We prove Feynman--Kac formulas for the semigroups generated by selfadjoint operators in a class containing
	Fr\"ohlich Hamiltonians known from solid state physics. The latter model multi-polarons, i.e., 
	a fixed number of quantum mechanical
	electrons moving in a polarizable crystal and interacting with the quantized phonon field generated by the crystal's
	vibrational modes. Both the electrons and phonons can be confined to suitable open subsets of Euclidean space.
	We also include possibly very singular magnetic vector potentials and electrostatic potentials. 
	Our Feynman--Kac formulas comprise Fock space operator-valued multiplicative functionals 
	and can be applied to every vector in the underlying Hilbert space.
	In comparison to the renormalized Nelson model, for which analogous Feynman--Kac formulas are known, 
	the analysis of the creation and annihilation terms in the multiplicative functionals requires novel ideas to overcome difficulties
	caused by the phonon dispersion relation being constant.
	Getting these terms under control and generalizing other construction steps so as to cover confined systems
	are the main achievements of this article.
\end{abstract}

\maketitle

\section{Introduction and main results}

\subsection{General introduction}

When electrons move in a crystal lattice comprised of oppositely charged ions they create lattice distortions (phonons)
in their neighbourhoods, which  back-react on the electrons via the polarization they carry. 
This results in each electron being accompanied by a cloud of phonons lowering its mobility. 
Such a composite object is called a polaron;
when several electrons are considered we speak of multi-polarons.
In \cite{Frohlich.1954}, H.~Fr\"ohlich introduced a Hamiltonian governing the dynamics of multi-polarons.
In his model the electrons are treated as non-relativistic quantum mechanical particles without spin degrees of freedom
whereas the phonons, which can be created and annihilated along the time evolution, are described by a 
non-relativistic bosonic quantum field.

Starting with the seminal work of Feynman \cite{Feynman.1955}, one main technique in the investigation of polaron models 
has been functional integration, both in theoretical physics and mathematics.
Shortly, in \cref{ssec:prevwork}, we shall give numerous references to mathematical papers exploiting various
Feynman--Kac formulas for vacuum expectation values of members of the semigroup generated by Fr\"ohlich's Hamiltonian.

Building on recent mathematical studies of Feynman--Kac formulas in non- and semi-relativistic quantum field theory \cite{GueneysuMatteMoller.2017,MatteMoller.2018,Matte.2021,HinrichsMatte.2022}, we devote this article to the derivation of Feynman--Kac formulas in Fr\"ohlich's multi-polaron model
for semigroup members applied to arbitrary vectors in the underlying Hilbert space.
Since electrons interact via repulsive Coulomb potentials and 
polarons exposed to external electric and magnetic fields are often treated -- see \cite{AnapolitanosGriesemer.2014,Ghanta.2021,GriesemerWellig.2013,Loewen.1988} for mathematical results
on polarons in magnetic fields -- we shall in fact work under almost optimal conditions on the electrostatic
potential and optimal conditions on the magnetic vector potential still permitting to define the Hamiltonian 
via semibounded quadratic forms. In some articles, the electrons are confined to open regions of Euclidean space 
\cite{AnapolitanosLandon.2013,FrankLiebSeiringerThomas.2011}, for technical reasons at least,
and sometimes both the electrons and the phonons are confined \cite{FrankSeiringer.2021,BrooksMitrouskas.2023}. 
Therefore, we shall work under general hypotheses on the electron-phonon interaction covering 
the latter two situations as well as the original Fr\"{o}hlich model.

Together with the inequalities established in this article, our Feynman--Kac formulas can form the basis for further studies
of the semigroup and ground state eigenvectors (if any) in polaron models in analogy to the theory of
magnetic Schr\"{o}dinger semigroups \cite{BroderixHundertmarkLeschke.2000,Simon.1982} 
and its extensions to the related Pauli--Fierz model of non-relativistic quantum electrodynamics \cite{Matte.2016}
and Nelson's model for nucleon-meson interactions \cite{MatteMoller.2018,HiroshimaMatte.2019}.

\subsection{Brief description of the main result}

\noindent
The Hamiltonian studied in this article and denoted $H(v)$ acts in the Hilbert space $L^2(\Geb,\Fock)$
where $\Geb\subset\RR^d$ is open and non-empty, $d\ge2$ and $\Fock$ is the bosonic Fock space modeled over
the separable Hilbert space $\HP=L^2(\mc{K},\fr{K},\mu)$ for one phonon. The operator $H(v)$ is a selfadjoint realization 
via quadratic forms of the heuristic expression
\begin{align}\label{Hheur}
	\frac{1}{2}(-\ii\nabla_x-A(x))^2+V(x)+N+\int_{\mc{K}}(v(x,k)\ad(k)+\ol{v}(x,k)a(k))\Id\mu(k).
\end{align}
Here the nabla-operator acts on the position variables $x\in\Geb$ of the electron(s). 
The vector potential $A:\Geb\to\RR^d$ is merely assumed to be locally square-integrable. 
The electrostatic potential $V:\Geb\to\RR$
has a locally integrable positive part and its negative part has an extension to $\RR^d$
belonging to the $d$-dimensional Kato class.
Further, $N$ is the phonon number operator and in \eqref{Hheur} we use, for presentational purposes,
physics notation for the pointwise creation and annihilation operators $\ad(k)$ and $a(k)$, respectively, for each $k\in\mc{K}$. 
Finally, in applications to polarons,
 $v(\cdot,k):\Geb\to\CC$ is a proper or generalized eigenfunction of the Dirichlet Laplacian on $\Geb$ for every $k\in\mc{K}$;
 when multiple polarons are treated, it is a suitable combination of possibly generalized eigenfunctions.
 Then the measure space $(\mc{K},\fr{K},\mu)$ is given in terms of some spectral decomposition of the appropriate
 Dirichlet Laplacian.
Canonical mathematical interpretations of all contributions to \eqref{Hheur} and the Hamiltonian $H(v)$
itself will be introduced carefully in \cref{sec:H(v)}.

Under a natural assumption on the probability of Brownian motion moving large distances inside $\Geb$, precisely stated in \cref{tailbound} and for example satisfied for any convex and open $\Geb$, our main result \cref{thm:mainFK}
is a Feynman--Kac formula for the semigroup generated by $H(v)$ of the form
\begin{align}\label{FKintro}
	(\eul^{-tH(v)}\Psi)(x)&=\EE\big[\chi_{\{t<\tau_{\Geb}(x)\}}
	\eul^{-\ol{S}_t(x)}W_{t}(x)^*\Psi(b^x_t)\big],\quad \text{a.e. $x\in\Geb$},
\end{align}
for all $\Psi\in L^2(\Geb,\Fock)$ and $t\ge0$. 
Here $b=(b_t)_{t\ge0}$ is a $d$-dimensional Brownian motion, $b_t^x\coloneq x+b_t$, and
\begin{align}\label{defexitGeb}
\tau_{\Geb}(x)&\coloneq\inf\{t\ge0|\,b_t^x\in\Geb^c\}
\end{align}
is the first exit time of $b^x$ from $\Geb$. Further,
$\ol{S}_t(x)$ contains a path integral of $V$ along $b^x$ and a suitably generalized Stratonovich integral
of $A(b^x)$ with respect to $b$. Finally, the Fock space operator-valued random variable $W_{t}(x)$ is
explicitly given in terms of a generalization $u_{t}(x)$ of Feynman's complex action  \cite{Feynman.1955}
and two stochastic processes $(U^{\pm}_{t}(x))_{t\ge0}$ attaining
values in the one-phonon Hilbert space $\HP$.
More precisely, it is given by the expression
\begin{align*}
	W_{t}(x)
	&=\eul^{u_{t}(x)}\bigg(\sum_{n=0}^\infty
	\frac{(-1)^n}{n!}\ad(U^{+}_{t}(x))^n\eul^{-tN/2}\bigg)\bigg(\sum_{n=0}^\infty
	\frac{(-1)^n}{n!}\ad(U^{-}_{t}(x))^n\eul^{-tN/2}\bigg)^*.
\end{align*}
Here $\ad(f)$ with $f\in\HP$ is a ``smeared'' creation operator and the two series converge in Fock space operator norm.
Notice that $W_{t}(x)$ is formally normal ordered. In particular, 
$\langle\expv{0}|W_{t}(x)\expv{0}\rangle_{\Fock}=\eul^{u_t(x)}$
with $\expv{0}$ denoting the vacuum vector in $\Fock$, so that \cref{FKintro} implies
\begin{align}\label{FKvacuumRR}
\langle f_1\expv{0}|\eul^{-tH(v)}f_2\expv{0}\rangle&=
\int_{\Geb}\EE\big[\chi_{\{t<\tau_{\Geb}(x)\}}\eul^{u_{t}(x)-\ol{S}_t(x)}\ol{f_1}(x)f_2(b_t^x)\big]\Id x,
\end{align}
for all $f_1,f_2\in L^2(\Geb)$ and $t\ge0$.

\subsection{Remarks on closely related previous work}\label{ssec:prevwork}

The idea to write $W_t(x)$ in the above form stems from \cite{GueneysuMatteMoller.2017}. 
The Feynman--Kac formulas derived in \cite{GueneysuMatteMoller.2017} for $\Geb=\RR^d$ and $A=0$
apply to a class of models containing the Pauli--Fierz model, Nelson's model and the polaron model
provided that ultraviolet regularizations are introduced in the particle-field interaction terms in all these models.
In fact, spin degrees of freedom are allowed for in \cite{GueneysuMatteMoller.2017} as well, which lead to more complicated
expressions for $W_t(x)$.
The proper Nelson model, where the artificial regularizations can be removed by an energy renormalization \cite{Nelson.1964},
has been covered subsequently in \cite{MatteMoller.2018} for $\Geb=\RR^d$ and $A=0$; 
a relativistic version of Nelson's model in two spatial dimensions is treated in \cite{HinrichsMatte.2022,HinrichsMatte.2023}. 
An overview over other types of Feynman--Kac formulas for semigroups in ultraviolet regular 
quantum field theoretic models and over their applications can be found in the textbook \cite{HiroshimaLorinczi.2020}; 
see also \cite{BetzSpohn.2005} for the ultraviolet regularized polaron model.

The mathematical analysis of the interaction term involving $v$ in \eqref{Hheur} requires
some care as well, since $v(x,\cdot)$ is not square-integrable over $\mc{K}$ in physically relevant
applications. For instance, the interaction can directly be introduced
as an infinitesimal form perturbation \cite{LiebYamazaki.1958}; see also Theorem~\ref{thm:LY} below
which covers general open subsets $\Geb\subset\RR^d$ and $A\in L_{\loc}^2(\Geb,\RR^{d})$.
Furthermore, Nelson's operator theoretic renormalization procedure \cite{Nelson.1964},
where a sequence of ultraviolet cutoffs going to infinity is considered,
can be adapted to construct polaron Hamiltonians; the articles \cite{GriesemerWuensch.2016} and 
\cite{FrankSeiringer.2021} elaborate on this approach in the
case $A=0$ for $\Geb=\RR^d$ and certain bounded $\Geb\subset\RR^d$, respectively. 
Finally, the more recently developed method of interior boundary 
conditions applies to the polaron model \cite{LampartSchmidt.2019,Posilicano.2020} and yields formulas
for the domain of $H(v)$ and its action on it, at least when 
$\Geb=\RR^d$, $A=0$ and $V$ is slightly more regular.

The Feynman--Kac formula \cref{FKvacuumRR} for matrix elements of the semigroup with respect
to vectors of the form $f_i\expv{0}$ is actually well-known for the Fr\"{o}hlich
multi-polaron Hamiltonian with phonons living on the whole $\mc K = \RR^3$, in the case $A=0$ at least.
In fact, according to known results, both sides of \cref{FKvacuumRR} can be approximated 
by their ultraviolet regularized analogues in this situation, whence
it suffices to have Feynman--Kac formulas for the semigroups of
polaron Hamiltonians with ultraviolet cutoffs. In \cref{exuFeyn} we recall
Feynman's famous expression for $u_t(x)$ in the multi-polaron model on $\RR^3$ \cite{Feynman.1955}
and how it can be obtained as a limit of ultraviolet regularized complex actions.
Suitable bounds on the exponential moments $\EE[\eul^{pu_t(x)}]$, $p>0$, of Feynman's complex action needed to
establish \cref{FKvacuumRR} follow from \cite{DonskerVaradhan.1983,BleyThomas.2015,Bley.2016}.
The same reasoning applies to fiber Hamiltonians in the translation invariant case, i.e., when $\Geb=\mc K=\RR^3$,
$A=0$, $V=0$, and corresponding analogues of \cref{FKvacuumRR} are
well-known as well.
In fact, formulas of type \cref{FKvacuumRR} and their relatives for fiber Hamiltonians have been exploited in numerous
mathematical works on the polaron model addressing properties of minimal energies, the mass shell,
the renormalized mass and related polaron path measures \cite{AnapolitanosLandon.2013,BazaesMukherjeeSellkeVaradhan.2023,BetzPolzer.2022,BetzPolzer.2023,Bley.2016,BleyThomas.2015,DonskerVaradhan.1983,DybalskiSpohn.2020,FrankLiebSeiringerThomas.2011,MukherjeeVaradhan.2020,MukherjeeVaradhan.2020b,Polzer.2023,Spohn.1987}.

As a final remark we mention that the optimal condition $A\in L_\loc^2(\Geb,\RR^d)$ is known 
to be sufficient for obtaining Feynman--Kac formulas for magnetic Schr\"{o}dinger operators defined
by forms since \cite{Hundertmark.1996}.
The technical implementations adopted here are different and have been applied
to the Pauli--Fierz model in \cite{Matte.2021}. 

\subsection{Remarks on mathematical novelties}

In view of the above discussion, 
the first notable novel aspect of \eqref{FKintro} is that no ultraviolet regularization is required any longer in a
Feynman--Kac formula for the semigroup in a polaron type model that can be applied to {\em every} vector $\Psi$ 
in the Hilbert space.
Actually, at least when $\Geb=\RR^d$, formulas for $U^{\pm}_{t}(x)$ without regularizations can easily be deduced
by mimicking a procedure in \cite{MatteMoller.2018}. Onwards, a technical issue shows up, however:

In the polaron model the bosons have the constant dispersion relation $1$, which in Nelson's model is substituted
by the relativistic expression $\omega(k)=(|k|^2+m^2)^{1/2}$, $k\in\RR^3$, for some $m\ge0$.
The fact that $\omega(k)$ grows linearly in $|k|$ actually is helpful in the discussion of the analogues of
$U^{\pm}_{t}(x)$ in Nelson's model. As a consequence, the derivations of some crucial estimates on 
$U^{\pm}_{t}(x)$ in \cite{MatteMoller.2018} break down and replacements are in need for our treatment of the polaron model
(see \cref{sec:Upm}).

A second non-obvious observation made here is that the 
procedures of \cite{MatteMoller.2018} can be abstracted and pushed forward so as to cover confined bosons.
For instance, we shall obtain formulas for the complex action $u_t(x)$ similar to the ones in 
\cite{MatteMoller.2018} that are useful in our general setting to derive 
$x$-uniform exponential moment bounds on $u_t(x)$, whose right hand sides are log-linear in $t$,
and $x$-uniform convergence relations for sequences of exponentials of complex actions.

Also, in the treatment of arbitrary open regions $\Geb$, we need to make use of a large deviation type estimate for Brownian motion.
This was unnecessary in previous articles due to the choice $\Geb=\RR^d$.
Bounds similar to our assumption \cref{tailbound} were for example used in the study of Schr\"odinger operators by probabilistic methods in \cite{McGillivrayStollmannStolz.1995}.

\subsection*{Organization of the article and some notation}

The remainder of the text comprises six sections (\textsection2--\textsection7) and four appendices (A--D):
\begin{enumerate}
\item[\textsection2:] We explain all standing assumptions on $A$, $V$, $v$ and an ultraviolet regular
coupling function $\vt$ and present detailed constructions of $H(v)$ and $H(\vt)$.
\item[\textsection3:] All processes appearing in our Feynman--Kac formulas are introduced in detail and
our main theorems are stated.
\item[\textsection4:] We prove a Feynman--Kac formula for $H(\vt)$
under additional regularity assumptions on $A$ and $V$, pushing results of \cite{GueneysuMatteMoller.2017}
forward to non-zero $A$ and proper subsets $\Geb\subset\RR^d$.
\item[\textsection5:] We derive formulas for $U^{\pm}_{t}(x)$ that stay meaningful when 
ultraviolet regularizations are dropped, and use these to prove convergence relations and
$(t,x)$-uniform exponential moment bounds on $(1+1/t)\|U^{\pm}_{t}(x)\|_{\HP}^2$.
\item[\textsection6:] We prove the aforementioned results on the complex action $u_t(x)$.
\item[\textsection7:] 
We discuss the probabilistic sides of our Feynman–Kac formulas 
considered as bounded operators from $L^p(\Geb,\Fock)$ to $L^q(\Geb,\Fock)$, $1<p\le q\le\infty$.
We derive convergence theorems for these operators
and complete the proof of our Feynman--Kac formulas in a series of approximation steps.
\item[A:]
We derive a relative form bound on the electron-phonon interaction in the spirit of \cite{LiebYamazaki.1958},
allowing for non-zero $A$ and proper subsets $\Geb\subset\RR^d$.
\item[B:]
Magnetic Schr\"odinger operators depend continuously in the strong resolvent sense on the vector potential
with respect to the topology on $L_{\loc}^2(\Geb,\RR^d)$, \cite{LiskevichManavi.1997}. We generalize this result to 
polaron Hamiltonians.
\item[C:]
Differentiability properties of $\HP$-valued functions related to $v$ are discussed.
\item[D:]
For the reader's convenience we explain how Feynman's expression for the complex action in \cite{Feynman.1955}
and its direct analogues for suitable confined systems are related to our formulas for $u_t(x)$.
\end{enumerate}

Let us mention right away that $u_{t}(x)$ and $U^{\pm}_{t}(x)$ depend on an additional technical parameter
$\sigma$ in the remaining part of the text. Changing $\sigma$ will, however, alter these processes only up to indistinguishability.

For clarity we finally recall some standard notation used throughout the text:
\begin{itemize}
	\item We write $a\wedge b\coloneq\min\{a,b\}$ and $a\vee b\coloneq\max\{a,b\}$ for all $a,b\in\RR$.
	\item The characteristic function of a set $M$ is denoted by $\chi_M$.
	\item $\dom(\cdot)$ denotes domains of definition; $\fdom(\cdot)$ denotes form domains of 
	semibounded selfadjoint operators.
	\item $\LO(X)$ is the space of bounded operators on a normed vector space $X$.
	\item For any normed vector space $X$, we let $C_b(\RR^n,X)$ denote set of bounded continuous functions
	from $\RR^n$ to $X$. Likewise, $C_b^1(\RR^n,X)$ is the set of bounded, continuously differentiable functions
	from $\RR^n$ to $X$ whose derivatives are bounded as well.
\end{itemize}


\section{Standing assumptions and construction of polaron Hamiltonians}\label{sec:H(v)}

\noindent
In the following three subsections we shall, respectively, introduce the necessary elements of bosonic Fock space calculus,
explain the hypotheses on our model and discuss the Hamiltonians $H(v)$ and $H(\vt)$.

\subsection{Fock space calculus}\label{ssec:Fock}

Let us briefly introduce the relevant objects from bosonic Fock space theory and recall some of their well-known properties. 
For a textbook introduction with the same approach see \cite{Parthasarathy.1992}.

We always assume that $(\mc{K},\fr{K},\mu)$ is a $\sigma$-finite measure space with the property that
the corresponding Hilbert space
\begin{align*}
	\HP&\coloneq L^2(\mc{K},\fr{K},\mu)
\end{align*}
is separable; $\HP$ will be the state space for a single boson.
The bosonic Fock space $\Fock$ modeled over $\HP$ is then given by
\begin{align}\label{defFock}
	\Fock&\coloneq \bigoplus_{n=0}^\infty\Fock_n .
\end{align}
Here $\Fock_0\coloneq \CC$ and
$\Fock_n$ with $n\in\NN$ is the closed subspace comprised of 
all $\psi_n\in L^2(\mc{K}^{n},\otimes_{i=1}^n\fr{K},\otimes_{i=1}^n\mu)$ 
that are permutation symmetric in the sense that
\begin{align*}
	\psi_n(k_{\pi(1)},\ldots,k_{\pi(n)})&=\psi_n(k_1,\ldots,k_n),\quad(\otimes_{i=1}^n\mu)\text{-a.e.,}
\end{align*}
for every permutation $\pi$ of $\{1,\ldots,n\}$. Here $k_1,\ldots,k_n\in\mc K$.
Convenient in many computations are the exponential vectors
\begin{align*}
	\epsilon(f)&\coloneq(1,f,\ldots,(n!)^{-1/2}f^{\otimes_n},\ldots\;)\in\Fock,\quad f\in \HP,
\end{align*}
where $f^{\otimes_n}(k_1,\ldots,k_n)\coloneq \prod_{j=1}^nf(k_j)$. 
The map $\HP\ni f\mapsto\expv{f}\in\Fock$ is analytic and the set of all exponential vectors is total in $\Fock$.

Next, we introduce the most basic Fock space operators employed in this article:
The creation and annihilation operators corresponding to $h\in\HP$ are, respectively, given by
\begin{align}\label{def:ada}
	\ad(h)\expv{f}&\coloneq\epsilon'(f)h,\quad a(h)\expv{f}\coloneq \langle h|f\rangle_{\HP}\expv{f},\quad f\in\HP,
\end{align}
plus linear and closed extension. We know that
\begin{align}\label{eq:adavp}
	\ad(h)^*&=a(h),\quad\vp(h)=\vp(h)^*,\quad\text{where $\vp(h)\coloneq(\ad(h)+a(h))^{**}$.}
\end{align}
The operator $\vp(h)$ is called the field operator corresponding to $h\in\HP$.

The number operator on $\Fock$ is given by 
\begin{align*}
	\NO\phi\coloneq (n\phi_n)_{n=0}^\infty,
\end{align*} 
for all Fock space vectors $\phi=(\phi_n)_{n=0}^\infty$ such that $N\phi$ again belongs to $\Fock$.
Its action on an exponential vector reads
\begin{align}\label{Nexpv}
	\NO\expv{f}&=\ad(f)\expv{f}=\epsilon'(f)f,\quad f\in\HP.
\end{align}
The form domain of $\NO$ is contained in $\dom(a(f))$, $\dom(\ad(f))$ and $\dom(\vp(f))$
for all $f\in\HP$ and, for $\phi\in\fdom(N)$, we have the relative bounds
\begin{align}\label{rbNa}
\|a(f)\phi\|_{\Fock}&\le \|f\|_{\HP}\|\NO^{1/2}\phi\|_{\Fock},
\\\label{rbNvp}
	\|\vp(f)\phi\|_{\Fock}&\le2^{1/2} \|f\|_{\HP}\|(\NO+1)^{1/2}\phi\|_{\Fock},
	\quad|\langle\phi|\vp(f)\phi\rangle_{\Fock}|\le2\|f\|_{\HP}\|\NO^{1/2}\phi\|_{\Fock}\|\phi\|_{\Fock}.
\end{align}
Finally, we shall sometimes deal with the pointwise annihilation operator. 
Slightly deviating from the notation used in the introduction, we use the symbol $\ap$ here, 
to distinguish it from the smeared annihilation operator defined above. 
Its most convenient mathematical interpretation for us is to consider it as an operator on $L^2(\Geb,\fdom(\NO))$, where
\begin{align}\label{defGebd}
d\in\NN,\;d\ge2\quad\text{and $\Geb\subset\RR^d$ is open and non-empty,}
\end{align}
and $\fdom(\NO)$ is considered as a Hilbert space equipped with the form norm associated with $N$.
We can then define the pointwise annilation operator as the unique bounded linear map
\begin{align}\label{eq:pointwise0}
	\ap:L^2(\Geb,\fdom(\NO))&\longrightarrow L^2(\mu;L^2(\Geb,\Fock))\coloneq\int_{\mc{K}}^\oplus L^2(\Geb,\Fock)\Id\mu
\end{align}
such that
\begin{align}\label{eq:pointwise}
	\ap g\expv{f}=fg\expv{f},\quad f\in\HP,\,g\in L^2(\Geb).
\end{align}
Henceforth, representatives of $\ap\Psi\in L^2(\mu;L^2(\Geb,\Fock))$ are denoted by $\ap(\cdot)\Psi$. 
Then, for any map $x\mapsto f_x$ in $\mc{L}^\infty(\Geb,\HP)$ and all $\Psi\in L^2(\Geb,\fdom(\NO))$ and $\Phi\in L^2(\Geb,\Fock)$, 
\begin{align}\label{eq:pwsmeared}
	\int_{\Geb}\langle \Phi(x)| a(f_x)\Psi(x)\rangle_{\Fock} \Id x
	=\int_{\mc{K}}\int_{\Geb} \langle f_x(k)\Phi(x) | \ap(k)\Psi(x)\rangle_{\Fock}\Id x\,\Id\mu(k).
\end{align}

\subsection{Standing hypotheses}\label{ssec:hypvAV}

Heading towards a mathematical definition of the Hamiltonians studied in this article,
we use this \lcnamecref{ssec:hypvAV} to explain our standing assumptions on the coupling functions 
$\vt$ and $v$ determining the electron-phonon interaction
as well as on the electrostatic potential~$V$ and magnetic vector potential~$A$. 

\subsubsection{Assumptions on coupling functions}\label{sssec:coupling}

Occasionally, in technical proof steps for instance, we shall consider an ultraviolet regular coupling function
\begin{align}\label{def:vt}
\vt\in \mc{L}^\infty(\RR^d,\HP),
\end{align}
and we shall typically write $\vt_x$ for $\vt(x)$. Then the field operators $\vp(\vt_x)$
are well-defined and their domain contains $\fdom(N)$ for all $x\in\RR^d$. This will in general not be the case for the coupling function
$v$ covering the physically most relevant cases:

Recalling \cref{defGebd},
we always assume that $v:\Geb\times\mc{K}\to\CC$ and $\lambda:\mc{K}\to[0,\infty)$ 
are measurable functions having the following properties:
\begin{enumerate}
\item[(a)] For every $k\in\mc{K}$, the function $v(\cdot,k)\in C^\infty(\Geb)$ is bounded with bounded first order partial derivatives and
\begin{align*}
-\frac{1}{2}\Delta_{x}v(x,k)&=\lambda(k)v(x,k),\quad x\in\Geb.
\end{align*}
\item[(b)] $L_1(v)<\infty$ and $\lim_{E\to\infty}L_E(v)=0$ with $L_E(v)\ge0$ given by
\begin{align*}
L_E(v)^2&\coloneq\sup_{x\in\Geb}\int_{\mc{K}}
\frac{E|v(x,k)|^2+|\nabla_xv(x,k)|^2/2}{(E+\lambda(k))^2}\Id\mu(k),
\quad E\in[1,\infty).
\end{align*}
\end{enumerate}
Again we shall typically write $v_x$ for the function $v(x,\cdot)$.

\begin{example}\label{exFpolaron}
It is elementary to verify the above hypotheses (a) and (b) in
the Fr\"{o}hlich model for $\nu\in\NN$ polarons, where $\Geb=\RR^{3\nu}$, $\HP=L^2(\RR^3)$,
\begin{align*}
\lambda(k)=\frac{1}{2}|k|^2\quad\text{and}\quad
v(x,k)=g\sum_{j=1}^\nu\frac{\eul^{-\ii k\cdot x_j}}{(2\pi)^{3/2}}\cdot\frac{2^{1/2}}{|k|},\;\text{if $k\not=0$,}
\end{align*}
for all $k\in\RR^3$ and $x=(x_1,\ldots,x_\nu)\in\RR^{3\nu}$ and some coupling constant $g\in\RR\setminus\{0\}$. 
\end{example}

\begin{example}\label{exFS}
Our assumptions cover the confined polaron model treated in \cite{FrankSeiringer.2021,BrooksMitrouskas.2023}. More generally, let
$m\in\NN$, $m\ge2$, and 
$\mc{G}\subset\RR^m$ be bounded, open and connected with a $C^{1,\delta}$-boundary for some $\delta\in(0,1)$.
(For instance, $\partial\mc{G}$ could be a $C^2$-hypersurface.) Let $\Delta_{\mc{G}}$ denote the Dirichlet Laplacian
on $\mc{G}$ and let $0<\lambda(1)<\lambda(2)\le\lambda(3)\le \ldots\:$ be the eigenvalues of $-(1/2)\Delta_{\mc{G}}$, 
counting multiplicities. Further, let $\{\phi_n:\,n\in\NN\}$ be an orthonormal basis of $L^2(\mc{G})$
of eigenfunctions such that $-\Delta_{\mc{G}}\phi_n=2\lambda(n)\phi_n$ for all $n\in\NN$.
By elliptic regularity, $\phi_n\in C^\infty(\mc{G})$ for every $n\in\NN$, and thanks to, e.g., 
\cite[Appendix~C]{FrankSeiringer.2021} we know that all these eigenfunctions are bounded with bounded
partial derivatives of first order. Now
let $\theta:[0,\infty)\to\RR$ be a strictly positive non-increasing function such that
$\sup_{t\ge0}t^{\ve-1+m/2}\theta(t)<\infty$ for some $\ve>0$. By virtue of \cite[Equation~(C.10)]{FrankSeiringer.2021} we further know that
\begin{align*}
\sup_{y\in\mc{G}}\sum_{n=1}^\infty\theta(\lambda(n))\frac{E|\phi_n(y)|^2+|\nabla\phi_n(y)|^2/2}{(E+\lambda(n))^2}
&\le c_{\mc{G}}\int_{\RR^m}\frac{\theta(|k|^2/2)}{E+|k|^2/2}\Id k,
\end{align*}
where the integral on the right hand side is finite for every $E\ge1$.
Thus, choosing $d=m\nu$ and $\Geb=\mc{G}^\nu$ for some $\nu\in\NN$, as well as
$\HP= L^2(\NN,\fr{P}(\NN),\zeta)$ with $\zeta$ denoting the counting measure
on the power set $\fr{P}(\NN)$ of $\NN$, and finally
\begin{align*}
v(x,n)=g\sum_{j=1}^\nu\theta(\lambda(n))^{1/2}\phi_n(x_j),\quad x=(x_1,\ldots,x_\nu)\in\mc{G}^\nu,\, n\in\NN,
\end{align*}
for some $g\in\RR\setminus\{0\}$,
we see that the hypotheses (a) and (b) are fulfilled in the present example.
The confined multi-polaron model for $\nu$ polarons (as appearing in \cite{FrankSeiringer.2021} for $\nu=1$) 
is obtained by choosing $m=3$ and $\theta(t)=t^{-1}$ for $t\ge\lambda(1)$.
\end{example}

\subsubsection{Assumptions on electrostatic and magnetic vector potentials}\label{sssec:hypVA}
We shall always assume that $A\in L^2_{\loc}(\Geb,\RR^{d})$, without further reference. 

With regards to the electrostatic potential $V:\Geb\to\RR$,
we always assume that $V=V_+-V_-\restr_{\Geb}$ where $V_\pm\ge0$, $V_+\in L_\loc^1(\Geb)$ and the negative part 
is the restriction to $\Geb$ of a function $V_-:\RR^d\to\RR$ 
belonging to the $d$-dimensional Kato class. The latter assumption means that $V_-$ is measurable and
\begin{align*}
	\lim_{r\downarrow0}\sup_{x\in\RR^{d}}\int_{\RR^d}\chi_{[0,r)}(|x-y|)E(x-y)V_-(y)\Id y&=0, 
\end{align*}
where, with $\omega_d$ denoting the hypersurface area of the unit sphere $S^{d-1}$,
\begin{align}\label{EFund}
E(0)\coloneq0,\qquad
\forall z\in\RR^d\setminus\{0\}:\quad
E(z)\coloneq \begin{cases}((d-2)\omega_d)^{-1}|z|^{2-d}, & d\ge 3,\\-(2\pi)^{-1}\ln(|z|), & d=2.\end{cases}
\end{align}
Then $V_-$ is infinitesimally form bounded with respect to $-\Delta$ and,
for every $p\in(0,\infty)$, there exists $c_p\in(0,\infty)$ such that
\begin{align}\label{expmbdVminus}
	\sup_{x\in\RR^{d}}\EE\Big[\eul^{p\int_0^tV_-(b_s^x)\Id x}\Big]&\le \eul^{c_p(1+t)},\quad t\ge0,
\end{align}
for any $d$-dimensional standard Brownian motion $b=(b_t)_{t\ge0}$. As before, $b_t^x\coloneq x+b_t$.
Proofs of these facts on $V_-$ can be found in \cite{AizenmanSimon.1982}.
Given any $\ve>0$, we have in particular the quadratic form bound
\begin{align}\label{rfbVminusDirlaplace}
V_-\restr_{\Geb}&\le-\frac{\ve}{2}\Delta_{\Geb}+c_\ve,
\end{align}
for some $c_\ve\in(0,\infty)$ also depending on $V_-$, of course,
where $\Delta_{\Geb}$ denotes the Dirichlet Laplacian on $\Geb$.

\subsection{Definition of the Hamiltonian via quadratic forms}\label{sssec:defHam}

We now construct polaron type Hamiltonians for systems confined to $\Geb$.

For all $j\in\{1,\ldots,d\}$, we first define a symmetric operator $w_j$ in $L^2(\Geb,\Fock)$ by
\begin{align*}
	w_j\Psi&\coloneq-\ii\partial_j\Psi-A_j\Psi,
	\quad\Psi\in\dom(w_j)\coloneq\mathrm{span}\{f\phi|\,f\in C_0^\infty(\Geb),\,\phi\in\Fock\}.
\end{align*}
Then we introduce a ``maximal'' non-negative quadratic form setting
\begin{align}\label{def:posform}
	\fr{q}^{\max}[\Psi]&\coloneq\frac{1}{2}\sum_{j=1}^{d}\|w_j^*\Psi\|^2
	+\int_{\Geb}\|\NO^{1/2}\Psi(x)\|_{\Fock}^2\Id x+
	\int_{\Geb}V_+(x)\|\Psi(x)\|_{\Fock}^2\Id x,
	\quad
\end{align}
for all $\Psi\in\dom(\fr{q}^{\max})\coloneq 
L^2(\Geb,\fdom(\NO))\cap\fdom(V_+\id_{\Fock})\cap\bigcap_{j=1}^{d}\dom(w_j^*)$.
The form $\fr{q}^{\max}$ is closed as a sum of non-negative closed quadratic forms.
The selfadjoint operator representing it corresponds to Neumann boundary conditions, 
whereas the form corresponding to Dirichlet boundary conditions is the ``minimal'' form
\begin{align*}
	\fr{q}^{\min}\coloneq\ol{\fr{q}^{\max}\restr_{\operatorname{span}\{f\phi\mid f\in C_0^\infty(\Geb),\phi\in\fdom(N)\}}}.
\end{align*}
\begin{rem}\label{rem:qminqmax}
In the case $\Geb=\RR^{d}$, we know that $\fr q^{\min}=\fr q^{\max}$  \cite{Simon.1979b,Matte.2017}.
\end{rem}
As we know thanks to suitable diamagnetic inequalities (see, e.g., \cite[\textsection4]{Matte.2017}), the quadratic
form defined by $V_-$ on $L^2(\Geb,\Fock)$ is again infinitesimally bounded with respect to $\fr{q}^{\min}$.
More precisely, for every $\ve>0$, the form bound \cref{rfbVminusDirlaplace} implies
\begin{align}\label{qfbVminusq}
\int_{\Geb}V_-(x)\|\Psi(x)\|_{\Fock}^2\Id x&\le \ve\fr{q}^{\min}[\Psi]+c_\ve\|\Psi\|^2,
\quad \Psi\in\dom(\fr{q}^{\min}).
\end{align}
In general and in particular in the physically most interesting cases, the functions
$v_x= v(x,\cdot)$ are not square-integrable, whence the field operators $\vp(v_x)$, $x\in\Geb$, that heuristically
should describe the electron-phonon interaction are ill-defined. As is well-known (at least for some $A$)
the interaction term is, however, meaningful when considered as a quadratic form with domain $\dom(\fr{q}^{\min})$.
This is the content of the next theorem, which follows from an adaption and minor elaboration
of a well-known argument by Lieb and Yamazaki \cite{LiebYamazaki.1958}.
For the reader's convenience, we present its proof in Appendix~\ref{app:rfbinteraction}.
\begin{thm}\label{thm:LY}
	For every $\Psi\in\dom(\fr{q}^{\min})$, the iterated integral
	\begin{align}\label{def:frw}
		\fr{w}(v)[\Psi]&\coloneq2\Re\int_{\mc{K}}
		\int_{\Geb}\ol{v}(x,k)\langle\Psi(x)|\ap(k)\Psi(x)\rangle_{\Fock}\Id x\, \Id\mu(k)
	\end{align}
	is well-defined. The so-obtained quadratic form $\fr{w}(v)$ is infinitesimally
	$\fr{q}^{\min}$-bounded. In fact, for all $E\ge1$,
	\begin{align}\label{qfbdinteraction}
		|\fr{w}(v)[\Psi]|&\le 2 L_E(v)\fr{q}^{\min}[\Psi]+2 L_E(v)E\|\Psi\|^2,\quad \Psi\in\dom(\fr{q}^{\min}),
	\end{align}
	where $L_E(v)\xrightarrow{E\to\infty}0$ by assumption. Finally,
	\begin{align}\label{eq:fieldinteraction}
		v_x\in \HP, \;x\in\Geb\quad\Rightarrow\quad\fr{w}(v)[\Psi]&=
		\int_{\Geb}\langle\Psi(x)|\vp(v_{x})\Psi(x)\rangle_{\Fock}\Id x,
		\quad\Psi\in\dom(\fr{q}^{\min}).
	\end{align}
\end{thm}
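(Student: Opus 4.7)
The plan is to carry out the classical Lieb--Yamazaki manoeuvre, adapted here to the magnetic setting and general open $\Geb$: exploit the eigenvalue equation in hypothesis~(a) to trade the missing $k$-decay of $v(x,\cdot)$ for derivatives in $x$ that get absorbed into the magnetic gradient. For $E\ge 1$, I set $v_E(x,k)\coloneq v(x,k)/(E+\lambda(k))$; hypothesis~(a) gives the pointwise identity
\begin{equation*}
v(x,k)=E\,v_E(x,k)-\tfrac{1}{2}\Delta_xv_E(x,k),
\end{equation*}
while hypothesis~(b) translates into the essential bounds $\esssup_{x\in\Geb}\|v_E(x,\cdot)\|_{\HP}\le E^{-1/2}L_E(v)$ and $\esssup_{x\in\Geb}\|\nabla_xv_E(x,\cdot)\|_{\HP}\le\sqrt{2}\,L_E(v)$.

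I will first establish \eqref{qfbdinteraction} on the form core $\mc{C}\coloneq\operatorname{span}\{f\phi\mid f\in C_0^\infty(\Geb),\,\phi\in\fdom(\NO)\}$; the bound together with well-definedness of the iterated integral in \eqref{def:frw} then extends by density and continuity to all of $\dom(\fr{q}^{\min})$. For $\Psi\in\mc{C}$, the idea is to substitute the Lieb--Yamazaki identity into \eqref{def:frw} --- Fubini applying immediately since $\Psi$ has compact support and $v_E,\nabla_xv_E$ are essentially bounded --- and then integrate by parts once in $x$ in the $\Delta_xv_E$ term (no boundary contribution thanks to compact support in $\Geb$). This produces
\begin{equation*}
\fr{w}(v)[\Psi]=2E\,\Re\int_{\Geb}\langle\Psi(x)|a(v_E(x,\cdot))\Psi(x)\rangle_{\Fock}\Id x+\sum_{j=1}^d\Re\int_{\mc{K}}\!\!\int_{\Geb}\partial_j\ol{v_E}(x,k)\,\partial_j\langle\Psi(x)|\ap(k)\Psi(x)\rangle_{\Fock}\Id x\,\Id\mu(k).
\end{equation*}

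In the second sum I expand $\partial_j$ on the inner product via the Leibniz rule and use $\partial_j\Psi=-\ii(w_j+A_j)\Psi$ on $\mc{C}$. The crucial observation is that the two $A_j$-contributions cancel exactly, leaving
\begin{equation*}
\partial_j\langle\Psi|\ap(k)\Psi\rangle_{\Fock}=\ii\langle w_j\Psi|\ap(k)\Psi\rangle_{\Fock}-\ii\langle\Psi|\ap(k)w_j\Psi\rangle_{\Fock},
\end{equation*}
an identity free of any direct dependence on $A$. Applying Cauchy--Schwarz --- first in $k$ via the standard relation $\int_{\mc{K}}\|\ap(k)\Psi(x)\|_{\Fock}^2\Id\mu(k)=\|\NO^{1/2}\Psi(x)\|_{\Fock}^2$, then in $x$ --- and using the relative bound \eqref{rbNa} on the first sum, I control each contribution by a constant times $L_E(v)\bigl(E^{1/2}\|\Psi\|+\|w_j\Psi\|\bigr)\|\NO^{1/2}\Psi\|$. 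Standard AM--GM and recognition of the summands of \eqref{def:posform} then yield \eqref{qfbdinteraction}. The final claim \eqref{eq:fieldinteraction} in the regular case $v_x\in\HP$ follows at once by applying \eqref{eq:pwsmeared} to \eqref{def:frw} read in reverse.

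The main difficulty is the interplay between the integration by parts and the roughness of $A$: the classical Leibniz step demands some smoothness of $\Psi$, which is why I work on $\mc{C}$; conversely, the exact cancellation of the $A_j$ cross-terms before any estimation begins is precisely what allows the argument to close uniformly under the merely local square-integrability assumption on $A$, with the final extension to $\dom(\fr{q}^{\min})$ accomplished by form closure.
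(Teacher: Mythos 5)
Your plan rests on the correct Lieb--Yamazaki mechanism, and the observation that the $A_j$ cross-terms cancel in the Leibniz formula for $\partial_j\langle\Psi|\ap(k)\Psi\rangle_{\Fock}$ is indeed the central structural point; the same cancellation underlies the paper's \cref{lemgottfried1}. Two small sign slips along the way: from $w_j=-\ii\partial_j-A_j$ one obtains $\partial_j\Psi=\ii(w_j+A_j)\Psi$ rather than $-\ii(w_j+A_j)\Psi$, and accordingly the Leibniz identity should read $\partial_j\langle\Psi|\ap(k)\Psi\rangle_{\Fock}=-\ii\langle w_j\Psi|\ap(k)\Psi\rangle_{\Fock}+\ii\langle\Psi|\ap(k)w_j\Psi\rangle_{\Fock}$, the negative of what you wrote; neither affects the structure of the estimate.

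The genuine gap is the closing density step. Establishing \eqref{qfbdinteraction} on $\mc{C}$ shows that $\fr{w}(v)\restr_{\mc{C}}$ is relatively form-bounded and hence extends continuously to some $\wt{\fr{w}}$ on $\dom(\fr{q}^{\min})$. But \eqref{def:frw} is an \emph{iterated} integral whose integrand, as the paper notes right after the theorem, is in general not jointly integrable in $(x,k)$: the inner $x$-integral makes sense for $\mu$-a.e.\ $k$ for every $\Psi\in\dom(\fr{q}^{\min})$, but its integrability in $k$ --- precisely the asserted ``well-definedness'' of \eqref{def:frw} --- does not propagate through a form-closure argument, nor can one simply declare $\wt{\fr{w}}=\fr{w}(v)$ off the core. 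This is exactly what the paper's \cref{lemgottfried1,lemgottfried2} supply: by approximation from the core they prove $\langle\Psi|\ap(k)\Psi\rangle_{\Fock}\in\mr{W}^{1,1}(\Geb)$ for \emph{every} $\Psi\in\dom(\fr{q}^{\min})$ and $\mu$-a.e.\ $k$, with the explicit weak-derivative formula \cref{gottfried1}, so that the partial integration \eqref{eq:LY} can be carried out pathwise in $k$ directly on general $\Psi$; this yields the $L^1(\mu)$ bound on the inner integral, hence well-definedness of \eqref{def:frw} and its identification with the Lieb--Yamazaki expression \eqref{interaction1}. To close the gap along your route you would need to polarize your identity on $\mc{C}$, show the inner integrals are $L^1(\mu)$-Cauchy as $\Psi_n\to\Psi$ in form norm, and identify the limit via pointwise a.e.\ convergence along a subsequence --- which is effectively re-deriving the same Sobolev input.
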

\begin{rem}
	Note that the order of integration in \eqref{def:frw} matters, 
	as the integrand in general is not simultaneously integrable with respect to $(x,k)\in\Geb\times\mc{K}$.
\end{rem}
\begin{proof}
	The well-definedness of \cref{def:frw} follows from \cref{prop:LY}, 
	whereas \cref{qfbdinteraction} is proved in \cref{cor:relbound}.
	The identity \cref{eq:fieldinteraction} follows from \cref{eq:adavp,eq:pwsmeared}.
\end{proof}
\begin{defn}
The polaron Hamiltonian corresponding to the coupling function $v$ is
the unique selfadjoint operator $H(v)$ representing the following quadratic form,
which is closed and semibounded by the infinitesimal $\fr{q}^{\min}$-boundedness of both $V_-$ and $\fr{w}(v)$,
\begin{align*}
	\fr{h}(v)[\Psi]\coloneq \fr{q}^{\min}[\Psi]-\int_{\Geb}V_-(x)\|\Psi(x)\|_{\Fock}^2\Id x+\fr{w}(v)[\Psi],\quad
	\Psi\in\dom(\fr{h}(v))=\dom(\fr{q}^{\min}).
\end{align*}
\end{defn}

Let us discuss the relation of our definition of the polaron Hamiltonian with a more direct one for ultraviolet regular electron-phonon interactions described by $\vt$, cf. \cref{def:vt}.
In view of the first bound in \cref{rbNvp} the direct integral of $\vp(\vt_x)$, $x\in\Geb$, 
is infinitesimally operator bounded with respect to $H(0)$. Hence, by the last implication in Theorem~\ref{thm:LY}
we do not run into notational conflicts setting
\begin{align}\label{Hvt}
(H(\vt)\Psi)(x)&\coloneq(H(0)\Psi)(x)+\vp(\vt_x)\Psi(x),\quad\text{a.e. $x\in\Geb$}, 
\end{align}
for all $\Psi\in\dom(H(\vt))=\dom(H(0))$. While we introduce $H(\vt)$ mainly to work with it in
technical proof steps, we point out  that no regularity assumptions other than $\vt\in \mc{L}^\infty(\Geb,\HP)$ are imposed on the
$x$-dependence of $\vt$. 

Sometimes polaron Hamiltonians are defined by approximating $v$ by a sequence of 
coupling functions in $\mc{L}^\infty(\Geb,\HP)$ and observing resolvent convergence of the so-obtained sequence
of regularized Hamiltonians to some limiting Hamiltonian. The latter then must agree with $H(v)$ in view of
 \cref{Hvt} and the next corollary. We shall need its statement
in an approximation step in our proof of the Feynman--Kac formula for $H(v)$. The uniformity
in $A$ of the convergence \cref{Aunifresconv} is exploited in \cref{sec:rescontA}.

\begin{cor}\label{cor:normresconvUV}
	Let also $v_1,v_2,\ldots:\Geb\times\mc{K}\to\CC$ be measurable and satisfy the assumptions {\rm(a)} and {\rm(b)}
	of \cref{sssec:coupling} with the same $\lambda$.
	Assume that $L_1(v_n-v)\xrightarrow{n\to\infty}0$. Then $H(v_n)$ converges to $H(v)$ in the norm
	resolvent sense as $n\to\infty$. In fact, we find $A$-independent numbers $c,n_0>0$ such that
	$H(v)+c\ge1/2$ and $H(v_n)+c\ge1/2$ for all integers $n\ge n_0$ in the quadratic form sense and such that
	\begin{align}\label{Aunifresconv}
	\lim_{n\to\infty}\sup_{A\in L^2_{\loc}(\Geb,\RR^{d})} \sup_{\Psi\in\dom(\fr{q}^{\min}):\|\Psi\|=1}Q(n,A,\Psi)&=0,
	\end{align}
	where we abbreviate, recalling that both $H(v_n)$ and $H(v)$ depend on $A$,
	\begin{align*}
	Q(n,A,\Psi)
	&\coloneq\big\|(H(v)+c)^{1/2}\big((H(v_n)+c)^{-1}-(H(v)+c)^{-1}\big)(H(v)+c)^{1/2}\Psi\big\|.
	\end{align*}
\end{cor}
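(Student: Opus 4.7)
The plan combines the $A$-independent infinitesimal form bound of Theorem~\ref{thm:LY} with the $A$-independent bound \eqref{qfbVminusq} on $V_-$ and a second resolvent identity read at the sesquilinear form level. All constants in \eqref{qfbdinteraction} and \eqref{qfbVminusq} depend on $v$ and $V_-$ but not on $A$, which is what ultimately delivers the supremum in \eqref{Aunifresconv}.

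First, I would establish the two semiboundedness statements uniformly in $A$. A trivial comparison of denominators gives $L_E(w)^2 \leq E\, L_1(w)^2$ for every $E \geq 1$ and every admissible $w$, so $L_1(v - v_n) \to 0$ forces $L_E(v_n - v) \to 0$ for each fixed $E$. Choosing $E \geq 1$ with $2L_E(v) \leq 1/8$ (permitted by assumption~(b)) and $n_0$ with $2L_E(v_n - v) \leq 1/8$ for $n \geq n_0$, the triangle inequality yields $2L_E(v_n) \leq 1/4$ for such $n$. Feeding this into Theorem~\ref{thm:LY} and combining with \eqref{qfbVminusq} for $\varepsilon = 1/4$, I obtain an $A$-independent $c > 0$ for which
\begin{equation*}
\fr{h}(v)[\Psi] + c\|\Psi\|^2 \geq \tfrac{1}{2}\fr{q}^{\min}[\Psi] + \tfrac{1}{2}\|\Psi\|^2, \qquad \Psi \in \dom(\fr{q}^{\min}),
\end{equation*}
and analogously for $\fr{h}(v_n)$ with $n \geq n_0$, proving the semiboundedness part of the claim. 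Writing $H \coloneq H(v) + c$ and $H_n \coloneq H(v_n) + c$, the identity $\fr{h}(v) - \fr{h}(v_n) = \fr{w}(v - v_n)$, a second application of Theorem~\ref{thm:LY} to $v - v_n$, and absorption of $\fr{q}^{\min}$ into $\fr{h}(v_n) + c\|\cdot\|^2$ via the previous inequality then yield the mutual form equivalence $\|H^{1/2} H_n^{-1/2}\| + \|H_n^{1/2} H^{-1/2}\| \leq C$ uniformly in $A$ and $n \geq n_0$, after possibly enlarging $n_0$.

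The second resolvent identity at the form level gives, for $\Phi, \Psi \in \dom(\fr{q}^{\min})$,
\begin{equation*}
\bigl\langle \Phi \,\big|\, H^{1/2}(H_n^{-1} - H^{-1}) H^{1/2} \Psi \bigr\rangle = \fr{w}(v - v_n)\bigl[ H_n^{-1} H^{1/2} \Phi,\, H^{-1/2}\Psi \bigr],
\end{equation*}
with $\fr{w}(v - v_n)[\cdot,\cdot]$ denoting the sesquilinear polarization, which satisfies a polarized version of \eqref{qfbdinteraction}: $|\fr{w}(v - v_n)[\alpha,\beta]| \leq 2L_1(v - v_n)\bigl(\fr{q}^{\min}[\alpha]^{1/2}\fr{q}^{\min}[\beta]^{1/2} + \|\alpha\|\|\beta\|\bigr)$ at $E = 1$. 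The first step directly gives $\|H^{-1/2}\Psi\| \leq \sqrt 2\|\Psi\|$ and $\fr{q}^{\min}[H^{-1/2}\Psi]^{1/2} \leq \sqrt 2\|\Psi\|$, whereas writing $H_n^{-1} H^{1/2}\Phi = H_n^{-1/2}(H_n^{-1/2} H^{1/2}\Phi)$ and invoking the form equivalence produces analogous bounds in terms of $\|\Phi\|$ uniformly in $A$ and $n$. These estimates combine to $Q(n, A, \Psi) \leq C' L_1(v - v_n)\|\Psi\|$ with $C'$ independent of $A$, $n$ and $\Psi$; since $L_1(v - v_n) \to 0$, taking the suprema yields \eqref{Aunifresconv}.

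The main obstacle is the control of $H_n^{-1} H^{1/2}\Phi$ in the final step: because $H^{1/2}$ is unbounded, this cannot be estimated directly and forces the detour through the $A$- and $n$-uniform form equivalence, which in turn is possible only because $L_E(v - v_n) \to 0$ already at the small fixed value of $E$ chosen at the outset.
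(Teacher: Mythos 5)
Your argument is correct and in essence follows the same route as the paper: choose $E$ with $2L_E(v)$ small, use \cref{qfbVminusq,qfbdinteraction} to obtain $A$-independent uniform lower bounds on the forms, and compare $\fr{h}(v_n)$ with $\fr{h}(v)$ via \cref{qfbdinteraction} applied to $v_n-v$. The only difference is that the paper then invokes an abstract black-box lemma, \cite[Lemma~D.1]{HiroshimaMatte.2019}, to convert that form comparison directly into the weighted resolvent bound, whereas you unpack this step by hand through the sesquilinear second resolvent identity, the uniform form equivalence $\|H^{1/2}H_n^{-1/2}\|\le C$, and a polarized version of \cref{qfbdinteraction} -- a more self-contained but longer rendering of the same idea.
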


\begin{proof}
	We know that all forms $\fr{h}(v_n)$, $n\in\NN$, and $\fr{h}(v)$ have the common domain $\dom(\fr{q}^{\min})$.
	We pick some $E\ge1$ such that $2L_E(v)<1/2$.
	Since $L_E(v_n)\to L_E(v)$, $n\to\infty$, we can apply \cref{qfbVminusq,qfbdinteraction} 
	to find $A$-independent $c,n_0\ge0$ 
	such that $\fr{h}(v_n)[\Psi]+c\|\Psi\|^2\ge (E\|\Psi\|^2+\fr{q}^{\min}[\Psi])/2$ for all $\Psi\in\dom(\fr{q}^{\min})$ 
	and $n\ge n_0$ and analogously for $\fr{h}(v)$.
	The bound \eqref{qfbdinteraction} with $v_n-v$ put in place of $v$ now entails
	\begin{align*}
		|\fr{h}(v_n)[\Psi]-\fr{h}(v)[\Psi]|
		\le 4L_E(v_n-v)\big(\fr{h}(v)[\Psi]+c\|\Psi\|^2\big),\quad \Psi\in\dom(\fr{q}),\,n\ge n_0.
	\end{align*}
	For all $n\ge n_0$ with $4 L_E(v_n-v)\le 1/2$ we now infer directly from
	\cite[Lemma~D.1]{HiroshimaMatte.2019} that
	$Q(n,A,\Psi)\le 8 L_E(v_n-v)\|\Psi\|$,
	which implies all assertions.
\end{proof}


\section{Presentation of the Feynman--Kac formulas}\label{ssec:FK}

\noindent
We now move to the presentation of our Feynman--Kac formulas for the operators 
$H(\vt)$ and $H(v)$ defined in \cref{sssec:defHam}.
These formulas comprise several stochastic processes that we shall introduce step by step in what follows.

\subsection{Brownian motions and their time-reversals}

In the whole article we fix some filtered probability space
$(\Omega,\fr{F},(\fr{F}_t)_{t\ge0},\PP)$ satisfying the usual assumptions, i.e.,
the measure space $(\Omega,\fr{F},\PP)$ is complete and, for all $t\ge0$, the 
sub-$\sigma$-algebra $\fr{F}_t$ contains the set $\fr{N}$ of all $\PP$-zero sets and satisfies
$\fr{F}_t=\bigcap_{r>t}\fr{F}_r$. Expectations with respect to $\PP$ will be denoted
by $\EE$, conditional expectations given $\fr{F}_t$ by $\EE^{\fr{F}_t}$ for any $t\ge0$.
Furthermore, $b=(b_t)_{t\ge0}$ always denotes a $d$-dimensional $(\fr{F}_t)_{t\ge0}$-Brownian motion. 
For any $\fr{F}_0$-measurable $q:\Omega\to\RR^{d}$, we set $b^q_t\coloneq q+b_t$. 

Let $t>0$, $x\in\RR^{d}$ and consider the reversed Brownian motion
\begin{align*}
	b^{t;x}&\coloneq(b^x_{t-s})_{s\in[0,t]}.
\end{align*}
We know from the theory of reversed diffusion processes developed in \cite{HaussmannPardoux.1986,Pardoux.1986} 
that $b^{t;x}$ is a continuous semimartingale on $(\Omega,\fr{F},\PP)$ with respect
to the filtration $(\fr{G}^t_s)_{s\in[0,t]}$, where $\fr{G}^t_s$ is the smallest sub-$\sigma$-algebra of $\fr{F}$
containing $\fr{N}$ such that $b_{t-s}$ and all increments $b_t-b_{t-r}$
with $r\in[0,s]$ are $\fr{G}^t_s$-measurable.

\subsection{Path integrals involving $A$ and $V$}\label{ssecStrat}

As is well-known from the theory of Schr\"{o}dinger operators, the vector potential should contribute 
to the Feynman--Kac integrand via the Stratonovic integral of $A(b^x)$ along $b^x$.
A canonical generalization of this integral for our merely locally square-integrable $A$ in the case $\Geb=\RR^d$ is
\begin{align}\label{defPhitx}
	\Phi_t(x)&\coloneq \frac{1}{2}\int_0^tA(b_s^x)\Id b_s^x-\frac{1}{2}\int_0^tA(b_s^{t;x})\Id b_s^{t;x},\quad t\ge0.
\end{align}
As shown in \cite[Lemma~9.1]{Matte.2021},
the process $\Phi(x)=(\Phi_t(x))_{t\ge0}$ is well-defined and adapted to $(\fr{F}_t)_{t\ge0}$ for a.e. $x\in\RR^{d}$.
The first stochastic integral in \eqref{defPhitx} is constructed using the filtration $(\fr{F}_t)_{t\ge0}$,
the second one by means of $(\fr{G}^t_s)_{s\in[0,t]}$ for each fixed $t>0$.
This type of generalized Stratonovic integral has been used in \cite{FoellmerProtter.2000} to derive It\^{o} formulas for
functions of low regularity. Unaware of \cite{FoellmerProtter.2000}, 
the second author employed the definition \eqref{defPhitx}
in \cite{Matte.2021} to derive Feynman--Kac formulas for Pauli--Fierz Hamiltonians with singular coefficients.
The idea behind \eqref{defPhitx} is simple: On the one hand, it is common to define the Stratonovic integral of $A(b^x)$ 
along $b^x$ over the time interval $(0,t]$ as the limit in probability of the arithmetic mean of Riemann sums 
corresponding to partitions $0=t_0<t_1<\ldots t_n=t$ using initial and end point evaluations, respectively:
\begin{align*}
	&\sum_{i=1}^n\frac{1}{2}(A(b^x_{t_{i-1}})+A(b^x_{t_i}))(b_{t_i}-b_{t_{i-1}})
	\\
	&=\frac{1}{2}\sum_{i=1}^nA(b^x_{t_{i-1}})(b_{t_i}-b_{t_{i-1}})
	-\frac{1}{2}\sum_{j=1}^nA(b^{t;x}_{s_{j-1}})(b^{t;x}_{s_j}-b^{t;x}_{s_{j-1}}),
\end{align*}
where $s_j\coloneq t-t_{n-j}$, $j\in\{0,\ldots,n\}$. On the other hand, by the general theory of stochastic integration
with respect to continuous semimartingales we know that the two sums in the second line converge in
probability to the respective terms in \eqref{defPhitx} as the mesh of the partition goes to zero.
The idea to construct Feynman--Kac integrands for Schr\"{o}dinger operators with
very singular $A$ by combining ``forwards and backwards''
integrals was already present but technically implemented differently in \cite{Hundertmark.1996}.

If, for instance, $A\in C_b^1(\RR^{d},\RR^{d})$, then we know that $\Phi(x)$ 
is well-defined for all $x\in\RR^{d}$ and, $\PP$-a.s., we obtain the familiar expression
\begin{align}\label{Phitxreg}
	\Phi_t(x)&=\int_0^tA(b_s^x)\Id b_s+\frac{1}{2}\int_0^t\mathrm{div}A(b_s^x)\Id s,\quad t\ge0;
\end{align}
see, e.g., \cite[Lemma~8.3]{Matte.2021}.

Still considering the case $\Geb=\RR^d$, we next set
\begin{align}\label{defStx}
	S_t(x)&\coloneq\int_0^tV(b_s^x)\Id s-\ii\Phi_t(x),\quad t\ge0,
\end{align}
for every $x\in\RR^{d}$ for which $\Phi(x)$ is defined.
Here we should remark that, for any given $x\in\RR^{d}$, we only know $\PP$-a.s. that $V(b^x):[0,\infty)\to\RR$ is 
locally integrable \cite[Lemma~2]{FarisSimon.1975}. 
We therefore introduce the convention that the path integrals of $V$ in \eqref{defStx}
have to be read as $0$ at every $\gamma\in\Omega$ for which $V(b_\bullet^x(\gamma))$ is not locally integrable.

Finally, we consider general open $\Geb$.
If $A$ and $V_+$ have extensions to locally square-integrable and locally integrable functions on all of $\RR^d$, 
respectively, the above construction carries over.
Otherwise, we pick open sets $\Geb_n\subset\Geb$, $n\in\NN$, satisfying
$\ol{\Geb_n}\subset\Geb_{n+1}$ for all $n\in\NN$ and $\bigcup_{n=1}^\infty\Geb_n=\Geb$.
We define $S^n_t(x)$ by putting $\chi_{\Geb_n}A$ and $\chi_{\Geb_n}V_+$, extended by $0$
to functions on $\RR^d$, in place of $A$ and $V_+$ in the above formulas.
 Introducing the first exit times
\begin{align}\label{exittaun}
	\tau_n(x)&\coloneq \inf\{ t\ge0|\,b_t^x\in\Geb_n^c\},\quad x\in\RR^d,
\end{align}
we then know from \cite[\textsection9.2]{Matte.2021} that $S_t^n(x)=S^m_t(x)$, on
$\{t<\tau_n(x)\}\setminus\mc{N}$ for all $m,n\in\NN$ with $n<m$ and some possibly $(t,x)$-dependent
$\PP$-zero set $\mc{N}$. For fixed $t\ge0$ and a.e. $x\in\RR^d$,
it therefore makes sense to define $S_t(x)\coloneq S^n_t(x)$ on $\{t<\tau_n(x)\}\setminus\mc{N}$ for every $n\in\NN$.
For convenience we set $S_t(x)=0$ on $\{t\ge\tau_{\Geb}(x)\}\cup\mc{N}$.
Since $\bigcup_{n=1}^\infty\{t<\tau_n(x)\}=\{t<\tau_{\Geb}(x)\}$ with $\tau_{\Geb}(x)$ given by \cref{defexitGeb},
we thus obtain a well-defined $\fr{F}_t$-measurable random variable $S_t(x)$, whose
construction does, up to changes on $\PP$-zero sets, 
not depend on the chosen sequence $(\Geb_n)_{n\in\NN}$.

\subsection{Processes appearing in the interaction terms}\label{ssec:Upm}

For every $x\in\RR^d$, we introduce the pathwise well-defined $\HP$-valued Bochner-Lebesgue integrals
\begin{align}\label{def:UpmUV}
	U_{\reg,t}^-(\vt;x)&\coloneq\int_0^{t}\eul^{-s}\vt_{b_s^x}\Id s,\quad 
	U_{\reg,t}^+(\vt;x)\coloneq\int_0^{t}\eul^{-(t-s)}\vt_{b_s^x}\Id s,\quad t\ge0.
\end{align}
In general, when $\vt$ is replaced by $v$, these expressions can no longer be defined as $\HP$-valued integrals.
Assuming $v\in \mc{L}^\infty(\Geb,\HP)$ in addition to our standing hypotheses on $v$
and employing It\^{o}'s formula, we shall, however,
find alternative expressions that stay meaningful when the additional assumption on $v$ is dropped again. 
We are thus led to the definitions \cref{defUinftyminus,defUinftyplus}. There, we split $v$ into two parts separated by the level
set $\{\lambda=\sigma\}$ for some $\sigma\in[2,\infty)$. More precisely, we set
\begin{align}\label{def:cutoffv}
v_{\sigma,x}&\coloneq\chi_{\{\sigma\le\lambda\}}v_x,\quad \tilde{v}_{\sigma,x}\coloneq\chi_{\{\lambda<\sigma\}}v_x,
\quad\text{if $x\in\Geb$,}
\end{align}
and furthermore, noting that $\lambda-1\ge1$ on $\{v_{\sigma,x}\not=0\}$ since $\sigma\ge2$,
\begin{align}\label{defbetainfty}
	\beta_{\sigma,x}^\pm&\coloneq (\mp1+\lambda)^{-1}v_{\sigma,x},\quad 
	\alpha^\pm_{\sigma,x}\coloneq\nabla_x\beta^\pm_{\sigma,x},\quad\text{if $x\in\Geb$.}
\end{align}
Here the gradient is {\em a priori} computed pointwise on $\mc{K}$, i.e., by definition
$\alpha_{\sigma,x}^\pm(k)=(\mp1+\lambda(k))^{-1}\nabla_{x}v(x,k)$, $x\in\Geb$, for each fixed $k\in\mc{K}$.
According to \cref{lembetaC1HP}, however, the map $x\mapsto\beta^\pm_{\sigma,x}$ 
is in $C^1(\Geb,\HP)$ and $\alpha_{\sigma,x}^\pm$ is its gradient at $x\in\Geb$ computed with respect to the
norm on $\HP$. It is convenient to extend the above functions by
\begin{align}\label{vbetaext}
v_{\sigma,x}\coloneq\tilde{v}_{\sigma,x}\coloneq\beta^\pm_{\sigma,x}\coloneq0,
\quad\alpha^\pm_{\sigma,x}\coloneq0,
\quad\text{if $x\in\Geb^c$,}
\end{align}
so that $\tilde{v}_{\sigma},\beta_{\sigma}^\pm\in \mc{L}^\infty(\RR^d,\HP)$ and
$\alpha_{\sigma}^\pm\in \mc{L}^\infty(\RR^d,\HP^d)$ for all $\sigma\in[2,\infty)$.
Then the $\HP$-valued isometric stochastic integrals
\begin{align}\label{def:Msigmapmx}
M_{\sigma,t}^\pm(x)&\coloneq \int_0^t\eul^{\pm s}\alpha^\pm_{\sigma,b_s^x}\Id b_s,\quad t\ge0,
\end{align}
are manifestly well-defined $L^2$-martingales for every $x\in\RR^d$. Finally, we set
\begin{align}\label{defUinftyminus}
	U^-_{\sigma,t}(x)
	&\coloneq U_{\reg,t}^-(\tilde{v}_{\sigma};x)+\beta^-_{\sigma,x}-\eul^{-t}\beta^-_{\sigma,b_t^x}+M_{\sigma,t}^-(x),
	\\\label{defUinftyplus}
	U^+_{\sigma,t}(x)
	&\coloneq U_{\reg,t}^+(\tilde{v}_{\sigma};x)+\eul^{-t}\beta^+_{\sigma,x}-\beta_{\sigma,b_t^x}^+
	+\eul^{-t}M_{\sigma,t}^+(x),
\end{align}
for all $t\ge0$ and $x\in\RR^d$.

The following statement provides the legitimization for the above definitions.
\begin{lem}\label{lemUregU}
Additionally assume that $(1+\lambda)v\in \mc{L}^\infty(\Geb,\HP)$ and set $v_x\coloneq0$ whenever $x\in\Geb^c$.
Let $x\in\Geb$ and $\sigma\in[2,\infty)$.
Then, $\PP$-a.s., $U_{\reg,t}^\pm(v;x)$ and $U^\pm_{\sigma,t}(x)$ 
agree on $\{t<\tau_{\Geb}(x)\}$ for all $t\ge0$.
\end{lem}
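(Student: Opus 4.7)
The plan is to split $v=v_\sigma+\tilde v_\sigma$ and note that $\int_0^t\eul^{-s}\tilde v_{\sigma,b^x_s}\,ds$ and $\int_0^t\eul^{-(t-s)}\tilde v_{\sigma,b^x_s}\,ds$ reproduce exactly $U^-_{\reg,t}(\tilde v_\sigma;x)$ and $U^+_{\reg,t}(\tilde v_\sigma;x)$, so matters reduce to the pair of identities
\begin{align*}
\int_0^t\eul^{-s}\,v_{\sigma,b^x_s}\,ds &= \beta^-_{\sigma,x}-\eul^{-t}\beta^-_{\sigma,b^x_t}+M^-_{\sigma,t}(x),\\
\int_0^t\eul^{-(t-s)}\,v_{\sigma,b^x_s}\,ds &= \eul^{-t}\beta^+_{\sigma,x}-\beta^+_{\sigma,b^x_t}+\eul^{-t}M^+_{\sigma,t}(x),
\end{align*}
to be established on $\{t<\tau_{\Geb}(x)\}$. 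Both are instances of the same It\^o computation.

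The algebraic input is $-\tfrac12\Delta_x\beta^\pm_{\sigma,x}=\lambda\beta^\pm_{\sigma,x}$ on $\Geb$, a pointwise-in-$k$ consequence of the eigenvalue equation in hypothesis~(a) (the factor $(\mp 1+\lambda)^{-1}$ being $x$-independent); equivalently, $\tfrac12\Delta_x\beta^\pm_{\sigma,x}=-v_{\sigma,x}\mp\beta^\pm_{\sigma,x}$. Under the extra assumption $(1+\lambda)v\in\mc{L}^\infty(\Geb,\HP)$, each side belongs to $\mc{L}^\infty(\Geb,\HP)$. I would apply It\^o's formula to the $\HP$-valued processes $\eta^-_s\coloneq\eul^{-s}\beta^-_{\sigma,b^x_s}$ and $\eta^+_s\coloneq\eul^{s}\beta^+_{\sigma,b^x_s}$, localized at the first exit times $\tau_n(x)$ from an exhausting sequence $\Geb_n\uparrow\Geb$ as in \S3.2. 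The two $\beta^\pm$-drift contributions coming from $\partial_s$ and from $\tfrac12\Delta_x\beta^\pm$ cancel exactly, leaving
\[d\eta^\pm_s=-\eul^{\pm s}v_{\sigma,b^x_s}\,ds+\eul^{\pm s}\alpha^\pm_{\sigma,b^x_s}\,db_s.\]
Integrating from $0$ to $t\wedge\tau_n(x)$, rearranging, and multiplying the ``$+$''-equation through by $\eul^{-t}$ yields the two displayed identities on $\{t<\tau_n(x)\}$. Since $\{t<\tau_{\Geb}(x)\}=\bigcup_n\{t<\tau_n(x)\}$ modulo a null set and since both sides of each identity are path-continuous in $t$, one obtains a single $\PP$-null exceptional set valid simultaneously for all $t\ge0$ by first treating rational $t$ and then using continuity.

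The principal obstacle is making the above It\^o calculation rigorous for the $\HP$-valued map $x\mapsto\beta^\pm_{\sigma,x}$, for which Appendix~C only delivers $C^1$-regularity into $\HP$. I would bypass this by testing against arbitrary $f\in\HP$: the scalar field $\phi^\pm_f(x)\coloneq\langle f,\beta^\pm_{\sigma,x}\rangle_{\HP}$ is $C^2$ on $\Geb$ with $\nabla\phi^\pm_f(x)=\langle f,\alpha^\pm_{\sigma,x}\rangle_{\HP}$ and $\Delta\phi^\pm_f(x)=-2\langle f,\lambda\beta^\pm_{\sigma,x}\rangle_{\HP}$, the latter being bounded on $\Geb$ thanks to $(1+\lambda)v\in\mc{L}^\infty(\Geb,\HP)$. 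Scalar It\^o applied to $\eul^{\pm s}\phi^\pm_f(b^x_s)$, together with a stochastic Fubini identifying $\int_0^t\eul^{\pm s}\langle f,\alpha^\pm_{\sigma,b^x_s}\rangle_{\HP}\,db_s$ with $\langle f,M^\pm_{\sigma,t}(x)\rangle_{\HP}$, produces the paired version of each identity; totality of $f\in\HP$ then upgrades these to the $\HP$-valued equalities sought.
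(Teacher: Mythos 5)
Your strategy matches the paper's exactly: you reduce to the identity $I^\pm_{\sigma,t}(x)=\beta^\pm_{\sigma,x}-\eul^{\pm t}\beta^\pm_{\sigma,b^x_t}+M^\pm_{\sigma,t}(x)$ on $\{t<\tau_\Geb(x)\}$ (this is precisely \eqref{ItoM}, i.e.\ \cref{lemItoM}(iii)), derive it by It\^o's formula applied to $\eul^{\pm s}\beta^\pm_{\sigma,b^x_s}$ using the drift cancellation coming from $-\tfrac12\Delta_x\beta^\pm_\sigma=\lambda\beta^\pm_\sigma$, localize at the exit times $\tau_n(x)$, and pass to the limit. The decomposition $v=v_\sigma+\tilde v_\sigma$ and the observation that the $\tilde v_\sigma$-parts agree term by term are also exactly how the paper proceeds.

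The one identifiable flaw is in your final paragraph. You assert that ``Appendix~C only delivers $C^1$-regularity into $\HP$'' and therefore introduce the detour of testing against $f\in\HP$. This is based on a misreading: \cref{lembetaC1HP} gives $C^1$ under the standing hypotheses alone, but \cref{lembetaC2HP} upgrades to $C^\ell$ once $\partial_x^\alpha v(\cdot,k)$ is known to be locally bounded into $\HP$ for $|\alpha|\le\ell-1$. Under your additional assumption $(1+\lambda)v\in\mc L^\infty(\Geb,\HP)$, that input with $\ell=2$ is supplied by \cref{lem:green1}, so $x\mapsto\beta^\pm_{\sigma,x}$ is in fact $C^2(\Geb,\HP)$ and the $\HP$-valued It\^o formula applies directly, after multiplying by a cutoff $\rho_n\in C_0^\infty(\RR^d)$ to get a $C^2$ map on all of $\RR^d$ (a point your ``localize at $\tau_n$'' gloss leaves implicit but which is needed before It\^o can be invoked at all). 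Moreover, your scalar workaround does not actually sidestep the regularity question: the claim that $\phi^\pm_f=\langle f,\beta^\pm_{\sigma,\cdot}\rangle_{\HP}$ is $C^2$ with Laplacian $-2\langle f,\lambda\beta^\pm_{\sigma,\cdot}\rangle_{\HP}$ requires the very same justification for interchanging $x$-derivatives and the $k$-integral that \cref{lembetaC2HP} provides via the Green's function representation, and scalar It\^o needs the full Hessian of $\phi^\pm_f$, not just its Laplacian. So the testing step is not a shortcut; it simply relocates the appeal to Appendix~C.
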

\begin{proof}
	The assertion follows from \cref{def:UpmUV,lemItoM}\,(iii).
\end{proof}
Especially, this yields the independence of our processes of the parameter $\sigma$.
\begin{cor}\label{lemsigmaUpm}
Let $x\in\Geb$ and $\sigma,\kappa\in[2,\infty)$. Then, $\PP$-a.s., 
$U^\pm_{\sigma,t}(x)$ and $U^\pm_{\kappa,t}(x)$ agree on $\{t<\tau_{\Geb}(x)\}$ for all $t\ge0$.
\end{cor}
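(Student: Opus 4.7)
The plan is to reduce the claim to \cref{lemUregU} by decomposing $v$ spectrally into three pieces separated by the thresholds $\sigma$ and $\kappa$, invoking linearity of the construction $v\mapsto U^{\pm}_{\sigma,t}(x)$, and handling the middle piece with the ultraviolet-regular identity already established. Without loss of generality suppose $\sigma\le\kappa$ and write $v=v_a+v_b+v_c$ with
\begin{equation*}
v_a\coloneq\chi_{\{\lambda<\sigma\}}v,\quad v_b\coloneq\chi_{\{\sigma\le\lambda<\kappa\}}v,\quad v_c\coloneq\chi_{\{\kappa\le\lambda\}}v,
\end{equation*}
extended by zero off $\Geb$ in the spirit of \cref{vbetaext}. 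Since these indicators depend only on $k\in\mc{K}$, they commute with $\Delta_x$, so each $v_j$ still satisfies the eigenfunction equation in Assumption~(a) of \cref{sssec:coupling}; the elementary estimate $L_E(v_j)\le L_E(v)$ for every $E\ge1$ then transfers Assumption~(b). Moreover, the cutoff operations $v\mapsto\tilde v_\sigma,\,v_\sigma,\,\beta^\pm_\sigma,\,\alpha^\pm_\sigma$ of \cref{def:cutoffv,defbetainfty} are linear, and so are the Bochner--Lebesgue and stochastic integrals in \cref{def:UpmUV,def:Msigmapmx}; therefore $U^{\pm}_{\sigma,t}(x)$ depends linearly on $v$.

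For the piece $v_a$ both $v_{a,\sigma}$ and $v_{a,\kappa}$ vanish, so by \cref{defUinftyminus,defUinftyplus} the corresponding $\beta^\pm$ and $M^\pm$ contributions drop out while $\tilde v_{a,\sigma}=\tilde v_{a,\kappa}=v_a$; hence the $\sigma$- and $\kappa$-versions of $U^{\pm}_{t}(x)$ built from $v_a$ both collapse to $U_{\reg,t}^{\pm}(v_a;x)$. Symmetrically, for $v_c$ one has $\tilde v_{c,\sigma}=\tilde v_{c,\kappa}=0$ and $v_{c,\sigma}=v_{c,\kappa}=v_c$, so all $\beta^\pm$, $M^\pm$ and regular terms coincide. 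Thus the contributions of $v_a$ and $v_c$ to $U^{\pm}_{\sigma,t}(x)-U^{\pm}_{\kappa,t}(x)$ are identically zero.

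The middle piece $v_b$ is where \cref{lemUregU} enters. Since $\lambda<\kappa$ on its $k$-support and, by Assumption~(b) applied with any fixed $E\ge1$,
\begin{equation*}
\|v_b(x,\cdot)\|_{\HP}^2\le (E+\kappa)^2\int_{\{\sigma\le\lambda<\kappa\}}\frac{|v(x,k)|^2}{(E+\lambda(k))^2}\Id\mu(k)\le\frac{(E+\kappa)^2}{E}L_E(v)^2
\end{equation*}
uniformly in $x\in\Geb$, one obtains $(1+\lambda)v_b\in\mc{L}^\infty(\Geb,\HP)$. As $v_b$ also satisfies the standing hypotheses, \cref{lemUregU} applies and yields $U^{\pm}_{\sigma,t}(x;v_b)=U_{\reg,t}^{\pm}(v_b;x)=U^{\pm}_{\kappa,t}(x;v_b)$ $\PP$-a.s.\ on $\{t<\tau_\Geb(x)\}$ for all $t\ge0$. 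Summing the three contributions by linearity finishes the proof.

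The only real obstacle is bookkeeping: verifying that each spectral piece $v_j$ inherits Assumptions~(a) and (b) from $v$, and that the middle piece additionally satisfies the regularity $(1+\lambda)v_b\in\mc{L}^\infty(\Geb,\HP)$ required to invoke \cref{lemUregU}. Once this is in place, no new probabilistic input is required, since all the work has already been done in the proof of \cref{lemUregU}.
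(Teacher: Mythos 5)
Your proof is correct and follows essentially the same route as the paper: decompose $v$ spectrally, handle the ultraviolet-regular middle part with \cref{lemUregU}, and observe that the remaining pieces contribute identically under both cutoffs. The paper's own (terse) proof is the two-piece version of this -- it applies \cref{lemUregU} to $\tilde v_\kappa$ and reads off the result from \cref{defUinftyminus,defUinftyplus} -- while your three-piece split merely makes the linearity bookkeeping more explicit.
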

\begin{proof}
Assuming $\sigma<\kappa$, the assertion follows from \cref{lemUregU} with $\tilde{v}_\kappa$
put in place of $v$ as well as the formulas \cref{defUinftyminus,defUinftyplus} applied to both $\sigma$ and $\kappa$.
\end{proof}

\subsection{The complex action}\label{ssec:complexaction}

We now introduce the analogue of Feynman's complex action in the model treated here. 
Once more, we start with $\vt\in \mc{L}^\infty(\RR^d,\HP)$ for which we set
\begin{align}\label{def:uUV}
	u_{\reg,t}(\vt;x)&\coloneq\int_0^t\langle \vt_{b_s^x}|U_{\reg,s}^+(\vt;x)\rangle_{\HP}\Id s,\quad t\ge0,\,x\in\RR^d.
\end{align}
Again this expression is ill-defined in general when $\vt$ is substituted by $v$.
The complex action associated with the possibly ultraviolet singular $v$ is defined by
\begin{align}\nonumber
	u_{\sigma,t}(x)&\coloneq u_{\reg,t}(\tilde{v}_\sigma;x)+a_{\sigma,t}(x)+w_{\sigma,t}(x) 
	- \langle\beta_{\sigma,b_t^x}^- | \eul^{-t}M_{\sigma,t}^+(x) \rangle_{\HP} 
	+ \langle M_{\sigma,t}^-(x)| \beta_{\sigma,x}^+ \rangle_{\HP}
	\\\label{def:actioninfty}
	&\quad -\ii\int_0^t\Im\langle\alpha_{\sigma,b_s^x}^-|\beta_{\sigma,b_s^x}^+\rangle_{\HP}\Id b_s
	+m_{\sigma,t}(x),\quad t\ge0,
\end{align}
for all $x\in\RR^d$, with
\begin{align}\label{def:asigma}
a_{\sigma,t}(x)&\coloneq
\frac{1}{2}\langle\beta_{\sigma,x}^-|\beta_{\sigma,x}^+\rangle_{\HP}
+\frac{1}{2}\langle\beta_{\sigma,b_t^x}^-|\beta_{\sigma,b_t^x}^+\rangle_{\HP}
-\eul^{-t}\langle\beta_{\sigma,b_t^x}^-|\beta_{\sigma,x}^+\rangle_{\HP},
\\\label{def:wsigma}
	w_{\sigma,t}(x) &\coloneq 
	\int_0^t \bigg(\frac{1}{2}\langle\alpha_{\sigma,b_s^x}^-|\alpha_{\sigma,b_s^x}^+\rangle_{\HP}-
	\langle\beta_{\sigma,b_s^x}^-|\beta_{\sigma,b_s^x}^+\rangle_{\HP}\bigg)\Id s,
	\\\label{def:msigma}
	  m_{\sigma,t}(x) &\coloneq  \int_0^t \langle \alpha_{\sigma,b_s^x}^- | \eul^{-s}M_{\sigma,s}^+(x) \rangle_{\HP} \Id b_{s}.
\end{align}
Our definition \cref{def:actioninfty} is motivated by the following result:

\begin{lem}\label{lemuregu}
Additionally assume that $(1+\lambda)v\in \mc{L}^\infty(\Geb,\HP)$. Let $x\in\Geb$ and $\sigma\in[2,\infty)$.
Then, $\PP$-a.s., $u_{\reg,t}(v;x)$ and $u_{\sigma,t}(x)$ agree on $\{t<\tau_{\Geb}(x)\}$ for all $t\ge0$.
\end{lem}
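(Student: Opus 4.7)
The plan is to reduce the identity to its ``high-frequency'' content indexed by $\{\lambda\ge\sigma\}\subset\mc K$ and match the resulting time integral with the terms in \cref{def:actioninfty} via two applications of It\^{o}'s formula.

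Splitting $v=\wt v_\sigma+v_\sigma$ yields $U_{\reg,s}^+(v;x)=U_{\reg,s}^+(\wt v_\sigma;x)+U_{\reg,s}^+(v_\sigma;x)$ by linearity. Plugging this into \cref{def:uUV}, the two cross integrals $\int_0^t\langle\wt v_{\sigma,b_s^x}|U_{\reg,s}^+(v_\sigma;x)\rangle_\HP\Id s$ and $\int_0^t\langle v_{\sigma,b_s^x}|U_{\reg,s}^+(\wt v_\sigma;x)\rangle_\HP\Id s$ vanish, because by \cref{def:cutoffv} the functions $\wt v_{\sigma,y}$ and $v_{\sigma,y}$ have disjoint supports in $\mc K$ for every $y\in\Geb$. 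Since the leading term $u_{\reg,t}(\wt v_\sigma;x)$ of \cref{def:actioninfty} thus already appears in the expansion of $u_{\reg,t}(v;x)$, it remains to identify, on $\{t<\tau_\Geb(x)\}$, the quantity $J_t(x)\coloneq\int_0^t\langle v_{\sigma,b_s^x}|U_{\reg,s}^+(v_\sigma;x)\rangle_\HP\Id s$ with the remaining seven summands of $u_{\sigma,t}(x)$.

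Since $(1+\lambda)v_\sigma\in\mc L^\infty(\Geb,\HP)$ is inherited from the present hypothesis on $v$, \cref{lemUregU} applied with $v_\sigma$ in place of $v$ rewrites $U_{\reg,s}^+(v_\sigma;x)=\eul^{-s}\beta^+_{\sigma,x}-\beta^+_{\sigma,b_s^x}+\eul^{-s}M^+_{\sigma,s}(x)$ on $\{s<\tau_\Geb(x)\}$, thereby splitting $J_t=I_1+I_2+I_3$. For $I_1$ I pull the time integral past the $\HP$-inner product and invoke the minus counterpart $\int_0^t\eul^{-s}v_{\sigma,b_s^x}\Id s=\beta^-_{\sigma,x}-\eul^{-t}\beta^-_{\sigma,b_t^x}+M^-_{\sigma,t}(x)$ of \cref{lemUregU}, obtaining directly $I_1=\langle\beta^-_{\sigma,x}|\beta^+_{\sigma,x}\rangle_\HP-\eul^{-t}\langle\beta^-_{\sigma,b_t^x}|\beta^+_{\sigma,x}\rangle_\HP+\langle M^-_{\sigma,t}(x)|\beta^+_{\sigma,x}\rangle_\HP$.

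For $I_2$ and $I_3$ I abbreviate $Z_s\coloneq\beta^-_{\sigma,b_s^x}$ and $W_s\coloneq\eul^{-s}M^+_{\sigma,s}(x)$ and apply It\^{o}'s formula to the scalar processes $F_s\coloneq\langle Z_s|\beta^+_{\sigma,b_s^x}\rangle_\HP$ and $G_s\coloneq\langle Z_s|W_s\rangle_\HP$. Using $\lambda\beta^-_\sigma=v_\sigma-\beta^-_\sigma$ and $\lambda\beta^+_\sigma=v_\sigma+\beta^+_\sigma$ (immediate from \cref{defbetainfty,def:cutoffv}) together with the eigenvalue equation $-\Delta_xv/2=\lambda v$, the drift of $\Id F_s$ simplifies to $[-2\langle v_\sigma|\beta^+\rangle_\HP+2\langle\beta^-|\beta^+\rangle_\HP+\langle\alpha^-|\alpha^+\rangle_\HP]\Id s$, and that of $\Id G_s$ simplifies---after the $-G_s\Id s$ piece arising from $\Id W_s$ cancels the $\langle Z_s|W_s\rangle_\HP\Id s=G_s\Id s$ piece arising from $\Id Z_s$---to $[-\langle v_\sigma|W_s\rangle_\HP+\langle\alpha^-|\alpha^+\rangle_\HP]\Id s$. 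The martingale part of $\Id F_s$ collapses to $2\Re\langle\alpha^-|\beta^+\rangle_\HP\cdot\Id b_s$ by virtue of the pointwise symmetry $\langle\alpha^-_j|\beta^+\rangle_\HP=\langle\alpha^+_j|\beta^-\rangle_\HP$ (immediate from $v_\sigma=(1+\lambda)\beta^-_\sigma=(\lambda-1)\beta^+_\sigma$), while the martingale part of $\Id G_s$ contributes $[\langle\alpha^-|W_s\rangle_\HP+\langle\beta^-|\alpha^+\rangle_\HP]\cdot\Id b_s$; its first summand integrates over $[0,t]$ to $m_{\sigma,t}(x)$ by \cref{def:msigma}.

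Summing $I_1+I_2+I_3$, the $\tfrac12F_0$- and $\tfrac12F_t$-contributions from $I_2$ combine with the $F_0$- and $-\eul^{-t}\langle\beta^-_{\sigma,b_t^x}|\beta^+_{\sigma,x}\rangle_\HP$-contributions from $I_1$ into exactly $a_{\sigma,t}(x)$ of \cref{def:asigma}; the surviving deterministic integrals reduce to $w_{\sigma,t}(x)$ of \cref{def:wsigma} after the two $\int_0^t\langle\alpha^-|\alpha^+\rangle_\HP\Id s$ contributions from $F_s$ and $G_s$ cancel; and the two remaining stochastic integrals collapse via $\langle\beta^-|\alpha^+_j\rangle_\HP-\Re\langle\alpha^-_j|\beta^+\rangle_\HP=-\ii\Im\langle\alpha^-_j|\beta^+\rangle_\HP$ into the single integral $-\ii\int_0^t\Im\langle\alpha^-_{\sigma,b_s^x}|\beta^+_{\sigma,b_s^x}\rangle_\HP\Id b_s$ of \cref{def:actioninfty}. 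The main technical obstacle is making the It\^{o} calculus rigorous under only the $C^1(\Geb,\HP)$-regularity of $\beta^\pm_\sigma$ supplied by \cref{lembetaC1HP}: the scalar map $y\mapsto\langle\beta^-_{\sigma,y}|\beta^+_{\sigma,y}\rangle_\HP$ turns out to be $C^2$ on $\Geb$ because the $C^\infty$-regularity of $v(\cdot,k)$ and the strengthened bound $(1+\lambda)v\in\mc L^\infty(\Geb,\HP)$ control its second $x$-derivatives in $\HP$-norm through $-\Delta_xv/2=\lambda v$, while the localization to $\{t<\tau_\Geb(x)\}$ is handled through the exhaustion $(\Geb_n)_{n\in\NN}$ from \cref{ssecStrat}.
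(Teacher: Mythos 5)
Your proof is correct and follows essentially the same strategy as the paper's: a disjoint-support splitting that isolates the high-frequency part, followed by two applications of It\^{o}'s formula exploiting the eigenvalue equation $-\Delta_xv/2=\lambda v$, together with localization via the exhaustion $(\Geb_n)$ and the $C^2(\Geb,\HP)$-regularity of $\beta^\pm_\sigma$ afforded by the strengthened hypothesis (via \cref{lem:green1,lembetaC2HP}). I verified the arithmetic: the drift of $F_s$ equals $\langle\alpha^-|\alpha^+\rangle-2\langle v_\sigma|\beta^+\rangle+2\langle\beta^-|\beta^+\rangle$ (equivalent to the paper's form $\langle\alpha^-|\alpha^+\rangle-2\langle\beta^-|v_\sigma\rangle-2\langle\beta^-|\beta^+\rangle$ via $\langle v_\sigma|\beta^+\rangle-\langle\beta^-|v_\sigma\rangle=2\langle\beta^-|\beta^+\rangle$), the drift of $G_s$ is correctly $[-\langle v_\sigma|W_s\rangle+\langle\alpha^-|\alpha^+\rangle]\Id s$ after the $\pm G_s\Id s$ cancellation, the boundary terms assemble into $a_{\sigma,t}(x)$, the deterministic integrals into $w_{\sigma,t}(x)$, and the martingale parts collapse into $-\ii\int_0^t\Im\langle\alpha^-|\beta^+\rangle\Id b_s+m_{\sigma,t}(x)+\langle M^-_{\sigma,t}(x)|\beta^+_{\sigma,x}\rangle$.

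The difference from the paper is purely organizational. The paper first establishes the intermediate \cref{lem:Itoaction} by applying It\^{o}'s product rule to $\langle\eul^{-t}\beta^-_{\sigma,b_t^x}|I^+_{\sigma,t}(x)\rangle_\HP$, where $I^+_{\sigma,t}(x)=\int_0^t\eul^s v_{\sigma,b_s^x}\Id s$ has bounded variation so the quadratic covariation drops out, and then manipulates the resulting $c_\sigma,d_\sigma,q_\sigma$ with a second It\^{o} on $\langle\beta^-|\beta^+\rangle$. You instead expand $J_t(x)=I_1+I_2+I_3$ via $U^+_{\reg,s}(v_\sigma;x)=\eul^{-s}\beta^+_{\sigma,x}-\beta^+_{\sigma,b_s^x}+\eul^{-s}M^+_{\sigma,s}(x)$ and take It\^{o} on the genuine semimartingale product $G_s=\langle\beta^-_{\sigma,b_s^x}|\eul^{-s}M^+_{\sigma,s}(x)\rangle_\HP$, which has a non-trivial covariation term $\langle\alpha^-|\alpha^+\rangle\Id s$. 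Both routes apply It\^{o} twice; the paper's choice of product avoids one covariation computation but introduces $c_\sigma,d_\sigma,q_\sigma$ as bookkeeping devices, while yours is a bit more direct. One small stylistic slip: the two $\int_0^t\langle\alpha^-|\alpha^+\rangle\Id s$ contributions from $I_2$ and $I_3$ do not ``cancel''; they combine to $\tfrac12\int_0^t\langle\alpha^-|\alpha^+\rangle\Id s$, which is precisely the first summand of $w_{\sigma,t}(x)$.
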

\begin{proof}
The proof of this lemma can be found at the end of \cref{ssec:regforu}.
\end{proof}
Again, the choice of $\sigma$ in \cref{def:actioninfty,def:asigma,def:wsigma,def:msigma} 
is immaterial for all $x$ of actual relevance:
\begin{cor}\label{lemsigmau}
Let $x\in\Geb$ and $\sigma,\kappa\in[2,\infty)$. Then, $\PP$-a.s., 
$u_{\sigma,t}(x)$ and $u_{\kappa,t}(x)$ agree on $\{t<\tau_{\Geb}(x)\}$ for all $t\ge0$.
\end{cor}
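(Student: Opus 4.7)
The plan is to mimic the short argument used to prove \cref{lemsigmaUpm} (the analogous statement for $U^\pm_{\sigma,t}$), supplemented by the observation that the pieces into which $v$ decomposes under the two cutoffs at levels $\sigma$ and $\kappa$ have mutually disjoint supports in $\mc{K}$. Assume without loss of generality that $\sigma<\kappa$ and set $\omega\coloneq\chi_{\{\sigma\le\lambda<\kappa\}}v$, so that $v_\sigma=\omega+v_\kappa$ and $\tilde v_\kappa=\tilde v_\sigma+\omega$. Since $\lambda<\kappa$ on the support of $\omega$, one has $(1+\lambda)\omega\in\mc{L}^\infty(\Geb,\HP)$, and the standing hypotheses (a) and (b) of \cref{sssec:coupling} are clearly inherited by $\omega$ from $v$. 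Hence \cref{lemuregu} applies to $\omega$; invoking it with parameter $\sigma$ and observing that $\widetilde{\omega}_\sigma=0$ and $\omega_\sigma=\omega$, one obtains $\PP$-a.s.\ on $\{t<\tau_\Geb(x)\}$ the identity
\[
u_{\reg,t}(\omega;x)=\mc{R}^\omega_t(x),
\]
where $\mc{R}^\omega_t(x)$ denotes the sum of the six ``correction'' terms on the right of \cref{def:actioninfty}, \cref{def:asigma}, \cref{def:wsigma}, \cref{def:msigma} computed with $\omega$ in place of $v_\sigma$ throughout (i.e., with $\beta^\pm_\sigma$, $\alpha^\pm_\sigma$, $M^\pm_{\sigma,t}$ replaced by the analogous objects built from $\omega$).

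With this identity in hand, I would expand the difference $u_{\sigma,t}(x)-u_{\kappa,t}(x)$ using that every summand in \cref{def:actioninfty}--\cref{def:msigma} is bilinear or sesquilinear in the underlying coupling function, with $\beta^\pm_\sigma$, $\alpha^\pm_\sigma$ and $M^\pm_{\sigma,t}$ linear in $v_\sigma$. Substituting $v_\sigma=\omega+v_\kappa$ in $u_{\sigma,t}(x)$, and $\tilde v_\kappa=\tilde v_\sigma+\omega$ in the leading term $u_{\reg,t}(\tilde v_\kappa;x)$ of $u_{\kappa,t}(x)$, produces three classes of contributions. The first comprises pure-$\tilde v_\sigma$ and pure-$v_\kappa$ pieces that agree on both sides: the pure-$\tilde v_\sigma$ parts reduce to $u_{\reg,t}(\tilde v_\sigma;x)$, while the pure-$v_\kappa$ correction pieces built from $v_\sigma$'s $v_\kappa$-part coincide with the full correction terms of $u_{\kappa,t}(x)$, since $\beta^\pm[v_\kappa]=\beta^\pm_\kappa$, $\alpha^\pm[v_\kappa]=\alpha^\pm_\kappa$ and the analogous identity holds for $M^\pm_t$. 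The second class consists of the pure-$\omega$ contributions, namely $\mc{R}^\omega_t(x)$ in $u_{\sigma,t}(x)$ and $u_{\reg,t}(\omega;x)$ in $u_{\kappa,t}(x)$; these cancel by the identity above. The third class consists of cross contributions, all of which reduce to $\HP$-inner products of the form $\langle \tilde v_\sigma|_{b_r^x}|\omega|_{b_s^x}\rangle_\HP$ or $\langle v_\kappa|\omega\rangle_\HP$ (or stochastic-integral analogues involving $\beta^\pm$, $\alpha^\pm$ or $M^\pm_t$). Since $\tilde v_\sigma$, $\omega$ and $v_\kappa$ are supported in the pairwise disjoint subsets $\{\lambda<\sigma\}$, $\{\sigma\le\lambda<\kappa\}$, $\{\kappa\le\lambda\}$ of $\mc{K}$, and since the operations defining $\beta^\pm_\bullet$, $\alpha^\pm_\bullet$, $M^\pm_{\bullet,t}$ act pathwise-linearly in the phonon variable and thus preserve these supports, every such cross term vanishes identically.

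I expect the main obstacle to lie entirely in the bookkeeping: one has to verify systematically across all seven summands of \cref{def:actioninfty}, including the stochastic integrals $m_{\sigma,t}(x)$ and the $\Im$-integral, that the cross contributions really reduce to $\HP$-inner products of functions with disjoint supports in $\mc{K}$. No further stochastic analysis is needed beyond what is already used to establish \cref{lemuregu}, so the proof becomes essentially algebraic once the decomposition $v=\tilde v_\sigma+\omega+v_\kappa$ and the disjoint-support observation are in place.
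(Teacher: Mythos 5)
Your proof is correct and follows essentially the same route as the paper's: both exploit Lemma~\ref{lemuregu} together with the bilinearity of the correction terms in the coupling function and the disjoint supports $\{\lambda<\sigma\}$, $\{\sigma\le\lambda<\kappa\}$, $\{\kappa\le\lambda\}$. The only cosmetic difference is which regular coupling is fed into Lemma~\ref{lemuregu}: you apply it to $\omega=\chi_{\{\sigma\le\lambda<\kappa\}}v$, yielding $u_{\reg,t}(\omega;x)=\mathcal{R}^{\omega}_t(x)$, whereas the paper applies it to $\tilde v_\kappa$ at cutoff level $\sigma$, which produces the identity $u_{\reg,t}(\tilde v_\kappa;x)=u_{\reg,t}(\tilde v_\sigma;x)+\mathcal{R}^{\omega}_t(x)$ in one stroke and thereby spares the separate decomposition of the leading term $u_{\reg,t}(\tilde v_\kappa;x)$.
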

\begin{proof}
Assuming $\sigma<\kappa$, we consider the definition of $u_{\kappa,t}(x)$, i.e., the
right hand side of \cref{def:actioninfty} with $\kappa$ put in place of $\sigma$.
Then we apply \cref{lemuregu} with $\tilde{v}_\kappa$ put in place of $v$ to re-write
the expression $u_{\reg,t}(\tilde{v}_\kappa;x)$ on $\{t<\tau_{\Geb}(x)\}$.
Then we see that, on $\{t<\tau_{\Geb}(x)\}$, $u_{\kappa,t}(x)$ is equal to $u_{\reg,t}(\tilde{v}_\sigma;x)$
plus a sum of terms that combine to the remaining six members on
right hand side of \cref{def:actioninfty} with parameter $\sigma$.
\end{proof}
\begin{rem}\label{rem:Feynmansaction}
In the situation of \cref{exFpolaron}, $u_{\sigma,t}(x)$ agrees $\PP$-a.s. with Feynman's famous
expressions for the complex action in the polaron model on $\RR^3$. This is shown in \cref{app:action},
where we also find a more compact expression for $u_{\sigma,t}(x)$ in the situation of \cref{exFS}
analogous to Feynman's formula. Our formula \cref{def:actioninfty} is useful in the general setting treated here
for deriving exponential moment bounds on $u_{\sigma,t}(x)$ and convergence theorems for 
sequences of complex actions corresponding to different coupling functions.
\end{rem}
\begin{rem}\label{rem:HPRR}
In physical applications, the vectors $\beta^\pm_{\sigma,x}$ and the components of $\alpha_{\sigma,x}^\pm$
with $x\in\Geb$ all belong to a certain {\em real} subspace $\HP_{\RR}$ of $\HP$ and, hence, the first,
purely imaginary term in the second line of \cref{def:actioninfty} is zero and $u_{\sigma,t}(x)$ is real-valued.
In the situation of \cref{exFpolaron}, $\HP_{\RR}=\{f\in L^2(\RR^3)|\,f(-k)=\ol{f(k)},\,\text{a.e. $k$}\}$.
Further, since all eigenfunctions $\phi_n$ of the Dirichlet Laplacian can be chosen real-valued, 
we can choose $\HP_{\RR} = \{f\in \ell^2(\NN)| f(n)\in\RR,\,n\in\NN\}$ in \cref{exFS}.
\end{rem}


\subsection{Feynman--Kac integrands and formulas}

The Fock space operator-valued parts of our Feynman--Kac integrands 
(sometimes called multiplicative functionals) involve a last building block, namely the operator norm convergent series
\begin{align*}
	F_t(h)&\coloneq\sum_{n=0}^\infty\frac{1}{n!}\ad(h)^n\eul^{-tN},\quad h\in\HP,\,t>0.
\end{align*}
From \cite[Appendix~6]{GueneysuMatteMoller.2017} we know indeed that $F_t:\HP\to\LO(\Fock)$ is analytic and
\begin{align}\label{bdFt}
	\|F_t(h)\|&
	\le2^{1/2}\eul^{4(1+t^{-1})\|h\|_{\HP}^2},
	\\\label{bddFt}
	\|F_t'(h)g\|&\le 2^{3/2}(1+t^{-1})^{1/2}\|g\|_{\HP} \eul^{4(1+t^{-1})\|h\|_{\HP}^2},
\end{align}
for all $t>0$ and $g,h\in \HP$.

\subsubsection{Ultraviolet regular coupling functions}
Starting once more with the ultraviolet regular $\vt\in \mc{L}^\infty(\RR^d,\HP)$, we define
\begin{align}\label{def:Wreg}
	W_{\reg,t}(x)&\coloneq \eul^{u_{\reg,t}(\vt;x)}F_{t/2}(-U_{\reg,t}^{+}(\vt;x))F_{t/2}(-U_{\reg,t}^{-}(\vt;x))^*,
\end{align}
for all $t>0$ and $x\in\RR^d$, as well as $W_{\reg,0}(x)\coloneq\id_{\Fock}$.

\begin{rem}\label{remWreg}
Let $t\ge0$ and $x\in\Geb$. Since $F_{t/2}:\HP\to\LO(\Fock)$ is analytic, we see that
	$W_{\reg,t}(x):\Omega\to\LO(\Fock)$ is separably valued and $\fr{F}_t$-measurable. Further,
	we shall see in \cref{prop:knownGMM} that $\|W_{\reg,t}(x)\|\le \eul^{c_{\vt}t}$ on $\Omega$
	with a solely $\vt$-dependent $c_\vt\in(0,\infty)$. A similar bound with a slightly worse right hand side
	also follows from \cref{bdFt} and the bounds
\begin{align}\label{Usigmabd}
		\|U_{\reg,t}^\pm(\vt;x)\|_{\HP}&\le
		\sup_{y\in\RR^d} \|\vt_{y}\|_\HP(1-\eul^{-t}),
		\\\label{brureg}
		|u_{\reg,t}(\vt;x)|&\le t\sup_{y\in\RR^d} \|\vt_{y}\|_\HP^2,
	\end{align}
	which are evident from the definitions \cref{def:UpmUV,def:uUV}.
\end{rem}
\begin{thm}\label{thm:mainFKreg}
	Let $t\ge0$ and $\Psi\in L^2(\Geb,\Fock)$. Then
	\begin{align}\label{eq:FKmainreg}
		(\eul^{-tH(\vt)}\Psi)(x)&=\EE\big[\chi_{\{t<\tau_{\Geb}(x)\}}\eul^{-\ol{S}_t(x)}W_{\reg,t}(x)^*\Psi(b_t^x)\big],
		\quad\text{a.e. $x\in\Geb$.}
	\end{align}
\end{thm}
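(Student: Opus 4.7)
My plan is to first establish \cref{eq:FKmainreg} under additional regularity hypotheses on $A$ and $V$ -- specifically $A\in C_b^1(\RR^d,\RR^d)$ and $V\in L_\loc^\infty(\Geb)$ bounded below -- and then remove the regularity by approximation. A bound that I can leverage throughout is the uniform estimate $\|W_{\reg,t}(x)\|\le\eul^{c_\vt t}$ from \cref{remWreg}, which together with the Kato-class exponential moment bound \cref{expmbdVminus} ensures that, for each $t\ge0$, the right-hand side of \cref{eq:FKmainreg} defines a bounded operator $T_t$ on $L^2(\Geb,\Fock)$.

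\textbf{Step 1 (regular case).} Under the regularity assumption on $A$, the integral $\Phi_t(x)$ reduces to the classical expression \cref{Phitxreg}, and $\ol{S}_t(x)$ is defined pathwise. For $\Geb=\RR^d$ and $A=0$ the formula is a direct consequence of the results of \cite{GueneysuMatteMoller.2017}. To extend to regular $A$ and open $\Geb\subsetneq\RR^d$, I would proceed in two substeps. First I would check that $(T_t)_{t\ge0}$ is a strongly continuous semigroup on $L^2(\Geb,\Fock)$: the semigroup property follows from the Markov property of Brownian motion together with the multiplicative cocycle property of the maps $t\mapsto F_{t/2}(-U^{\pm}_{\reg,t}(\vt;x))$ and of $\ol{S}_t(x)$, while the exit-time indicator $\chi_{\{t<\tau_\Geb(x)\}}$ encodes the Dirichlet boundary condition inherent in $\fr{q}^{\min}$. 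Second I would identify the generator by applying It\^o's formula to the $\LO(\Fock)$-valued process $t\mapsto\eul^{u_{\reg,t}(\vt;x)}F_{t/2}(-U^{+}_{\reg,t}(\vt;x))F_{t/2}(-U^{-}_{\reg,t}(\vt;x))^*$ (or, equivalently, testing against exponential vectors $\expv{g}$), which yields a stochastic differential whose drift and second-order Itô correction reassemble the action of $H(\vt)$ on a form core; matching the strong derivative of $T_t\Psi$ at $t=0$ with $-H(\vt)\Psi$ for $\Psi$ in a core of $\fr{h}(\vt)$ (e.g., finite sums of $f\expv{g}$ with $f\in C_0^\infty(\Geb)$ and $g\in\HP$) would conclude this substep.

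\textbf{Step 2 (removing the regularity).} For a general $A\in L_\loc^2(\Geb,\RR^d)$ and Kato-class $V_-$, I would pass to the limit in Step~1 along a sequence of regular data: take mollifications $A^{(n)}\in C_b^1(\RR^d,\RR^d)$ with $A^{(n)}\to A$ in $L_\loc^2(\Geb,\RR^d)$ (via an exhaustion $\Geb_n\uparrow\Geb$), and similarly bounded truncations $V^{(n)}$ of $V$. Writing $H^{(n)}(\vt)$ for the corresponding Hamiltonians, Appendix~B (the polaron analogue of \cite{LiskevichManavi.1997}) gives strong resolvent convergence $H^{(n)}(\vt)\to H(\vt)$, hence $\eul^{-tH^{(n)}(\vt)}\to\eul^{-tH(\vt)}$ strongly. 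On the probabilistic side, the forward-backward construction \cref{defPhitx} of $\Phi_t(x)$ in \cref{ssecStrat} implies, as in \cite{Matte.2021}, that $\Phi^{(n)}_t(x)\to\Phi_t(x)$ in probability for a.e. $x$, with analogous convergence for the potential integrals on $\{t<\tau_\Geb(x)\}$. Combined with the uniform bounds in \cref{remWreg} and \cref{expmbdVminus}, dominated convergence lets me pass to the limit on the right-hand side of \cref{eq:FKmainreg} along a subsequence.

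\textbf{Main obstacle.} The most delicate step is the simultaneous handling of the irregularity of $A$ and the general open domain $\Geb$ in Step~2. Ensuring stability of $\Phi_t(x)$ under $L_\loc^2$-approximations of $A$ is the reason for the combined forward/backward definition \cref{defPhitx} rather than a naive Stratonovich integral, and importing this stability from \cite{Matte.2021} -- which works on $\RR^d$ -- to general $\Geb$ requires verifying that the exhaustion $\Geb_n\uparrow\Geb$ used to define $S_t(x)$ is compatible with the Dirichlet form $\fr{q}^{\min}$. This should reduce to monotone convergence on $\{t<\tau_\Geb(x)\}=\bigcup_n\{t<\tau_n(x)\}$ together with the observation that replacing $A$ and $V_+$ by their restrictions to $\Geb_n$ changes nothing on this event; still, assembling all convergences into a single approximation step, while controlling the $\LO(\Fock)$-valued random variables $W_{\reg,t}(x)^*$ uniformly in $x$, is where I would expect the bulk of the technical work.
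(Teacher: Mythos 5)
Your overall plan --- prove the formula for regular coefficients and then remove regularity by approximation --- matches the paper's. But your Step~1 has a genuine gap at the generator identification on a proper open domain $\Geb\subsetneq\RR^d$. To conclude that the generator $G$ of $(T_t)_{t\ge0}$ equals $H(\vt)$, you would need, after showing $G\Psi=H(\vt)\Psi$ for $\Psi$ in your test set $D=\mathrm{span}\{f\expv{g}\mid f\in C_0^\infty(\Geb),\,g\in\HP\}$, that $D$ is an \emph{operator} core for $H(\vt)$; you only invoke $D$ as a form core, which does not suffice to identify a selfadjoint operator with the generator of the semigroup. The paper sidesteps this: it proves the formula for $\Geb=\RR^d$ (Propositions~\ref{prop:knownGMM}--\ref{prop:FKUVregAV}), where essential selfadjointness of $H(\vt)$ on $D$ is available from \cite[Remark~5.8 \& Example~6.4]{Matte.2017}, and then passes to general $\Geb$ via Simon's restriction criterion (\cite[Lemma~3.4]{Matte.2021}) in \cref{cor:FKUVregAV}. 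Without a $\Geb$-analogue of the essential selfadjointness input, the direct argument you sketch cannot close the identification, precisely at the point where you claim the exit-time indicator ``encodes the Dirichlet boundary condition.''

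The second gap is in Step~2 when $A\in L^2_{\loc}(\Geb,\RR^d)$ has no extension in $L^2_{\loc}(\RR^d,\RR^d)$. Your mollification/dominated-convergence scheme rests on stability of $\Phi_t(x)$ along $L^2_{\loc}(\RR^d)$-approximations (\cref{lem:TapproxA}, \cref{rem:approxAV}, following \cite{Matte.2021}), but those tools explicitly presuppose such an extension and a dominating $\alpha\in L^2_{\loc}(\RR^d)$. For non-extendable $A$ the paper inserts a separate exhaustion argument: set $A_n\coloneq\chi_{\Geb_n}A$, work with the polaron form on the shrinking domain $\Geb_n$ (not on $\Geb$), and apply the monotone form convergence theorems \cite[Theorems~4.1, 4.2]{Simon.1978c} to obtain $\eul^{-tH_n(\vt)}(\Psi\restr_{\Geb_n})\to\eul^{-tH(\vt)}\Psi$, while on the probabilistic side $\chi_{\{t<\tau_n(x)\}}\uparrow\chi_{\{t<\tau_\Geb(x)\}}$. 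You gesture at the exhaustion but supply neither the form-monotonicity argument on the operator side nor a replacement for the $L^2_{\loc}(\RR^d)$-stability on the stochastic side. Finally, a single bounded truncation of $V$ will not reach the full Kato-class generality; the paper needs three distinct sub-steps (mollification, $V\wedge n$, $V\vee(-n)$) with a different convergence mechanism in each.
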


\begin{proof}
	This theorem is proven in \cref{sec:FKproof}.
\end{proof}

Since $\|W_{\reg,t}(x)\|$ is bounded for all $t\ge0$ and $x\in\RR^d$,
\cref{thm:mainFKreg} can actually be generalized so as to cover all 
non-negative $V_-\in L^1_\loc(\Geb,\RR)$ which are form small
with respect to one-half times the negative Dirichlet-Laplacian on $\Geb$;
see \cite[Proof of Corollary~1.4 in \textsection9.4]{Matte.2021}.

\subsubsection{Ultraviolet singular coupling functions}
Passing to the possibly ultraviolet singular coupling function $v$, we define
\begin{align}\label{def:W}
	W_{\sigma,t}(x)&\coloneq \eul^{u_{\sigma,t}(x)}F_{t/2}(-U_{\sigma,t}^{+}(x))F_{t/2}(-U_{\sigma,t}^{-}(x))^*,
\end{align}
for all $t>0$, $x\in\RR^d$ and $\sigma\in[2,\infty)$, as well as $W_{\sigma,0}(x)\coloneq\id_{\Fock}$.

To control $W_{\sigma,t}(x)$, we will need the following additional assumption on $\Geb$:
Let $\ell(\gamma)$ denote the length of a rectifiable $\RR^d$-valued path $\gamma$ and
put $d_\Geb(x,y) \coloneq \inf\{\ell(\gamma) | \text{$\gamma$ is a rectifiable path in $\Geb$ from $x$ to $y$} \}$, 
if $x$ and $y$ are in the same connected component of $\Geb$, and $d_\Geb(x,y)\coloneq\infty$ otherwise.
Then we introduce the condition
	\begin{align}\label{tailbound}
	\exists\:a_\Geb\ge1,\,C_\Geb>0\;\forall\:r,t>0: \ \
		\sup_{x\in\Geb}\PP[ \chi_{\{t<\tau_\Geb(x)\}}d_\Geb(b_t^x,x)\ge r] \le a_\Geb\eul^{-C_\Geb r^2/t}.
	\end{align}


\begin{example}\label{rem:Gebs}
We discuss a few examples based on the well-known bound
		\begin{align}\label{tailboundRRd} 
		\PP\Big[ \sup_{0\le s\le t}|b_s|\ge r \Big] \le 4d\eul^{-r^2/2dt},
		\end{align}
		cf., e.g., \cite[Lemma~5.2.1]{DemboZeitouni.2010}, which shows that $\Geb=\RR^d$ satisfies \cref{tailbound}.
\begin{enumerate}
\item[(a)] Assume there exists $c_\Geb>0$ such that 
		$d_\Geb(x,y)\le c_\Geb|x-y|$,
		whenever $x$ and $y$ belong to the same connected component of $\Geb$.
		Then \cref{tailbound} directly follows from \cref{tailboundRRd}.
		In particular, \cref{tailbound} holds for convex $\Geb$, where $c_\Geb=1$.
\item[(b)] More generally, assume there exists $b_\Geb>0$ such that,
	for all $x,y\in\Geb$ and every $\gamma\in C([0,1],\Geb)$ with $\gamma(0)=x$ and $\gamma(1)=y$,
	\begin{align*}
	\exists \:t_1,t_2\in[0,1]:\quad|\gamma(t_1)-\gamma(t_2)| \ge b_\Geb d_\Geb(x,y). 
	\end{align*}
	Then \cref{tailbound} holds with $a_\Geb=4d$ and $C_{\Geb}=b_\Geb^2/8d$, since the above condition entails
	$\chi_{\{t<\tau_\Geb(x)\}}d_\Geb(b_t^x,x)\le (2/b_{\Geb})\sup_{s\in[0,t]}|b_s|$, which can be combined with
	 \cref{tailboundRRd}.
\item[(c)]
	The open slit plane $\RR^2\setminus\{(x,0):x\le0\}$ satisfies the condition in (b) (with $b_\Geb=1/2$), but not the one in (a).
	\end{enumerate}
\end{example}
\begin{rem}\label{rem:Wmeas}
	Again $W_{\sigma,t}(x):\Omega\to\LO(\Fock)$ is always separably valued and $\fr{F}_t$-measurable.
	Employing \cref{tailbound} we shall further verify
	that $\chi_{\{t<\tau_\Geb(x)\}}\|W_{\sigma,t}(x)\|$ has finite moments of any order in \cref{lem:mbconvW} .
\end{rem}

We are now in a position to formulate our main result:

\begin{thm}\label{thm:mainFK}
Assume that $\Geb$ fulfills \cref{tailbound}. 
	Pick any $\sigma\in[2,\infty)$ and let $t\ge0$ and $\Psi\in L^2(\Geb,\Fock)$. Then 
	\begin{align}\label{eq:FKmain}
		(\eul^{-tH(v)}\Psi)(x)&=\EE\big[\chi_{\{t<\tau_{\Geb}(x)\}}\eul^{-\ol{S}_t(x)}W_{\sigma,t}(x)^*\Psi(b_t^x)\big],
		\quad\text{a.e. $x\in\Geb$.}
	\end{align}
\end{thm}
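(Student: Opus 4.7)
The plan is to deduce \cref{thm:mainFK} from the already established ultraviolet regular Feynman--Kac formula \cref{thm:mainFKreg} by an approximation argument. Concretely, I would fix $\sigma\in[2,\infty)$ and approximate $v$ by the sequence of ultraviolet regular coupling functions $\vartheta_n\coloneq\tilde{v}_{\sigma_n}$, where $\sigma_n\to\infty$. Each $\vartheta_n$ lies in $\mc{L}^\infty(\Geb,\HP)$ and satisfies $L_1(v-\vartheta_n)\to0$ thanks to hypothesis~(b) of \cref{sssec:coupling} (since $v-\vartheta_n=\chi_{\{\sigma_n\le\lambda\}}v$). Applying \cref{thm:mainFKreg} to each $\vartheta_n$ gives the identity \cref{eq:FKmainreg} with $W_{\mathrm{reg},t}$ built from $\vartheta_n$ on the right hand side, and the task reduces to passing $n\to\infty$ on both sides.

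For the left hand side, \cref{cor:normresconvUV} yields norm resolvent convergence $H(\vartheta_n)\to H(v)$. Since $e^{-tx}$ is continuous and vanishes at infinity on the common lower spectral bound, this upgrades to norm convergence of the semigroups and hence to $L^2(\Geb,\Fock)$ convergence of $(e^{-tH(\vartheta_n)}\Psi)(\cdot)$; passing to a subsequence, one obtains pointwise a.e.\ convergence in $x$. For the right hand side, on the event $\{t<\tau_\Geb(x)\}$ one has $\PP$-almost surely the pathwise identifications $U^{\pm}_{\mathrm{reg},t}(\vartheta_n;x)\to U^{\pm}_{\sigma,t}(x)$ in $\HP$ and $u_{\mathrm{reg},t}(\vartheta_n;x)\to u_{\sigma,t}(x)$, which follow by decomposing $\vartheta_n=\tilde{v}_\sigma+\chi_{\{\sigma\le\lambda<\sigma_n\}}v$, applying the It\^{o}-type arguments underlying \cref{lemUregU,lemuregu} to the extra piece, and exploiting that this piece vanishes in $\HP$ in the limit $\sigma_n\to\infty$ by hypothesis~(b). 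The analyticity of $F_{t/2}:\HP\to\LO(\Fock)$ together with \cref{bdFt} then gives the operator norm convergence $W_{\mathrm{reg},t}(\vartheta_n;x)\to W_{\sigma,t}(x)$ on $\{t<\tau_\Geb(x)\}$.

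The main obstacle, and the step for which the bulk of the earlier technical work is needed, is to upgrade this pathwise convergence to convergence inside the expectation. This requires a dominated convergence argument with a uniform-in-$n$ (and uniform-in-$x\in\Geb$) $L^p$-bound on $\chi_{\{t<\tau_\Geb(x)\}}\|W_{\mathrm{reg},t}(\vartheta_n;x)\|\,\|\Psi(b_t^x)\|_\Fock\,e^{-\Re\ol{S}_t(x)}$. By \cref{bdFt} and the definition \cref{def:Wreg}, this reduces to uniform exponential moment bounds on $(1+t^{-1})\|U^{\pm}_{\mathrm{reg},t}(\vartheta_n;x)\|_\HP^2$ and on $\Re u_{\mathrm{reg},t}(\vartheta_n;x)$, with right hand sides log-linear in $t$; these are precisely the estimates announced for \textsection5 and \textsection6. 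For the contribution of $e^{-\Re\ol{S}_t(x)}$ one combines these with \cref{expmbdVminus} via H\"older's inequality, while the tail assumption \cref{tailbound} is used to control the restriction to $\{t<\tau_\Geb(x)\}$ uniformly in $x$ (it ensures that the moment estimates, which are proved for the free Brownian motion on all of $\RR^d$, translate to bounds that are not destroyed by the killing at $\partial\Geb$ and by the fact that the coupling data $\beta^\pm_{\sigma,\cdot}$, $\alpha^\pm_{\sigma,\cdot}$ are supported in $\Geb$). Once these uniform bounds are in hand, the Vitali convergence theorem applied pointwise in $x$ to the random variables on the right hand side of \cref{eq:FKmainreg} for $\vartheta_n$, combined with the pointwise a.e.\ identification of the left hand side via \cref{cor:normresconvUV}, yields \cref{eq:FKmain}.
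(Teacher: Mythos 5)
Your proposal follows the same overall strategy as the paper: cut off $v$ at $\{\lambda<n\}$, apply the regular Feynman--Kac formula \cref{thm:mainFKreg}, and pass to the limit using \cref{cor:normresconvUV} on the left and the uniform moment and convergence bounds from \textsection5--\textsection6 on the right. The paper's execution is somewhat cleaner at one point: rather than arguing that $W_{\reg,t}(\tilde v_n;x)$ converges pathwise to $W_{\sigma,t}(x)$ and then invoking a Vitali-type upgrade, it first observes via \cref{lemUregU,lemuregu} that the regular and singular Feynman--Kac operators built from the \emph{same} $\tilde v_n$ are \emph{identical}, $T_{\reg,t}^n=T_t^n$, and then applies the already-packaged operator-norm convergence $T_t^n\to T_t$ from \cref{thm:approxvn}; this avoids juggling a.e.\ subsequences on the two sides.

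One imprecision worth flagging: in your second paragraph you say the piece $\chi_{\{\sigma\le\lambda<\sigma_n\}}v$ ``vanishes in $\HP$'' as $\sigma_n\to\infty$. It does not --- its $\HP$-norm is monotone increasing and in physically relevant cases diverges, since $v_x\notin\HP$. What does go to zero is $L_1(v-\tilde v_{\sigma_n})$, and the whole point of rewriting $U^\pm$ and $u$ via It\^{o}'s formula (\cref{lemItoM}(iii), \cref{lem:Itoaction}) is to express the ultraviolet part in terms of $\beta^\pm_\sigma$, $\alpha^\pm_\sigma$ and stochastic integrals thereof, all of which \emph{are} controlled by $L_1$; the convergence statements \cref{convUpmLp,conveu} are precisely formulated in these terms. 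Your description of the role of \cref{tailbound} is also a bit off: it is not about transferring free Brownian bounds past the killing at $\partial\Geb$, but about controlling the intrinsic-distance increments $d_\Geb(b_t^x,x)$ that appear when one estimates $\|\beta^\pm_{\sigma,b_t^x}-\beta^\pm_{\sigma,x}\|_\HP$ via the gradient bound \cref{lembetaC1HP}, see \cref{bdgammapm,lagkage1}. Neither point breaks the argument, but they matter for a careful writeup.
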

\begin{proof}
	This theorem is proven at the end of \cref{sec:FKproof}.
\end{proof}



\section{Feynman--Kac formula for regular coefficients}\label{sec:FKregcoeff}

\noindent
This \lcnamecref{sec:FKregcoeff} is devoted to the proof of \cref{thm:mainFKreg} in the special case
where $V\in C_b(\RR^d,\RR)$ and $A\in C^1_b(\RR^d,\RR^d)$. 
Moreover, with the only exception of \cref{cor:FKUVregAV} at its very end,
we shall always consider the case $\Geb=\RR^d$ in this \lcnamecref{sec:FKregcoeff}.

The main strategy in the case $\Geb=\RR^d$ is to show that the right hand side of \cref{eq:FKmainreg} defines
a strongly continuous semigroup of bounded operators on $L^2(\RR^d,\Fock)$ and to verify that $H(\vt)$ is its
generator. For both tasks we employ a certain stochastic differential equation satisfied
by $(\eul^{-S_t(x)}W_{\reg,t}(\vt;x)\phi)_{t\ge0}$ with $\phi\in\dom(N)$ as a starting point.

In the case $A=0$ the results of this \lcnamecref{sec:FKregcoeff} are actually known from \cite{GueneysuMatteMoller.2017}.

\subsection{Stochastic differential equation}
\label{ssec:Markov}

In the next proposition we find a differential equation pathwise satified by $(W_{\reg,t}(\vt;x)\phi)_{t\ge0}$.
In the succeeding \cref{lem:SDEX} we include $\eul^{-S_t(x)}$ and obtain a true stochastic differential equation,
for non-vanishing $A$ at least.
In both \lcnamecrefs{prop:knownGMM} we shall, for each $x\in\RR^{d}$, employ the operator
\begin{align}\label{eq:Ht}
	\wt{H}^{A,V}(x)&\coloneq\frac{1}{2}|A(x)|^2-\frac{\ii}{2}\mathrm{div}A(x)+V(x)+N+\vp(\vt_{x}),
\end{align}
which by the first bound in \eqref{rbNvp} and the Kato--Rellich theorem 
is a closed operator in $\Fock$ with domain $\dom(N)$.

To prove Markov and resulting semigroup properties later on in this \lcnamecref{sec:FKregcoeff},
we shall also exploit a flow equation associated with the differential equation. 
For this purpose, we denote by $W_{\reg,s,t}(\vt;x)$ the operator obtained from the definition 
\cref{def:Wreg} upon replacing the Brownian motion $b$ by its time shifted version 
\begin{align}\label{def:hatb}
\hat b \coloneq (b_{s+t}-b_s)_{t\ge 0}.
\end{align}

\begin{prop}\label{prop:knownGMM}
		Let $\phi\in\mathrm{span}\{\expv{f}|\,f\in\HP\}$, let $q:\Omega\to\RR^{d}$ be $\fr{F}_0$-measurable and
		abbreviate $Y\coloneq(W_{\reg,t}(\vt;q)\phi)_{t\ge0}$. Then all paths of $Y$ attain values
		in $\dom(N)$ and belong to $C^1([0,\infty),\Fock)$. Moreover, $Y$ is the only
		such process pathwise satisfying the initial value problem
		\begin{align}\label{IVPWUV}
			\frac{\Id}{\Id t}Y_t&=-\wt{H}^{0,0}(b_t^q)Y_t,\quad t\ge0;\qquad Y_0=\phi.
		\end{align}
		Finally,
		\begin{align}\label{nbWreg}
		\sup_{x\in\RR^d}\|W_{\reg,t}(\vt;x)\|&\le \eul^{c_\vt t},\quad t\ge0,
		\quad\text{with}\quad c_\vt\coloneq \sup_{y\in\RR^d}\|\vt_y\|_{\HP}^2,
		\end{align}
		and the following flow equation holds for all $x\in\RR^d$ and $t\ge s\ge 0$,
		\begin{align}\label{flowW}
			W_{\reg,t}(\vt;x)&=W_{\reg,s,t}(\vt;b_s^x)W_{\reg,s}(\vt;x).
		\end{align}
\end{prop}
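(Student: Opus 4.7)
The plan is to reduce everything to $\phi=\expv{f}$, where the operator product in \cref{def:Wreg} admits a closed-form expression. Using $\eul^{-sN}\expv{f}=\expv{\eul^{-s}f}$ and the identity $\sum_{n\ge 0}\tfrac{1}{n!}\ad(h)^{n}\expv{g}=\expv{h+g}$ (analyticity of the exponential map $\HP\ni f\mapsto\expv{f}$; cf.~\cite{GueneysuMatteMoller.2017}), one checks directly that
\begin{align*}
F_{t/2}(h)\expv{f}=\expv{h+\eul^{-t/2}f}, \qquad F_{t/2}(h)^{*}\expv{f}=\eul^{\langle h|f\rangle_{\HP}}\expv{\eul^{-t/2}f}.
\end{align*}
Consequently, with $g_{t}\coloneq\eul^{-t}f-U_{\reg,t}^{+}(\vt;q)$ and $\lambda_{t}\coloneq u_{\reg,t}(\vt;q)-\langle U_{\reg,t}^{-}(\vt;q)|f\rangle_{\HP}$,
\begin{align*}
Y_{t}=W_{\reg,t}(\vt;q)\expv{f}=\eul^{\lambda_{t}}\expv{g_{t}}.
\end{align*}
Pathwise $C^{1}$-regularity of $t\mapsto U_{\reg,t}^{\pm}(\vt;q)$ and $t\mapsto u_{\reg,t}(\vt;q)$ (as Bochner integrals of essentially bounded integrands along the continuous path $b^{q}$) together with analyticity of $\expv{\cdot}$ with values in $\dom(\NO)$ then give $Y\in C^{1}([0,\infty),\Fock)$ with $Y_{t}\in\dom(\NO)$ pathwise, and by linearity the same is true for arbitrary $\phi\in\operatorname{span}\{\expv{f}\mid f\in\HP\}$.

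\smallskip

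\textbf{Differential equation.} Differentiation of the definitions yields $\partial_{t}U_{\reg,t}^{-}=\eul^{-t}\vt_{b_{t}^{q}}$, $\partial_{t}U_{\reg,t}^{+}=\vt_{b_{t}^{q}}-U_{\reg,t}^{+}$ and $\partial_{t}u_{\reg,t}=\langle\vt_{b_{t}^{q}}|U_{\reg,t}^{+}\rangle_{\HP}$, hence $\dot\lambda_{t}=-\langle\vt_{b_{t}^{q}}|g_{t}\rangle_{\HP}$ and $\dot g_{t}=-g_{t}-\vt_{b_{t}^{q}}$. Combined with $\ad(h)\expv{g}=\epsilon'(g)h$, $a(h)\expv{g}=\langle h|g\rangle_{\HP}\expv{g}$ and $\NO\expv{g}=\ad(g)\expv{g}$ from \cref{def:ada,Nexpv} this gives
\begin{align*}
\partial_{t}Y_{t}=-a(\vt_{b_{t}^{q}})Y_{t}-\NO Y_{t}-\ad(\vt_{b_{t}^{q}})Y_{t}=-\bigl[\NO+\vp(\vt_{b_{t}^{q}})\bigr]Y_{t}=-\wt H^{0,0}(b_{t}^{q})Y_{t},
\end{align*}
so the initial value problem \cref{IVPWUV} holds on $\phi=\expv{f}$, and by linearity on the full span.

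\smallskip

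\textbf{Uniqueness and norm bound.} Both \cref{nbWreg} and uniqueness follow from a single Gronwall argument. For any $C^{1}$-process $Z$ with values in $\dom(\NO)$ solving $\dot Z_{t}=-\wt H^{0,0}(b_{t}^{q})Z_{t}$, self-adjointness of $\NO+\vp(\vt_{x})$ on $\dom(\NO)$ combined with the estimate $2|\langle Z|\vp(\vt)Z\rangle|\le 2\|\NO^{1/2}Z\|_{\Fock}^{2}+2\|\vt\|_{\HP}^{2}\|Z\|_{\Fock}^{2}$ extracted from \cref{rbNvp} gives pathwise
\begin{align*}
\partial_{t}\|Z_{t}\|^{2}_{\Fock}=-2\langle Z_{t}|\NO Z_{t}\rangle_{\Fock}-2\langle Z_{t}|\vp(\vt_{b_{t}^{q}})Z_{t}\rangle_{\Fock}\le 2c_{\vt}\|Z_{t}\|^{2}_{\Fock}.
\end{align*}
Applied to $Z=Y$ with $\phi=\expv{f}$ this yields $\|Y_{t}\|_{\Fock}\le\eul^{c_{\vt}t}\|\expv{f}\|_{\Fock}$; density of $\operatorname{span}\{\expv{f}\mid f\in\HP\}$ and the a priori boundedness of $W_{\reg,t}(\vt;x)$ (from \cref{bdFt,Usigmabd,brureg}) then give \cref{nbWreg}, while the same estimate applied to the difference of two candidate solutions yields uniqueness.

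\smallskip

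\textbf{Flow equation and main obstacle.} For \cref{flowW}, fix $x\in\RR^{d}$ and $s\ge 0$, and set $\widetilde Y_{t}\coloneq W_{\reg,s,t}(\vt;b_{s}^{x})W_{\reg,s}(\vt;x)\phi$ for $t\ge s$. Expressing $W_{\reg,s,t}(\vt;b_{s}^{x})$ via the shifted Brownian motion $\hat b$ of \cref{def:hatb} and applying the differential equation of Step~2 (with initial time $s$) shows that $\widetilde Y$ satisfies $\dot{\widetilde Y}_{t}=-\wt H^{0,0}(b_{t}^{x})\widetilde Y_{t}$ on $[s,\infty)$ with $\widetilde Y_{s}=W_{\reg,s}(\vt;x)\phi$; since $Y_{t}\coloneq W_{\reg,t}(\vt;x)\phi$ satisfies the same IVP, uniqueness (applied on $[s,\infty)$) gives \cref{flowW}. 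The main obstacle, and really the only delicate step, is the closed-form reduction in the Plan paragraph: one must justify the commutation/exponential-vector computation carefully enough to ensure $Y_{t}\in\dom(\NO)$ pathwise. Once that is in hand, every remaining step is a routine scalar- and $\HP$-valued ODE manipulation together with Gronwall.
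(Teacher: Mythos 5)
Your proof is correct and follows essentially the same route as the paper's: reduce to exponential vectors, derive the closed-form $Y_{f,t}=\eul^{\lambda_t}\expv{g_t}$, differentiate to obtain the ODE, use an energy/Gronwall estimate (equivalently, the lower bound $\wt H^{0,0}(x)\ge -c_\vt$) for both the norm bound and uniqueness, and then deduce the flow equation by applying uniqueness to the time-shifted Brownian motion. The only cosmetic difference is that you explicitly record the intermediate identities $F_{t/2}(h)\expv{f}=\expv{h+\eul^{-t/2}f}$ and $F_{t/2}(h)^*\expv{f}=\eul^{\langle h|f\rangle_{\HP}}\expv{\eul^{-t/2}f}$, whereas the paper writes the combined formula for $Y_{f,t}$ directly.
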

\begin{proof}
In this proof, we drop the reference to $\vt$ in the notation, so that $W_{\reg,t}(q)=W_{\reg,t}(\vt;q)$ and
$u_{\reg,t}(q)=u_{\reg,t}(\vt;q)$, and so on.
We start by considering the case where $\phi=\expv{f}$ for some $f\in\HP$. Then the formula
	\begin{align}\label{def:Yft}
		Y_{f,t}\coloneq W_{\reg,t}(q)\epsilon(f)&=\eul^{u_{\reg,t}(q)
			-\langle U_{\reg,t}^{-}(q)|f\rangle_{\HP}}
		\expv{\eul^{-t}f-U_{\reg,t}^{+}(q)},\quad t\ge0,
	\end{align}
	can be inferred from \cref{def:ada,eq:adavp}. Then 
	$Y_{f,t}$ is manifestly continuously differentiable and straightforward computations 
	using \cref{def:ada,def:UpmUV,def:uUV} reveal that
	\begin{align*}
		\frac{\Id}{\Id t}Y_{f,t}&=
		-\ad(\eul^{-t}f-U^{+}_{\reg,t}(q))Y_{f,t}
		-\ad(\vt_{b_t^q})Y_{f,t}-\langle \vt_{b_t^q}|\eul^{-t}f-U^{+}_{\reg,t}(q)\rangle_{\HP} Y_{f,t},
	\end{align*}
	for all $t\ge0$. Comparing with \cref{def:ada,Nexpv}, we recognize the action of
	the number and annihilation operators in the first and third terms on the right hand side, respectively.
	In view of \eqref{eq:adavp} this proves \eqref{IVPWUV} for $\phi=\expv{f}$.

Clearly, \eqref{IVPWUV} also holds when $\phi$ is a linear combination of exponential vectors, which we assume 
in the rest of this proof. Differentiating $\|Y_t\|_{\Fock}^2$ and using that, by \eqref{rbNvp}, 
	the spectrum of $\wt{H}_{\UV}^{0,0}(x)$ is bounded from
	below uniformly in $x$ by $-c_\vt$, 
	we deduce that $\|Y_t\|_{\Fock}\le\eul^{c_\vt t}\|\phi\|_{\Fock}$, 
	which entails \cref{nbWreg} and unique solvability of \eqref{IVPWUV}.
	
	Next, let $x\in\RR^d$, $s\ge0$, and set $Z_t\coloneq W_{\reg,s+t}(x)\phi$. 
	Then \eqref{IVPWUV} implies
	\begin{align*}
		\frac{\Id}{\Id t}Z_t&=-\wt{H}^{0,0}(b_{s+t}^x)Z_t,\quad t\ge0;\qquad Z_0=W_{\reg,s}(x)\phi.
	\end{align*}
	Further, applying \eqref{IVPWUV} to the time-shifted Brownian motion
	$\hat{b}$ and filtration $(\fr{F}_{s+t})_{t\ge0}$
	and observing that $\hat{b}_t^{b_s^x}=b_{s+t}^x$, we see that, pathwise, the processes
	$[0,\infty)\ni t\mapsto W_{\reg,s,s+t}(b_s^x)W_{\reg,s}(x)\phi$ and $Z$ both solve
	the same uniquely solvable initial value problem. Since $\phi$ can be chosen in a dense
	subset of $\Fock$, this implies \cref{flowW}.
\end{proof}

\begin{prop}\label{lem:SDEX}
	Assume that $V\in C_b(\RR^{d},\RR)$ and $A\in C_b^1(\RR^{d},\RR^d)$.
	Let $x\in\RR^{d}$ and $\phi\in\mathrm{span}\{\expv{f}|\,f\in\HP\}$. Then 
	$(\eul^{-S_t(x)}W_{\reg,t}(\vt;x)\phi)_{t\ge0}$ is a continuous $\Fock$-valued semimartingale
	whose paths belong $\PP$-a.s. to $C([0,\infty),\dom(N))$ and 
	which $\PP$-a.s. solves
	\begin{align}\label{SDEX}
		X_t&=\phi+\ii\int_0^t A(b_s^x) X_s\Id b_s-\int_0^t\wt{H}^{A,V}(b_s^x)X_s\Id s,\quad t\ge0.
	\end{align}
\end{prop}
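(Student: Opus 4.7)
The natural strategy is to exploit \cref{prop:knownGMM}: write $X_t = \eul^{M_t}Y_t$ with $M_t \coloneq -S_t(x)$ and $Y_t \coloneq W_{\reg,t}(\vt;x)\phi$, and combine the pathwise ODE for $Y$ with the scalar Itô formula for $\eul^{M}$. Since $A\in C_b^1(\RR^d,\RR^d)$, the identity \cref{Phitxreg} applies $\PP$-a.s., so
\begin{align*}
M_t = -\int_0^t V(b_s^x)\Id s + \ii\int_0^t A(b_s^x)\Id b_s + \frac{\ii}{2}\int_0^t \mathrm{div}\,A(b_s^x)\Id s
\end{align*}
is a continuous complex semimartingale whose martingale part has quadratic variation $-\int_0^t|A(b_s^x)|^2\Id s$. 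Scalar Itô applied to $z\mapsto\eul^z$ then yields
\begin{align*}
\eul^{M_t} = 1 + \ii\int_0^t\eul^{M_s}A(b_s^x)\Id b_s + \int_0^t \eul^{M_s}\Big({-}V(b_s^x) + \tfrac{\ii}{2}\mathrm{div}\,A(b_s^x) - \tfrac{1}{2}|A(b_s^x)|^2\Big)\Id s.
\end{align*}

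Next I would invoke the integration by parts formula in the form (scalar semimartingale)$\times$(Hilbert-valued process of bounded variation). By \cref{prop:knownGMM}, $Y$ takes values in $\dom(N)$ pathwise and is $C^1([0,\infty),\Fock)$ with $\Id Y_t = -\wt H^{0,0}(b_t^x)Y_t\,\Id t$; in particular it is of finite variation in $\Fock$, so the quadratic covariation with $\eul^{M_\cdot}$ vanishes. Applying the product rule pathwise, which is permitted since $A(b_s^x)X_s$ and $\wt H^{0,0}(b_s^x)Y_s$ are continuous $\Fock$-valued processes with the integrability needed for the Itô and Bochner integrals (using boundedness of $A$, $V$, $\mathrm{div}\,A$ and \cref{nbWreg}), one obtains
\begin{align*}
X_t = \phi + \ii\int_0^t A(b_s^x) X_s\,\Id b_s + \int_0^t X_s\Big({-}V(b_s^x) + \tfrac{\ii}{2}\mathrm{div}\,A(b_s^x) - \tfrac{1}{2}|A(b_s^x)|^2\Big)\Id s - \int_0^t \eul^{M_s}\wt H^{0,0}(b_s^x) Y_s\,\Id s.
\end{align*}
Since $\eul^{M_s}$ is a scalar and $\wt H^{0,0}(b_s^x)$ is linear, the last integrand equals $\wt H^{0,0}(b_s^x)X_s$, and the definition \cref{eq:Ht} gives $\wt H^{0,0}(y) + V(y) + \tfrac{1}{2}|A(y)|^2 - \tfrac{\ii}{2}\mathrm{div}\,A(y) = \wt H^{A,V}(y)$. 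The four $\Id s$-terms therefore collapse to $-\int_0^t \wt H^{A,V}(b_s^x)X_s\,\Id s$, which is exactly \cref{SDEX}. Continuity of the paths of $X$ and their membership in $\dom(N)$ are inherited from $Y$, since $X_t$ equals a scalar multiple of $Y_t$.

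The main technical nuisance will be justifying the Hilbert-space valued product rule in the precise form needed, i.e.\ confirming that the cross-variation between the scalar semimartingale $\eul^{M}$ and the $\dom(N)$-valued finite-variation process $Y$ really vanishes and that $A(b_\cdot^x)X_\cdot$ is an admissible integrand for the $\Fock$-valued stochastic integral against $b$. Both points reduce to standard facts once one observes that $\sup_{s\le t}\|Y_s\|_{\Fock}$ and $\sup_{s\le t}|\eul^{M_s}|$ have moments of every order (from \cref{nbWreg} and the boundedness of $V$, $A$, $\mathrm{div}\,A$), so that all integrands lie in the relevant $L^2$-spaces with values in $\Fock$. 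This gives the semimartingale property of $X$ and the desired SDE.
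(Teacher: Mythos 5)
Your proof is correct and follows essentially the same route as the paper's: the paper's proof is a one-liner invoking \cref{Phitxreg}, \cref{defStx}, \cref{prop:knownGMM} and It\^{o}'s product formula, and what you have written is exactly the detailed unpacking of that argument — the scalar It\^{o} formula applied to $\eul^{-S_\bullet(x)}$ (using that, for regular $A$, the generalized Stratonovich integral reduces to the ordinary It\^{o} form with the $\tfrac12\operatorname{div}A$ correction), the product rule against the pathwise $C^1$ process $W_{\reg,\bullet}(\vt;x)\phi$ from \cref{prop:knownGMM}, the vanishing cross-variation, and the algebraic collapse of the drift terms into $\wt{H}^{A,V}$ via \cref{eq:Ht}.
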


\begin{proof}
	Under the present assumption on $A$, $\Phi_t(x)$ is given by \eqref{Phitxreg}.
	The assertion thus follows from \cref{defStx}, \cref{prop:knownGMM} and It\^{o}'s product formula.
\end{proof}

\subsection{Markov and semigroup properties for regular $A$ and $V$}

In this subsection we still restrict our discussion to the case where $\Geb=\RR^d$, $V\in C_b(\RR^{d},\RR)$ and $A\in C_b^1(\RR^{d},\RR^d)$. 
Our goal is to derive a Markov property involving the Feynman--Kac operators given by
\begin{align}\label{def:Treg}
(T_{\reg,t}\Psi)(x)&\coloneq\EE[\eul^{-\ol{S}_t(x)}W_{\reg,t}(\vt;x)^*\Psi(b_t^x)],\quad x\in\RR^d,
\end{align}
for all $\Psi\in L^p(\RR^d,\Fock)$ with $p\in[1,\infty]$. In view of \cref{nbWreg} and since
convolution with the probability density function of $b_t^x$ is a contraction on $L^p(\RR^d)$,
it is clear that the expectation in \cref{def:Treg} is well-defined and 
\begin{align}\label{LpbdTreg}
\|T_{\reg,t}\Psi\|_p&\le\eul^{(c_\vt +\|V\|_\infty)t}\|\Psi\|_p,\quad t\ge0, 
\end{align}
again for all $p\in[1,\infty]$ and $\Psi\in L^p(\RR^d,\Fock)$. As a corollary of the Markov property
proven in the next lemma, $(T_{\reg,t})_{t\ge0}$ turns out to be a semigroup on $L^p(\RR^d,\Fock)$.


\begin{lem}
	Let $\Psi:\RR^{d}\to\Fock$ be measurable and bounded, $x\in\RR^{d}$ and $t\ge s\ge0$. Then, $\PP$-a.s.,
	\begin{align}\label{Mark1}
		\EE^{\fr{F}_s}[\eul^{-\ol{S}_t(x)}W_{\reg,t}(\vt;x)^*\Psi(b_t^x)]
		&=\eul^{-\ol{S}_s(x)}W_{\reg,s}(\vt;x)^*(T_{\reg,t-s}\Psi)(b_s^x).
	\end{align}
\end{lem}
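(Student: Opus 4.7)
The plan is to exploit the flow equation \cref{flowW} for $W_{\reg,t}(\vt;x)$ together with an analogous additivity property for the action $S_t(x)$, and then to use independence of the time-shifted Brownian motion $\hat b = (b_{s+r}-b_s)_{r\ge 0}$ from $\fr F_s$.

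First, I would decompose $S_t(x)-S_s(x)$. Under the regularity assumptions of this subsection, $\Phi_t(x)$ is given by the classical formula \cref{Phitxreg}, so
\begin{align*}
S_t(x)-S_s(x) &= \int_s^t V(b_r^x)\,\Id r - \ii\int_s^t A(b_r^x)\,\Id b_r - \frac{\ii}{2}\int_s^t \mathrm{div}A(b_r^x)\,\Id r.
\end{align*}
Writing $\hat S_{r}(y)$ for the object obtained by putting $\hat b$ in place of $b$ in \cref{defStx} and starting at $y$, we have the cocycle identity $S_t(x)-S_s(x) = \hat S_{t-s}(b_s^x)$ $\PP$-a.s. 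Combined with \cref{flowW}, this yields
\begin{align*}
\eul^{-\ol S_t(x)}W_{\reg,t}(\vt;x)^*\Psi(b_t^x) = \eul^{-\ol S_s(x)}W_{\reg,s}(\vt;x)^*\cdot\eul^{-\ol{\hat S_{t-s}(b_s^x)}}W_{\reg,s,t}(\vt;b_s^x)^*\Psi\big(\hat b^{b_s^x}_{t-s}\big).
\end{align*}

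Next, the factor $\eul^{-\ol S_s(x)}W_{\reg,s}(\vt;x)^*$ is $\fr F_s$-measurable (as a bounded $\LO(\Fock)$-valued random variable), so it can be pulled outside $\EE^{\fr F_s}[\cdot]$. What remains inside the conditional expectation is a functional of $\hat b$ and of $b_s^x$, where $\hat b$ is independent of $\fr F_s$ and $b_s^x$ is $\fr F_s$-measurable. Applying the standard factorization lemma for conditional expectations with respect to independent and measurable inputs (with a Fock space parameter, but the usual proof via monotone class / Bochner approximation works because $\|W_{\reg,s,t}(\vt;y)\|$ is uniformly bounded by $\eul^{c_\vt(t-s)}$ in view of \cref{nbWreg}), we obtain
\begin{align*}
\EE^{\fr F_s}\!\left[\eul^{-\ol{\hat S_{t-s}(b_s^x)}}W_{\reg,s,t}(\vt;b_s^x)^*\Psi\big(\hat b^{b_s^x}_{t-s}\big)\right] = G(b_s^x),
\end{align*}
where $G(y) \coloneq \EE[\eul^{-\ol{\hat S_{t-s}(y)}}W_{\reg,s,t}(\vt;y)^*\Psi(\hat b^{y}_{t-s})]$.

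Finally, since $\hat b$ has the same law as $b$ and $W_{\reg,s,t}(\vt;y)$ is constructed from $\hat b$ by exactly the same prescription as $W_{\reg,t-s}(\vt;y)$ from $b$, the map $G$ coincides with $T_{\reg,t-s}\Psi$ from \cref{def:Treg}, which proves \cref{Mark1}. The only mildly delicate point is justifying the factorization of conditional expectations in the operator-valued setting; this I would handle by first approximating $\Psi$ by simple functions with values in a total countable set of exponential vectors (making $y\mapsto W_{\reg,s,t}(\vt;y)^*\Psi(\hat b^y_{t-s})$ separably valued and jointly measurable via \cref{def:Yft}) and invoking dominated convergence based on the uniform bound \cref{nbWreg}.
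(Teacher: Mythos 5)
Your proof follows essentially the same route as the paper's: the cocycle identity for $S$, the flow equation \cref{flowW} for $W$, pulling out the $\fr F_s$-measurable prefactor $\eul^{-\ol S_s(x)}W_{\reg,s}(\vt;x)^*$, and then the factorization lemma for conditional expectations combined with the fact that $\hat b$ has the same law as $b$. The paper secures the required joint measurability of $(y,\gamma)\mapsto\Theta(y,\gamma)$ by working on Wiener path space, whereas you propose a simple-function/exponential-vector approximation with dominated convergence via \cref{nbWreg}; this is a minor technical variation, not a genuinely different argument.
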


\begin{proof}
       Let $\PP_{\textsc{w}}:\fr{F}_{\textrm{w}}\to[0,1]$ denote the completed Wiener measure on
       $\Omega_{\textrm{w}}\coloneq C([0,\infty),\RR^d)$. Further, let $(\fr{F}_{\textrm{w},t})_{t\ge0}$
       denote the automatically right-continuous
       completion of the natural filtration associated with the evaluation maps
       $\mathrm{ev}_t:\Omega_{\textrm{w}}\to\RR^d$, $\gamma\mapsto\gamma(t)$. 
       Then $\mathrm{ev}=(\mathrm{ev}_t)_{t\ge0}$ is an $(\fr{F}_{\textrm{w},t})_{t\ge0}$-Brownian motion.
       Denote by $(S_t[x,\cdot])_{t\ge0}$ and $(W_{\reg,t}[x,\cdot])_{t\ge0}$ the processes obtained
       upon choosing $b=\mathrm{ev}$ in the construction of $(S_t(x))_{t\ge0}$ and $(W_{\reg,t}(\vt;x))_{t\ge0}$, respectively. 
       Thanks to the assumptions on $A$ and $V$ we may assume that 
	$[0,\infty)\times\RR^{d}\ni(t,x)\mapsto S_{t}[x,\gamma]$ is continuous at every $\gamma\in\Omega_{\textrm{w}}$,
	so that in particular $(x,\gamma)\mapsto S_{t}[x,\gamma]$ is product measurable for all $t\ge0$.
	
	Now fix $x\in\RR^d$ and $t\ge s\ge0$. Employing the notation \cref{def:hatb} we then find
	\begin{align*}
	S_t(x)=S_{t-s}[b_s^x,\hat{b}]+S_s(x), \ \
	\text{$\PP$-a.s., and} \ \ W_{\reg,s,t}(\vt;b_s^x)=W_{\reg,t-s}[b_s^x,\hat{b}] \ \ \text{on $\Omega$.}
	\end{align*}
	In fact, the first relation is standard and the second one is quite obvious as the involved operator-valued processes
	are defined pathwise. These remarks in conjunction with the flow relation \eqref{flowW}
	and the pull-out property of conditional expectations imply
	\begin{align*}
	\EE^{\fr{F}_s}[\eul^{-\ol{S}_t(x)}W_{\UV,t}(\vt;x)^*\Psi(b_t^x)]
	&=\eul^{-\ol{S}_s(x)}W_{\reg,s}(\vt;x)^*\EE^{\fr{F}_s}[\Theta(b_s^x,\hat{b})],\quad\text{$\PP$-a.s.,}
	\end{align*}
	where $\Theta(y,\gamma)\coloneq \eul^{-\ol{S}_{t-s}[y,\gamma]}W_{\reg,t-s}[y,\gamma]^*
	\Psi(y+\mathrm{ev}_{t-s}(\gamma))$ defines a bounded product measurable $\Fock$-valued function on
	$\RR^d\times\Omega_{\textrm{w}}$. Since $b_s^x$ is $\fr{F}_s$-measurable and $\hat{b}$ is
	$\fr{F}_s$-independent, the ``useful rule'' for conditional expectations $\PP$-a.s. yields
	\begin{align*}
		\EE^{\fr{F}_s}[\Theta(b_s^x,\hat{b})]&=
		\EE[\Theta(y,\hat{b})]\big|_{y=b_s^x}=\EE[\Theta(y,b)]\big|_{y=b_s^x}
		=(T_{\reg,t-s}\Psi)(b_s^x).
	\end{align*}
	Here the second equality holds since $\hat{b}$ and $b$ have the same distribution. For each $y\in\RR^d$,
	we used in the third equality that $S_{t-s}[y,b]=S_{t-s}(y)$, $\PP$-a.s., and
	$W_{\reg,t-s}[y,b]=W_{\reg,t-s}(\vt;y)$ on $\Omega$.
	Altogether these remarks prove \eqref{Mark1}. 
\end{proof}

\begin{cor}\label{prop:semigroup}
	Let $p\in[1,\infty]$, $\Psi\in L^p(\RR^{d},\Fock)$ and $t\ge s\ge0$. Then 
	\begin{align*}
	T_{\reg,t}\Psi=T_{\reg,s}T_{\reg,t-s}\Psi.
	\end{align*}
\end{cor}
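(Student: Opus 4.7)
The plan is to deduce the semigroup identity from the Markov property \cref{Mark1} already established in the previous lemma, first for bounded measurable representatives and then by a density argument in $L^p$ for finite $p$.

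Fix $t \geq s \geq 0$ and consider first a bounded measurable $\Psi: \RR^d \to \Fock$; this already covers the case $p=\infty$ after passing to a bounded representative. The definition \cref{def:Treg} then produces $T_{\reg,r}\Psi$ pointwise for every $r \geq 0$, and \cref{LpbdTreg} ensures that $T_{\reg,t-s}\Psi$ is itself bounded and measurable. Using the tower property of conditional expectation together with \cref{Mark1} I would compute
\[ (T_{\reg,t}\Psi)(x) = \EE\bigl[\EE^{\fr{F}_s}\bigl[\eul^{-\ol{S}_t(x)}W_{\reg,t}(\vt;x)^*\Psi(b_t^x)\bigr]\bigr] = \EE\bigl[\eul^{-\ol{S}_s(x)}W_{\reg,s}(\vt;x)^*(T_{\reg,t-s}\Psi)(b_s^x)\bigr] \]
for every $x \in \RR^d$, and the right hand side is exactly $(T_{\reg,s}(T_{\reg,t-s}\Psi))(x)$ by \cref{def:Treg} with $s$ in place of $t$.

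For general $p \in [1,\infty)$ and $\Psi \in L^p(\RR^d, \Fock)$, I would approximate $\Psi$ in $L^p$ by the truncations $\Psi_n \coloneq \chi_{\{\|\Psi(\cdot)\|_\Fock \leq n\}}\Psi$, each of which is bounded and measurable. The identity proven in the bounded case applies to every $\Psi_n$, and the boundedness of both $T_{\reg,t}$ and the composition $T_{\reg,s}T_{\reg,t-s}$ on $L^p(\RR^d,\Fock)$ supplied by \cref{LpbdTreg} then allows me to pass to the limit in $L^p$, recovering the claimed identity for $\Psi$.

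The main obstacle is essentially bookkeeping rather than probabilistic: one has to ensure that the pointwise object $(T_{\reg,t-s}\Psi)(b_s^x)$ on the right hand side of \cref{Mark1} is jointly measurable in $(x,\omega)$ so that the outer expectation is well-defined, and that the resulting function of $x$ coincides, up to an $L^p$-null set, with $T_{\reg,s}$ applied to the $L^p$-class of $T_{\reg,t-s}\Psi$. For bounded $\Psi$ both points are automatic from the construction of $T_{\reg,r}$ via \cref{def:Treg}, and the truncation argument transfers them to the general $L^p$ setting. No further probabilistic input beyond the Markov property is required.
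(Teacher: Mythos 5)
Your proposal is correct and follows essentially the same route as the paper: take expectations in the Markov identity \cref{Mark1} (which is exactly what your tower-property step does) to obtain the identity for bounded $\Psi$, then extend to $L^p$ for $p<\infty$ by density of bounded functions together with the $L^p$-boundedness of the $T_{\reg,r}$ from \cref{LpbdTreg}. Your truncation $\Psi_n=\chi_{\{\|\Psi(\cdot)\|_\Fock\le n\}}\Psi$ is one standard way to realize the density of $L^p\cap L^\infty$ that the paper invokes, and the measurability remarks you add are harmless elaboration rather than a new idea.
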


\begin{proof}
	By \cref{LpbdTreg} every $T_{\reg,r}$ with $r\ge0$ is bounded on $L^p(\RR^{d},\Fock)$. 
	Hence, also for $p<\infty$, we may assume in addition that $\Psi$ is bounded, 
	by density of $L^p(\RR^d,\Fock)\cap L^\infty(\RR^d,\Fock)$ in $L^p(\RR^d,\Fock)$.
	The asserted identity then follows by taking expectations in \eqref{Mark1}. 
\end{proof}

\subsection{Strong continuity}

By our next \lcnamecref{prop:cont}, the semigroup $(T_{\reg,t})_{t\ge0}$ is strongly continuous, so that we can study
its generator in the next \lcnamecref{sec:FKproof}.

\begin{prop}\label{prop:cont}
Assume that $V\in C_b(\RR^{d},\RR)$ and $A\in C_b^1(\RR^{d},\RR^d)$.
	Let $p\in[1,\infty)$. Then $(T_{\reg,t})_{t\ge0}$ seen as a semigroup on $L^p(\RR^{d},\Fock)$ is strongly continuous.
\end{prop}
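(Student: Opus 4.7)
The plan is to combine the semigroup law (\cref{prop:semigroup}) with the uniform operator bound \cref{LpbdTreg} so that strong continuity of $(T_{\reg,t})_{t\ge0}$ on $L^p(\RR^d,\Fock)$ reduces to the single claim $T_{\reg,t}\Psi\to\Psi$ in $L^p$-norm as $t\downarrow 0$. Since \cref{LpbdTreg} makes $\|T_{\reg,t}-\id\|$ uniformly bounded for $t\in[0,1]$, a standard density argument lets me verify this limit only for $\Psi$ in the convenient dense subspace $\mathcal{D}\coloneq\mathrm{span}\{f\phi:f\in C_c(\RR^d),\,\phi\in\mathrm{span}\{\expv{g}:g\in\HP\}\}$ of $L^p(\RR^d,\Fock)$.

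Fixing such a $\Psi$ with $f$ supported in a compact set $K$, I will first establish the pointwise limit $(T_{\reg,t}\Psi)(x)\to\Psi(x)$ for every $x\in\RR^d$. Under the present regularity hypotheses on $A$ and $V$ the formula \cref{Phitxreg} applies and, $\PP$-a.s., $t\mapsto S_t(x)$ is continuous with $S_0(x)=0$, while continuity of $b$ gives $b_t^x\to x$. For the multiplicative functional, \cref{prop:knownGMM} shows that $t\mapsto W_{\reg,t}(\vt;x)\phi\in C^1([0,\infty),\Fock)$ with value $\phi$ at $t=0$ for every $\phi$ in the span of exponential vectors; combined with the uniform norm bound \cref{nbWreg}, a three-$\varepsilon$ argument propagates strong continuity at $0$ of both $W_{\reg,t}(\vt;x)$ and its adjoint to every $\psi\in\Fock$. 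Continuity of $\Psi$ at $x$, together with the pathwise majorant $\eul^{(c_\vt+\|V\|_\infty)t}\|\Psi\|_\infty$ on $t\in[0,1]$, then enables dominated convergence inside $\EE$ to deliver the pointwise limit.

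To upgrade pointwise to $L^p$-convergence I will apply dominated convergence on $\RR^d$. For $t\in[0,1]$ the elementary estimate
\begin{equation*}
\|(T_{\reg,t}\Psi)(x)\|_{\Fock}\le \eul^{c_\vt+\|V\|_\infty}\|\Psi\|_\infty\,\PP[b_t^x\in K]
\end{equation*}
follows from \cref{nbWreg} and boundedness of $V$, and the Gaussian tail estimate \cref{tailboundRRd} shows that $g(x)\coloneq\sup_{t\in[0,1]}\PP[b_t^x\in K]$ is bounded on $\RR^d$ and decays like $\eul^{-c\,\dist(x,K)^2}$ for $|x|$ large, so $g\in L^p(\RR^d)$ for every $p\in[1,\infty)$. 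Together with the $L^p$-function $\|\Psi(\cdot)\|_{\Fock}$, this $g$ dominates $\|(T_{\reg,t}\Psi)(x)-\Psi(x)\|_{\Fock}$ uniformly for $t\in[0,1]$, and dominated convergence concludes the proof. The step I expect to be most delicate is producing exactly such a $t$-uniform $L^p$-dominant; the compact support of the functions in $\mathcal{D}$ and the Gaussian tails of Brownian motion are what make this work, whereas strong continuity of the operator-valued multiplicative functional at $t=0$ is a softer issue already encoded in \cref{prop:knownGMM}.
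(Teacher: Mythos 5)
Your proof is correct, and it follows essentially the same route as the paper's: reduce via the semigroup relation and the uniform bound \cref{LpbdTreg} to showing $T_{\reg,t}\Psi\to\Psi$ in $L^p$ as $t\downarrow0$ on a dense set of products $f\phi$ with $f$ compactly supported and $\phi$ a combination of exponential vectors, then carry out a dominated-convergence argument built on pathwise continuity of $S_t(x)$, $b_t^x$, and $W_{\reg,t}(\vt;x)^*$ at $t=0$ together with the norm bound \cref{nbWreg}. The one organizational difference is that the paper applies Minkowski's inequality to split $\|T_{\reg,t}\Psi-\Psi\|_p$ into a term $\mathcal{N}_1$ localized on $\supp\rho$ (where the $x$-dominant is trivially integrable) and a term $\mathcal{N}_2$ carrying the displacement $\rho(b_t^x)-\rho(x)$, whereas you run a single dominated-convergence argument in $x$ by exhibiting the Gaussian-decaying dominant $g(x)=\sup_{t\in[0,1]}\PP[b_t^x\in K]$ via \cref{tailboundRRd}. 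Both work; your version makes explicit the Gaussian tail estimate that the paper leaves implicit in its treatment of $\mathcal{N}_2$, at the cost of having to produce a global $L^p$ dominant rather than confining the integration to a compact set.
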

\begin{proof}
	By the semigroup relation and \eqref{LpbdTreg} it suffices to show that
	$T_{\UV,t}\Psi\xrightarrow{t\downarrow0}\Psi$ in $L^p(\RR^{d},\Fock)$
	for every $\Psi$ in a total subset of $L^p(\RR^{d},\Fock)$.
	Thus, we only consider $\Psi\coloneq \rho\expv{f}$ with $\rho\in C_0(\RR^{d})$ and $f\in\HP$.
	By Minkowski's inequality, $\|T_{\reg,t}\Psi-\Psi\|_p\le\mc{N}_1(t)+\mc{N}_2(t)$, $t>0$, with
	\begin{align*}
		\mc{N}_1(t)^p&\coloneq\int_{\RR^{d}}\big\|\EE[\eul^{-\ol{S}_t(x)}W_{\reg,t}(x)^*-1]\expv{f}\big\|_{\Fock}^p|\rho(x)|^p\Id x,
		\\
		\mc{N}_2(t)^p&\coloneq\int_{\RR^{d}}\EE\big[
		\|\eul^{-\ol{S}_t(x)}W_{\reg,t}(x)^*\expv{f}\|_{\Fock}|\rho(b_t^x)-\rho(x)|\big]^p\Id x.
	\end{align*}
	In view of \eqref{nbWreg}, 
	$\|\eul^{-\ol{S}_t(x)}W_{\reg,t}(x)^*\expv{f}\|_{\Fock}\le \eul^{(c_{\vt}+\|V\|_\infty)t}\|\expv{f}\|_{\Fock}$ 
	on $\Omega$ for all $t\ge0$ and $x\in\RR^{d}$.
	Hence, standard estimations employing that $\rho$ is compactly supported and uniformly continuous show that
	$\mc{N}_2(t)\to0$, $t\downarrow0$. Fix $x\in\RR^d$ for the moment. Analogously to \eqref{def:Yft} we then find
	\begin{align}\label{Wregstarexpv}
		W_{\reg,t}(x)^*\epsilon(f)&=\eul^{\ol{u_{\reg,t}(x)}-\langle U_{\reg,t}^{+}(x)|f\rangle_{\HP}}
		\expv{\eul^{-t}f-U_{\reg,t}^{-}(x)},\quad t\ge0.
	\end{align}
	The process defined by the right hand side of \cref{Wregstarexpv} is continuous and the same holds for $S(x)$, so that 
	$\eul^{-S_t(x)}\to1$, $t\downarrow0$, on $\Omega$.
	Thus, by dominated convergence,
	$\EE[\eul^{-\ol{S}_t(x)}W_{\reg,t}(x)^*\expv{f}-\expv{f}]\to0$, $t\downarrow0$.
	Invoking the dominated convergence theorem once more, we deduce that $\mc{N}_1(t)\to0$ as $t\downarrow0$.
\end{proof}

\subsection{Proof of the Feynman--Kac formula for regular coefficients}

By means of the stochastic differential equation proven in \cref{ssec:Markov},
we shall now verify in the case $\Geb=\RR^d$ that $H(\vt)$ generates 
$(T_{\reg,t})_{t\ge0}$ seen as a semigroup on $L^2(\RR^{d},\Fock)$.
\begin{prop}\label{prop:FKUVregAV}
	Assume that $\Geb=\RR^d$,  $V\in C_b(\RR^{d},\RR)$ and $A\in C^1_b(\RR^{d},\RR^{d\nu})$. Then
	$\eul^{-tH(\vt)}=T_{\reg,t}$ for all $t\ge0$.
\end{prop}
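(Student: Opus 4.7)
The plan is to identify the generator of the strongly continuous semigroup $(T_{\reg,t})_{t\ge 0}$ on $L^2(\RR^d,\Fock)$, whose existence is guaranteed by \cref{prop:semigroup,prop:cont}, with $-H(\vt)$. Denote this generator by $G$ and introduce the dense subspace
\[
\mc{D}_0 \coloneq \mathrm{span}\bigl\{f\phi : f\in C_0^\infty(\RR^d),\ \phi\in\mathrm{span}\{\expv{h}:h\in\HP\}\bigr\}.
\]
Under the present regularity on $A$ and $V$ one has $\mc{D}_0\subset\dom(H(\vt))$, and $\mc{D}_0$ is a core for $H(\vt)$ via the standard tensor-product argument: $C_0^\infty(\RR^d)$ is a core for $\tfrac{1}{2}(-\ii\nabla-A)^2+V$, the span of exponentials is a core for $N$, and the $N$-bounded perturbation $\vp(\vt_x)$ preserves the core property by Kato--Rellich and \cref{rbNvp}. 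A direct computation using $(-\ii\nabla-A)^2=-\Delta+\ii\,\mathrm{div}A+2\ii A\cdot\nabla+|A|^2$ shows that, for $\Psi=f\phi\in\mc{D}_0$,
\[
H(\vt)\Psi(x) \;=\; -\tfrac{1}{2}\Delta f(x)\phi + \ii A(x)\cdot\nabla f(x)\phi + f(x)\wt{H}^{A,V}(x)^*\phi,
\]
where $\wt{H}^{A,V}(x)^*=\tfrac{1}{2}|A(x)|^2+\tfrac{\ii}{2}\mathrm{div}A(x)+V(x)+N+\vp(\vt_x)$ is the Hilbert-space adjoint of the operator in \cref{eq:Ht}.

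I then derive the $W^*$-analogue of the SDE in \cref{lem:SDEX}. Taking the pathwise operator-theoretic adjoint of the ODE in \cref{prop:knownGMM} (using the self-adjointness of $\wt{H}^{0,0}$) gives $\tfrac{\Id}{\Id t}W_{\reg,t}(\vt;x)^*\phi=-W_{\reg,t}(\vt;x)^*\wt{H}^{0,0}(b_t^x)\phi$ pathwise for $\phi$ in the span of exponentials. Computing $\Id\eul^{-\ol{S}_t(x)}$ by It\^o's formula -- whose martingale part is $-\ii\eul^{-\ol{S}_t(x)}A(b_t^x)\cdot\Id b_t$ and whose drift acquires an additional $-\tfrac{1}{2}|A|^2$-term from the quadratic variation $\Id[\ol{S}_\cdot(x)]_t=-|A(b_t^x)|^2\,\Id t$ -- and combining with the above ODE via It\^o's product rule (the adjoint-valued process being of bounded variation in $t$), the deterministic pieces $\wt{H}^{0,0}+V+\tfrac{\ii}{2}\mathrm{div}A+\tfrac{1}{2}|A|^2$ assemble into $\wt{H}^{A,V,*}$. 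Setting $Z_t\coloneq\eul^{-\ol{S}_t(x)}W_{\reg,t}(\vt;x)^*\phi$ this produces
\[
\Id Z_t \;=\; -\wt{H}^{A,V}(b_t^x)^*Z_t\,\Id t \;-\; \ii A(b_t^x)Z_t\,\Id b_t.
\]

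A further application of It\^o's product formula to $\tilde Z_t\coloneq f(b_t^x)Z_t$, including the quadratic covariation $\Id[f(b^x),Z]_t=-\ii\nabla f(b_t^x)\cdot A(b_t^x)Z_t\,\Id t$ between the two martingale parts, yields after taking expectations (the stochastic integral is a true martingale thanks to boundedness of $f,\nabla f,A$ together with the uniform bound $\|W_{\reg,t}(\vt;x)^*\phi\|\le\eul^{c_\vt t}\|\phi\|$ from \cref{nbWreg})
\[
(T_{\reg,t}\Psi)(x)-f(x)\phi \;=\; \EE\bigg[\int_0^t\bigl(-f\,\wt{H}^{A,V,*}+\tfrac{1}{2}\Delta f-\ii\nabla f\cdot A\bigr)(b_s^x)\,Z_s\,\Id s\bigg].
\]
The integrand at $s=0$ is precisely $-H(\vt)\Psi(x)$, so the difference quotient $t^{-1}((T_{\reg,t}\Psi)(x)-\Psi(x))$ converges pointwise in $x$ to $-H(\vt)\Psi(x)$ as $t\downarrow 0$. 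To upgrade this to $L^2(\RR^d,\Fock)$-convergence I dominate the integrand by $C_\phi\chi_{\{b_s^x\in\supp(f)\cup\supp(\nabla f)\}}\eul^{(c_\vt+\|V\|_\infty)s}$, invoke Gaussian tail estimates for $b_s^x-x$, and apply dominated convergence. This gives $\mc{D}_0\subset\dom(G)$ with $G|_{\mc{D}_0}=-H(\vt)|_{\mc{D}_0}$.

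Since $\mc{D}_0$ is a core for the self-adjoint operator $H(\vt)$ and $G$ is closed, we deduce $-H(\vt)\subset G$. Both $G$ and $-H(\vt)$ generate strongly continuous semigroups on $L^2(\RR^d,\Fock)$, so their resolvent sets both contain some common interval $(\lambda_0,\infty)$. For $\psi\in\dom(G)$ and $\lambda$ in that interval, pick $\psi'\in\dom(H(\vt))$ with $(\lambda+H(\vt))\psi'=(\lambda-G)\psi$; then $-H(\vt)\subset G$ gives $(\lambda-G)(\psi-\psi')=0$, and injectivity of $\lambda-G$ forces $\psi=\psi'\in\dom(H(\vt))$. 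Hence $G=-H(\vt)$ and the proposition follows. I expect the main obstacle to be the uniform bookkeeping in the previous paragraph, where pathwise/pointwise convergence of the difference quotient is upgraded to an $L^2$-statement; the compact support of $f$ and $\nabla f$ together with Gaussian tail bounds on the Brownian motion provide the needed integrable majorant, but these have to be combined carefully with the uniform operator-norm bound on $W_{\reg,t}^*$.
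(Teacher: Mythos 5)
Your overall strategy is the same as the paper's: identify the generator $G$ of the strongly continuous semigroup $(T_{\reg,t})_{t\ge0}$ with $-H(\vt)$ by evaluating $G$ on a core consisting of finite sums of products $f\phi$, then close the argument with a resolvent-set/second-resolvent-identity argument. The paper cites a specific essential self-adjointness result in the reference Matte.2017 for the core statement; your "standard tensor-product argument" glosses over the fact that $\vp(\vt_x)$ is $x$-dependent, so the problem is not a pure tensor product, but the needed core property is indeed available.

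The substantive gap is in the middle step, where you diverge from the paper's route. You assert a pathwise ODE $\tfrac{\Id}{\Id t}W_{\reg,t}(\vt;x)^*\phi = -W_{\reg,t}(\vt;x)^*\wt H^{0,0}(b_t^x)\phi$ by "taking the pathwise operator-theoretic adjoint" of \cref{IVPWUV}. Taking adjoints in this way does not automatically convert a strong ODE for $W_t\psi$ into a strong ODE for $W_t^*\phi$ — it only gives a weak statement — although here the adjoint ODE does hold and can be verified directly from the explicit formula \cref{Wregstarexpv}. More serious is what you do next: you convert this into the SDE $\Id Z_t = -\wt H^{A,V}(b_t^x)^*Z_t\,\Id t - \ii A(b_t^x)Z_t\,\Id b_t$ for $Z_t=\eul^{-\ol S_t(x)}W_t^*\phi$. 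That is not correct for $t>0$: the It\^o product rule gives the drift
\[
-\Bigl(V+\tfrac{\ii}{2}\mathrm{div}A+\tfrac12|A|^2\Bigr)(b_t^x)Z_t \;-\; \eul^{-\ol S_t(x)}W_t^*\wt H^{0,0}(b_t^x)\phi,
\]
whereas $-\wt H^{A,V}(b_t^x)^*Z_t$ has the term $-\eul^{-\ol S_t(x)}\wt H^{0,0}(b_t^x)W_t^*\phi$ instead; $W_t^*$ and $\wt H^{0,0}(b_t^x)$ do not commute, so the two agree only at $t=0$ where $W_0=\id$. Consequently your displayed integral formula for $(T_{\reg,t}\Psi)(x)-f(x)\phi$ is also wrong for $t>0$. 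Your final conclusion survives because you only use the integrand at $s=0$, but as written the intermediate identities are false. The paper avoids this issue entirely by pairing the forward SDE of $X_t=\eul^{-S_t(x)}W_{\reg,t}(x)\phi$ from \cref{lem:SDEX} against a fixed exponential vector $\expv{f}$ and then applying It\^o's product rule to the two scalar processes $\langle\expv f|X_t\rangle$ and $\ol\rho(b_t^x)$; the $\wt H^{A,V}(b_s^x)^*$ then falls naturally on $\expv{f}$ inside the inner product rather than on the evolved state, so no commutation is needed. Replacing your SDE for $Z$ by the correct one (or by the paper's scalar pairing trick) fixes the proof; the domination step you outline for the $L^2$-upgrade then goes through essentially as you describe.
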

\begin{proof}
	By \cref{prop:semigroup,prop:cont}, we know that $(T_{\reg,t})_{t\ge 0}$ is a strongly continuous semigroup of 
	bounded operators on $L^2(\RR^{d},\Fock)$ and hence has a closed generator, which we denote by $G$.
	
	Under the present assumptions on $V$ and $A$, we know from \cite[Remark~5.8 \& Example~6.4]{Matte.2017}
	that $H(\vt)$ is essentially selfadjoint on $\mathrm{span}\{\rho\expv{f}|\,\rho\in C_0^\infty(\RR^{d}),\,f\in\HP\}$.
	Pick $\rho\in C_0^\infty(\RR^{d})$ and $f\in\HP$. Using the notation \cref{eq:Ht}, we have, for a.e. $x\in\RR^{d}$,
	\begin{align}\nonumber
		&(H(\vt)\rho\expv{f})(x)
		\\\label{eq:HUVexpv}
		&=-\frac{1}{2}(\Delta\rho)(x)\expv{f}+\ii A(x)\cdot\nabla\rho(x)\expv{f}+\rho(x)\wt{H}^{A,V}(x)^*\expv{f}.
	\end{align}
	Let $x\in\RR^{d}$ and $\phi\in\mathrm{span}\{\expv{h}|\,h\in\HP\}$. 
	Then \cref{lem:SDEX} in conjunction with It\^{o}'s formula 
	\begin{align*}
		\ol{\rho}(b^x_t)&=\ol{\rho}(x)+\int_0^t\nabla\ol{\rho}(b_s^x)\Id b_s+\frac{1}{2}\int_0^t\Delta\ol{\rho}(b_s^x)\Id s,
		\quad t\ge0,\;\text{$\PP$-a.s.},
	\end{align*}
	and It\^{o}'s product formula implies
	\begin{align*}
		&\langle\expv{f}|\eul^{-S_t(x)}W_{\reg,t}(x)\phi\rangle_{\Fock}\ol{\rho}(b_t^x)
		-\langle\expv{f}|\phi\rangle_{\Fock}\ol{\rho}(x)
		\\
		&=-\int_0^t\langle (H(\vt)\rho\expv{f})(b_s^x)|\eul^{-S_s(x)}W_{\reg,s}(x)\phi\rangle_{\Fock}\Id s
		\\
		&\quad+\int_0^t\langle \expv{f}|\eul^{-S_s(x)}W_{\reg,s}(x)\phi\rangle_{\Fock}(\nabla+\ii A(b_s^x))\ol{\rho}(b_s^x)\Id b_s,
		\quad t\ge0,\;\text{$\PP$-a.s.}
	\end{align*}
	In view of \cref{nbWreg}, 
	the stochastic integral in the last line is a martingale and hence drops out upon taking expectations. This yields
	\begin{align}\label{genG1}
		\langle (T_{\reg,t}\rho\expv{f})(x) | \phi\rangle_{\Fock}
		 - \langle \rho(x)\expv f | \phi \rangle_{\Fock} & = 
		 -\int_0^t \langle (T_{\reg,s}H(\vt)\rho\expv{f})(x) | \phi \rangle_{\Fock} \Id s. 
	\end{align}
	Since $\phi$ can be chosen in a dense subset of $\Fock$, \cref{genG1} extends to all $\phi\in\Fock$.
	In fact, to pass to general $\phi$ under the integral in \cref{genG1} we employ dominated convergence
	taking into account that 
	$\|(T_{\reg,s}H(\vt)\rho\expv{f})(x)\|_{\Fock}\le\eul^{(c_{\vt}+\|V\|_\infty)t}\|H(\vt)\rho\expv{f})\|_\infty$ for all $s\in[0,t]$;
	recall \cref{LpbdTreg}.
	Setting $\phi = \Phi(x)$ in \cref{genG1} for any $\Phi\in L^2(\RR^{d},\Fock)$, integrating with respect to $x$, 
	applying Fubini's theorem and observing that the right hand side of the next identity is well-defined 
	as an $L^2(\RR^d,\Fock)$-valued Bochner--Lebesgue integral, by the continuity of its integrand, we find
	\begin{align*}
		T_{\reg,t}\rho\expv f - \rho \expv f = - \int_0^t T_{\reg,s}H(\vt) \rho \expv f \Id s,\quad t\ge0.
	\end{align*}
	This shows that $\rho \expv f \in\dom(G)$ and $G \rho \expv f = H(\vt)\rho \expv f$.
	By the observation prior to \cref{eq:HUVexpv}, this implies $H(\vt)\subset G$.
	By the bound \cref{LpbdTreg} and the Hille--Yosida theorem, $(-\infty,-c_{\vt}-\|V\|_\infty)$ is contained in the resolvent set of
	$G$ and in particular the intersection of the resolvent sets of $H(\vt)$ and $G$ is non-empty. 
	Combined with the second resolvent identity, this implies $G = H(\vt)$, which finishes the proof.
\end{proof}

In the next corollary we implicitly
employ a standard procedure due to Simon \cite{Simon.1978b} 
(see also \cite{BroderixHundertmarkLeschke.2000}) to infer Feynman--Kac formulas
on proper subsets $\Geb$ of $\RR^d$ from the previous proposition.
The procedure from \cite{Simon.1978b} has been adapted to models in non-relativistic quantum field theory
in \cite{Matte.2021}. We refer to the latter two papers for any further explanations of Simon's procedure. Here
we shall merely argue that technical criteria given in \cite{Matte.2021} are satisfied in the present setting.

\begin{cor}\label{cor:FKUVregAV}
Assume that $V\in C_b(\RR^{d},\RR)$ and $A\in C^1_b(\RR^{d},\RR^{d})$. Consider the
Hamiltonian $H(\vt)$ on a general open subset $\Geb\subset\RR^d$. Then \cref{eq:FKmain} holds
for all $t\ge0$ and $\Psi\in L^2(\Geb,\Fock)$.
\end{cor}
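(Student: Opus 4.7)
The plan is to apply Simon's truncation procedure, as adapted to quantum field theoretic models in \cite{Matte.2021}, using \cref{prop:FKUVregAV} as the input on $\RR^d$. For each $n\in\NN$, introduce (after a suitable smoothing near $\partial\Geb$) a confining potential $V_n\in C_b(\RR^d,\RR)$ satisfying $V_n\equiv V$ on compact exhausting subsets $\Geb_n\subset\Geb$ and $V_n\ge n$ on $\Geb^c$, and let $H_n(\vt)$ denote the corresponding polaron Hamiltonian on $L^2(\RR^d,\Fock)$. For $\Psi\in L^2(\Geb,\Fock)$ extended by zero to $\tilde\Psi\in L^2(\RR^d,\Fock)$, \cref{prop:FKUVregAV} applied to $H_n(\vt)$ yields
\begin{align*}
(\eul^{-tH_n(\vt)}\tilde\Psi)(x)=\EE\bigl[\eul^{-\ol{S}_t^{(n)}(x)}W_{\reg,t}(\vt;x)^*\tilde\Psi(b_t^x)\bigr],\quad\text{a.e. }x\in\RR^d,
\end{align*}
where $S_t^{(n)}(x)\coloneq\int_0^tV_n(b_s^x)\,\Id s-\ii\Phi_t(x)$. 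The goal is to pass to the limit $n\to\infty$ on both sides.

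On the probabilistic side, the Lebesgue measure of $\{s\in[0,t]: b_s^x\in\Geb^c\}$ is $\PP$-a.s.\ strictly positive on $\{t\ge\tau_\Geb(x)\}$, while $\int_0^tV_n(b_s^x)\,\Id s$ equals $\int_0^tV(b_s^x)\,\Id s$ on $\{t<\tau_\Geb(x)\}$ eventually in $n$. Hence $\eul^{-\int_0^tV_n(b_s^x)\,\Id s}\to\chi_{\{t<\tau_\Geb(x)\}}\eul^{-\int_0^tV(b_s^x)\,\Id s}$ pointwise $\PP$-a.s. Combined with the uniform operator-norm bound $\|W_{\reg,t}(\vt;x)\|\le\eul^{c_\vt t}$ from \cref{nbWreg} and the $n$-independence of $\Phi_t(x)$ and $W_{\reg,t}(\vt;x)$, dominated convergence gives $L^2(\Geb,\Fock)$-convergence of the right-hand side to $\EE[\chi_{\{t<\tau_\Geb(x)\}}\eul^{-\ol{S}_t(x)}W_{\reg,t}(\vt;x)^*\Psi(b_t^x)]$. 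On the operator side, the quadratic forms $\fr{h}_n$ of $H_n(\vt)$ are monotone non-decreasing in $n$, and Kato's monotone form convergence theorem shows that their limit equals $\fr{h}(\vt)$ on $L^2(\Geb,\Fock)$ (viewed as a closed subspace of $L^2(\RR^d,\Fock)$ by extension by zero) and is $+\infty$ on its orthogonal complement, whence $\eul^{-tH_n(\vt)}\tilde\Psi\to\eul^{-tH(\vt)}\Psi$ strongly in $L^2(\RR^d,\Fock)$.

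The main technical obstacle is to confirm that the monotone limit form coincides with $\fr{h}(\vt)$ built from the minimal form $\fr{q}^{\min}$ rather than from the maximal form $\fr{q}^{\max}$, i.e., that the confining procedure really implements Dirichlet boundary conditions on $\partial\Geb$ (an identification which is not a priori obvious for arbitrary open $\Geb$). This holds because $\fr{q}^{\min}$ admits $\mathrm{span}\{f\phi\mid f\in C_0^\infty(\Geb),\phi\in\fdom(N)\}$ as a form core, a subspace on which each $V_n$ vanishes, while any $\Psi\in L^2(\RR^d,\Fock)$ with non-trivial restriction to $\Geb^c$ satisfies $\langle\Psi|V_n\Psi\rangle\to+\infty$ and is thus excluded from the limit form's domain. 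The remaining verifications---uniform integrability for the dominated convergence and product measurability for Fubini-type manipulations---are precisely the technical criteria that \cite{Matte.2021} extracts from \cite{Simon.1978b}, and each is immediate in our setting thanks to the boundedness and continuity of $A$ and $V$ together with \cref{nbWreg}.
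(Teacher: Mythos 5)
Your overall strategy---apply the whole-space Feynman--Kac formula and pass to $\Geb$ via Simon's confinement procedure---is the right family of ideas, and you correctly identify the dangerous point (the limit form must be built on $\fr{q}^{\min}$, not $\fr{q}^{\max}$). However, the naive confining potential $V_n$ you propose does \emph{not} resolve that issue, and both sides of your argument fail for general open $\Geb$.

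On the probabilistic side, you claim that the occupation time $|\{s\in[0,t]:b_s^x\in\Geb^c\}|$ is $\PP$-a.s.\ strictly positive on $\{t\ge\tau_\Geb(x)\}$. This is false whenever $\Geb^c$ has Lebesgue measure zero; for example $\Geb=\RR^2\setminus(\{0\}\times[0,1])$. There $\tau_\Geb(x)<\infty$ a.s., yet the occupation measure of the slit is zero, so $\int_0^t V_n(b_s^x)\,\Id s$ does \emph{not} tend to $+\infty$ on $\{t\ge\tau_\Geb(x)\}$ and the indicator $\chi_{\{t<\tau_\Geb(x)\}}$ is never produced by your limit. The corresponding operator-theoretic gap is identical: your two observations---that the core $\mathrm{span}\{f\phi\mid f\in C_0^\infty(\Geb),\phi\in\fdom(N)\}$ is contained in the limit domain, and that vectors with non-trivial restriction to $\Geb^c$ are excluded---leave untouched the vectors $\Psi\in\dom(\fr q^{\min}_{\RR^d})$ which vanish a.e.\ on $\Geb^c$ (automatic when $|\Geb^c|=0$) but whose restriction to $\Geb$ lies in $\dom(\fr{q}^{\max}_\Geb)\setminus\dom(\fr{q}^{\min}_\Geb)$. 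In the slit example such vectors exist because the slit has positive capacity, and they would sit in the monotone limit domain, so the limit form would \emph{not} be $\fr{h}_\Geb(\vt)$. Whether your construction gives the Dirichlet form depends on the precise rate at which $V_n$ blows up \emph{inside} $\Geb$ near $\partial\Geb$, which you never fix or control.

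The paper avoids this by invoking the abstract version of Simon's procedure in \cite[Lemma~3.4]{Matte.2021}, which is formulated around the weight $Y(x)=\dist(x,\Geb^c)^{-3}+\sum_n|\nabla\chi_n(x)|^2$. The blow-up of $Y$ like $\dist(x,\Geb^c)^{-3}$ is exactly the quantitative input that forces Dirichlet boundary conditions regardless of how thin $\Geb^c$ is, and the remaining work (verifying that $\scr{D}_Y$ sits inside $\dom(\fr{h}_\Geb(\vt))$, is a form core, and that the two forms agree on it) is delegated to \cite[Proposition~5.13]{Matte.2021}. If you want to rescue your route, you would have to either restrict to $\Geb$ for which $\Geb^c$ has positive measure (insufficient here), or redesign $V_n$ to diverge at a controlled rate \emph{inside} $\Geb$ and then prove that this rate selects $\fr{q}^{\min}_\Geb$---at which point you have essentially reconstructed the $Y$-weighted criteria of \cite{Matte.2021}.
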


\begin{proof}
Let $\fr{h}_{\RR^d}(\vt)$ be the polaron form 
on $\RR^d$ and $\fr{h}_{\Geb}(\vt)$ the one on $\Geb$.
To infer \cref{eq:FKmain} from \cref{prop:FKUVregAV} we only have verify that these quadratic forms
satisfy certain criteria permitting to apply \cite[Lemma~3.4]{Matte.2021}.

Let $K_n$, $n\in\NN$, be compact sets exhausting $\Geb$ in the sense that $K_n\subset\mathring{K}_{n+1}$
for all $n\in\NN$ and $\bigcup_{n=1}^\infty K_n=\Geb$. Further, let $\chi_n\in C_0^\infty(\RR^d)$
with $0\le \chi_n\le1$ satisfy $\chi_n=1$ on $K_n$ and $\chi_n=0$ on $K_{n+1}^c$. Define
$Y:\RR^d\to[0,\infty]$ by 
$Y(x)\coloneq\dist(x,\Geb^c)^{-3}+\sum_{n=1}^\infty|\nabla\chi_n(x)|^2$ for all $x\in\Geb$ and
$Y(x)\coloneq\infty$ for all $x\in\Geb^c$.
Set $\scr{D}_Y\coloneq \{\Psi\in\dom(\fr{h}_{\RR^d}(\vt))|\,Y\|\Psi\|^2_{\Fock}\in L^1(\RR^d)\}$.
Then $\Psi=0$ a.e. on $\Geb^c$ for every $\Psi\in\scr{D}_Y$, whence we can interpret $\scr{D}_Y$
as a subspace of $L^2(\Geb,\Fock)$ in the canonical fashion. By virtue of \cite[Lemma~3.4]{Matte.2021}
it then suffices to verify:
\begin{enumerate}
\item[(a)] $\scr{D}_Y\subset\dom(\fr{h}_{\Geb}(\vt))$.
\item[(b)] The closure of $\scr{D}_Y$ with respect to the norm associated with $\fr{h}_{\Geb}(\vt)$ 
is equal to $\dom(\fr{h}_{\Geb}(\vt))$.
\item[(c)] $\fr{h}_{\Geb}(\vt)[\Psi]=\fr{h}_{\RR^d}(\vt)[\Psi]$ for all $\Psi\in\scr{D}_Y$.
\end{enumerate}
To verify (a) and (b) we recall that $\dom(\fr{h}_{\Geb}(\vt))=\dom(\fr{q}^{\min})$ and that the
norms associated with $\fr{h}_{\Geb}(\vt)$ and $\fr{q}^{\min}$ are equivalent. In other words,
to prove (a) and (b) we can assume without loss of generality that $V_-=0$ and $\vt=0$.
But then (a) and (b) are special cases of \cite[Proposition~5.13]{Matte.2021}.
Furthermore, $\fr{h}_{\Geb}(\vt)[\Psi]=\fr{h}_{\RR^d}(\vt)[\Psi]$ obviously holds for all 
$\Psi\in\mathrm{span}\{f\phi|\,f\in C_0^\infty(\Geb),\,\phi\in\fdom(N) \}$, i.e., for all $\Psi$
in a core for $\fr{h}_{\Geb}(\vt)[\Psi]$. By (a) and the closedness of $\fr{h}_{\RR^d}(\vt)$,
this entails (c).
\end{proof}


\section{Bounds on the interaction processes}\label{sec:Upm}

\noindent
The objective of this \lcnamecref{sec:Upm} is to prove \cref{lemUregU} as well as
the following theorem on the $\HP$-valued processes defined in \cref{defUinftyminus,defUinftyplus}.
Readers who wish to jump over technical details 
can move on to the next section after reading the theorem.

\begin{thm}\label{lemexpmomentU}
Assume that $\Geb$ fulfills \cref{tailbound}. 
	Let $p>0$ and define
	\begin{align}\label{def:sp}
		\sigma_p \coloneq \inf\big\{\sigma \ge 2 \big| \ 32\sqrt{2p}L_1(v_{\sigma})\le 1\wedge\sqrt{4C_\Geb}\big\} < \infty.
	\end{align}
	Then there exists $c_{\Geb}\in[1,\infty)$, solely depending on $\Geb$, such that
	\begin{align}\label{expmomentUminus}
		\sup_{t>0}\sup_{x\in\Geb}\EE\Big[\chi_{\{t<\tau_{\Geb}(x)\}}\eul^{p\|U_{\sigma,t}^\pm(x)\|_{\HP}^2/(1\wedge t)}\Big]
		&\le c_{\Geb}\eul^{p\sup_{y\in\Geb}\|\tilde{v}_{\sigma_p,y}\|_\HP^2},
		\quad\sigma\in[2,\infty).
	\end{align}
	Furthermore, let $v^1,v^2,\ldots$ be coupling functions fulfilling the same hypotheses as $v$ such
	that $L_1(v^n-v)\to0$ as $n\to\infty$. Define $U_{\sigma,t}^{n,\pm}(x)$ by putting $v^n$ in place of $v$ in
	\cref{defUinftyminus,defUinftyplus}. Then, for all $p>0$ and $\sigma\in[2,\infty)$,
	\begin{align}\label{convUpmLp}
		\lim_{n\to\infty}\sup_{t>0}\sup_{x\in\Geb}
		\EE\big[(1+t^{-1})^{p/2}\chi_{\{t<\tau_{\Geb}(x)\}}\|U_{\sigma,t}^{n,\pm}(x)-U_{\sigma,t}^\pm(x)\|_{\HP}^p\big]&=0.
	\end{align}
\end{thm}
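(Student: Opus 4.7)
The finiteness of $\sigma_p$ follows from hypothesis~(b) in \cref{sssec:coupling}: the integrand defining $L_1(v_\sigma)^2$ is dominated by that defining $L_1(v)^2$ and vanishes pointwise as $\sigma\to\infty$, so dominated convergence yields $L_1(v_\sigma)\to 0$. By \cref{lemsigmaUpm}, the processes $U^\pm_{\sigma,t}(x)$ and $U^\pm_{\sigma_p,t}(x)$ agree $\PP$-a.s. on $\{t<\tau_\Geb(x)\}$ for every $\sigma\ge 2$, so throughout the proof of \cref{expmomentUminus} I may and do replace $\sigma$ by $\sigma_p$. The defining inequality for $\sigma_p$ then provides quantitative control on $L_1(v_{\sigma_p})$ in terms of $p$ and $C_\Geb$.

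Following the decomposition in \cref{defUinftyminus,defUinftyplus}, I split $U^\pm_{\sigma_p,t}(x)$ into three contributions and estimate each separately: the Bochner integral $U_{\reg,t}^\pm(\tilde v_{\sigma_p};x)$, the ``boundary'' difference of $\beta^\pm_{\sigma_p}$ evaluated at $x$ and $b_t^x$, and the stochastic integral built from $M^\pm_{\sigma_p,t}(x)$. The Bochner integral has deterministic $\HP$-norm at most $(1\wedge t)\sup_y\|\tilde v_{\sigma_p,y}\|_\HP$, which accounts for the explicit factor $\eul^{p\sup_y\|\tilde v_{\sigma_p,y}\|_\HP^2}$ on the right-hand side of \cref{expmomentUminus}. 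For the boundary difference I use \cref{lembetaC1HP} to obtain the Lipschitz-type estimate $\|\beta^\pm_{\sigma_p,x}-\beta^\pm_{\sigma_p,y}\|_\HP\le\|\alpha^\pm_{\sigma_p}\|_\infty\,d_\Geb(x,y)$ for $x,y$ in the same connected component of $\Geb$, with $\|\alpha^\pm_{\sigma_p}\|_\infty$ controlled by a multiple of $L_1(v_{\sigma_p})$. A layer-cake integration against the Gaussian tail provided by \cref{tailbound} then yields exponential moments of $\|\beta^\pm_{\sigma_p,x}-\eul^{-t}\beta^\pm_{\sigma_p,b_t^x}\|_\HP^2/(1\wedge t)$ that are finite precisely under the $\sqrt{4C_\Geb}$ threshold in the definition of $\sigma_p$; this is where $C_\Geb$ enters.

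For the stochastic piece, the deterministic bound $\mathrm{tr}[M^\pm_{\sigma_p}(x)]_t\le\sup_y\|\alpha^\pm_{\sigma_p,y}\|_{\HP^d}^2\int_0^t\eul^{\pm 2s}\Id s\le c\,L_1(v_{\sigma_p})^2(1\wedge t)$ is the key input. To derive $\EE[\exp(\lambda\|M^\pm_{\sigma_p,t}(x)\|_\HP^2/(1\wedge t))]<\infty$, I expand the exponential in its power series, apply the vector-valued Burkholder--Davis--Gundy inequality $\EE[\|M^\pm_{\sigma_p,t}(x)\|_\HP^{2k}]\le c_k(\mathrm{tr}[M^\pm_{\sigma_p}(x)]_t)^k$ with polynomial-in-$k$ constants, and sum the resulting series; convergence requires $\lambda L_1(v_{\sigma_p})^2$ to lie below a universal constant, which is the ``$\le 1$'' condition in the definition of $\sigma_p$. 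Combining the three contributions by H\"older's inequality with equal exponents and tracking constants produces the overall factor $32\sqrt{2p}$ in \cref{def:sp} and completes the proof of \cref{expmomentUminus}.

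For the convergence \cref{convUpmLp}, I apply the same three-piece decomposition to $U^{n,\pm}_{\sigma,t}(x)-U^\pm_{\sigma,t}(x)$. Each non-martingale piece is controlled uniformly in $y\in\Geb$ in $\HP$-norm by a constant times $L_1(v^n-v)$ via the bounds $\|\beta^{n,\pm}_{\sigma,y}-\beta^\pm_{\sigma,y}\|_\HP\lesssim L_1(v^n-v)$ and an analogous estimate for the Bochner integrand. For the stochastic difference, It\^o's isometry gives $\EE[\|M^{n,\pm}_{\sigma,t}(x)-M^\pm_{\sigma,t}(x)\|_\HP^2]\le c\,L_1(v^n-v)^2(1\wedge t)$, and vector-valued BDG upgrades this to $\EE[\|\cdot\|_\HP^p]\le c_p L_1(v^n-v)^p(1\wedge t)^{p/2}$. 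Dividing by $(1\wedge t)^{p/2}$ and taking the supremum over $x\in\Geb$ and $t>0$ yields \cref{convUpmLp} because $L_1(v^n-v)\to 0$. I expect the main obstacle to be the $\HP$-valued exponential moment bound on the stochastic piece: the standard one-dimensional Novikov approach does not directly apply, and the explicit constants appearing in the defining inequality for $\sigma_p$ have to be matched through careful bookkeeping in the BDG-series argument and the subsequent H\"older combination.
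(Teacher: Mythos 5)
Your overall architecture — decompose $U_{\sigma,t}^\pm$ into the three pieces of \eqref{forUinftypm}, control the Bochner integral trivially, the $\bbeta$ piece by a layer-cake argument against \eqref{tailbound}, and the martingale piece via its quadratic variation — is the one the paper uses, and the proof of \eqref{convUpmLp} (Burkholder for the martingale difference, deterministic bounds for the other pieces) matches the paper's. However, there are two concrete points on which your proposal either diverges from the paper or leaves a genuine gap.

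First, the prefactor $\eul^{p\sup_y\|\tilde v_{\sigma_p,y}\|^2_\HP}$ in \eqref{expmomentUminus} is obtained in the paper by exploiting orthogonality: $\tilde v_\sigma=\chi_{\{\lambda<\sigma\}}v$ is supported on $\{\lambda<\sigma\}$, while $\beta^\pm_\sigma$, $\alpha^\pm_\sigma$, and hence $\bbeta^\pm_{\sigma,t}(x)$ and $M^\pm_{\sigma,t}(x)$ are supported on $\{\lambda\ge\sigma\}$, so
\begin{align*}
\|U_{\sigma,t}^\pm(x)\|_\HP^2
=\|U_{\reg,t}^\pm(\tilde v_\sigma;x)\|_\HP^2+\|\bbeta^\pm_{\sigma,t}(x)-\eul^{(\mp t)\wedge0}M_{\sigma,t}^\pm(x)\|_\HP^2
\end{align*}
is an exact Pythagorean identity, not a triangle inequality. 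This is what lets the $\tilde v_\sigma$-contribution leave the expectation with exponent $p$ unchanged, while a Cauchy--Schwarz step (exponent $2$) is applied only to the remaining two pieces. If you instead combine all three pieces by ``H\"older with equal exponents'' as you propose, you would split $\|U\|^2\le3(\|U_{\reg}\|^2+\|\bbeta\|^2+\|M\|^2)$ and obtain a prefactor of $\eul^{3p\sup_y\|\tilde v_{\sigma_p,y}\|^2}$ (or worse), which does not match the statement. So the orthogonality trick is genuinely needed, not bookkeeping.

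Second, for the exponential moment of $\|M_{\sigma,t}^\pm(x)\|_\HP^2$ you propose a term-by-term Burkholder--Davis--Gundy series expansion. The paper's route is different: it applies It\^o's formula to $\|M_{\sigma,t}^\pm(x)\|_\HP^2$ to produce the scalar local martingale $N_{\sigma}^\pm(x)$ satisfying the self-referential bound $[N_{\sigma}^\pm(x)]_t\le\tfrac12|1-\eul^{\pm2t}|g_\sigma^2\sup_{s\le t}\|M_{\sigma,s}^\pm(x)\|^2_\HP$, then closes the loop using the scalar exponential estimate \eqref{expmartest2} together with a stopping-time localization, and only afterwards handles the factor $\eul^{(\mp2t)\wedge0}/(t\wedge1)$. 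Your BDG-series route could in principle yield exponential integrability for a sufficiently small threshold, but to land exactly at the $32\sqrt{2p}$ in \eqref{def:sp} you would need the sharp $\sqrt{k}$-growth of the $\HP$-valued BDG constants, which the Da Prato--Zabczyk formulation (cited only with an unspecified $c'_p$) does not directly deliver; you acknowledge this as ``bookkeeping'' but it is the heart of the quantitative claim. The paper's It\^o-plus-supermartingale argument sidesteps the constant-growth question entirely, which is what makes the precise threshold attainable.
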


In the remainder of this \lcnamecref{sec:Upm}, we first
discuss the martingale term from \cref{defUinftyminus,defUinftyplus}
thus finishing the proof of \cref{lemUregU} (\cref{ssec:Upmmartingale}) and then prove 
the above theorem (\cref{ssec:expmom}).
We will employ the exponential moment bound in the following remark multiple times:

\begin{rem}\label{rem:var}
	Assume that $(Z_t)_{t\ge0}$ is a predictable $\RR^d$-valued process such that $\int_0^t\EE[|Z_s|^2]\Id s<\infty$
	for all $t\ge0$. Then $M_t\coloneq \int_0^tZ_s\Id b_s$, $t\ge0$, defines a continuous real-valued
	$L^2$-martingale with quadratic variation given by 
	\begin{align*}
		[M]_t=\int_0^t|Z_s|^2\Id s,\quad t\ge0.
	\end{align*}
	We shall often use the bound (see, e.g., \cite[Remark~3.3]{MatteMoller.2018}): 
	\begin{align}\label{expmartest2}
		\EE\bigg[\sup_{s\in[0,t]}\eul^{M_s}\bigg]&\le (1+\pi)^{1/2}\EE\big[\eul^{4[M]_t}\big]^{1/2},\quad t\ge0.
	\end{align}
	This bound also applies to stopped versions of $M$, since
	$M_{\tau\wedge t}\coloneq \int_0^t\chi_{\{s\le\tau\}}Z_s\Id b_s$, $t\ge0$, holds $\PP$-a.s.
	for every stopping time $\tau:\Omega\to[0,\infty]$, where $(\chi_{\{t\le\tau\}}Z_t)_{t\ge0}$ is again predictable.
\end{rem}

\subsection{Discussion of the martingale part}\label{ssec:Upmmartingale}
In this \lcnamecref{ssec:Upmmartingale} we discuss the stochastic integral processes 
$M^\pm_\sigma(x)=(M^\pm_{\sigma,t}(x))_{t\ge0}$ given by \cref{def:Msigmapmx}.
Part~(iii) of the next lemma will in particular complete the proof of \cref{lemUregU}.

\begin{lem}\label{lemItoM}
	Let $\sigma\in[2,\infty)$.
	Then the following holds:
	\begin{enumerate}
		\item[{\rm(i)}] For all $x\in\RR^d$,
		$M_{\sigma}^\pm(x)$ is a continuous $\HP$-valued $L^2$-martingale and its quadratic variation satisfies
		\begin{align}\label{qvMkappa}
			[M_\sigma^\pm(x)]_t&\le\frac{1}{2}|1-\eul^{\pm 2t}|\sup_{y\in\Geb}\|\alpha^\pm_{\sigma,y}\|_{\HP}^2,
			\quad t\ge0.
		\end{align}
		\item[{\rm(ii)}] For every $p>0$, we find solely $p$-dependent $c_p,c_p'\in(0,\infty)$ such that
		\begin{align*}
		\sup_{t>0}
			\sup_{x\in\RR^d}\EE\big[(1+t^{-1})^{p/2}\|\eul^{(\mp t)\wedge0}M_{\sigma,t}^\pm(x)\|_{\HP}^p\big]
			\le c_p\sup_{y\in\Geb}\|\alpha^\pm_{\sigma,y}\|_{\HP}^p\le c_p'L_1(v_\sigma)^p.
		\end{align*}
		\item[{\rm(iii)}] Assume in addition that $(1+\lambda)v\in \mc{L}^\infty(\Geb,\HP)$. 
		Let $x\in\Geb$ and abbreviate
		\begin{align*}
		I_{\sigma,t}^\pm(x)\coloneq\int_0^t\eul^{\pm s}v_{\sigma,b_s^x}\Id s,\quad t\ge0.
		\end{align*}
		Then, $\PP$-a.s., for all $t\ge0$,
		\begin{align}\label{ItoM}
			I_{\sigma,t}^\pm(x)
			&=\beta^{\pm}_{\sigma,x}-\eul^{\pm t}\beta^\pm_{\sigma,b_t^x}+ M_{\sigma,t}^\pm(x)
			\quad\text{on $\{t<\tau_{\Geb}(x)\}$.}
		\end{align}
	\end{enumerate}
\end{lem}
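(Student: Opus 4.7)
The plan is to treat the three parts in order, exploiting that the boundedness of the coefficients reduces part~(iii) to an application of It\^{o}'s formula driven by the eigenvalue PDE for $v$.

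For part~(i), I would first note that $s\mapsto \eul^{\pm s}\alpha^\pm_{\sigma,b_s^x}$ is $(\fr{F}_s)_{s\ge0}$-progressively measurable, continuous as an $\HP$-valued process (by \cref{lembetaC1HP}), and bounded in $\HP$-norm by $\eul^{\pm t}\sup_{y\in\RR^d}\|\alpha^\pm_{\sigma,y}\|_\HP<\infty$ on $[0,t]$. Hence the $\HP$-valued It\^{o} integral $M^\pm_\sigma(x)$ is a well-defined continuous $L^2$-martingale whose quadratic variation is the $\HP$-valued analogue
\[
	[M^\pm_\sigma(x)]_t = \int_0^t \eul^{\pm 2s}\|\alpha^\pm_{\sigma,b_s^x}\|_\HP^2\Id s,
\]
of the classical formula. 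Bounding the integrand by $\sup_{y\in\Geb}\|\alpha^\pm_{\sigma,y}\|_\HP^2$ and computing the exponential integral yields \cref{qvMkappa}.

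For part~(ii), I would combine the Burkholder--Davis--Gundy inequality with (i) to get
\[
	\EE\big[\|M_{\sigma,t}^\pm(x)\|_\HP^p\big] \le c_p\,\big(|1-\eul^{\pm 2t}|/2\big)^{p/2}\sup_{y\in\Geb}\|\alpha^\pm_{\sigma,y}\|_\HP^p.
\]
Multiplying by $\eul^{p((\mp t)\wedge 0)}$ produces the same factor $(1-\eul^{-2t})^{p/2}$ in both sign cases. A short elementary estimate distinguishing $t\le 1$ and $t\ge 1$ (using $1-\eul^{-2t}\le 2t\wedge 1$) gives $(1+t^{-1})(1-\eul^{-2t})\le 4$ for all $t>0$, which absorbs the prefactor $(1+t^{-1})^{p/2}$. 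For the second inequality, I would exploit that on $\{\lambda\ge\sigma\}\subset\{\lambda\ge 2\}$ one has $\lambda-1\ge (1+\lambda)/3$, so $(\mp 1+\lambda)^{-2}\le 9(1+\lambda)^{-2}$; inserting this into the integrand defining $\alpha^\pm_{\sigma,y}$ and comparing with hypothesis~(b) of \cref{sssec:coupling} yields $\sup_{y\in\Geb}\|\alpha^\pm_{\sigma,y}\|_\HP\le 3\sqrt{2}\,L_1(v_\sigma)$.

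For part~(iii), the key identity is the pointwise relation
\[
	\pm\beta^\pm_{\sigma,x}(k) + \tfrac{1}{2}\Delta_x\beta^\pm_{\sigma,x}(k) = -v_{\sigma,x}(k),\quad x\in\Geb,\,k\in\mc{K},
\]
which is a two-line consequence of $-\frac{1}{2}\Delta_x v(x,k)=\lambda(k)v(x,k)$ together with the definitions \cref{def:cutoffv,defbetainfty}. I would then apply It\^{o}'s formula to $(t,y)\mapsto\eul^{\pm t}\beta^\pm_{\sigma,y}$ along $b^x$ stopped at $\tau_n(x)$ (so the stopped path remains in $\ol{\Geb_n}\subset\Geb$). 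Using the identity above, the drift collapses to $-\eul^{\pm s}v_{\sigma,b_s^x}\Id s$, and one obtains \cref{ItoM} with $t$ replaced by $t\wedge\tau_n(x)$. Sending $n\to\infty$ and observing that $\bigcup_n\{t<\tau_n(x)\}=\{t<\tau_\Geb(x)\}$, together with continuity of all involved processes, yields \cref{ItoM}.

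The main obstacle will be lifting this pointwise-in-$k$ calculation to an $\HP$-valued It\^{o} identity. I see two routes: either strengthen \cref{lembetaC1HP} to $\beta^\pm_{\sigma,\cdot}\in C^2(\Geb,\HP)$ --- which under the additional hypothesis $(1+\lambda)v\in\mc{L}^\infty(\Geb,\HP)$ is plausible because $\Delta_x\beta^\pm_{\sigma,x} = -2v_{\sigma,x}\mp 2\beta^\pm_{\sigma,x}$ is $\HP$-valued and continuous in $x$ --- and then invoke a Hilbert-space-valued It\^{o} formula directly; or execute the calculation pointwise in $k$, pair against arbitrary test vectors $f\in\HP$, and use a stochastic Fubini argument together with a density argument to assemble the $\HP$-valued identity. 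Either way, the interplay between the eigenvalue PDE, the $\HP$-valued regularity of $\beta^\pm_\sigma$ and the localization via $\tau_n(x)$ is the technical heart of the argument.
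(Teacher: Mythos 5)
Your proposal is correct and follows essentially the same route as the paper: in (i) you identify the quadratic variation explicitly and bound the exponential integral; in (ii) you invoke Burkholder--Davis--Gundy, observe that $\eul^{p((\mp t)\wedge 0)}$ normalizes the two sign cases to the common factor $(1-\eul^{-2t})^{p/2}$, and use $\lambda-1\ge(\lambda+1)/3$ on $\{\lambda\ge\sigma\}$ to compare $\|\alpha^\pm_{\sigma,y}\|_\HP$ with $L_1(v_\sigma)$; and in (iii) you recognize that the eigenvalue PDE makes the It\^{o} drift collapse to $-\eul^{\pm s}v_{\sigma,b_s^x}$ and that localization via $\tau_n(x)$ with $n\to\infty$ finishes the argument. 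On the technical obstacle you flag at the end: the paper takes exactly the first of your two routes, namely it establishes $\beta^\pm_{\sigma,\cdot}\in C^2(\Geb,\HP)$ (this is \cref{lembetaC2HP}, applicable here because the additional hypothesis $(1+\lambda)v\in\mc{L}^\infty(\Geb,\HP)$ together with \cref{lem:green1} supplies the required local boundedness of the $\HP$-valued first derivatives), and then globalizes by multiplying with smooth cutoffs $\rho_n$ supported in $\Geb_{n+1}$ and equal to $1$ on $\ol{\Geb_n}$, so that the Hilbert-space-valued It\^{o} formula can be applied to functions defined on all of $\RR^d$ before stopping at $\tau_n(x)$. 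This cutoff step is the one small implementation detail you leave implicit, but it is exactly the standard device needed to make the localized It\^{o} application rigorous.
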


\begin{proof}
	(i):
	 The right hand side of \eqref{qvMkappa} is an upper bound on
	$J_t\coloneq\int_0^t\|\eul^{\pm s}\alpha^\pm_{\sigma,b_s^x}\|_{\HP}^2\Id s$.
	Since $\EE[J_t]<\infty$, $t\ge0$, we know that $M_\sigma^\pm(x)$ is a well-defined, continuous
	$L^2$-martingale with quadratic variation $(J_t)_{t\ge0}$. 
	
	\smallskip
	
	\noindent(ii): Employing a Burkholder inequality (see, e.g., \cite[Theorem~4.36]{DaPratoZabczyk.2014}),
	we find a solely $p$-dependent $c_p'\in(0,\infty)$ such that
	\begin{align}\label{BurkM}
		\EE\Big[\sup_{s\in[0,t]}\|M_{\sigma,s}^\pm(x)\|_{\HP}^p\Big]&\le c_p'\EE\big[[M_\sigma^\pm(x)]_t^{p/2}\big],
	\quad t\ge0,
	\end{align}
	which together with (i) implies the first asserted bound. The second one follows from  
	(b) in \cref{sssec:coupling} and \cref{defbetainfty}, since $\lambda-1\ge1$ holds on $\{v_{\sigma,x}\not=0\}$.
	
	\smallskip
	
	\noindent(iii): We choose the
	open subsets $\Geb_n\subset\Geb$, $n\in\NN$, 
	and corresponding exit times $\tau_n(x)$ as at the end of \cref{ssecStrat}.
	We further pick $\rho_n\in C^\infty(\RR^d)$ such that 
	$\rho_n=1$ on $\ol{\Geb}_n$ and $\rho_n=0$ on $\Geb_{n+1}^c$ for all $n\in\NN$.
	The additional assumption on $v$ and \cref{lem:green1} ensure that $v$ satisfies the hypotheses of \cref{lembetaC2HP}
	with $\ell=2$. Thus, by the latter lemma, the maps 
	$x\mapsto f_n^\pm(x)\coloneq \rho_n(x)\beta_{\sigma,x}^\pm$ belong to $C^2(\RR^d,\HP)$. 
	Together with Assumption (a) in \cref{sssec:coupling}, \cref{lembetaC2HP} further entails,
	with a Laplacian acting on $\HP$-valued functions,
	\begin{align}\label{vbetarel}
	\bigg(\mp 1-\frac{1}{2}\Delta\bigg)f_n^\pm(x)=(\mp1+\lambda)\beta_{\sigma,x}^\pm=
	v_{\sigma,x},\quad x\in \ol{\Geb}_n,\,n\in\NN.
	\end{align}
	Now let $x\in\Geb$ and pick some $n_0\in\NN$ such that $x\in \Geb_{n_0}$.
	Then It\^{o}'s formula (see, e.g., \cite[Theorem~4.32]{DaPratoZabczyk.2014}) $\PP$-a.s. yields
	\begin{align*}
	\eul^{\pm t}f_n^\pm(b_t^x)&=f_n^\pm(x)+\int_0^t\eul^{\pm s}\nabla f_n^\pm(b_s^x)\Id b_s
	-\int_0^t\eul^{\pm s}\bigg(\mp 1-\frac{1}{2}\Delta\bigg)f_n^\pm(b_s^x)\Id s,
	\end{align*}
	for all $t\ge0$ and integers $n\ge n_0$. Putting $t\wedge\tau_n(x)$ in place of $t$, using
	that $\rho_n(b_s^x)=1$ for all $s\in[0,\tau_n(s)]$ and taking \cref{vbetarel} into account, we $\PP$-a.s. find
	\begin{align}\nonumber
	\eul^{\pm(t\wedge\tau_n(x))}\beta^\pm_{\sigma,b_{t\wedge\tau_n(x)}^x}&=\beta^\pm_{\sigma,x}
	+\int_0^t\chi_{\{\tau_n(x)\ge s\}}\eul^{\pm s}\alpha^\pm_{\sigma,b_s^x}\Id b_s
	\\\label{zerletaeminustbeta}
	&\quad
	-\int_0^{t}\chi_{\{\tau_n(x)\ge s\}}\eul^{\pm s}v_{\sigma,b_s^x}\Id s,
	\end{align}
	for all $t\ge0$ and integers $n\ge n_0$,
	where we also used a standard stopping rule for stochastic integrals.
	Since $\tau_n(x)\uparrow\tau_{\Geb}(x)$ as $n\to\infty$, 
	this proves (iii).
\end{proof}

\subsection{Convergence and exponential moment bound}\label{ssec:expmom}

Now we move to the proof of Theorem~\ref{lemexpmomentU}, where it is convenient to use the quantities 
\begin{align}\label{def:gs}
	g_\sigma \coloneq\max\bigg\{\sup_{y\in\Geb}\|\beta_{\sigma,y}^\pm\|_{\HP}\,,\,
	\sup_{y\in\Geb}\|\nabla_y\beta_{\sigma,y}^\pm\|_{\HP}\bigg\}\le \sqrt{2}\cdot4L_1(v_{\sigma}),
	\quad \sigma\in[2,\infty),
\end{align}
so that $g_\sigma\to0$ as $\sigma\to\infty$. We further abbreviate
\begin{align*}
	\bbeta_{\sigma,t}^\pm(x)\coloneq 
	\eul^{(\mp t)\wedge 0}\beta_{\sigma,x}^\pm-\eul^{(\pm t)\wedge 0}\beta_{\sigma,b_t^x}^\pm.
\end{align*}
for all $t\ge0$, $x\in\Geb$ and $\sigma\in[2,\infty)$. Then the definitions
\cref{defUinftyminus,defUinftyplus} correspond to the two cases in
\begin{align}\label{forUinftypm}
	U_{\sigma,t}^\pm(x)= U_{\reg,t}^\pm(\tilde{v}_\sigma;x)
	+\bbeta_{\sigma,t}^\pm(x)+\eul^{(\mp t)\wedge0}M_{\sigma,t}^\pm(x),\quad t\ge0.
\end{align}
Recalling the definition of $d_\Geb$ below \cref{def:W} and
taking \cref{lembetaC1HP} into account we observe that
	\begin{align}
		\|\bbeta_{\sigma,t}^\pm(x)\|_\HP   \le |1-\eul^{-t}|g_{\sigma} + d_\Geb(b_t^x,x)g_\sigma,
		\quad\text{where $|1-\eul^{-t}|\le t^{1/2}$}.
		\label{bdgammapm}
	\end{align}
Let $\Theta:\RR\to\RR$ be non-decreasing and right-continuous with $\Theta(0)=0$ and denote the associated
Stieltjes-Borel measure by $\theta:\fr{B}(\RR)\to[0,\infty]$. Using $\Theta(0)=0$ in the first relation
and \cref{tailbound} in the last one, we then find
\begin{align}\label{lagkage1}
\begin{aligned}
	\EE\big[\chi_{\{t<\tau_{\Geb}(x)\}}\Theta(d_\Geb(b_t^x,x))\big]
	&=\EE\big[\Theta(\chi_{\{t<\tau_{\Geb}(x)\}}d_\Geb(b_t^x,x))\big]
	\\
	&=\int_{(0,\infty)}\PP\big[\chi_{\{t<\tau_{\Geb}(x)\}}d_\Geb(b_t^x,x)\ge s\big]\Id\theta(s)
	\\
	&\le a_\Geb \int_{(0,\infty)}\eul^{-C_\Geb s^2/t}\Id\theta(s),\quad x\in\Geb,\,t>0.
\end{aligned}
\end{align}
When $\Theta(s)=(g_\sigma s)^p$ for all $s\ge0$ and some $p>0$, this together with \cref{bdgammapm} yields
	\begin{align}\label{eq:bconv}
		\begin{aligned}
			\sup_{x\in\Geb}\EE\big[  \chi_{\{t<\tau_{\Geb}(x)\}}  \|\bbeta_{\sigma,t}^\pm(x)\|_{\HP}^p\big]
			&\le c_{\Geb,p}(1\wedge t)^{p/2}L_1(v_\sigma)^p,\quad t\ge0.
		\end{aligned}
	\end{align}
	Here we also used \cref{def:gs} for $t\ge1$ and $c_{\Geb,p}>0$ depends only on $\Geb$ and $p$.

\begin{proof}[Proof of the convergence relation \eqref{convUpmLp}.]
	Let $p>0$. Then \eqref{convUpmLp} follows directly from
	\cref{Usigmabd,forUinftypm,eq:bconv} as well as \cref{lemItoM}(ii),
	all applied to the coupling function $v^n-v$.
\end{proof}

\begin{proof}[Proof of the exponential moment bound \eqref{expmomentUminus}.]
	We pick $p>0$ and $\sigma\in[2,\infty)$. In the last step of this proof,
	$\sigma$ will be chosen sufficiently large depending on $p$.
	
	\smallskip
	\noindent
	{\em Step~1.} 
	Let $x\in\Geb$. Combining the trivial bound \cref{Usigmabd} with
	\begin{align*}
	\|U_{\sigma,t}^\pm(x)\|_{\HP}^2&=\|U_{\reg,t}^\pm(\tilde{v}_\sigma;x)\|_{\HP}^2
	+\|\bbeta_{\sigma,t}^\pm(x) - \eul^{(\mp t)\wedge0}M_{\sigma,t}^\pm(x)\|_{\HP}^2,	
	\end{align*} 
	 cf. \cref{forUinftypm}, and the Cauchy--Schwarz inequality, we find
	\begin{align}\label{adam7685}
		\begin{aligned}
			&\EE\big[\chi_{\{t<\tau_{\Geb}(x)\}}\eul^{p\|U_{\sigma,t}^\pm(x)\|_{\HP}^2/(t\wedge1)}\big]	
			\\
			&\le \eul^{p\sup_{y\in\Geb}\|\tilde{v}_{\sigma,y}\|_\HP^2}\\
			& \qquad \times
			\EE\big[\chi_{\{t<\tau_{\Geb}(x)\}}\eul^{4p\|\bbeta_{\sigma,t}^\pm(x)\|_{\HP}^2/(t\wedge1)}\big]^{1/2}
			\EE\big[\eul^{4p\eul^{(\mp2t)\wedge0}\|M_{\sigma,t}^\pm(x)\|_{\HP}^2/(t\wedge1)}\big]^{1/2},
		\end{aligned}
	\end{align}
	for all $t>0$.
In the next two steps we derive bounds on the two expectations on the right hand side of \cref{adam7685}.  

\smallskip
	\noindent
	{\em Step~2.} 
	Employing \cref{bdgammapm} first and
	choosing $\Theta(s)=\eul^{8p (g_\sigma s)^2/t}-1$, $s\ge0$, in \cref{lagkage1} we find
		\begin{align*}
			\EE\big[\chi_{\{t<\tau_{\Geb}(x)\}} \eul^{4p\|\bbeta_{\sigma,t}^\pm(x)\|_\HP^2/t}\big]
			& \le 
			16pg_\sigma^2a_\Geb\eul^{8p g_\sigma^2}\int_{0}^\infty \eul^{8p(g_\sigma s)^2/t-C_\Geb s^2/t}\frac{s\,\Id s}{t},
		\end{align*}
		for all $t\in(0,1]$,
		which together with \cref{def:gs} (applied when $t>1$) yields the implication
		\begin{align}\label{betabd2}
			16pg_\sigma^2\le C_\Geb
			\ \Rightarrow\ 
			\sup_{t>0}\sup_{x\in\Geb}\EE\big[\chi_{\{t<\tau_{\Geb}(x)\}}\eul^{4p\|\bbeta_{\sigma,t}^\pm(x)\|_{\HP}^2/(t\wedge1)}\big]
			&\le a_\Geb\eul^{C_\Geb}.
		\end{align}
	\noindent
	{\em Step~3.} 
	Let $x\in\Geb$.
	Employing It\^{o}'s formula and \eqref{qvMkappa} we find
	\begin{align}\label{susanne1}
		\|M_{\sigma,t}^\pm(x)\|_{\HP}^2
		&\le2N_{\sigma,t}^\pm(x)+\frac{1}{2}|1-\eul^{\pm2t}|g_\sigma^2,\quad t\ge0,
	\end{align}
	$\PP$-a.s., with the continuous local martingale $N_{\sigma}^\pm(x)$ defined by
	\begin{align*}
		N_{\sigma,t}^\pm(x)
		&\coloneq\int_0^t\Re\langle M_{\sigma,s}^\pm(x)|\eul^{\pm s}
		\alpha_{\sigma,b_s^x}^\pm\rangle_{\HP}\Id b_s,\quad t\ge0.
	\end{align*}
	The $\PP$-a.s. bound
	\begin{align}\nonumber
		[N_{\sigma}^\pm(x)]_t&=
		\int_0^t|\Re\langle M_{\sigma,s}^\pm(x)|\eul^{\pm s}
		\alpha_{\sigma,b_s^x}^\pm\rangle_{\HP}|^2\Id s
		\\\label{susanne2}
		&\le \frac{1}{2}|1-\eul^{\pm2t}|g_\sigma^2
		\sup_{s\in[0,t]}\|M_{\sigma,s}^\pm(x)\|_{\HP}^2,
		\quad t\ge0,
	\end{align}
	in conjunction with \cref{qvMkappa,BurkM} reveals that $N_{\sigma}^\pm(x)$ actually is a martingale. 
	Next, we define an increasing sequence of bounded stopping times $\tau_n^\pm(x):\Omega\to[0,\infty)$ 
	such that $\tau_n^\pm(x)\uparrow\infty$, $n\to\infty$, by
	\begin{align*}
	\tau_n^\pm(x)&\coloneq n\wedge \inf\big\{t\ge0\,\big|\;[N_{\sigma}^\pm(x)]_t\ge n \big\},\quad n\in\NN.
	\end{align*}
	Then the stopped processes given by
	$Q^{\pm,n}_{\sigma,t}(x)\coloneq N_{\sigma,\smash{\tau_n^\pm(x)}\wedge t}^\pm(x)$, $t\ge0$, are 
	martingales as well. Their quadratic variations $\PP$-a.s. satisfy 
	$[Q^{\pm,n}_{\sigma}(x)]_t=[N_{\sigma}^{\pm}(x)]_{\tau_n^\pm(x)\wedge t}$ for all $t\ge0$.
	Invoking \cref{expmartest2,susanne1}, setting $c_0\coloneq(1+\pi)^{1/2}$ and observing 
	$|1-\eul^{\pm2(\tau_n^\pm(x)\wedge t)}|\le|1-\eul^{\pm2t}|$, we thus find 
	\begin{align*}
		&\EE\bigg[\sup_{s\in[0,\tau_n^\pm(x)\wedge t]}\eul^{a\|M_{\sigma,s}^\pm(x)\|_{\HP}^2}\bigg]
		\\
		&\le \eul^{|1-\eul^{\pm2t}|ag_\sigma^2/2}\EE\bigg[\sup_{s\in[0,t]}
		\eul^{2aQ_{\sigma,s}^{\pm,n}(x)}\bigg]
		\\
		&\le c_0\eul^{|1-\eul^{\pm2t}|ag_\sigma^2/2}
		\EE\Big[\eul^{4(2a)^2[N_{\sigma}^\pm(x)]_{\tau_n^{\pm}(x)\wedge t}}\Big]^{1/2}
		\\
		&\le c_0\eul^{|1-\eul^{\pm2t}|ag_\sigma^2/2}\EE\bigg[\sup_{s\in[0,\tau_n^\pm(x)\wedge t]}
		\eul^{(8a|1-\eul^{\pm 2t}|g_\sigma^2)a\|M_{\sigma,s}^\pm(x)\|_{\HP}^2}\bigg]^{1/2},
	\end{align*}
	for all $a,t\ge0$ and $n\in\NN$. Here we also used \cref{susanne2} in the last step.
	Since $[N_{\sigma}^\pm(x)]_{\tau_n^{\pm}(x)\wedge t}\le n$
	by the choice of $\tau_n^\pm(x)$, we see that the leftmost expectation in this chain of inequalities is finite.
	This proves the implication
	\begin{align}\label{impl01}
	8a|1-\eul^{\pm 2t}|g_\sigma^2\le1\quad\Rightarrow\quad\sup_{x\in\Geb}\sup_{n\in\NN}
	\EE\bigg[\sup_{s\in[0,\tau_n^\pm(x)\wedge t]}\eul^{a\|M_{\sigma,s}^\pm(x)\|_{\HP}^2}\bigg]^{1/2}\le 
	c_0\eul^{1/16},
	\end{align}
	for fixed $a,t\ge0$.
	In view of \cref{adam7685}, we wish to chose
	$a=4p\eul^{(\mp2t)\wedge0}/(t\wedge 1)$.
	Observing that $\eul^{(\mp2t)\wedge0}|1-\eul^{\pm 2t}|/(t\wedge 1)=(1-\eul^{-2t})/(t\wedge1)\le2$
	for all $t>0$ and applying the monotone convergence theorem for each $x$, we arrive at the implication
	\begin{align}\label{impl02}
	64pg_\sigma^2\le1\quad\Rightarrow\quad\sup_{t>0}\sup_{x\in\Geb}
	\EE\bigg[\sup_{s\in[0,t]}\eul^{4p\eul^{(\mp2t)\wedge0}\|M_{\sigma,s}^\pm(x)\|_{\HP}^2/(t\wedge 1)}\bigg]^{1/2}\le 
	c_0\eul^{1/16}.
	\end{align}
	{\em Step~4.} The remarks in Steps~2 and~3 show that the product of the two expectations on the right hand
	side of \cref{adam7685} is less than or equal to some constant solely depending on $\Geb$ provided that
	$64pg_\sigma^2\le1\wedge(4C_\Geb)$. In view of \cref{def:gs,lemsigmaUpm}
	this implies \cref{expmomentUminus} with $\sigma_p$ as in \cref{def:sp}. 
\end{proof}


\section{Bounds on the complex action}\label{sec:action}

\noindent
In this \lcnamecref{sec:action} we prove \cref{lemuregu} as well as the next theorem.
Again the reader can move on to the next section after reading the theorem if he or she wishes
to jump over technical proofs.

\begin{thm}\label{prop:conveu}
Defining
\begin{align}\label{def:vsp}
\vs_p&\coloneq\inf\big\{\sigma\ge2\big|\,16\sqrt{p}L_1(v_\sigma)\le1\big\},\quad p>0,
\end{align}
we find universal constants $c,c'\in(0,\infty)$ such that, for all $p>0$ and $\sigma\in[2,\infty)$,
	\begin{align}\label{expmbdu}
		\sup_{x\in\Geb}\EE\big[\chi_{\{t<\tau_\Geb(x)\}}|\eul^{u_{\sigma,t}(x)}|^p\big]
		&\le c'\eul^{ct+p\sup_{y\in\Geb}\|\tilde{v}_{\vs_p,y}\|^2t},\quad t\ge0.
	\end{align}
	Moreover, let $v^1,v^2,\ldots$ be coupling functions fulfillung the same hypotheses as $v$ and
	assume that $L_1(v^n-v)\to0$ as $n\to\infty$. Denote by $u_{\sigma,t}^n(x)$ the complex action defined
	by means of $v^n$. Then
	\begin{align}\label{conveu}
		\sup_{t\in[0,r]}\sup_{x\in\Geb}
		\EE\big[\chi_{\{t<\tau_\Geb(x)\}}|\eul^{u_{\sigma,t}^n(x)}-\eul^{u_{\sigma,t}(x)}|^p\big]&\xrightarrow{\;\;n\to\infty\;\;}0,
		\quad p,r>0,\,\sigma\in[2,\infty).
	\end{align}
\end{thm}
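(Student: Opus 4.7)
The plan is to estimate the exponential moment of the real part of each of the seven summands in \cref{def:actioninfty} separately and then combine the estimates by Hölder's inequality. The first three summands are bounded pathwise: \cref{brureg} applied to $\tilde{v}_\sigma$ yields $|u_{\reg,t}(\tilde{v}_\sigma;x)|\le t\sup_{y\in\Geb}\|\tilde{v}_{\sigma,y}\|_\HP^2$, which produces exactly the factor $\eul^{p\sup\|\tilde{v}_{\vs_p,y}\|^2 t}$ on the right hand side of \cref{expmbdu} once $\sigma\ge\vs_p$. The definitions \cref{def:asigma,def:wsigma} and the bound \cref{def:gs} give the pathwise estimates $|a_{\sigma,t}(x)|\le 2g_\sigma^2$ and $|w_{\sigma,t}(x)|\le \tfrac{3}{2}g_\sigma^2 t$, which fit into the additive constants $c$ and $ct$ in the claimed bound.

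The fourth and fifth summands are estimated by Cauchy--Schwarz, $\|\beta^\pm_{\sigma,\cdot}\|_\HP\le g_\sigma$, and the elementary inequality $\eul^{p g_\sigma\|M^\pm_{\sigma,t}(x)\|_\HP}\le \eul^{1/2+p^2 g_\sigma^2\|M^\pm_{\sigma,t}(x)\|_\HP^2/2}$. The remaining Gaussian-type exponential moment is uniformly bounded in $t,x$ by the argument carried out in Step~3 of the proof of \cref{lemexpmomentU}, provided $g_\sigma$ is sufficiently small. The sixth summand is purely imaginary since $\Im\langle\alpha^-_{\sigma,\cdot}|\beta^+_{\sigma,\cdot}\rangle_\HP$ is $\RR^d$-valued, so it contributes $1$ in modulus and may be discarded.

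The main obstacle is the seventh summand $m_{\sigma,t}(x)$, whose integrand already contains the martingale $M^+_\sigma(x)$. Writing $|\eul^{m_{\sigma,t}(x)}|=\eul^{\Re m_{\sigma,t}(x)}$ and applying \cref{expmartest2} reduces the task to bounding $\EE\bigl[\eul^{4p^2[\Re m_\sigma(x)]_t}\bigr]$. A componentwise Cauchy--Schwarz estimate combined with \cref{def:gs} gives the pathwise bound $[\Re m_\sigma(x)]_t\le g_\sigma^2\int_0^t\eul^{-2s}\|M^+_{\sigma,s}(x)\|_\HP^2\,\Id s$. I would then apply Jensen's inequality with the probability density $s\mapsto 2\eul^{-2s}/(1-\eul^{-2t})$ on $[0,t]$ to pull the exponential inside the integral and arrive at the bound $\int_0^t \frac{2\eul^{-2s}}{1-\eul^{-2t}}\,\EE\bigl[\eul^{c g_\sigma^2\|M^+_{\sigma,s}(x)\|_\HP^2}\bigr]\Id s$, whose integrand is uniformly bounded in $s,x$ by the same Step~3 argument of \cref{lemexpmomentU} whenever $g_\sigma$ is small. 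Choosing $\sigma=\vs_p$ so that all smallness conditions on $g_\sigma$ hold simultaneously, combining the individual estimates via Hölder, and invoking \cref{lemsigmau} to transfer the bound to arbitrary $\sigma\in[2,\infty)$ on $\{t<\tau_\Geb(x)\}$ then yields \cref{expmbdu}.

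For the convergence \cref{conveu} I would use the elementary inequality $|\eul^a-\eul^b|\le(|\eul^a|+|\eul^b|)|a-b|$, Cauchy--Schwarz and \cref{expmbdu} applied to both $v$ and $v^n$, which reduces the task to showing uniform $L^{2p}$-convergence $u^n_{\sigma,t}(x)\to u_{\sigma,t}(x)$ on $[0,r]\times\Geb$. Each summand of $u_{\sigma,t}(x)$ is linear or bilinear in the data $\tilde{v}_\sigma,\beta^\pm_\sigma,\alpha^\pm_\sigma,M^\pm_\sigma$. Expanding $u^n_{\sigma,t}-u_{\sigma,t}$ as a sum of such differences, each difference factor is controlled by $L_1(v^n-v)\to 0$ through \cref{def:gs} and the martingale analogue of \cref{lemItoM}(ii) (which follows upon applying it to $v^n-v$), while the remaining factors have uniform moments by the bounds for $v$ and $v^n$ separately. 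The $L^p$-convergence of $M^{n,\pm}_{\sigma,t}(x)$ to $M^\pm_{\sigma,t}(x)$ used here is exactly what was established in the proof of \cref{convUpmLp}.
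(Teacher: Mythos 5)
Your decomposition into the seven summands of \cref{def:actioninfty} and your handling of the first six is in line with the paper: pathwise bounds for the first three, the arithmetic--geometric inequality together with the exponential moment bound from Step~3 (\cref{impl02}) for the fourth and fifth terms, and the observation that the sixth term is purely imaginary. Your plan for the convergence \cref{conveu} is also essentially the paper's: the paper applies the fundamental theorem of calculus and inserts an interpolation parameter $\theta$, but the reduction to uniform $L^{2p}$-convergence of $u_\sigma^n$ and the componentwise estimates via \cref{def:gs} and Burkholder are the same.

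However, your Jensen-based treatment of the seventh summand $m_{\sigma,t}(x)$ has a genuine gap. After \cref{expmartest2} you must control $\EE[\exp(4p^2[\Re m_\sigma(x)]_t)]$, and you bound $[\Re m_\sigma(x)]_t\le g_\sigma^2\int_0^t e^{-2s}\|M^+_{\sigma,s}(x)\|^2_\HP\,\Id s$, which is correct. But when you apply Jensen with the density $2e^{-2s}/(1-e^{-2t})$, the factor $e^{-2s}$ migrates out of the exponent into the integrating measure, and the integrand becomes $\EE[\exp(c\,g_\sigma^2\|M^+_{\sigma,s}(x)\|^2_\HP)]$ with $c=2p^2(1-e^{-2t})$ independent of $s$. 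This expectation is \emph{not} uniformly bounded in $s$; it is infinite for $s$ large, because by \cref{qvMkappa} the quadratic variation of $M^+_{\sigma}(x)$ grows like $g_\sigma^2 e^{2s}/2$. Indeed, the Step~3 implication \cref{impl01} for the plus sign requires $8a|1-e^{2s}|g_\sigma^2\le1$, which no fixed $a>0$ satisfies for all $s\ge0$; the proof of \cref{impl02} succeeds only because there the parameter $a$ is taken proportional to $e^{-2t}$, precisely cancelling the exponential growth. Your Jensen step destroys that cancellation. Using a different density does not help: a uniform density on $[0,t]$ brings a factor $t$ into the exponent, which again cannot be controlled by Step~3. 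This is exactly the obstacle that \cref{lemexpmbmsigmax} is built to overcome: the paper introduces the companion martingale $J^+_\sigma(x)$, uses It\^o's product rule for $e^{-2t}\|M^+_{\sigma,t}(x)\|^2_\HP$ to derive the self-referential bound $[J^\pm_\sigma(x)]_t\le g_\sigma^2(J^+_{\sigma,t}(x)+g_\sigma^2 t/2)$, and then bootstraps \cref{expmartest2} (first for $J^+$, then for $J^-=\Re m_\sigma$) to obtain a $t$-linear exponential moment. Some such bootstrap (or a different cancellation mechanism) is necessary; the naive Jensen estimate does not close.
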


\subsection{Regular expressions for the complex action}\label{ssec:regforu}

To establish \cref{lemuregu} in the present \lcnamecref{ssec:regforu} we proceed in two steps that both involve
applications of It\^{o}'s formula. The first step is taken in the next lemma, the second one in the succeeding
proof of \cref{lemuregu}. We shall employ the processes given by
\begin{align}\label{def:cqsigma}
c_{\sigma,t}(x)&\coloneq\langle\beta^-_{\sigma,b_t^x}|U_{\sigma,t}^+(x)\rangle_{\HP},
\qquad
q_{\sigma,t}(x)\coloneq
	\int_0^t\langle\alpha_{\sigma,b_s^x}^-|U_{\sigma,s}^+(x)\rangle_{\HP}\Id b_s,
\end{align}
for all $t\ge0$ and $x\in\RR^d$. Both of them are well-defined under our general hypothesis on $v$,
and $q_{\sigma}(x)$ is an $L^2$-martingale. (The process given by \cref{def:dsigma} is well-defined for
the physically most relevant choices of $v$, but not necessarily under our general hypotheses.)

\begin{lem}\label{lem:Itoaction}
Additionally assume that $(1+\lambda)v\in \mc{L}^\infty(\Geb,\HP)$.
	Let $\sigma\in[2,\infty)$ and 
	$x\in\Geb$. Then, $\PP$-a.s., we know  for all $t\ge0$ that
	\begin{align*}
		u_{\reg,t}(v;x) = u_{\reg,t}(\tilde{v}_\sigma;x) -c_{\sigma,t}(x) + d_{\sigma,t}(x) 
		+q_{\sigma,t}(x)\quad\text{on $\{t<\tau_\Geb(x)\}$,}
	\end{align*}
	with
	\begin{align}\label{def:dsigma}
	d_{\sigma,t}(x)&\coloneq \int_0^t\int_{\mc{K}}\ol{\beta_{\sigma,b_s^x}^-}v_{\sigma,b_s^x}\Id\mu\,\Id s
	=\int_0^t\int_{\mc{K}}\frac{|v_{\sigma,b_s^x}|^2}{1+\lambda}\Id\mu\,\Id s.
	\end{align}
\end{lem}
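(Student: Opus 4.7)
My plan is to reduce everything to an application of It\^o's formula for the scalar process $c_{\sigma,t}(x)$, after first peeling off the ultraviolet-regular part via a disjoint-supports argument. Throughout, I work on the event $\{t<\tau_\Geb(x)\}$, localizing if needed by the exit times $\tau_n(x)$ from $\Geb_n\uparrow\Geb$ used in the proof of \cref{lemItoM}(iii).

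First I would split $v=\tilde v_\sigma+v_\sigma$ and expand
$u_{\reg,t}(v;x)=\int_0^t\langle v_{b_s^x}|U_{\reg,s}^+(v;x)\rangle_\HP\Id s$ into four pieces using bilinearity. Since $\tilde v_\sigma$ is supported on $\{\lambda<\sigma\}$ in $k$ while $v_\sigma$ lives on $\{\lambda\ge\sigma\}$, the two cross-terms vanish pointwise, and $U_{\reg,s}^+(v;x)=U_{\reg,s}^+(\tilde v_\sigma;x)+U_{\reg,s}^+(v_\sigma;x)$ by linearity of the Bochner integral. This yields $u_{\reg,t}(v;x)=u_{\reg,t}(\tilde v_\sigma;x)+J_t$ with $J_t:=\int_0^t\langle v_{\sigma,b_s^x}|U_{\reg,s}^+(v_\sigma;x)\rangle_\HP\Id s$. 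Using \cref{lemItoM}(iii) (with $I_{\sigma,s}^+(x)$) to rewrite $U_{\reg,s}^+(v_\sigma;x)$ on $\{s<\tau_\Geb(x)\}$ and again invoking disjoint supports to drop a $\tilde v_\sigma$ contribution, I obtain $J_t=\int_0^t\langle v_{\sigma,b_s^x}|U_{\sigma,s}^+(x)\rangle_\HP\Id s$ on $\{t<\tau_\Geb(x)\}$. It thus suffices to prove $J_t=-c_{\sigma,t}(x)+d_{\sigma,t}(x)+q_{\sigma,t}(x)$.

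Next I compute $c_{\sigma,t}(x)=\langle\beta^-_{\sigma,b_t^x}|U^+_{\sigma,t}(x)\rangle_\HP$ by It\^o's product formula. By \cref{lembetaC2HP} the map $y\mapsto\beta^-_{\sigma,y}$ is $C^2(\Geb,\HP)$, and the $k$-pointwise identity $(1+\lambda)\beta^-_{\sigma,y}=v_{\sigma,y}$ together with $-\tfrac12\Delta_y v_{\sigma,y}=\lambda v_{\sigma,y}$ gives $\tfrac12\Delta\beta^-_{\sigma,y}=\beta^-_{\sigma,y}-v_{\sigma,y}$. Localizing $\beta^-_{\sigma,y}$ with cutoffs $\rho_n$ as in the proof of \cref{lemItoM}(iii), an $\HP$-valued It\^o's formula yields, on $\{t<\tau_n(x)\}$,
\begin{equation*}
\beta^-_{\sigma,b_t^x}=\beta^-_{\sigma,x}+\int_0^t\alpha^-_{\sigma,b_s^x}\Id b_s+\int_0^t\bigl(\beta^-_{\sigma,b_s^x}-v_{\sigma,b_s^x}\bigr)\Id s.
\end{equation*}
On the same event, $U^+_{\sigma,s}(x)$ coincides with $U^+_{\reg,s}(v;x)=\int_0^s\eul^{-(s-r)}v_{b_r^x}\Id r$ (by \cref{lemUregU}), which is absolutely continuous in $s$ with derivative $-U^+_{\sigma,s}(x)+v_{b_s^x}$; in particular its martingale part is zero, so the quadratic covariation with $\beta^-_{\sigma,b_\cdot^x}$ vanishes.

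The product rule then gives
\begin{align*}
\Id c_{\sigma,t}(x)&=\langle\alpha^-_{\sigma,b_t^x}|U^+_{\sigma,t}(x)\rangle_\HP\Id b_t+\langle\beta^-_{\sigma,b_t^x}-v_{\sigma,b_t^x}|U^+_{\sigma,t}(x)\rangle_\HP\Id t\\
&\quad+\langle\beta^-_{\sigma,b_t^x}|{-}U^+_{\sigma,t}(x)+v_{b_t^x}\rangle_\HP\Id t.
\end{align*}
The two $\pm\langle\beta^-_{\sigma,b_t^x}|U^+_{\sigma,t}(x)\rangle_\HP\Id t$ terms cancel. Integrating from $0$ to $t$ and using $c_{\sigma,0}(x)=0$, the stochastic integral is exactly $q_{\sigma,t}(x)$ (by \cref{def:cqsigma}), the $\langle v_{\sigma,b_s^x}|U^+_{\sigma,s}(x)\rangle_\HP$-term contributes $-J_t$, and the remaining drift is $\int_0^t\langle\beta^-_{\sigma,b_s^x}|v_{b_s^x}\rangle_\HP\Id s$; by support considerations the pairing collapses to $\int_0^t\int_\mc{K}\ol{\beta^-_{\sigma,b_s^x}}v_{\sigma,b_s^x}\Id\mu\Id s=d_{\sigma,t}(x)$. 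Rearranging yields $J_t=-c_{\sigma,t}(x)+d_{\sigma,t}(x)+q_{\sigma,t}(x)$ on $\{t<\tau_n(x)\}$, and letting $n\to\infty$ passes this to $\{t<\tau_\Geb(x)\}$. The main technical point is justifying the $\HP$-valued It\^o expansion of $\beta^-_{\sigma,b_\cdot^x}$ with the required identification of its drift via \cref{lembetaC2HP,vbetarel}; the rest is bookkeeping and the disjoint-support arguments that let the ultraviolet-regular piece split off cleanly.
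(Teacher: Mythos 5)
Your argument is correct and follows essentially the same route as the paper: localize by the exit times $\tau_n(x)$, peel off the ultraviolet-regular part by disjoint supports, and apply It\^o's product rule to $c_{\sigma,\cdot}(x)$ using \cref{lemItoM}(iii) and \cref{lembetaC2HP}. The only cosmetic difference is your choice of factors in the product rule, namely $(\beta^-_{\sigma,b^x_\cdot},U^+_{\sigma,\cdot}(x))$ instead of the paper's $(\eul^{-t}\beta^-_{\sigma,b^x_t},I^+_{\sigma,t}(x))$, which just redistributes the exponential factors and produces the neat cancellation of the $\pm\langle\beta^-_{\sigma,b^x_s}|U^+_{\sigma,s}(x)\rangle\,\Id s$ terms.
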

\begin{proof}
Thanks to \cref{lemItoM}(iii) we $\PP$-a.s. know that 
	$c_{\sigma,t}(x)=\langle\eul^{-t}\beta_{\sigma,b_t^x}^-| I_{\sigma,t}^+(x)\rangle_{\HP}$ on $\{t<\tau_\Geb(x)\}$ for all $t\ge0$.
	Again employing the exit times $\tau_n(x)$ defined in \cref{exittaun}, we further know
	from the proof of \cref{lemItoM}(iii) that
	$(\eul^{-(t\wedge\tau_n(x))}\beta_{\sigma,b_{t\wedge\tau_n(x)}^x}^-)_{t\ge0}$ 
	is a continuous semimartingale that can $\PP$-a.s. be written as in \cref{zerletaeminustbeta}.
	Also writing $I_{\sigma,t\wedge\tau_n(x)}^+(x)=\int_0^t\chi_{\{\tau_n(x)\ge s\}}\eul^sv_{\sigma,b_s^x}\Id s$,
	we infer from It\^{o}'s product rule that, $\PP$-a.s.,
	\begin{align*}
	c_{\sigma,t\wedge\tau_n(x)}(x)
	&=\int_0^t\chi_{\{\tau_n(x)\ge s\}}\langle\eul^{-s}\beta_{\sigma,b_{s}^x}^-| 
	\eul^sv_{\sigma,b_s^x}\rangle_{\HP}\Id s
	\\
	&\quad-\int_0^t\chi_{\{\tau_n(x)\ge s\}}\langle\eul^{-s}v_{\sigma,b_s^x}|I_{\sigma,s}^+(x)\rangle_{\HP}\Id s
	\\
	&\quad+\int_0^t\chi_{\{\tau_n(x)\ge s\}}\langle\eul^{-s}\alpha_{\sigma,b_s^x}^-|I_{\sigma,s}^+(x)\rangle_{\HP}\Id b_s,
	\quad t\ge0,\,n\in\NN.
	\end{align*}
	Since $\eul^{-s}I^+_{\sigma,s}(x)=U_{\reg,s}^+(v_\sigma;x)=\chi_{\{\lambda\ge\sigma\}}U^+_{\sigma,s}(x)$ 
	on $\{s<\tau_{\Geb}(x)\}$ for all $s\ge0$, $\PP$-a.s., the above identity is equivalent to
	\begin{align*}
	&c_{\sigma,t\wedge\tau_n(x)}(x)
	\\
	&=d_{\sigma,t\wedge\tau_n(x)}(x)-u_{\reg,t\wedge\tau_n(x)}(v;x)
	+u_{\reg,t\wedge\tau_n(x)}(\tilde{v}_\sigma;x)
	+q_{\sigma,t\wedge\tau_n(x)}(x).
	\end{align*}
	We conclude by recalling that $\tau_n(x)\uparrow\tau_\Geb(x)$ as $n\to\infty$.
\end{proof}

\begin{proof}[Proof of \cref{lemuregu}.]
Let $x\in\Geb$. Combining \cref{defUinftyplus,def:msigma,def:cqsigma} and recalling that
$\chi_{\{\lambda<\sigma\}}U^+_{\reg,t}(\tilde{v}_\sigma;x)=0$, we $\PP$-a.s. find
\begin{align*}
q_{\sigma,t}(x)&=\int_0^t\langle\alpha_{\sigma,b_s^x}^-|\eul^{-s}\beta^+_{\sigma,x}\rangle_{\HP}\Id b_s
-\int_0^t\langle\alpha_{\sigma,b_s^x}^-|\beta_{\sigma,b_s^x}^+\rangle_{\HP}\Id b_s
+m_{\sigma,t}(x),\quad t\ge0,
\end{align*}
where the first member on the right hand can be written as
\begin{align*}
\int_0^t\langle\alpha_{\sigma,b_s^x}^-|\eul^{-s}\beta^+_{\sigma,x}\rangle_{\HP}\Id b_s
&=\langle M_{\sigma,t}^-(x)|\beta^+_{\sigma,x}\rangle_{\HP}.
\end{align*}

In the next step we again employ the open subsets $\Geb_n$ exausting $\Geb$, the corresponding
first exit times $\tau_n(x)$ and the localization functions $\rho_n$ defined in the proof of \cref{lemItoM}(iii).
As noted in the proof of the latter lemma, we know under the present assumptions 
that $x\mapsto\beta_{\sigma,x}^\pm$ are twice continuously
differentiable on $\Geb$ as $\HP$-valued functions.  
We define $f_n\in C^2(\RR^d,\RR)$ by
 \begin{align*}
 f_n(x)&\coloneq\rho_n(x)\langle\beta_{\sigma,x}^-|\beta_{\sigma,x}^+\rangle_{\HP}, 
 \quad x\in\RR^d,\, n\in\NN.
 \end{align*} 
Using $\Delta_x\beta_x^\pm=-2\lambda\beta_x^\pm$, $x\in\Geb$, we then observe that
 \begin{align*}
\frac{1}{2} \Delta f_n(x)&=\langle\alpha_{\sigma,x}^-|\alpha_{\sigma,x}^+\rangle_{\HP}-
2\int_{\mc{K}}\lambda\ol{\beta_{\sigma,x}^-}\beta^+_{\sigma,x}\Id\mu,
 \quad x\in\Geb_n,\,n\in\NN.
 \end{align*}
 We fix $x\in\Geb$ again and pick $n_0\in\NN$ such that $x\in\Geb_{n_0}$.
Employing It\^{o}'s formula for $f_n(b^x)$, we then deduce that, $\PP$-a.s.,
\begin{align}\nonumber
-&\int_0^{t\wedge\tau_n(x)}\Re\langle\alpha_{\sigma,b_s^x}^-|\beta_{\sigma,b_s^x}^+\rangle_{\HP}\Id b_s
\\\nonumber
&=-\frac{1}{2}\int_{0}^t\chi_{\{s\le\tau_n(x)\}}\nabla f_n(b_s^x)\Id b_s
\\\nonumber
&=\frac{1}{2}(f_n(x)-f_n(b_{t\wedge\tau_n(x)}^x))
+\frac{1}{4}\int_0^{t\wedge\tau_n(x)}\Delta f_n(b_s^x)\Id s
\\\nonumber
&=\frac{1}{2}\big(\langle\beta_{\sigma,x}^-|\beta_{\sigma,x}^+\rangle_{\HP}
-\langle\beta_{\sigma,b_{t\wedge\tau_n(x)}^x}^-|\beta_{\sigma,b_{t\wedge\tau_n(x)}^x}^+\rangle_{\HP}\big)
+\frac{1}{2}\int_0^{t\wedge\tau_n(x)}\langle\alpha_{\sigma,b_s^x}^-|\alpha_{\sigma,b_s^x}^+\rangle_{\HP}\Id s
\\\label{ItotI}
&\quad
-\int_0^{t\wedge\tau_n(x)}\int_{\mc{K}}\lambda\ol{\beta_{\sigma,b_s^x}^-}\beta^+_{\sigma,b_s^x}\Id\mu\,\Id s,
\end{align}
for all $t\ge0$ and $n\in\NN$ with $n\ge n_0$. Since $\lambda\beta^{+}_{\sigma,y}=v_{\sigma,y}+\beta^+_{\sigma,y}$,
the term in the last line of \cref{ItotI} satisfies
\begin{align*}
-\int_0^{t\wedge\tau_n(x)}\int_{\mc{K}}\lambda\ol{\beta_{\sigma,b_s^x}^-}\beta^+_{\sigma,b_s^x}\Id\mu\,\Id s
&=-d_{\sigma,t\wedge\tau_n(x)}(x)
-\int_0^{t\wedge\tau_n(x)}\langle\beta_{\sigma,b_s^x}^-|\beta_{\sigma,b_s^x}^+\rangle_{\HP}\Id s.
\end{align*}
Finally, we combine \cref{defUinftyplus,def:cqsigma}, again
using that $U^+_{\reg,t}(\tilde{v}_\sigma;x)$ equals $0$ on $\{\lambda\ge\sigma\}$, to get
\begin{align*}
-c_{\sigma,t}(x)
&=\langle\beta^-_{\sigma,b_t^x}|\beta_{\sigma,b_t^x}^+\rangle_{\HP}
-\eul^{-t}\langle\beta^-_{\sigma,b_t^x}|\beta_{\sigma,x}^+\rangle_{\HP}-
\langle\beta^-_{\sigma,b_t^x}|\eul^{-t}M_{\sigma,t}^+(x)\rangle_{\HP},\quad t\ge0.
\end{align*}

Taking all these remarks, \cref{def:asigma,def:wsigma,lem:Itoaction} into account,
we $\PP$-a.s. arrive at
\begin{align*}
	u_{\reg,t\wedge\tau_n(x)}(v;x)&= u_{\reg,t\wedge\tau_n(x)}(\tilde{v}_\sigma;x)
	+a_{\sigma,t\wedge\tau_n(x)}(x)
	+w_{\sigma,t\wedge\tau_n(x)}(x) 
	\\
	&\quad
	- \langle\beta_{\sigma,b_{t\wedge\tau_n(x)}^x}^- | \eul^{-(t\wedge\tau_n(x))}
	M_{\sigma,t\wedge\tau_n(x)}^+(x) \rangle_{\HP} 
	+ \langle M_{\sigma,t\wedge\tau_n(x)}^-(x)| \beta_{\sigma,x}^+ \rangle_{\HP}
	\\
	&\quad -\ii\int_0^{t\wedge\tau_n(x)}\Im\langle\alpha_{\sigma,b_s^x}^-|\beta_{\sigma,b_s^x}^+\rangle_{\HP}\Id b_s
	+m_{\sigma,t\wedge\tau_n(x)}(x),
\end{align*}
for all $t\ge0$ and $n\ge n_0$. Since $\tau_n(x)\uparrow\tau_{\Geb}(x)$ as $n\to\infty$, this
proves \cref{lemuregu}.
\end{proof}

\subsection{Convergence and exponential moment bound}
We now turn to the proof of \cref{prop:conveu}. The only non-obvious missing ingredient is treated in the next lemma first.

\begin{lem}\label{lemexpmbmsigmax}
If $p>0$ and $\sigma\in[2,\infty)$, then
\begin{align*}
4p g_\sigma^2\le1\quad\Rightarrow\quad\forall\,t\ge0:\quad
\sup_{x\in\Geb}\EE\Big[\sup_{s\in[0,t]}|\eul^{m_{\sigma,s}(x)}|^p\Big]\le (1+\pi)\eul^{t/8}.
\end{align*}
\end{lem}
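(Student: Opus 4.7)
The plan is to exploit $|\eul^{m_{\sigma,s}(x)}|^p=\eul^{p\Re m_{\sigma,s}(x)}$ and reduce the claim to an exponential moment estimate controlled by a self-closing two-level application of the bound \eqref{expmartest2}. Since
\begin{align*}
\Re m_{\sigma,t}(x)=\int_0^t\Re\langle\alpha_{\sigma,b_s^x}^-|\eul^{-s}M_{\sigma,s}^+(x)\rangle_{\HP}\Id b_s
\end{align*}
is a continuous real local martingale, Cauchy--Schwarz together with the bound $\|\alpha_{\sigma,y}^-\|_{\HP}\le g_\sigma$ from \cref{def:gs} shows $[\Re m_\sigma(x)]_t\le g_\sigma^2 I_t$, where $I_t\coloneq\int_0^t\|\eul^{-s}M_{\sigma,s}^+(x)\|_{\HP}^2\Id s$. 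A standard localization at the first time $[\Re m_\sigma(x)]$ hits level $k$, followed by \eqref{expmartest2} applied to $p\Re m_\sigma(x)$, thus reduces the claim to establishing the bound $\EE[\eul^{4p^2g_\sigma^2 I_t}]\le(1+\pi)\eul^{t/4}$.

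To produce this bound I will exploit the Ornstein--Uhlenbeck-type structure of $Y_t\coloneq\eul^{-t}M_{\sigma,t}^+(x)$, which by It\^o's product formula satisfies $\Id Y_t=-Y_t\Id t+\alpha_{\sigma,b_t^x}^+\Id b_t$. The It\^o formula for $\|Y_t\|_{\HP}^2$ then gives
\begin{align*}
\|Y_t\|_{\HP}^2+2I_t=\int_0^t\|\alpha_{\sigma,b_s^x}^+\|_{\HP}^2\Id s+2R_t,\qquad R_t\coloneq\int_0^t\Re\langle Y_s|\alpha_{\sigma,b_s^x}^+\rangle_{\HP}\Id b_s,
\end{align*}
where $R$ is a continuous local martingale with $[R]_t\le g_\sigma^2 I_t$. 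Dropping the non-negative term $\|Y_t\|_{\HP}^2$ yields the crucial self-referential control $I_t\le g_\sigma^2 t/2+R_t$.

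Localizing at $\tau_m\coloneq m\wedge\inf\{s\ge 0\mid I_s\ge m\}$ turns $R_{\tau_m\wedge\cdot}$ into an $L^2$-martingale with $[R]_{\tau_m\wedge t}\le g_\sigma^2 m$, and applying \eqref{expmartest2} to $4p^2g_\sigma^2 R_{\tau_m\wedge\cdot}$ produces $\EE[\sup_{s\in[0,t]}\eul^{4p^2g_\sigma^2 R_{\tau_m\wedge s}}]\le(1+\pi)^{1/2}\EE[\eul^{64p^4g_\sigma^6 I_{\tau_m\wedge t}}]^{1/2}$. The hypothesis $4pg_\sigma^2\le 1$ is calibrated precisely so that $64p^4g_\sigma^6=(4pg_\sigma^2)^2\cdot 4p^2g_\sigma^2\le 4p^2g_\sigma^2$ and $4p^2g_\sigma^4=(2pg_\sigma^2)^2\le 1/4$. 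Combined with $I_{\tau_m\wedge t}\le g_\sigma^2 t/2+R_{\tau_m\wedge t}$, the finite quantity $A_m\coloneq\EE[\eul^{4p^2g_\sigma^2 I_{\tau_m\wedge t}}]$ thus satisfies the self-improving inequality $A_m\le\eul^{2p^2g_\sigma^4 t}(1+\pi)^{1/2}A_m^{1/2}$, which solves to $A_m\le(1+\pi)\eul^{4p^2g_\sigma^4 t}\le(1+\pi)\eul^{t/4}$. Monotone convergence in $m$ together with the initial reduction deliver the desired bound $(1+\pi)\eul^{t/8}$.

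The main obstacle is the self-referential character of the argument: the natural BDG-type estimate on $R$ reproduces an exponential moment of $I_t$ with a new coefficient, so without sharp numerical control the two levels of \eqref{expmartest2} would cascade into a divergent sequence. The hypothesis $4pg_\sigma^2\le 1$ is tuned exactly so that the reproduced coefficient is dominated by the original one and the quadratic inequality $A_m\le C\,A_m^{1/2}$ can actually be solved.
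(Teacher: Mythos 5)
Your proof is correct and rests on the same two Itô identities as the paper's (the product-rule computation for $\eul^{-2t}\|M_{\sigma,t}^+(x)\|_{\HP}^2$, which you phrase as the Ornstein--Uhlenbeck equation for $Y$, and the Cauchy--Schwarz bound on quadratic variations), and closes the same self-improving loop with the same calibration $4pg_\sigma^2\le1$ producing the same constant $(1+\pi)\eul^{t/8}$. Where you genuinely diverge is in bookkeeping and in how finiteness is secured: the paper works with the pair of martingales $J^\pm_{\sigma,t}(x)=\int_0^t\Re\langle\alpha^\pm_{\sigma,b_s^x}|\eul^{-s}M^+_{\sigma,s}(x)\rangle_{\HP}\Id b_s$ (your $R$ is exactly $J^+$), establishes that $\EE[\sup_{s\le t}\eul^{a\|M^+_{\sigma,s}(x)\|^2_{\HP}}]<\infty$ by a UV splitting of $M^+$ at a cutoff level $\Sigma(a,t)$ using \cref{impl01} together with \cref{lemItoM}(iii), and then closes the self-improving inequality first for $\EE[\sup_s\eul^{pJ^+_s}]$ and reads off the bound for $J^-=\Re m$ in a second step. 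You instead run the self-improving inequality on the integrated quantity $I_t=\int_0^t\eul^{-2s}\|M^+_{\sigma,s}(x)\|_{\HP}^2\Id s$, and replace the paper's a priori finiteness step by localization at $\tau_m\coloneq m\wedge\inf\{s\mid I_s\ge m\}$, which forces $A_m\le\eul^{4p^2g_\sigma^2 m}<\infty$ by construction and so lets you solve $A_m\le CA_m^{1/2}$ directly, then pass $m\to\infty$ by monotone convergence. This is a real simplification: it sidesteps the UV-cutoff argument and the invocation of \cref{lemItoM}(iii), at the modest cost of an extra limiting passage, and the numerical control is identical (your $64p^4g_\sigma^6\le4p^2g_\sigma^2$ and $4p^2g_\sigma^4\le1/4$ are precisely the paper's $16p^2g_\sigma^4\le1$ in disguise).
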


\begin{proof}
Let $x\in\Geb$ and $\sigma\in[2,\infty)$. For both choices of the sign we define
\begin{align*}
J^\pm_{\sigma,t}(x)&\coloneq\int_0^t\Re\langle\alpha_{\sigma,b_s^x}^\pm|\eul^{-s}M_{\sigma,s}^+(x)\rangle_{\HP}\Id b_s,
\quad t\ge0,
\end{align*}
Employing It\^{o}'s product formula we observe that, $\PP$-a.s.,
\begin{align*}
\eul^{-2t}\|M_{\sigma,t}^+(x)\|_{\HP}^2&=-2\int_{0}^t\eul^{-2s}\|M_{\sigma,s}^+(x)\|_{\HP}^2\Id s
\\
&\quad
+2\int_0^t\eul^{-2s}\Re\langle M_{\sigma,s}^+(x)|\eul^{s}\alpha_{\sigma,b_s^x}^+\rangle_{\HP}\Id b_s
+\int_0^t\eul^{-2s}\|\eul^{s}\alpha_{\sigma,b_s^x}^+\|_{\HP}^2\Id s,
\end{align*}
for all $t\ge0$, so that
\begin{align*}
\int_{0}^t\eul^{-2s}\|M_{\sigma,s}^+(x)\|_{\HP}^2\Id s
&\le J^+_{\sigma,t}(x)+\frac{t}{2}\cdot\sup_{y\in\Geb}\|\alpha_{\sigma,y}^+\|_{\HP}^2,\quad t\ge0.
\end{align*}
This $\PP$-a.s. implies
\begin{align}\nonumber
[J_\sigma^\pm(x)]_t&=\int_0^t|\Re\langle\alpha_{\sigma,b_s^x}^\pm|\eul^{-s}M_{\sigma,s}^+(x)\rangle_{\HP}|^2\Id s
\\\label{derek1}
&\le \sup_{y\in\Geb}\|\alpha_{\sigma,y}^\pm\|_{\HP}^2\int_{0}^t\eul^{-2s}\|M_{\sigma,s}^+(x)\|_{\HP}^2\Id s
\le g_\sigma^2\bigg(J^+_{\sigma,t}(x)+\frac{g_\sigma^2t}{2}\bigg),\quad t\ge0.
\end{align}
Pick some $a>0$ and fix $t>0$ for the moment. In view of \cref{impl01} (where $\tau^+_n(x)\to\infty$, $n\to\infty$)
we find some cutoff parameter $\Sigma(a,t)\in[2,\infty)$ such that the random variable
$\sup_{s\in[0,t]}\exp(a\|\chi_{\{\lambda\ge\Sigma(a,t)\}}M_{\sigma,s}^+(x)\|_{\HP}^2)$ has finite expectation. 
Furthermore, by \cref{lemItoM}(iii) (applied to the coupling function $\chi_{\{\lambda<\Sigma(a,t)\}}v_\sigma$)
and trivial estimations we find some $c(a,t)\in(0,\infty)$ such that
$\sup_{s\in[0,t]}\|\chi_{\{\lambda<\Sigma(a,t)\}}M_{\sigma,s}^+(x)\|_{\HP}\le c(a,t)$, $\PP$-a.s.
These remarks prove the {\em a priori} bound
\begin{align*}
\forall\, a,t>0:\quad\EE\Big[\sup_{s\in[0,t]}\eul^{a\|M_{\sigma,s}^+(x)\|_{\HP}^2}\Big]<\infty.
\end{align*}
Also taking \cref{derek1} into account we see that all exponential moments of $[J_{\sigma}^\pm(x)]_t$ exist for every $t\ge0$.
In particular, $J_{\sigma}^\pm(x)$ are $L^2$-martingales and, given $p>0$, we infer from \cref{expmartest2,derek1} that
\begin{align}\label{derek2}
\EE\Big[\sup_{s\in[0,t]}\eul^{pJ^\pm_{\sigma,s}(x)}\Big]&
\le c_0\EE\big[\eul^{4p^2[J_{\sigma}^\pm(x)]_t}\big]^{1/2}
\le c_0\eul^{p^2g_\sigma^4t}
\EE\big[\eul^{4p^2g_\sigma^2J^+_{\sigma,t}(x)}\big]^{1/2},
\end{align}
for all $t\ge0$ with $c_0=(1+\pi)^{1/2}$. In particular, the leftmost expectation in \cref{derek2} is finite
for both choices of the sign and all $t\ge0$.
Thus, we obtain, first for the plus sign, afterwards for the minus sign, the following implication
\begin{align*}
4pg_\sigma^2\le1\quad\Rightarrow\quad\forall\,t\ge0:\quad \EE\Big[\sup_{s\in[0,t]}\eul^{pJ^\pm_{\sigma,s}(x)}\Big]
\le c_0^2\eul^{2p^2g_\sigma^4t}\le c_0^2\eul^{t/8}.
\end{align*}
Since $\Re m_{\sigma,s}(x)=J_{\sigma,s}^-(x)$, $s\ge0$, this proves the assertion.
\end{proof}

We can now prove the first part of \cref{prop:conveu}.
\begin{proof}[Proof of the exponential moment bound \cref{expmbdu}]
Let $p,t>0$ and $x\in\Geb$. We pick some $\sigma\ge2$ satisfying $256p L_1(v_\sigma)^2\le1$, which
in view of \cref{def:gs} ensures that $8pg_\sigma^2\le 1$.
Hence, the exponential moment bound of \cref{lemexpmbmsigmax} is available with $2p$ put in place of $p$.
Further, we have the trivial bounds $|a_{\sigma,t}(x)|\le c_1g_\sigma^2$
and $|w_{\sigma,t}(x)|\le c_1g_\sigma^2t$ with some universal constant $c_1\in(0,\infty)$,
while $|u_{\reg,t}(\tilde{v}_\sigma;x)|$ can be estimated (trivially) by means of \cref{brureg}.
Finally, we infer from \cref{impl02} (with $p/8$ put in place of $p$) that
\begin{align*}
\EE\big[\eul^{4pg_{\sigma}\eul^{(\mp t)\wedge0}\|M^{\pm}_{\sigma,t}(x)\|_{\HP}}\big]
&\le\eul^{8pg_\sigma^2}
\EE\big[\eul^{(p/2)\eul^{(\mp2 t)\wedge0}\|M^{\pm}_{\sigma,t}(x)\|_{\HP}^2}\big]
\le c_2,
\end{align*}
with another universal constant $c_2\in(0,\infty)$.
Since $\|\beta_{\sigma,y}^\pm\|_{\HP}\le g_\sigma$, $y\in\RR^d$, these remarks in conjunction with
\cref{def:actioninfty} and the generalized H\"{o}lder inequality (with exponents $2$, $4$, $4$, $\infty$)
imply the bound
\begin{align}\label{expmbduprel}
\sup_{x\in\Geb}\EE\big[|\eul^{u_{\sigma,t}(x)}|^p\big]
		&\le c_2^{2/4}(1+\pi)^{1/2}\eul^{c_1pg_\sigma^2(1+t)+t/16+p\sup_{y\in\Geb}\|\tilde{v}_{\sigma,y}\|^2t},
\end{align}
where $pg_\sigma^2\le1/8$.
By virtue of \cref{lemsigmau} we can replace $\sigma$ by any fixed $\kappa\in[2,\infty)$ on the left hand side
of \cref{expmbduprel} provided that we insert the indicator function $\chi_{\{t<\tau_\Geb(x)\}}$ under the expectation at the same time.
After that we may pass to the limit $\sigma\downarrow\vs_p$ on the right hand side
of \cref{expmbduprel}, if necessary.
\end{proof}

The proof of the convergence relation~\cref{conveu} makes use of the following lemma:

\begin{lem}\label{lem:convuLp}
Under the assumptions of \cref{prop:conveu}, let $\sigma\in[2,\infty)$. Then
\begin{align*}
\sup_{x\in\Geb}\EE\bigg[\sup_{s\in[0,t]}|u_{\sigma,s}^n(x)-u_{\sigma,s}(x)|^{p}\bigg]\xrightarrow{\;\;n\to\infty\;\;}0,
\quad p,t>0.
\end{align*}
\end{lem}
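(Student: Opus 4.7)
My plan is to expand $u_{\sigma,s}^n(x)-u_{\sigma,s}(x)$ according to the seven summands in \cref{def:actioninfty} and bound each difference separately in $L^p$, uniformly over $s\in[0,t]$ and $x\in\Geb$. The key preparatory observation, read off from \cref{def:cutoffv,defbetainfty,def:gs} together with the hypothesis $L_1(v^n-v)\to 0$, is that the sup-norm quantities
\begin{align*}
\sup_{y\in\Geb}\|\tilde{v}_{\sigma,y}^n-\tilde{v}_{\sigma,y}\|_{\HP},\quad\sup_{y\in\Geb}\|\beta_{\sigma,y}^{n,\pm}-\beta_{\sigma,y}^\pm\|_{\HP},\quad\sup_{y\in\Geb}\|\alpha_{\sigma,y}^{n,\pm}-\alpha_{\sigma,y}^\pm\|_{\HP^d}
\end{align*}
all tend to $0$ as $n\to\infty$ with rates controlled by $(1+\sigma)L_1(v^n-v)$, while their $n$-dependent versions remain uniformly bounded in $n$ and $y$. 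With this in hand, I write each difference by the ``add-subtract'' trick, so that every $\Delta$-expression becomes a sum of ``small sup-norm factor times uniformly bounded original factor.'' This makes the estimates routine for the pathwise pieces $u_{\reg,s}(\tilde{v}_\sigma;x)$, $a_{\sigma,s}(x)$, $w_{\sigma,s}(x)$, the two scalar-product boundary terms, and the purely imaginary stochastic integral, the last being handled by the Burkholder inequality in $L^p$ combined with the $L^p$-moment bound of \cref{lemItoM}\,(ii).

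A crucial intermediate step is to prove
\begin{align*}
\sup_{x\in\Geb}\EE\Big[\sup_{s\in[0,t]}\|M_{\sigma,s}^{n,\pm}(x)-M_{\sigma,s}^\pm(x)\|_{\HP}^q\Big]\xrightarrow{\;n\to\infty\;}0,\quad q>0,
\end{align*}
which follows at once from the Burkholder inequality applied to the martingale $M_\sigma^{n,\pm}(x)-M_\sigma^\pm(x)$ whose integrand is $\eul^{\pm r}(\alpha_{\sigma,b_r^x}^{n,\pm}-\alpha_{\sigma,b_r^x}^\pm)$ and hence sup-small. This estimate immediately controls $\langle\beta_{\sigma,b_s^x}^-\mid\eul^{-s}M_{\sigma,s}^+(x)\rangle_{\HP}$ and $\langle M_{\sigma,s}^-(x)\mid\beta_{\sigma,x}^+\rangle_{\HP}$ together with their $n$-analogues via Cauchy--Schwarz.

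The main obstacle will be the last summand $\Delta m_{\sigma,\cdot}(x)$, whose integrand $\langle\alpha_{\sigma,b_r^x}^-\mid\eul^{-r}M_{\sigma,r}^+(x)\rangle_{\HP}$ itself contains the stochastic integral $M_{\sigma,r}^+(x)$. Splitting it as
\begin{align*}
\Delta m_{\sigma,s}(x)&=\int_0^s\langle\alpha_{\sigma,b_r^x}^{n,-}-\alpha_{\sigma,b_r^x}^-\mid\eul^{-r}M_{\sigma,r}^{n,+}(x)\rangle_{\HP}\Id b_r\\
&\quad+\int_0^s\langle\alpha_{\sigma,b_r^x}^-\mid\eul^{-r}(M_{\sigma,r}^{n,+}(x)-M_{\sigma,r}^+(x))\rangle_{\HP}\Id b_r,
\end{align*}
and applying Burkholder in $L^p$ followed by Cauchy--Schwarz inside the resulting quadratic variation, the first term is dominated by a constant times $\sup_y\|\alpha_{\sigma,y}^{n,-}-\alpha_{\sigma,y}^-\|_{\HP}\cdot\sup_{x,r\le t}\EE[\|M_{\sigma,r}^{n,+}(x)\|_{\HP}^p]^{1/p}$, which is uniformly finite by \cref{lemItoM}\,(ii), whereas the second term is dominated by $\sup_y\|\alpha_{\sigma,y}^-\|_{\HP}$ times the $L^p$-convergence of $\sup_{r\le t}\|M_{\sigma,r}^{n,+}(x)-M_{\sigma,r}^+(x)\|_{\HP}$ already established. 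Summing the estimates for all seven summands and using Minkowski's inequality yields the stated uniform convergence.
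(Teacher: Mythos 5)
Your proof is correct and follows essentially the same route as the paper: decompose the difference into the seven summands of \cref{def:actioninfty} via Minkowski, control the deterministic and boundary terms by the add–subtract trick and the uniform sup-norm smallness of $\tilde v^n_\sigma-\tilde v_\sigma$, $\beta^{n,\pm}_\sigma-\beta^\pm_\sigma$, $\alpha^{n,\pm}_\sigma-\alpha^\pm_\sigma$, handle the martingale terms via Burkholder combined with the quadratic-variation bound of \cref{lemItoM}, and treat the nested term $\Delta m_{\sigma,\cdot}(x)$ by a further add–subtract split inside the stochastic integrand followed by Burkholder and Cauchy--Schwarz. The only cosmetic difference is the order of operations on the $m$-term (you split the integrand first and then apply Burkholder; the paper first applies Burkholder, bounds the time integral by $t^{p/2}\sup_s$, and reuses the estimate \cref{convterm4} with $\alpha$ in place of $\beta$), but both yield the same bound.
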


\begin{proof}
Assume that $p\ge1$ without loss of generality.
By Minkowski's inequality for $\PP$, it suffices to treat the seven contributions to the
complex action in \cref{def:actioninfty} separately. In view of the bound in \cref{def:gs}, which can also be applied to the
coupling function $v^n-v$, and since
$\sup_{n\in\NN}\sup_{y\in\Geb}\|\tilde{v}^n_{\sigma,y}\|_{\HP}<\infty$ and
$\sup_{n\in\NN}L_1(v_n)<\infty$ as well as $\sup_{y\in\Geb}\|\tilde{v}^n_{\sigma,y}-\tilde{v}_{\sigma,y}\|_{\HP}\to0$
and $L_1(v_n-v)\to0$ as $n\to\infty$, it is obvious how to treat the terms
$u_{\reg,t}(\tilde{v};x)$, $a_{\sigma,t}(x)$ and $w_{\sigma,t}(x)$ when they are approximated by their analogues
for $v^n$. 

In the remaining part of this proof objects defined by means of $v^n$ get an additional superscript $n$.
To deal with the fourth member on the right hand side of \cref{def:actioninfty}, we combine the previous remarks
with \cref{qvMkappa,BurkM} obtaining
\begin{align}\nonumber
&\sup_{x\in\Geb}\EE\bigg[\sup_{s\in[0,t]}\big|\langle\beta^{n,-}_{\sigma,b_s^x}|\eul^{-s}M_{\sigma,s}^{n,+}(x)\rangle_{\HP}
-\langle\beta^-_{\sigma,b_s^x}|\eul^{-s}M_{\sigma,s}^+(x)\rangle_{\HP}\big|^p\bigg]^{1/p}
\\\nonumber
&\le \sqrt{2}\cdot4L_1(v^n_\sigma-v_\sigma)\sup_{x\in\Geb}\EE\Big[\sup_{s\in[0,t]}\|M_{\sigma,s}^+(x)\|_{\HP}^p\Big]^{1/p}
\\\nonumber
&\quad+\sqrt{2}\cdot4L_1(v^n_\sigma)\sup_{x\in\Geb}\EE\Big[\sup_{s\in[0,t]}\|M_{\sigma,s}^{n,+}(x)-M_{\sigma,s}^+(x)\|_{\HP}^p\Big]^{1/p}
\\\label{convterm4}
&\le c_p(\eul^{2t}-1)^{1/2}L_1(v^n_\sigma-v_\sigma)\big(L_1(v_{\sigma})+\sup_{m\in\NN}L_1(v^m_\sigma)\big)
\xrightarrow{\;\;n\to\infty\;\;}0,\quad t>0,
\end{align}
with a solely $p$-dependent $c_p\in(0,\infty)$.
Here we estimated the expectation involving the difference $M_{\sigma,s}^{n,+}(x)-M_{\sigma,s}^+(x)$ by
\cref{qvMkappa,BurkM} with $v^n-v$ put in place of $v$.
 The fifth member on the right hand side of \cref{def:actioninfty} can be treated in the same way.
 
 Finally, Burkholder's inequality and the remarks in the first paragraph of this proof take care of
 the sixth term on the right hand side of \cref{def:actioninfty}, the purely imaginary martingale. So we are
 left with the martingale $m_{\sigma}(x)$ and its analogues $m_{\sigma}^n(x)$ defined by means of $v^n$.
Here Burkholder's inequality yields
\begin{align*}
&\EE\bigg[\sup_{s\in[0,t]}|m_{\sigma,s}^n(x)-m_{\sigma,s}(x)|^{p}\bigg]
\\
&\le c_p'
\EE\bigg[\bigg(\int_0^t\big|\langle\alpha^{n,-}_{\sigma,b_s^x}|\eul^{-s}M_{\sigma,s}^{n,+}(x)\rangle_{\HP}
-\langle\alpha^-_{\sigma,b_s^x}|\eul^{-s}M_{\sigma,s}^+(x)\rangle_{\HP}\big|^2\Id s\bigg)^{p/2}\bigg]
\\
&\le
c_p't^{p/2}\EE\bigg[\sup_{s\in[0,t]}\big|\langle\alpha^{n,-}_{\sigma,b_s^x}|\eul^{-s}M_{\sigma,s}^{n,+}(x)\rangle_{\HP}
-\langle\alpha^-_{\sigma,b_s^x}|\eul^{-s}M_{\sigma,s}^+(x)\rangle_{\HP}\big|^p\bigg],\quad t>0,
\end{align*}
with a solely $p$-dependent $c_p'\in(0,\infty)$.
Here the term in the last line converges to zero uniformly in $x$ as $n\to\infty$, because
\cref{convterm4} still holds true when the symbol $\beta$ is replaced by $\alpha$ in its first line.
\end{proof}

We can now complete the proof of \cref{prop:conveu}:

\begin{proof}[Proof of the convergence relation~\cref{conveu}]
Without loss of generality we may assume that $p\ge1$. 
As noted in the proof of \cref{lem:convuLp},
 $\sup_{n\in\NN}\sup_{y\in\Geb}\|\tilde{v}^n_{\vs_{2p},y}\|_{\HP}<\infty$ with $\vs_{2p}$ as defined in the statement of \cref{prop:conveu}.
 Thus, \cref{expmbdu} with $2p$ put in place of $p$ applies to $v$ and every $v^n$ and in particular
\begin{align*}
\sup_{\sigma\in[2,\infty)}\sup_{n\in\NN}\sup_{t\in[0,r]}\sup_{x\in\Geb}
\EE\big[\chi_{\{t<\tau_\Geb(x)\}}|\eul^{u_{\sigma,t}^n(x)}|^{2p}\big]&<\infty,\quad r>0.
\end{align*}
Moreover, by the fundamental theorem of calculus, Jensen's inequality and the generalized H\"{o}lder inequality,
\begin{align*}
&\EE\big[\chi_{\{t<\tau_\Geb(x)\}}|\eul^{u_{\sigma,t}^n(x)}-\eul^{u_{\sigma,t}(x)}|^p\big]
\\
&\le \int_0^1\EE\big[\chi_{\{t<\tau_\Geb(x)\}}|u_{\sigma,t}^n(x)-u_{\sigma,t}(x)|^p
|\eul^{\theta u_{\sigma,t}^n(x)}|^p|\eul^{(1-\theta)u_{\sigma,t}(x)}|^p\big]\Id \theta
\\
&\le\EE\big[|u_{\sigma,t}^n(x)-u_{\sigma,t}(x)|^{2p}\big]^{1/2}
\\
&\quad\cdot\sup_{\theta\in(0,1)}
\EE\big[\chi_{\{t<\tau_\Geb(x)\}}|\eul^{u_{\sigma,t}^n(x)}|^{2p}\big]^{\theta/2}
\EE\big[\chi_{\{t<\tau_\Geb(x)\}}|\eul^{u_{\sigma,t}(x)}|^{2p}\big]^{(1-\theta)/2},
\end{align*}
for all $t\ge0$, $x\in\Geb$ and $\sigma\in[2,\infty)$. Now \cref{conveu} follows from \cref{lem:convuLp}.
\end{proof}


\section{Weighted $L^p$ to $L^q$ bounds and convergence theorems for Feynman--Kac operators}\label{sec:FKsemigroups}

\noindent
The objective of the following \cref{ssec:FKI,ssec:FKops} is to analyze the right hand sides of our Feynman--Kac formulas 
\cref{eq:FKmainreg,eq:FKmain} considered as bounded operators from $L^p(\Geb,\Fock)$ to 
$L^q(\Geb,\Fock)$ with $1<p\le q\le\infty$. The convergence theorems established for these operators
in \cref{ssec:convthmsFKO} are used in the final \cref{sec:FKproof} to complete
the proofs of \cref{eq:FKmainreg,eq:FKmain} in a series of approximation steps.

\subsection{Feynman--Kac integrands: moment bounds and convergence}\label{ssec:FKI}

Let us first collect some bounds on the Feynman--Kac integrands defined in \cref{def:W} for the possibly ultraviolet singular
coupling function $v$.
\begin{lem}\label{lem:mbconvW}
	Assume that $\Geb$ fulfills \cref{tailbound}. Then
	\begin{align}\label{momentbdW}
		\sup_{x\in\Geb}\EE\big[\chi_{\{t<\tau_\Geb(x)\}}\|W_{\sigma,t}(x)\|^p\big]\le c_{\Geb}\eul^{p\nu_p(v,t)},\quad t\ge0,\,p>0,
	\end{align}
	with a solely $\Geb$-dependent $c_{\Geb}\in(0,\infty)$ and
	\begin{align*}
	\nu_p(v,t)&\coloneq \ln(2)+24\sup_{y\in\Geb}\|\tilde{v}_{\sigma_{48p},y}\|_{\HP}^2
	+\bigg(\frac{c}{2p}+\sup_{y\in\Geb}\|\tilde{v}_{\vs_{2p},y}\|_{\HP}^2\bigg)t,
	\end{align*}
	where $c$ is the universal constant appearing in \cref{expmbdu} and $\sigma_{32p}$ and $\vs_{2p}$ are given by
	\cref{def:sp,def:vsp}, respectively.
	Furthermore, if $v^1,v^2,\ldots$ are coupling functions fulfilling the same hypotheses as $v$
	such that $L_1(v^n-v)\to0$ as $n\to\infty$ and if $W_{\sigma,s}^n(x)$ is defined by putting $v^n$
	in place of $v$ in \cref{def:W}, then
	\begin{align}\label{convW}
		\sup_{s\in[0,t]}\sup_{x\in\Geb}
		\EE\big[\chi_{\{s<\tau_\Geb(x)\}}\|W_{\sigma,s}^n(x)-W_{\sigma,s}(x)\|^p\big]
		\xrightarrow{\;\;n\to\infty\;\;}0,\quad p,t>0.
	\end{align}
\end{lem}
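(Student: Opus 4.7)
\medskip
\noindent\emph{Proof plan.}
The starting point is the definition \cref{def:W} combined with the operator norm bound \cref{bdFt}, which gives
\begin{align*}
\|W_{\sigma,t}(x)\|\le 2\,|\eul^{u_{\sigma,t}(x)}|\exp\!\big(4(1+2/t)(\|U^+_{\sigma,t}(x)\|_\HP^2+\|U^-_{\sigma,t}(x)\|_\HP^2)\big).
\end{align*}
Since $4(1+2/t)\le 12/(1\wedge t)$ for all $t>0$, the exponent on the right is dominated by $12(\|U^+\|_\HP^2+\|U^-\|_\HP^2)/(1\wedge t)$, which is precisely the quantity controlled by \cref{expmomentUminus}. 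Raising to the $p$th power and applying the generalized H\"older inequality with exponents $(2,4,4)$ yields
\begin{align*}
\EE\big[\chi_{\{t<\tau_\Geb(x)\}}\|W_{\sigma,t}(x)\|^p\big]
\le 2^p\,\EE\big[\chi|\eul^{u_{\sigma,t}(x)}|^{2p}\big]^{1/2}\prod_\pm\EE\big[\chi\eul^{48p\|U_{\sigma,t}^{\pm}(x)\|_\HP^2/(1\wedge t)}\big]^{1/4}.
\end{align*}
The action moment is then estimated by \cref{expmbdu} with $2p$ in place of $p$, and the two field moments by \cref{expmomentUminus} with $48p$ in place of $p$. Collecting the exponents reproduces exactly the formula for $\nu_p(v,t)$ stated in the lemma, and the $x$- and $\sigma$-uniformity is inherited from those theorems. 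Note that \cref{lemsigmaUpm,lemsigmau} make the bound independent of the chosen cutoff $\sigma$ on the event $\{t<\tau_\Geb(x)\}$.

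For the convergence \cref{convW} I would telescope
\begin{align*}
W_{\sigma,s}^n-W_{\sigma,s}&=(\eul^{u_{\sigma,s}^n}-\eul^{u_{\sigma,s}})F_{s/2}(-U^{n,+}_{\sigma,s})F_{s/2}(-U^{n,-}_{\sigma,s})^*\\
&\quad+\eul^{u_{\sigma,s}}\big(F_{s/2}(-U^{n,+}_{\sigma,s})-F_{s/2}(-U^{+}_{\sigma,s})\big)F_{s/2}(-U^{n,-}_{\sigma,s})^*\\
&\quad+\eul^{u_{\sigma,s}}F_{s/2}(-U^{+}_{\sigma,s})\big(F_{s/2}(-U^{n,-}_{\sigma,s})-F_{s/2}(-U^{-}_{\sigma,s})\big)^*,
\end{align*}
where all arguments are evaluated at $x$. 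For the second and third summands the analyticity of $F_{s/2}:\HP\to\LO(\Fock)$ produces the integral representation $F_{s/2}(g)-F_{s/2}(h)=\int_0^1 F_{s/2}'(h+\theta(g-h))(g-h)\,\Id\theta$, so \cref{bddFt} gives a bound of the form $2^{3/2}(1+2/s)^{1/2}\|U^{n,\pm}_{\sigma,s}-U^\pm_{\sigma,s}\|_\HP\,\eul^{8(1+2/s)(\|U^{n,\pm}\|_\HP^2+\|U^\pm\|_\HP^2)}$. The factor $(1+2/s)^{1/2}\le\sqrt{2}(1+s^{-1})^{1/2}$ is exactly the weight that appears in \cref{convUpmLp}, and the exponential of $\|U\|^2$ is again of the type already controlled.

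The final step is a H\"older argument balancing the $L^p$-small factor $|\eul^{u^n_{\sigma,s}}-\eul^{u_{\sigma,s}}|$ (respectively $\|U^{n,\pm}_{\sigma,s}-U^{\pm}_{\sigma,s}\|_\HP$) against uniformly bounded exponential moments of $|\eul^{u^n}|$, $|\eul^u|$, and $\eul^{c\|U^{n,\pm}\|_\HP^2/(1\wedge s)}$, $\eul^{c\|U^\pm\|_\HP^2/(1\wedge s)}$ for a large constant $c$ coming from the H\"older exponents. Here is where I expect the only real subtlety: I need these exponential moments to be bounded \emph{uniformly in $n$} and in $s\in[0,t]$. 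Uniformity in $s\in[0,t]$ is built into \cref{expmomentUminus,expmbdu} since the right-hand sides depend on $s$ only through a harmless factor $\eul^{cs+ps\sup\|\tilde v\|^2}$. Uniformity in $n$ follows because $L_1(v^n-v)\to 0$ implies $\sup_n L_1(v^n)<\infty$, whence the cutoff parameters $\sigma_{Cp}(v^n)$ and $\vs_{Cp}(v^n)$ are uniformly bounded from above and $\sup_n\sup_{y\in\Geb}\|\tilde v^n_{\sigma,y}\|_\HP<\infty$ for any fixed $\sigma\ge 2$. Combining all three pieces (the $L^p$-convergences \cref{conveu} and \cref{convUpmLp} and the uniform moment bound from the first part applied to each $v^n$) via H\"older then yields \cref{convW}.
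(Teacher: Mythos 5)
Your proof is correct and matches the paper's approach exactly: for \cref{momentbdW} the paper likewise combines \cref{def:W,bdFt,expmomentUminus,expmbdu} via Hölder with exponents $(2,4,4)$, and your computation reproduces the stated $\nu_p(v,t)$ to the letter (incidentally confirming that the "$\sigma_{32p}$" in the lemma statement should read $\sigma_{48p}$, as you used). For \cref{convW} the paper only says the argument is "straightforward" from \cref{conveu,bddFt,convUpmLp} together with the moment bounds, and your telescoping plus the uniformity-in-$n$ discussion (triangle inequality for $L_1$, hence $\sup_n L_1(v^n)<\infty$ and uniform boundedness of the cutoff parameters) is precisely that argument spelled out.
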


\begin{proof}
	The moment bound \eqref{momentbdW} follows from \cref{bdFt,def:W,expmomentUminus,expmbdu} 
	as well as H\"{o}lder's inequality with exponents $2$, $4$ and $4$.
	The convergence relation \eqref{convW} is proved in a straightforward fashion
	taking also \cref{conveu,bddFt,convUpmLp}
	into account in addition to \cref{bdFt,def:W,expmomentUminus,expmbdu}.
\end{proof}

\subsection{Feynman--Kac operators: definitions and weighted $L^p$ to $L^q$ bounds}\label{ssec:FKops}

Next, we treat the Feynman--Kac operators given by the right hand sides of \cref{eq:FKmain}.

For $t\ge0$ and $x\in\Geb$, let $D_{t}(x)$ be either $W_{\reg,t}(\vt;x)$, $W_{\sigma,t}(x)$ or
$W_{\sigma,t}^n(x)-W_{\sigma,t}(x)$ where the $n$-dependent operator-valued processes
are defined as in \cref{lem:mbconvW}.
In the latter two cases, we assume $\Geb$ satisfies \cref{tailbound}.
 Let $a\ge0$ and $\Psi:\Geb\to\Fock$ be measurable. 
Then $\EE[\eul^{\vr a|b_t|}]\le 2^{d/2}\eul^{\vr^2a^2t}$, $\vr\ge0$, and H\"{o}lder's inequality imply
\begin{align}\nonumber
	&\EE\big[\chi_{\{t<\tau_\Geb(x)\}}\eul^{a|b_t|}\|D_{t}(x)\|\eul^{-\Re S_t(x)}\|\Psi(b_t^x)\|_{\Fock}\big]
	\\\label{HoelderT0}
	&\le 2^{d/6}\eul^{3a^2t} \sup_{y\in\Geb}\EE\big[\chi_{\{t<\tau_\Geb(y)\}}\|D_{t}(y)\|^{3}\big]^{1/3}
	\sup_{z\in\RR^{d}}\EE\big[\eul^{3\int_0^tV_-(b_s^z)\Id s}\big]^{1/3}\|\Psi\|_\infty,
\end{align}
if $\Psi$ is essentially bounded, as well as
\begin{align}\nonumber
	&\EE\big[\chi_{\{t<\tau_\Geb(x)\}}\eul^{a|b_t|}\|D_{t}(x)\|\eul^{-\Re S_t(x)}\|\Psi(b_t^x)\|_{\Fock}\big]
	\\\nonumber
	&\le2^{d/6p'}\eul^{3p'a^2t} \sup_{y\in\Geb}\EE\big[\chi_{\{t<\tau_\Geb(y)\}}\|D_{t}(y)\|^{3p'}\big]^{1/3p'}
	\\\label{HoelderT}
	&\quad\cdot\sup_{z\in\RR^{d}}\EE\big[\eul^{3p'\int_0^tV_-(b_s^z)\Id s}\big]^{1/3p'}
	\EE\big[\chi_{\{t<\tau_\Geb(x)\}}\|\Psi(b_t^x)\|_{\Fock}^p\big]^{1/p},
\end{align}
whenever $\Psi$ is $p$-integrable for some $p\in(1,\infty)$. Here $p'$ is the exponent conjugate to $p$.
Furthermore, we recall that
\begin{align}\label{LpLinftyLaplace}
		&\sup_{x\in\Geb}\EE\big[\chi_{\{t<\tau_\Geb(x)\}}\|\Psi(b_t^x)\|_{\Fock}^p\big]\le(2\pi t)^{-d/2} \|\Psi\|_p^p,
		\\\label{LpLqLaplace}
		&\int_{\Geb}\EE\big[\chi_{\{t<\tau_\Geb(x)\}}\|\Psi(b_t^x)\|_{\Fock}^p\big]^{q/p}\Id x
		\le c_{p,q,d} t^{-\frac{d}{2}(\frac{q}{p}-1)}\|\Psi\|_p^q,
\end{align}
for all $t>0$, $p\in(1,\infty)$, $p$-integrable $\Psi$ and $q\in[p,\infty)$.
In view of the above bounds and Remark~\ref{rem:Wmeas} the following definitions are meaningful:

\begin{defn}\label{defn:TUV}
	Let $t\ge0$, $p\in(1,\infty]$ and $\Psi\in L^p(\Geb,\Fock)$. 
	For every $x\in\Geb$ for which $S_t(x)$ is defined (thus for a.e. $x$) we generalize \cref{def:Treg} by
	\begin{align*}
		(T_{\reg,t}\Psi)(x)&\coloneq\EE\big[\chi_{\{t<\tau_\Geb(x)\}}\eul^{-\ol{S}_t(x)}W_{\reg,t}(\vt;x)^*\Psi(b_t^x)\big],
	\end{align*}
	and define
	\begin{align}\label{def:Tt}
		(T_{t}\Psi)(x)&\coloneq\EE\big[\chi_{\{t<\tau_\Geb(x)\}}\eul^{-\ol{S}_t(x)}W_{\sigma,t}(x)^*\Psi(b_t^x)\big].
	\end{align}
	Note that, by \cref{lemsigmaUpm,lemsigmau}, the right hand side of \cref{def:Tt}
	does not depend on the choice of $\sigma\in[2,\infty)$.
\end{defn}

\begin{thm}
	Let $p\in(1,\infty]$, $a\ge0$ and $F:\RR^d\to\RR$ be Lipschitz continuous with Lipschitz constant $\le a$.
	Let $\Psi\in L^p(\Geb,\Fock)$ be such that $\eul^F\Psi$ is $p$-integrable over $\Geb$ as well.
	If $\Geb$ satisfies \cref{tailbound},
	then there exists $c_*\in(0,\infty)$, solely depending on $p$, $a$, $V_-$ and $\Geb$ such that
	\begin{align}\label{LpLinftyTUV}
		\|(\eul^FT_{t}\Psi)(x)\|_{\Fock}&\le t^{-d/2p}\eul^{c_*(1+t)+\nu_{3p'}(v,t)}\|\eul^F\Psi\|_p,\quad\text{a.e. $x\in\Geb$,}
	\end{align}
	for every $t>0$.
	Furthermore, for every $q\in[p,\infty]$ we find some 
	$c_\diamond\in(0,\infty)$, solely depending on $p$, $q$, $a$, $V_-$ and $\Geb$, such that
	\begin{align}\label{LpLqTUV}
		\|\eul^FT_{t}\Psi\|_{q}&\le t^{-\frac{d}{2}(\frac{1}{p}-\frac{1}{q})}\eul^{c_\diamond(1+t)+\nu_{3p'}(v,t)}
		\|\eul^F\Psi\|_p,\quad t>0.
	\end{align}
	The same bounds hold for $T_{\reg,t}$ without the assumption \cref{tailbound} provided that, on the right hand sides, 
	$\nu_{3p'}(v,t)$ is replaced by $c_\vt t$ with $c_{\vt}$ given by \cref{nbWreg}. In this case the constants $c_*$ and
	$c_\diamond$ do not depend on any properties of $\Geb$ other than its dimension~$d$.
\end{thm}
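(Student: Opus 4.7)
The plan is to derive both bounds by first establishing the pointwise estimate
\begin{align*}
e^{F(x)}\|(T_t\Psi)(x)\|_\Fock
&\le \EE\big[\chi_{\{t<\tau_\Geb(x)\}}e^{a|b_t|}\|W_{\sigma,t}(x)\|e^{-\Re S_t(x)}\|(e^F\Psi)(b_t^x)\|_\Fock\big],
\end{align*}
valid for a.e. $x\in\Geb$. This uses the definition of $T_t$ in \cref{defn:TUV}, together with the Lipschitz bound $F(x)\le F(b_t^x)+a|b_t^x-x|=F(b_t^x)+a|b_t|$. Since $\ol{S}_t(x)$ has real part $\int_0^t V(b_s^x)\,\Id s$ and the positive part $V_+$ appears with a favorable sign, it is dropped. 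An analogous bound holds for $T_{\reg,t}$ with $W_{\reg,t}(\vt;x)$ in place of $W_{\sigma,t}(x)$.

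Next, I apply the Hölder-type estimates \cref{HoelderT0} (when $p=\infty$) and \cref{HoelderT} (when $p\in(1,\infty)$), with $D_t(x)=W_{\sigma,t}(x)$ and with $e^F\Psi$ in the role of $\Psi$. The moment factor $\sup_{y\in\Geb}\EE[\chi_{\{t<\tau_\Geb(y)\}}\|W_{\sigma,t}(y)\|^{3p'}]^{1/3p'}$ is controlled by $c_\Geb^{1/3p'}e^{\nu_{3p'}(v,t)}$ thanks to \cref{momentbdW}, and the exponential moment of $3p'\int_0^t V_-(b_s^z)\,\Id s$ is controlled by $e^{c_{3p'}(1+t)/3p'}$ thanks to \cref{expmbdVminus}. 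The $e^{3p'a^2t}$ factor and $e^{c_{3p'}(1+t)}$ together absorb into a prefactor $e^{c_*(1+t)}$ for a suitable constant $c_*$ depending on $p,a,V_-,\Geb$.

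For the $L^\infty$ bound \cref{LpLinftyTUV}, I combine the resulting estimate with \cref{LpLinftyLaplace} applied to $e^F\Psi$, which produces the factor $(2\pi t)^{-d/2p}$; the numerical constant is absorbed into $e^{c_*(1+t)}$. For the $L^q$ bound \cref{LpLqTUV}, I raise the pointwise estimate to the $q$-th power (or use $L^\infty$ directly when $q=\infty$), integrate over $\Geb$ and apply \cref{LpLqLaplace} to $e^F\Psi$, which yields the factor $t^{-(d/2)(1/p-1/q)}$ and numerical constants that are absorbed into $e^{c_\diamond(1+t)}$. For $q=\infty$ one uses \cref{LpLinftyTUV} directly.

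The $T_{\reg,t}$ case is handled identically, except that instead of invoking \cref{momentbdW} one uses the deterministic bound $\|W_{\reg,t}(\vt;x)\|\le e^{c_\vt t}$ from \cref{nbWreg}, so no assumption on $\Geb$ is required and the geometric properties of $\Geb$ enter only through its dimension $d$ in the Gaussian bounds. No step presents a genuine obstacle; the only care required is to keep track of which constants pick up dependencies on $\Geb$ through the tail bound \cref{tailbound}, versus depending only on $d$, $p$, $q$, $a$ and $V_-$.
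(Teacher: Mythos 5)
Your proposal is correct and follows exactly the route the paper indicates: it points to the same pointwise Lipschitz estimate $F(x)\le F(b_t^x)+a|b_t|$, applies the Hölder bounds \cref{HoelderT0}/\cref{HoelderT} with $D_t(x)=W_{\sigma,t}(x)$ (or $W_{\reg,t}(\vt;x)$), controls the moment factors via \cref{momentbdW} (resp.\ \cref{nbWreg}) and \cref{expmbdVminus}, and finishes with the Gaussian kernel bounds \cref{LpLinftyLaplace} and \cref{LpLqLaplace}. The paper's proof is a one-line instruction to combine these references, so your fleshed-out version is essentially the same argument made explicit, with the bookkeeping of constant dependencies handled correctly.
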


\begin{proof}
	Combine 
	\cref{expmbdVminus,nbWreg,momentbdW,defn:TUV,HoelderT0,HoelderT,LpLinftyLaplace,LpLqLaplace}.
	In fact, \cref{LpLinftyTUV} holds for every $x\in\Geb$ for which the generalized Stratonovich integral contributing to $S_t(x)$
	is well-defined; see \cref{ssecStrat}.
\end{proof}

\subsection{Convergence theorems for Feynman--Kac operators}\label{ssec:convthmsFKO}

To infer \cref{thm:mainFK} from \cref{thm:mainFKreg} we employ the following result, where
$\|\cdot\|_{p,q}$ denotes the operator norm from $L^p(\Geb,\Fock)$ to $L^q(\Geb,\Fock)$.

\begin{thm}\label{thm:approxvn}
Assume $\Geb$ satisfies \cref{tailbound}.
Let $v^1,v^2,\ldots$ be coupling functions fulfilling the same hypotheses as $v$ and
assume that $L_1(v^n-v)\to0$ as $n\to\infty$. 
Denote by $T^n_{t}$ the maps obtained by putting $v^n$ in place of $v$ in Definition~\ref{defn:TUV}.
Let $p\in(1,\infty]$. Then
\begin{align*}
		\sup_{x\in\Geb\setminus\scr{N}_t}\|(T_{t}^n\Psi-T_{t}\Psi)(x)\|_{\Fock}\xrightarrow{\;\;n\to\infty\;\;}0,
	\end{align*}
	for all $t>0$ and $\Psi\in L^p(\Geb,\Fock)$, 
	where $\scr{N}_t\subset\Geb$ is any Borel set of measure zero such that $S_t(x)$ is defined for all $x\in\Geb\setminus\scr{N}_t$.
	Furthermore, for every $q\in[p,\infty]$,
		\begin{align}\label{LpLqconvTvn}
		\|T_{t}^n-T_{t}\|_{p,q}\xrightarrow{\;\;n\to\infty\;\;}0,\quad t>0.
	\end{align}
	In the case $q=p$ we actually have the locally uniform convergence
	\begin{align}\label{LpLpconvTUV}
		\sup_{s\in[0,t]}\|T_{s}^n-T_{s}\|_{p,p}\xrightarrow{\;\;n\to\infty\;\;}0,\quad t>0.
	\end{align}
\end{thm}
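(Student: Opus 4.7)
The plan is to derive all three assertions from the $L^p$-convergence of Feynman--Kac integrands provided by \cref{convW}, using the Hölder-type estimates \cref{HoelderT0,HoelderT} applied to the difference process $D_{\sigma,t}(x)\coloneq W^n_{\sigma,t}(x)-W_{\sigma,t}(x)$ instead of to $W_{\sigma,t}(x)$ itself. The relevant moment bounds on $\|D_{\sigma,t}\|$ are already at our disposal via \cref{lem:mbconvW}.

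First I would establish the pointwise convergence. For each $x\in\Geb\setminus\scr{N}_t$ one has
\begin{align*}
\|(T^n_t\Psi-T_t\Psi)(x)\|_\Fock &\le \EE\big[\chi_{\{t<\tau_\Geb(x)\}}\eul^{-\Re S_t(x)}\|D_{\sigma,t}(x)\|\cdot\|\Psi(b_t^x)\|_\Fock\big].
\end{align*}
Applying \cref{HoelderT0} (if $p=\infty$) or \cref{HoelderT} (if $1<p<\infty$) with $a=0$ and $D_{\sigma,t}$ in place of $W_{\reg,t}(\vt;\cdot)$, and then using \cref{LpLinftyLaplace} to dominate $\EE[\chi_{\{t<\tau_\Geb(x)\}}\|\Psi(b_t^x)\|_\Fock^p]^{1/p}$ by $(2\pi t)^{-d/(2p)}\|\Psi\|_p$, one obtains an $x$-independent estimate
\begin{align*}
\sup_{x\in\Geb\setminus\scr{N}_t}\|(T^n_t\Psi-T_t\Psi)(x)\|_\Fock &\le K_{p,t}\sup_{y\in\Geb}\EE\big[\chi_{\{t<\tau_\Geb(y)\}}\|D_{\sigma,t}(y)\|^{3p'}\big]^{1/(3p')}\|\Psi\|_p,
\end{align*}
in which $K_{p,t}<\infty$ thanks to the Kato-class bound \cref{expmbdVminus}. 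The remaining supremum vanishes as $n\to\infty$ by \cref{convW}, and the case $q=\infty$ of \cref{LpLqconvTvn} is immediate.

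For $q\in[p,\infty)$ the plan is to raise the pointwise estimate from \cref{HoelderT} to the $q$-th power, integrate over $x\in\Geb$ and invoke \cref{LpLqLaplace} to bound $\int_\Geb\EE[\chi_{\{t<\tau_\Geb(x)\}}\|\Psi(b_t^x)\|_\Fock^p]^{q/p}\Id x$ by $c_{p,q,d}t^{-(d/2)(q/p-1)}\|\Psi\|_p^q$. This yields
\begin{align*}
\|T^n_t-T_t\|_{p,q} &\le K'_{p,q,t}\sup_{y\in\Geb}\EE\big[\chi_{\{t<\tau_\Geb(y)\}}\|D_{\sigma,t}(y)\|^{3p'}\big]^{1/(3p')},
\end{align*}
which tends to zero by \cref{convW}. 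For the locally uniform statement \cref{LpLpconvTUV} the crucial observation is that at $q=p$ the weight $t^{-(d/2)(q/p-1)}$ collapses to $1$; indeed, by Fubini and the translation invariance of Lebesgue measure the integral $\int_\Geb\EE[\chi_{\{s<\tau_\Geb(x)\}}\|\Psi(b_s^x)\|^p]\Id x$ is bounded by $\|\Psi\|_p^p$ uniformly in $s\ge0$. The preceding argument therefore specialises to
\begin{align*}
\sup_{s\in[0,t]}\|T^n_s-T_s\|_{p,p} &\le K''_{p,t}\cdot\sup_{s\in[0,t]}\sup_{y\in\Geb}\EE\big[\chi_{\{s<\tau_\Geb(y)\}}\|D_{\sigma,s}(y)\|^{3p'}\big]^{1/(3p')},
\end{align*}
where $K''_{p,t}$ collects the dimensional factor from \cref{HoelderT} together with the $s$-monotone Kato-class moment of $V_-$ from \cref{expmbdVminus}, both uniformly bounded on $[0,t]$. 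The remaining supremum tends to zero by the $s$-uniform convergence already provided by \cref{convW}.

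The bulk of the proof is routine bookkeeping of Hölder exponents. The only subtle point worth highlighting is the absence of any blow-up of the prefactor in \cref{LpLqLaplace} at $q=p$: this is what makes the $L^p\to L^p$ case locally uniform in $t$ without requiring a separate argument, and it is a direct consequence of the $L^1$-contractivity of Brownian convolution rather than of its full smoothing strength.
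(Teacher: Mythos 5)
Your proof is correct and takes essentially the same route as the paper: the paper's proof is a one-line appeal to precisely the ingredients you invoke, namely \cref{expmbdVminus,convW,defn:TUV,HoelderT0,HoelderT,LpLinftyLaplace,LpLqLaplace}, and you have simply filled in the routine bookkeeping of the Hölder exponents, the use of the prefactor $(2\pi t)^{-d/2}$ versus $t^{-(d/2)(q/p-1)}$, and the observation that the $q=p$ case is free of a $t$-singularity, all of which is exactly what the authors intended.
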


\begin{proof}
	The assertions follow from
	\cref{expmbdVminus,convW,defn:TUV,HoelderT0,HoelderT,LpLinftyLaplace,LpLqLaplace}.
\end{proof}

To obtain Feynman--Kac formulas for possibly singular magnetic vector potentials $A$ and electrostatic potentials $V$,
we need to approximate them by regular ones. 
The appropriate convergence properties of the Feynman--Kac operators then are secured by the next two theorems.

\begin{thm}\label{lem:TapproxA}
	Assume that the magnetic vector potential has an extension $A\in L_\loc^2(\RR^d,\RR^d)$. Let
	$A_1,A_2,\ldots\in L^2_{\loc}(\RR^{d},\RR^{d})$ and $\alpha\in L^2_{\loc}(\RR^{d},\RR)$
	be such that $A_n\xrightarrow{n\to\infty} A$ a.e. on $\RR^d$, as well as $|A_n|\le\alpha$ 
	a.e. on $\RR^{d}$ for every $n\in\NN$. Let $p\in(1,\infty)$, $\Psi\in L^p(\Geb,\Fock)$ and $t\ge0$.
	Denote by $T^n_{t}$ and $T_{\reg,t}^n$
	the maps obtained by putting $A_n$ in place of $A$ in Definition~\ref{defn:TUV}. If $\Geb$ satisfies \cref{tailbound}, then
	\begin{align}\label{convA1}
		&\lim_{n\to\infty}\|(T_{t}^n\Psi-T_{t}\Psi)(x)\|_{\Fock}=0,\quad \text{a.e. $x\in\Geb$},
		\\\label{convA2}
		&T_{t}^n\Psi\xrightarrow{\;\;n\to\infty\;\;} 
		T_{t}\Psi\quad\text{in every $L^q(\Geb,\Fock)$ with $q\in[p,\infty)$.}
	\end{align}
	The same holds without the assumption \cref{tailbound} when $T_{\reg,t}^n$ and $T_{\reg,t}$ 
	are put in place of $T_t^n$ and $T_t$, respectively.
\end{thm}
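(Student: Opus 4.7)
The plan is to exploit the fact that the operator-valued process $W_{\sigma,t}(x)$ defined in \eqref{def:W} is constructed purely from the Brownian motion and the coupling function $v$ and does not depend on the magnetic vector potential at all. Consequently, all $A$-dependence of $T_t^n$ sits inside $e^{-\ol{S}_t^n(x)}$, and since $\Re S_t^n(x)=\int_0^tV(b_s^x)\,\Id s$ is also $A$-independent, only the phase factor $e^{\ii\Phi_t^n(x)}$ varies with $n$. The scheme is then: first show $\Phi_t^n(x)\to\Phi_t(x)$ in probability for a.e.\ $x$; second, pass this under $\EE$ using a dominated-convergence argument with an $n$-uniform majorant to get \eqref{convA1}; third, pass the resulting pointwise convergence under the $\Geb$-integral using the moment bounds already established to get \eqref{convA2}.

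The first and main step is the convergence of the generalized Stratonovich integrals. I would fix $x$ outside an appropriate null set, and on the event $\{t<\tau_n(x)\}$ from the exhaustion $\Geb_n\uparrow\Geb$ localize each $\Phi_t^n(x)$ to $\Geb_k$ for all $k\ge n$. The forwards part is a genuine It\^o integral with respect to $(\fr{F}_s)_{s\ge0}$ of $\chi_{\Geb_k}(b_s^x)A_n(b_s^x)$ along $b$; by It\^o isometry together with the $L^2_{\loc}$ domination $|A_n|\le\alpha$ and pointwise a.e.\ convergence $A_n\to A$, this converges in $L^2(\PP)$. The backwards part is a stochastic integral against the reversed Brownian motion $b^{t;x}$ with respect to the time-dependent filtration $(\fr{G}^t_s)_{s\in[0,t]}$, to which the analogous argument applies using the semimartingale decomposition from \cite{HaussmannPardoux.1986,Pardoux.1986}. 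This is the precisely the convergence argument carried out for the Pauli--Fierz model in \cite[\textsection9]{Matte.2021}, and I would invoke it verbatim since the construction of $\Phi_t^n$ in the present paper is identical.

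Second, I pass $|e^{\ii\Phi_t^n(x)}-e^{\ii\Phi_t(x)}|\to0$ (in probability, hence a.s.\ along a subsequence, and uniformly bounded by $2$) through the expectation by dominated convergence. The dominating integrand is the $n$-independent function
\begin{align*}
2\,\|W_{\sigma,t}(x)\|\,\eul^{\int_0^tV_-(b_s^x)\,\Id s}\,\|\Psi(b_t^x)\|_\Fock,
\end{align*}
which is $\PP$-integrable by H\"older's inequality together with \eqref{expmbdVminus}, the moment bound \eqref{momentbdW}, and \eqref{LpLinftyLaplace}. This yields \eqref{convA1}. For \eqref{convA2}, the same dominating integrand gives an $n$-uniform bound $\|(T_t^n\Psi-T_t\Psi)(x)\|_\Fock\le 2G(x)$, where $G(x)$ is obtained by dropping the phase factor. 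Since $G$ does not involve $A$, the estimate \eqref{LpLqTUV} (applied with $F=0$, $a=0$) yields $G\in L^q(\Geb)$ for every $q\in[p,\infty)$ with norm controlled by $\|\Psi\|_p$, and dominated convergence on $\Geb$ combined with step two gives the $L^q$ convergence. The regularized case $T_{\reg,t}^n$ is handled identically using \eqref{nbWreg} in place of \eqref{momentbdW}, and in that case no hypothesis on $\Geb$ beyond its dimension enters.

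The main obstacle will be the first step: the two-filtration structure of definition \eqref{defPhitx} means one cannot merely cite a single stochastic-integral continuity theorem for stochastic integrals against $b$. The forwards integral sits in the filtration $(\fr{F}_s)_{s\ge0}$ while the backwards integral is adapted to $(\fr{G}^t_s)_{s\in[0,t]}$, so one must run the convergence argument on each piece separately and then assemble. Happily, this technical point has already been resolved in \cite{Matte.2021}, and the argument transfers because the only novelty of the present paper, the factor $W_{\sigma,t}(x)$, is independent of $A$ and thus plays no role in the convergence analysis of $\Phi_t^n$.
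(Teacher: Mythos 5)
Your proposal is correct and follows essentially the same route as the paper: both arguments isolate the $A$-dependence in the phase factor $\eul^{\ii\Phi_t^n(x)}$, cite \cite[Theorem~9.2]{Matte.2021} for the convergence in probability of the generalized Stratonovich integrals (the paper invokes it directly; you unpack the two-filtration argument that lies behind it, which is fine but not a different proof idea), and then deduce \eqref{convA2} from \eqref{convA1} by dominated convergence on $\Geb$ via the $n$-uniform $L^q$ bound. The one small technical divergence is in how the in-probability convergence is pushed through the $\PP$-expectation: the paper first applies H\"older to split off the scalar factor $\EE[(2\wedge|\Phi_t^n(x)-\Phi_t(x)|)^{3p'}]^{1/3p'}$ and handles it with Vitali's convergence theorem (uniform $L^2$-boundedness plus convergence in measure), whereas you keep the product together and use dominated convergence together with the subsequence-of-subsequences argument to upgrade convergence in probability to $\PP$-a.s. convergence. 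Both are routine and valid; the paper's version is slightly crisper because H\"older plus Vitali avoids the subsequence step and gives an explicit quantitative bound, but the outcome is the same.
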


\begin{proof}
	Let $\Phi^n_t(x)$ denote the process obtained by putting $A_n$ in place of $A$ in \eqref{defPhitx}.
	Then the assumptions and \cite[Theorem~9.2]{Matte.2021} directly imply the existence
	of a Borel zero set $N\subset\RR^{d}$ such that 
	$\Phi_t(x)$ and all $\Phi_t^n(x)$ are well-defined and
	$\Phi_t^n(x)\xrightarrow{n\to\infty}\Phi_t(x)$ in probability for all $x\in\RR^{d\nu}\setminus N$.
	Now let $x\in\Geb\setminus N$.
	Using $|\eul^{\ii r}-\eul^{\ii s}|\le 2\wedge|r-s|$, $r,s\in\RR$, we then find
	\begin{align*}
		\|(T_{t}^n\Psi-T_{t}\Psi)(x)\|_{\Fock}
		&\le\sup_{y\in\RR^{d}}\EE\Big[\eul^{3p'\int_0^tV_-(b_s^y)\Id s}\Big]^{1/3p'}
		\sup_{z\in\RR^{d}}\EE[\chi_{\{t<\tau_\Geb(z)\}}\|W_{\UV,t}(z)\|^{3p'}]^{1/3p'}
		\\
		&\quad\cdot \EE[(2\wedge|\Phi_t^n(x)-\Phi_t(x)|)^{3p'}]^{1/3p'}\EE[\chi_{\{t<\tau_\Geb(x)\}}\|\Psi(b_t^x)\|_{\Fock}^p]^{1/p}.
	\end{align*}
	Since the functions $(2\wedge|\Phi_t^n(x)-\Phi_t(x)|)^{3p'}\le 2^{3p'}$ go to $0$ in probability as $n\to\infty$ and
	obviously are bounded in $L^2(\PP)$ uniformly in $n$, Vitali's theorem implies that
	$\EE[(2\wedge|\Phi_t^n(x)-\Phi_t(x)|)^{3p'}]\xrightarrow{n\to\infty}0$. 
	Together with \cref{expmbdVminus,momentbdW} these remarks imply \eqref{convA1}.
	Now \cref{convA2} follows by dominated convergence because of \eqref{LpLqLaplace}.
	For $T_{\reg,t}^n$ and $T_{\reg,t}$ the proof uses \cref{nbWreg} instead of \cref{momentbdW} and 
	is identical otherwise.
\end{proof}

A similar approximation result for sequences of potentials can easily be proved by 
applications of the dominated convergence theorem and \eqref{HoelderT}. We refrain from giving a separate proof here.

\begin{thm}\label{lem:TapproxV}
	Let $V_1,V_2,\ldots\in L^1_{\loc}(\Geb,\RR)$
	and assume that $V_n\xrightarrow{n\to\infty} V$ a.e. on $\Geb$, as well as $V_n\ge -V_-$ 
	a.e. on $\Geb$ for every $n\in\NN$. Let $p\in(1,\infty)$, $\Psi\in L^p(\Geb,\Fock)$ and $t\ge0$.
	Denote by $T^n_{t}$ and $T_{\reg,t}^n$ the maps obtained by putting $V_n$ in place of $V$ in \cref{defn:TUV}. Then
	\cref{convA1,convA2} hold true, provided that $\Geb$ fulfills \cref{tailbound}.
	The same convergence relations hold when $T_{\reg,t}^n$ and $T_{\reg,t}$ are put in place of $T_t^n$ and $T_t$, respectively,
	 in which case the assumption \cref{tailbound} on $\Geb$ is unnecessary.
\end{thm}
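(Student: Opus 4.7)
The plan is to follow the template of \cref{lem:TapproxA}, except that the convergence of the Stratonovich-type magnetic integral is replaced here by that of the real path integrals of the potentials. Denote by $S_t^n(x)$ the process obtained from \cref{defStx} with $V_n$ in place of $V$. Since $W_{\sigma,t}(x)$ (and $W_{\reg,t}(\vt;x)$) does not depend on $V$ and the magnetic part $\Phi_t(x)$ is unaffected by the replacement, one has
\begin{align*}
\eul^{-\ol{S_t^n(x)}} - \eul^{-\ol{S_t(x)}}
= \eul^{\ii\Phi_t(x)}\bigl(\eul^{-\int_0^t V_n(b_s^x)\Id s} - \eul^{-\int_0^t V(b_s^x)\Id s}\bigr),
\end{align*}
so $(T_t^n\Psi-T_t\Psi)(x)$ reduces to an expectation involving this real exponential difference multiplied by the operator norm of the $V$-independent factor $W_{\sigma,t}(x)$, the indicator $\chi_{\{t<\tau_\Geb(x)\}}$, and $\|\Psi(b_t^x)\|_\Fock$.

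The technical heart is to show that, for a.e.\ $x\in\Geb$ and every $r\in[1,\infty)$,
\begin{align*}
\EE\bigl[\chi_{\{t<\tau_\Geb(x)\}}\bigl|\eul^{-\int_0^t V_n(b_s^x)\Id s} - \eul^{-\int_0^t V(b_s^x)\Id s}\bigr|^r\bigr]\xrightarrow{n\to\infty}0.
\end{align*}
First I would invoke Fubini together with the strict positivity of the Lebesgue density of $b_s^x$ for $s>0$ to conclude from $V_n\to V$ a.e.\ on $\Geb$ that $V_n(b_s^x)\to V(b_s^x)$ on a set of full $\PP\otimes ds$-measure, for every $x\in\Geb$. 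Combining this with the uniform lower bound $V_n+V_-\ge 0$ and the $\PP$-a.s.\ local integrability of $V_-(b^x)$ (controlled through the Kato class), a truncation-and-Egorov argument then yields $\int_0^t V_n(b_s^x)\Id s\to\int_0^t V(b_s^x)\Id s$ in $\PP$-probability for a.e.\ $x$. Since the uniform bound $\eul^{-\int_0^t V_n(b_s^x)\Id s}\le \eul^{\int_0^t V_-(b_s^x)\Id s}$ is $L^{r'}(\PP)$-integrable uniformly in $n$ and $x$ for every $r'<\infty$ by the Kato moment estimate \cref{expmbdVminus}, Vitali's convergence theorem delivers the claimed $L^r$-convergence along the full sequence.

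Armed with this, the pointwise convergence \cref{convA1} follows by applying H\"older's inequality with exponents $(4r,4r,4r,p)$, along the lines of \cref{HoelderT}: the $W_{\sigma,t}$-factor is uniformly controlled via \cref{momentbdW}, the Kato-class factor via \cref{expmbdVminus}, the exponential difference via the above, and the $\Psi$-factor via \cref{LpLinftyLaplace}. For the $L^q$-convergence \cref{convA2}, I would apply the weighted estimate \cref{LpLqTUV} to $|\Psi|$ (with $V_n$ in place of $V$) to produce a uniform $L^q(\Geb)$-majorant for $\|(T_t^n\Psi)(\cdot)\|_\Fock$, and then conclude by the dominated convergence theorem on $\Geb$. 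The case of the regularized operators $T_{\reg,t}^n$ is identical, using \cref{nbWreg} in place of \cref{momentbdW}.

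The main obstacle is the absence of an upper-bound dominator on the $V_n$: the hypothesis gives only the one-sided bound $V_n\ge -V_-$, so a naive dominated-convergence argument applied directly to the integrals $\int_0^t V_n(b_s^x)\Id s$ is not available. The point, however, is that spurious upward excursions of $V_n$ can only push $\eul^{-\int V_n}$ toward $0$, so the uniform upper bound $\eul^{\int V_-}$ on the exponential itself is enough to run Vitali on the exponential differences even in the absence of any integrability control on the $V_n$ that is uniform in $n$.
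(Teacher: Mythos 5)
Your reduction to the convergence of the real exponential factor, together with the ensuing H\"older/domination machinery via \cref{momentbdW}, \cref{expmbdVminus}, \cref{LpLinftyLaplace}, \cref{LpLqLaplace} and \cref{LpLqTUV}, is exactly what the paper has in mind when it remarks that the result ``can easily be proved by applications of the dominated convergence theorem and \eqref{HoelderT}''. The gap is located precisely where you wave at ``a truncation-and-Egorov argument'' to conclude $\int_0^t V_n(b_s^x)\,\Id s\to\int_0^t V(b_s^x)\,\Id s$ in $\PP$-probability. This step does not follow from $V_n\to V$ a.e.\ together with the one-sided bound $V_n\ge -V_-$. What the one-sided bound gives you — as your own last paragraph correctly observes — is the a.s.\ inequality $\limsup_n\eul^{-\int_0^tV_n(b_s^x)\Id s}\le\eul^{-\int_0^tV(b_s^x)\Id s}$ (compare $V_n$ with $V_n\wedge V$ and dominate the latter by $V_++V_-$), together with uniform integrability of the exponentials by \cref{expmbdVminus}. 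But Vitali's theorem requires convergence in measure, and the $\liminf$ direction is not controlled by these hypotheses: nothing prevents the mass of $(V_n-V)_+$ from concentrating on a Lebesgue-null set that the Brownian path nevertheless charges.

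Concretely, take $d=2$, $\Geb=\RR^2$, $A=0$, $V=0$, $V_-=0$, and $V_n=n\chi_{A_n}$ with $A_n=\RR\times(-\tfrac{1}{2n},\tfrac{1}{2n})$. All hypotheses of the statement hold ($V_n\in L^1_{\loc}$, $V_n\to 0$ a.e., $V_n\ge 0$), but if $B$ denotes the second component of $b^x$, L\'evy's local-time approximation yields
\begin{align*}
\int_0^t V_n(b_s^x)\,\Id s = n\int_0^t\chi_{(-\frac{1}{2n},\frac{1}{2n})}(B_s)\,\Id s \xrightarrow{\;\;n\to\infty\;\;}L_t^0(B)\quad\PP\text{-a.s.},
\end{align*}
and $L_t^0(B)>0$ with positive probability for every $x$ and $t>0$. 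Hence $\eul^{-\int V_n}\not\to 1=\eul^{-\int V}$ in probability and \cref{convA1} fails. Both your proposal and the statement as printed therefore tacitly rely on extra structure of the approximating sequence: in the proof of \cref{thm:mainFKreg} the three sequences actually used are mollifications of a bounded $V$, the increasing truncations $V\wedge n$, and the decreasing truncations $V\vee(-n)$, for each of which $\int_0^tV_n(b_s^x)\Id s\to\int_0^tV(b_s^x)\Id s$ holds $\PP$-a.s.\ by two-sided dominated or monotone convergence. To make the argument airtight you should either add a hypothesis that supplies this convergence (for instance a dominator $-V_-\le V_n\le W$ with $\int_0^tW(b_s^x)\Id s<\infty$ a.s., or monotonicity of $(V_n)$), or else state and prove the lemma only for the concrete sequences that are needed downstream.
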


\begin{rem}\label{rem:approxAV}
	There exist vector potentials $A_1,A_2,\ldots\in C_0^\infty(\RR^{d},\RR^{d})$ 
	and some dominating function $\alpha\in L^2_{\loc}(\RR^{d},\RR)$ 
	as well as electrostatic potentials $V_1,V_2,\ldots\in C_b(\RR^{d},\RR)$
	fulfilling the hypotheses of \cref{lem:TapproxV,lem:TapproxA}, respectively.
	
	In fact, the existence of $A_n$ and $\alpha$ has been shown in 
	\cite[Lemma~9.3 \& Step~1 of the proof of Proposition~9.4]{Matte.2021}. The construction of $V_n$ is standard.
\end{rem}


\subsection{Feynman--Kac formulas in the general case}\label{sec:FKproof}

We are now in a position to complete the proofs of the Feynman--Kac formulas stated in \cref{ssec:FK}.

\begin{proof}[Proof of \cref{thm:mainFKreg}]
	Assume first that $V\in C_b(\RR^{d},\RR)$ and $A\in L_{\loc}^2(\RR^d,\RR^d)$. Pick 
	$A_n\in C_0^\infty(\RR^{d},\RR^{d})$ as in Remark~\ref{rem:approxAV}.
	Denote by $H_n(\vt)$ and $T_{\reg,t}^n$ the Hamiltonian and Feynman--Kac operators, respectively,
	defined by means of $A_n$ in place of $A$. Let $\Psi\in L^2(\Geb,\Fock)$ and $t\ge0$.
	Then $\eul^{-tH_n(\vt)}\Psi=T_{\reg,t}^n\Psi$ for all $n\in\NN$ by \cref{cor:FKUVregAV}.
	However, $T_{\reg,t}^n\Psi\to T_{\reg,t}\Psi$, $n\to\infty$, in $L^2(\Geb,\Fock)$ by Lemma~\ref{lem:TapproxA}, while
	$\eul^{-tH_n(\vt)}\Psi\to\eul^{-tH(v_\UV)}\Psi$, $n\to\infty$, by Theorem~\ref{thm:rescontA} and since
	strong resolvent convergence of semibounded operators entails strong convergence 
	of their semigroup members. 
	Thus, $\eul^{-tH(\vt)}=T_{\reg,t}$.
	
	Still assuming $V\in C_b(\RR^{d},\RR)$, we can copy the first part of the proof of
	\cite[Theorem~1.1 in \textsection9.4]{Matte.2021} to extend this result to $A\in L_{\loc}^2(\Geb,\RR^d)$.
	Here we set $\Geb_n\coloneq\{x\in\Geb|\,\dist(x,\Geb^c)>1/n\}$, $n\in\NN$, so that
	$A_n\coloneq\chi_{\Geb_n}A$, extended by $0$ to $\RR^d$, belongs to $L_\loc^2(\RR^d,\RR^d)$.
	Denote by $\fr{h}_n(\vt)$ and $H_n(\vt)$ the polaron quadratic form and polaron Hamiltonian {\em on $\Geb_n$}
	defined by means of $A_n$. Let, as usual, $\fr{h}(\vt)$ and $H(\vt)$ be the ones on $\Geb$ defined by means of $A$.
	Tacitly extending functions on $\Geb_n$ by $0$ to larger subsets of $\Geb$,
	we then have $\dom(\fr{h}_n(\vt))\subset\dom(\fr{h}_m(\vt))\subset\dom(\fr{h}(\vt))$, $m>n$,
	and $\fr{h}(\vt)[\Psi]=\lim_{n<m\to\infty}\fr{h}_m[\Psi]$ for all $\Psi\in\dom(\fr{h}_n(\vt))$ and $n\in\NN$.
	Let $t\ge0$ and $\Psi\in L^2(\Geb,\Fock)$.
	By \cite[Theorems~4.1 and~4.2]{Simon.1978c} the previous remarks imply that
	$\lim_{n\to\infty}\eul^{-tH_n(\vt)}(\Psi\restr_{\Geb_n})=\eul^{-tH(\vt)}\Psi$ in $L^2(\Geb,\Fock)$
	and, hence, a.e. on $\Geb$ along a subsequence.
	Define $S^n_t(x)$ and $\tau_n(x)$ as in the end of \cref{ssecStrat}. Then 
	$S_t^n(x)=S_t(x)$ holds $\PP$-a.s. on $\{t<\tau_n(x)\}$, whence
	the result of the first paragraph of this proof yields
	\begin{align}\label{extAGeb}
	(\eul^{-tH_n(\vt)}(\Psi\restr_{\Geb_n}))(x)
	&=\EE\big[\chi_{\{t<\tau_{n}(x)\}}\eul^{-\ol{S}_t(x)}W_{\reg,t}(\vt;x)^*\Psi(b_t^x)\big],
	\end{align}
	for a.e. $x\in\Geb_n$. Since $\chi_{\{t<\tau_n(x)\}}\to\chi_{\{t<\tau_{\Geb}(x)\}}$
	on $\Omega$, the right hand side of \cref{extAGeb} converges to
	$(T_{\reg,t}\Psi)(x)$ as $n\to\infty$ for every $x\in\Geb$. Altogether this proves
	\cref{eq:FKmainreg} for $V\in C_b(\RR^d,\RR)$.
	
	The extension to general $V$ is standard and we shall not give any details. We just mention that
	the extension proceeds in three steps: First, bounded $V$ are approximated by continuous and
	bounded $V_n$ by mollification. After that $V$ which are bounded from below are approximated
	by $V\wedge n$. Finally, general $V$ are approximated by $V\vee(-n)$.
	In all three steps strong convergence of the Feynman--Kac operators is ensured by
	Lemma~\ref{lem:TapproxV}. In the first step (mollification) strong resolvent convergence
	of the Hamiltonians can be checked directly using the second resolvent identity.
	In the last two steps monotone convergence theorems for quadratic forms 
	\cite[Theorems~S.14 and~S.16]{ReedSimon.1980}
	are invoked to show strong resolvent convergence of the Hamiltonians.
\end{proof}

Finally, we prove the Feynman--Kac formula for the polaron approximating $v$ by its
ultraviolet cutoff versions $\tilde{v}_n=\chi_{\{\lambda<n\}}v$, $n\in\NN$ (that we extended
by $0$ to elements of $\mc{L}^\infty(\RR^d,\HP)$).

\begin{proof}[Proof of \cref{thm:mainFK}]
Let $t\ge0$ and define $T_{\reg,t}^n$ and $T_t^n$ as in \cref{defn:TUV} with $\tilde{v}_n$
put in place of $\vt$ and $v$, respectively.
Thanks to \cref{thm:mainFKreg} (applied in the first equality) and \cref{lemUregU,lemuregu} (applied in the second one)
we know that $\eul^{-tH(\tilde{v}_n)}=T_{\reg,t}^n=T_t^n$ 
for all $n\in\NN$. In the limit $n\to\infty$, \cref{thm:approxvn,cor:normresconvUV} imply, however,
that $T_t^n\to T_t$ and $\eul^{-tH(\tilde{v}_n)}\to\eul^{-tH(v)}$ in operator norm. Here we also use that
norm resolvent convergence of semibounded operators entails norm convergence of the corresponding 
semigroup members. In conclusion, $\eul^{-tH(v)}=T_t$.
\end{proof}

\appendix

\section{Lieb--Yamazaki type bounds on the polaron interaction}\label{app:rfbinteraction}

\noindent
In this \lcnamecref{app:rfbinteraction}, we prove \cref{thm:LY} in the spirit of Lieb and Yamazaki \cite{LiebYamazaki.1958}.
In our presentation the Lieb-Yamazaki commutator argument is somewhat hidden, though: integrating
\cref{gottfried1} with respect to $x$ against the complex density $\ol{v}(x,k)$
we formally obtain the quadratic form of the commutator between the covariant derivative $-\ii\partial_{j}-A_j$ 
and $\ii\ol{v}(x,k)\ap(k)$ on the right hand side, and the usefulness of \cref{gottfried1} is revealed by the integration by parts
argument in the proof of  \cref{prop:LY}. Notice that our arguments yield integral formulas for the interaction
$\fr w(v)[\Psi]$ that apply to all $\Psi$ in the form domain of $H(v)$.

Given $\Phi_1,\Phi_2\in L^2(\Geb,\Fock)$, 
we write $\langle \Phi_1|\Phi_2\rangle_{\Fock}$ for the integrable function 
$\Geb\ni x\mapsto \langle\Phi_1(x)|\Phi_2(x)\rangle_{\Fock}$. 
The weak partial derivative
$\partial_j$ appearing repeatedly is acting on the variable $x_j$. 
As usual, $\mr{W}^{1,1}(\Geb)$ is the closure of $C_0^\infty(\Geb)$ in the Sobolev space $W^{1,1}(\Geb)$.

\begin{lem}\label{lemgottfried1}
	Let $\Psi,\Upsilon\in\dom(\fr{q}^{\min})$. Then
		$\langle(\NO+1)^{1/2}\Psi|\Upsilon\rangle_{\Fock}\in \mr{W}^{1,1}(\Geb)$
	and 
	\begin{align*}
		\partial_j\langle(\NO+1)^{1/2}\Psi|\Upsilon\rangle_{\Fock}
		= &
		\langle\ii w_j^*\Psi|(\NO+1)^{1/2}\Upsilon\rangle_{\Fock} 
		+\langle(\NO+1)^{1/2}\Psi|\ii w_j^*\Upsilon\rangle_{\Fock},
	\end{align*}
	for all $j\in\{1,\ldots,d\}$.
\end{lem}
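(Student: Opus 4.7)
The plan is to first verify the identity for $\Psi,\Upsilon$ in the core
\begin{align*}
\scr{C}\coloneq\mathrm{span}\{f\phi\,|\,f\in C_0^\infty(\Geb),\,\phi\in\fdom(N)\}
\end{align*}
of $\fr{q}^{\min}$, and then extend by a density/continuity argument using that the right hand side is already well-defined (by Cauchy--Schwarz in $\Fock$) whenever $\Psi,\Upsilon\in\dom(\fr{q}^{\min})$.

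For the first step, let $\Psi=f\phi$ and $\Upsilon=g\psi$ with $f,g\in C_0^\infty(\Geb)$ and $\phi,\psi\in\fdom(N)$. Because $N$ acts only on the Fock-space factor and commutes with multiplication by $x$-dependent functions, while $w_j^*$ on such elements coincides with the differential expression $-\ii\partial_j-A_j$ (which is locally $L^2$ thanks to $A\in L_{\loc}^2(\Geb,\RR^d)$ and the compact support of $f,g$), one obtains by a direct computation
\begin{align*}
\langle \ii w_j^*\Psi|(N+1)^{1/2}\Upsilon\rangle_{\Fock}+\langle (N+1)^{1/2}\Psi|\ii w_j^*\Upsilon\rangle_{\Fock}=\partial_j\langle(N+1)^{1/2}\Psi|\Upsilon\rangle_{\Fock},
\end{align*}
where the contributions involving $A_j$ cancel using that $A_j$ is real and that $(N+1)^{1/2}$ is selfadjoint on $\Fock$. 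For such $\Psi,\Upsilon$ the pairing $\langle (N+1)^{1/2}\Psi|\Upsilon\rangle_{\Fock}$ manifestly lies in $C_0^\infty(\Geb)\subset\mr{W}^{1,1}(\Geb)$, and the identity extends to all $\Psi,\Upsilon\in\scr{C}$ by sesquilinearity.

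For the extension, pick sequences $\Psi_n,\Upsilon_n\in\scr{C}$ converging to $\Psi,\Upsilon$ in the form norm associated with $\fr{q}^{\min}$; this means in particular
\begin{align*}
(N+1)^{1/2}\Psi_n\to(N+1)^{1/2}\Psi,\qquad w_j^*\Psi_n\to w_j^*\Psi\quad\text{in $L^2(\Geb,\Fock)$,}
\end{align*}
and likewise for $\Upsilon_n$. Applying the Cauchy--Schwarz inequality in $\Fock$ pointwise in $x$ followed by the one in $L^2(\Geb)$ shows that
\begin{align*}
\langle(N+1)^{1/2}\Psi_n|\Upsilon_n\rangle_{\Fock}&\longrightarrow\langle(N+1)^{1/2}\Psi|\Upsilon\rangle_{\Fock},\\
\langle \ii w_j^*\Psi_n|(N+1)^{1/2}\Upsilon_n\rangle_{\Fock}+\langle(N+1)^{1/2}\Psi_n|\ii w_j^*\Upsilon_n\rangle_{\Fock}&\longrightarrow\text{RHS of the claim,}
\end{align*}
in $L^1(\Geb)$. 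Since each $\langle(N+1)^{1/2}\Psi_n|\Upsilon_n\rangle_{\Fock}$ lies in $C_0^\infty(\Geb)$ and the sequence is Cauchy in $W^{1,1}(\Geb)$, its limit belongs to $\mr{W}^{1,1}(\Geb)$ and the claimed formula for the weak derivative follows.

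The only place requiring care is the identification of the action of $w_j^*$ on elements of $\scr{C}$; one checks by integration by parts against test functions in $\scr{C}$ that $w_j^*(f\phi)=(-\ii\partial_j f-A_j f)\phi$ in the $L^2$-sense, which is where the assumption $A\in L_{\loc}^2(\Geb,\RR^d)$ enters. Apart from this, the argument is a routine ``compute on a core, then close'' procedure, with the selfadjointness of $(N+1)^{1/2}$ and the reality of $A_j$ providing the cancellation that makes the formal product-rule identity correct.
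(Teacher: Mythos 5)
Your proof is correct and follows essentially the same route as the paper's: verify the product-rule identity on the form core $\operatorname{span}\{f\phi \mid f\in C_0^\infty(\Geb),\, \phi\in\fdom(N)\}$, observing that the $A_j$-terms cancel by reality of $A_j$ and selfadjointness of $(N+1)^{1/2}$, and then pass to general $\Psi,\Upsilon\in\dom(\fr q^{\min})$ via form-norm convergence and pointwise Cauchy--Schwarz to get $W^{1,1}$-convergence. The only slight overhead in your write-up is the remark about "checking by integration by parts" that $w_j^*(f\phi)=(-\ii\partial_j f - A_j f)\phi$; this is immediate from the definition of $w_j$ together with $w_j\subset w_j^*$, which is how the paper phrases it.
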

\begin{proof}	
	Pick $\Psi_n,\Upsilon_n\in\operatorname{span}\{f\phi|f\in C_0^\infty(\Geb),\phi\in\fdom(N)\}$, $n\in\NN$, 
	such that $\Psi_n\to\Psi$
	in $L^2(\Geb,\Fock)$ and $\fr{q}^{\min}[\Psi_n-\Psi]\to0$ as $n\to\infty$ and
	analogously for $\Upsilon_n$. In particular,
	\begin{align}\label{gottfriedD0}
		\left.
		\begin{array}{r}
			w_j^*\Psi_n\to w_j^*\Psi,\ j=1,\ldots,d
			\\
			\\
			(\NO+1)^{1/2}\Psi_n\to(\NO+1)^{1/2}\Psi
		\end{array}
		\right\}\quad \text{in $L^2(\Geb,\Fock)$ as $n\to\infty$},
	\end{align}
	and again analogously for $\Upsilon_n$. Then
	\begin{align}\label{gottfriedD1}
		C_0^\infty(\Geb)\ni
		\langle(\NO+1)^{1/2}\Psi_n|\Upsilon_n\rangle_{\Fock}\xrightarrow{\;\;n\to\infty\;\;}
		\langle(\NO+1)^{1/2}\Psi|\Upsilon\rangle_{\Fock}\quad\text{in $L^1(\Geb)$.}
	\end{align}
	For all $n\in\NN$ and $j\in\{1,\ldots,d\}$, the definition of $w_j$, the Leibniz rule and the fact that $\ii A$ is purely imaginary entail
	\begin{align*}
		\partial_j\langle(\NO+1)^{1/2}\Psi_n|\Upsilon_n\rangle_{\Fock}
		=
		\langle\ii w_j\Psi_n|(\NO+1)^{1/2}\Upsilon_n\rangle_{\Fock}
		+\langle(\NO+1)^{1/2}\Psi_n|\ii w_j\Upsilon_n\rangle_{\Fock}.
	\end{align*}
	Recalling that $w_j\subset w_j^*$ as well as \eqref{gottfriedD0} 
	and its analog for $\Upsilon_n$, we conclude that
	\begin{align*}
		\partial_j\langle(\NO+1)^{1/2}\Psi_n|\Upsilon_n\rangle_{\Fock}
		\xrightarrow{\;\;n\to\infty\;\;}
		\langle\ii w_j^*\Psi|(\NO+1)^{1/2}\Upsilon\rangle_{\Fock}
		+\langle(\NO+1)^{1/2}\Psi|\ii w_j^*\Upsilon\rangle_{\Fock}
	\end{align*}
	in $L^1(\Geb)$ for all $j\in\{1,\ldots,d\}$.
	 Together with \eqref{gottfriedD1}, this proves the statement.
\end{proof}
The following statement is the major ingredient in our proof of \cref{thm:LY}. Therein, 
$I:L^2(\Geb,\Fock)\to L^2(\mu;L^2(\Geb,\Fock))$ (recall \cref{eq:pointwise0})
is the unique partial isometry with $\ker I =L^2(\Geb,\Fock_0)$ acting on vectors $g\ve(f)$ with $g\in L^2(\Geb)$
and $f\in\HP$ as
\begin{align}\label{eq:defI}
	(I (g\ve(f)))(k) \coloneq f(k)g(\NO+1)^{-1/2}\ve(f),\quad \text{$\mu$-a.e. $k\in\mc{K}$.}
\end{align}
We easily infer from \cref{eq:pointwise,eq:defI} that
\begin{align}\label{eq:apandI}
	\ap = (\NO+1)^{1/2}I = I\NO^{1/2}.
\end{align}
\begin{lem}\label{lemgottfried2}
	Let $\Psi\in\dom(\fr{q}^{\min})$. Then, for $\mu$-a.e. $k\in\mc{K}$, it holds
	$\langle\Psi|\ap(k)\Psi\rangle_{\Fock}\in\mr{W}^{1,1}(\Geb)$ and, for all $j\in\{1,\ldots,d\}$,
	\begin{align}\label{gottfried1}
		\partial_j\langle\Psi|\ap(k)\Psi\rangle_{\Fock}
		&=
		\ii\langle(\NO+1)^{1/2}\Psi|(I w_j^*\Psi)(k)\rangle_{\Fock}
		-\ii\langle w_j^*\Psi|\ap(k)\Psi\rangle_{\Fock}.
	\end{align}
\end{lem}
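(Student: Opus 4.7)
The plan is to reduce the statement to \cref{lemgottfried1} by taking $\Upsilon=(I\Psi)(k)$ and exploiting the factorization $\ap=(\NO+1)^{1/2}I$ from \cref{eq:apandI}. Interpreted as functions of $x\in\Geb$, that factorization immediately gives the pointwise identity
\begin{align*}
\langle\Psi\,|\,\ap(k)\Psi\rangle_{\Fock} = \langle(\NO+1)^{1/2}\Psi\,|\,(I\Psi)(k)\rangle_{\Fock}.
\end{align*}
Hence, once I establish that for $\mu$-a.e. $k\in\mc{K}$ the fiber $(I\Psi)(k)$ lies in $\dom(\fr{q}^{\min})$ and satisfies the intertwining relations $w_j^*(I\Psi)(k)=(Iw_j^*\Psi)(k)$ for each $j\in\{1,\ldots,d\}$, I can plug $\Upsilon\coloneq(I\Psi)(k)$ into \cref{lemgottfried1} and, invoking the same factorization once more, rewrite
\begin{align*}
\langle\ii w_j^*\Psi\,|\,(\NO+1)^{1/2}(I\Psi)(k)\rangle_{\Fock} &= -\ii\langle w_j^*\Psi\,|\,\ap(k)\Psi\rangle_{\Fock},\\
\langle(\NO+1)^{1/2}\Psi\,|\,\ii w_j^*(I\Psi)(k)\rangle_{\Fock} &= \ii\langle(\NO+1)^{1/2}\Psi\,|\,(Iw_j^*\Psi)(k)\rangle_{\Fock},
\end{align*}
which reassembles to \eqref{gottfried1}.

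The main obstacle is verifying those fiberwise properties of $I\Psi$. Morally, $I$ acts only on the bosonic Fock factor, so it ought to commute with the covariant derivatives $w_j^*$ and with multiplication by functions of $x$. On the core $\scr{D}\coloneq\operatorname{span}\{g\expv{f}\mid g\in C_0^\infty(\Geb),\,f\in\HP\}$ for $\fr{q}^{\min}$, \cref{eq:defI} yields $(I(g\expv{f}))(k)=f(k)\,g\,(\NO+1)^{-1/2}\expv{f}\in\scr{D}$ for every $k\in\mc{K}$, and the commutations with $w_j^*=-\ii\partial_j-A_j$, with $(\NO+1)^{1/2}$, and with multiplication by $V_+^{1/2}$ are then all immediate, since these operators leave the scalar prefactor $f(k)$ untouched and act on the Fock vector $(\NO+1)^{-1/2}\expv{f}$ independently of their action on the function $g$.

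For general $\Psi\in\dom(\fr{q}^{\min})$ I pick approximants $\Psi_n\in\scr{D}$ converging to $\Psi$ in the form norm of $\fr{q}^{\min}$. Boundedness of $I$ together with the commutations just noted yields $L^2(\mu;L^2(\Geb,\Fock))$-convergence of $Iw_j^*\Psi_n=w_j^*(I\Psi_n)$, of $I\NO^{1/2}\Psi_n=(\NO+1)^{1/2}I\Psi_n=\ap\Psi_n$ and of $IV_+^{1/2}\Psi_n=V_+^{1/2}I\Psi_n$ to $Iw_j^*\Psi$, $\ap\Psi$ and $IV_+^{1/2}\Psi$, respectively. Extracting a diagonal subsequence produces $\mu$-a.e. convergence in $L^2(\Geb,\Fock)$ in $k$, simultaneously in $j$. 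Closedness of $w_j^*$ and of the pointwise operators $(\NO+1)^{1/2}$ and $V_+^{1/2}$ then transfers the core-level identities to $\Psi$ and yields $(I\Psi)(k)\in\dom(\fr{q}^{\max})$ for $\mu$-a.e. $k$; since $(I\Psi_n)(k)\in\scr{D}$ converges to $(I\Psi)(k)$ in the form norm for $\mu$-a.e. $k$, closedness of $\fr{q}^{\min}$ actually places $(I\Psi)(k)$ in $\dom(\fr{q}^{\min})$, completing the reduction.
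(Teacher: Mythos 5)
Your proof follows essentially the same route as the paper's: reduce \eqref{gottfried1} to \cref{lemgottfried1} with $\Upsilon=(I\Psi)(k)$ via the factorization $\ap=(\NO+1)^{1/2}I$, and verify the needed fiberwise membership of $(I\Psi)(k)$ and the intertwining $w_j^*(I\Psi)(k)=(Iw_j^*\Psi)(k)$ by approximating $\Psi$ from the form core, commuting $I$ past the relevant operators there, extracting a $\mu$-a.e. convergent subsequence, and invoking closedness. If anything, you are slightly more explicit than the paper on two points: you retain $V_+$ throughout instead of reducing to $V_+=0$, and you spell out the form-norm convergence $(I\Psi_n)(k)\to(I\Psi)(k)$ that actually places $(I\Psi)(k)$ in $\dom(\fr{q}^{\min})$ rather than merely in $\dom(\fr{q}^{\max})$ — a distinction the paper glosses over but which matters since $\fr{q}^{\min}$ and $\fr{q}^{\max}$ need not agree for general $\Geb$.

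One small slip: you assert that $(I(g\expv{f}))(k)=f(k)\,g\,(\NO+1)^{-1/2}\expv{f}$ again lies in $\scr{D}=\operatorname{span}\{g\expv{f}\mid g\in C_0^\infty(\Geb),\,f\in\HP\}$, but $(\NO+1)^{-1/2}\expv{f}$ is not a (finite linear combination of) exponential vector(s). The fix is harmless: it is in $\dom(N)\subset\fdom(N)$, so $(I\Psi_n)(k)$ lands in the defining core $\operatorname{span}\{g\phi\mid g\in C_0^\infty(\Geb),\,\phi\in\fdom(N)\}$ of $\fr{q}^{\min}$, which is all your closedness argument needs. With that adjustment the proof is correct and complete.
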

\begin{proof}
We may assume that $V_+=0$ throughout this proof.
	We pick some $j\in\{1,\ldots,d\}$ and shall first show that 
	\begin{align}\label{gottfried12}
		(I\Psi)(k)\in\dom(w_j^*)\quad\text{and}\quad
		w_j^*(I\Psi)(k)=(I w_j^*\Psi)(k)\quad\text{for $\mu$-a.e. $k$.}
	\end{align}
	Pick $\Psi_n\in\mathrm{span}\{f\phi|\,f\in C_0^\infty(\Geb),\,\phi\in\Fock\}$, $n\in\NN$,
	such that $\Psi_n\to\Psi$ in $L^2(\Geb,\Fock)$, $\fr{q}^{\min}[\Psi_n-\Psi]\to0$
	and in particular $w_j^*\Psi_n\to w_j^*\Psi$ in $L^2(\Geb,\Fock)$
	as $n\to\infty$. Evidently, $w_j^*(I\Psi_n)(k)=(I w_j^*\Psi_n)(k)$.
	Since $I$ is a partial isometry, $(I\Psi_n)_{n\in\NN}$ and 
	$(Iw_j^*\Psi_n)_{n\in\NN}$ converge in $L^2(\mu;L^2(\Geb,\Fock))$
	to $I\Psi$ and $Iw_j^*\Psi$, respectively. Hence, we
	find a subsequence $(\Psi_{n_\ell})_{\ell\in\NN}$ such that
	$(I\Psi_{n_\ell})(k)\to(I\Psi)(k)$ and 
	$(I w_j^*\Psi_{n_\ell})(k)\to(I w_j^*\Psi)(k)$ in $L^2(\Geb,\Fock)$ for $\mu$-a.e. $k$.
	Since $w_j^*$ is closed, this implies \eqref{gottfried12}.
	
	In conjunction with $I\Psi(k)\in L^2(\Geb,\fdom(\NO))$, $\mu$-a.e. $k$, which follows from 
	the second equality in \cref{eq:apandI}, \cref{gottfried12} shows that $I\Psi(k)\in\dom(\fr{q}^{\min})$, $\mu$-a.e. $k$.
	Thus, for $\mu$-a.e. $k$, we can apply \cref{lemgottfried1} with $\Upsilon=I\Psi(k)$. 
	Combining that lemma with \cref{gottfried12,eq:apandI} proves the statement.
\end{proof}
The next statement is now easily proven by a partial integration argument. This is a quadratic
form version of the commutator argument employed in \cite{LiebYamazaki.1958}.
\begin{lem}\label{prop:LY}
	Let $E\ge 1$ and let $\Psi\in\dom(\fr q^{\min})$. Then the iterated integral \cref{def:frw}
	is well-defined. Further, setting $\beta^E_{x}(k)\coloneq v(x,k)/(E+\lambda(k))$, $k\in\mc{K}$,
	and recalling that $\beta^E_{x},\partial_{x_1}\beta^E_{x},\ldots,\partial_{x_d}\beta^E_{x}$ all belong 
	to $\mc{L}^\infty(\Geb,\HP)$ as functions of $x$, we obtain
	\begin{align}\nonumber
		\fr w(v)[\Psi] &= 2\Re \int_\Geb \langle \Psi(x) | a(E\beta^E_{x}) \Psi (x) \rangle_\Fock \Id x 
		\\
		&\quad+  \Re\int_{\Geb} \sum_{j=1}^d \langle \ii (w_{j}^* \Psi) (x) | 
		\vp (\partial_{x_j}\beta^E_{x}) \Psi(x) \rangle_\Fock \Id x. \label{interaction1}
	\end{align}
\end{lem}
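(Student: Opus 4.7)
The plan is to exploit the identity $v(x,k) = E\beta^E_x(k) - \tfrac{1}{2}\Delta_x\beta^E_x(k)$, which follows from assumption~(a) of \cref{sssec:coupling} combined with $\beta^E_x(k) = v(x,k)/(E+\lambda(k))$. Inserting this into \cref{def:frw} splits $\fr{w}(v)[\Psi]$ into a ``regular'' $E\beta^E$-contribution (where $\beta^E \in \mc{L}^\infty(\Geb,\HP)$ with $\|\beta^E_x\|_\HP \le L_E(v)/\sqrt{E}$) and a ``singular'' $\lambda\beta^E$-contribution. For the regular part the integrand is absolutely integrable on $\Geb\times\mc{K}$ by Cauchy--Schwarz together with $\int_{\mc{K}}\|\ap(k)\Psi(x)\|_\Fock^2\,\Id\mu(k) = \|\NO^{1/2}\Psi(x)\|_\Fock^2$, and \cref{eq:pwsmeared} directly identifies it with $2\Re\int_\Geb\langle\Psi(x)|a(E\beta^E_x)\Psi(x)\rangle_\Fock\,\Id x$, the first term in \cref{interaction1}.

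For the singular contribution, I would first verify that for $\mu$-a.e. $k$ the inner $x$-integral $\int_\Geb \lambda(k)\ol{\beta^E_x}(k)\langle\Psi|\ap(k)\Psi\rangle_\Fock\,\Id x$ converges, its integrand being dominated by $|\Delta_x\beta^E_x(k)|\,\|\Psi(x)\|_\Fock\,\|\ap(k)\Psi(x)\|_\Fock$ which is bounded in $x$ by assumption~(a) and in $L^1(\Geb)$ by Cauchy--Schwarz. For each such $k$, integration by parts in $x$---justified by approximating $\langle\Psi|\ap(k)\Psi\rangle_\Fock \in \mr{W}^{1,1}(\Geb)$ from \cref{lemgottfried2} by $C_0^\infty(\Geb)$-elements and using the $x$-regularity and boundedness of $\beta^E_x(k)$ and its derivatives---yields
\begin{align*}
 \int_\Geb \lambda(k)\ol{\beta^E_x}(k)\langle\Psi|\ap(k)\Psi\rangle_\Fock\,\Id x = \tfrac{1}{2}\sum_{j=1}^{d} \int_\Geb \partial_{x_j}\ol{\beta^E_x}(k)\,\partial_j\langle\Psi|\ap(k)\Psi\rangle_\Fock\,\Id x.
\end{align*}
Substituting \cref{lemgottfried2} for $\partial_j\langle\Psi|\ap(k)\Psi\rangle_\Fock$ now produces two double integrals on $\Geb\times\mc{K}$ that are absolutely convergent: by Cauchy--Schwarz in $k$ together with $\|\partial_{x_j}\beta^E\|_{\mc{L}^\infty(\Geb,\HP)} \le \sqrt{2}\,L_E(v)$, $\|Iw_j^*\Psi\|_{L^2(\mu;L^2(\Geb,\Fock))} \le \|w_j^*\Psi\|$ (since $I$ is a partial isometry), and $\|\ap(\cdot)\Psi\|_{L^2(\mu;L^2(\Geb,\Fock))}=\|\NO^{1/2}\Psi\|$. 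This legitimizes the outer $\mu$-integration and the use of Fubini, and proves the well-definedness of the iterated integral in \cref{def:frw}.

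The piece containing $\langle w_j^*\Psi|\ap(k)\Psi\rangle_\Fock$ combines with $\partial_{x_j}\ol{\beta^E_x}(k)$ via \cref{eq:pwsmeared} to $\int_\Geb \langle w_j^*\Psi|a(\partial_{x_j}\beta^E_x)\Psi\rangle_\Fock\,\Id x$. The main obstacle is the remaining piece
\begin{align*}
 J_j \coloneq \int_{\mc{K}}\int_\Geb \partial_{x_j}\ol{\beta^E_x}(k)\langle(\NO+1)^{1/2}\Psi|(Iw_j^*\Psi)(k)\rangle_\Fock\,\Id x\,\Id\mu(k),
\end{align*}
which mixes the partial isometry $I$ with $w_j^*\Psi$ (generally not in $L^2(\Geb,\fdom(\NO))$). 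I would handle it using the adjoint identity $\langle\eta, Iw_j^*\Psi\rangle = \langle I^*\eta, w_j^*\Psi\rangle$ with $\eta(k)(x) \coloneq \partial_{x_j}\beta^E_x(k)(\NO+1)^{1/2}\Psi(x) \in L^2(\mu;L^2(\Geb,\Fock))$. For any $\Phi \in L^2(\Geb,\fdom(\NO))$, the identity $(I\Phi)(k)(x) = (\NO+1)^{-1/2}\ap(k)\Phi(x)$ follows from \cref{eq:apandI}, so self-adjointness of $(\NO+1)^{-1/2}$ combined with the conjugate of \cref{eq:pwsmeared} yields $\langle\Phi, I^*\eta\rangle = \int_\Geb\langle\Phi|\ad(\partial_{x_j}\beta^E_x)\Psi\rangle_\Fock\,\Id x$. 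By density, and using that $\ad(\partial_{x_j}\beta^E_x)\Psi \in L^2(\Geb,\Fock)$ by \cref{rbNvp}, this identifies $I^*\eta = \ad(\partial_{x_j}\beta^E_x)\Psi$ in $L^2(\Geb,\Fock)$, whence $J_j = \ol{\int_\Geb\langle w_j^*\Psi|\ad(\partial_{x_j}\beta^E_x)\Psi\rangle_\Fock\,\Id x}$.

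Finally, collecting the overall $2\Re$ from \cref{def:frw}, the factors $\pm\tfrac{\ii}{2}$ coming from \cref{lemgottfried2}, and using the elementary identities $\Re(\mp\ii z) = \pm\Im z$ together with $\Im\ol z = -\Im z$, the singular contribution becomes $\sum_{j=1}^{d} \Im\int_\Geb \langle w_j^*\Psi|(a+\ad)(\partial_{x_j}\beta^E_x)\Psi\rangle_\Fock\,\Id x$. Via \cref{eq:adavp} this equals $\Re\int_\Geb \sum_{j=1}^{d} \langle\ii w_j^*\Psi|\vp(\partial_{x_j}\beta^E_x)\Psi\rangle_\Fock\,\Id x$, completing the proof of \cref{interaction1}.
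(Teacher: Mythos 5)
Your proof is correct and follows essentially the same route as the paper: the algebraic identity $v(x,k)=E\beta^E_x(k)-\tfrac12\Delta_x\beta^E_x(k)$, integration by parts against $\langle\Psi|\ap(k)\Psi\rangle_\Fock\in\mr{W}^{1,1}(\Geb)$ from \cref{lemgottfried2}, substitution of \cref{gottfried1}, and collection of terms via \cref{eq:pwsmeared}. Your explicit identification $I^*\eta=\ad(\partial_{x_j}\beta^E_x)\Psi$ to convert the $\langle(\NO+1)^{1/2}\Psi|(Iw_j^*\Psi)(k)\rangle_\Fock$ term is a useful elaboration of a step the paper compresses into ``combining \cref{eq:pwsmeared,gottfried1,eq:LY}''.
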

\begin{proof}
	Let $k\in\RR^d$ and $g\in\mr W^{1,1}(\Geb)$. Since $v(\cdot,k)\in C^\infty(\Geb)$
	is bounded with bounded first order partial derivatives and $-\Delta_xv(x,k)=2\lambda(k)v(x,k)$, $x\in\Geb$,
	\begin{align}\nonumber
	\int_{\Geb}\ol{v}(x,k)g(x)\Id x&=
	\int_{\Geb}\frac{(2E-\Delta_x)\ol{v}(x,k)}{2E+2\lambda(k)}g(x)\Id x
	\\\label{eq:LY}
	&=\int_{\Geb}E\ol{\beta^E_{x}}(k)g(x)\Id x
	+\frac{1}{2}\sum_{j=1}^d\int_{\Geb}\ol{(\partial_{x_j}\beta^E_{x})}(k)\partial_j g(x)\Id x.
	\end{align}
	According to \cref{lemgottfried2}, we can apply these remarks with
	$g\coloneq\langle\Psi|\ap(k)\Psi\rangle_{\Fock}\in \mr{W}^{1,1}(\Geb)$ for $\mu$-a.e. $k$. This reveals
	that the inner integral in \cref{def:frw} belongs to $L^1(\mu)$ as a function of $k$,
	and \cref{interaction1} follows by combining \cref{eq:pwsmeared,gottfried1,eq:LY}.
\end{proof}
We are now in a position to derive the desired form bound:
\begin{cor}\label{cor:relbound}
	For every $E\ge 1$, the relative form bound \cref{qfbdinteraction} holds true.
\end{cor}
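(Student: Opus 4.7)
The plan is to apply \cref{prop:LY} to split $\fr{w}(v)[\Psi]=T_1+T_2$ into the two summands on the right-hand side of \cref{interaction1} and estimate each separately by combining the relative bounds \cref{rbNa,rbNvp} on annihilation and field operators with Cauchy--Schwarz (in $\Fock$, in $L^2(\Geb)$, and in the index $j$), before applying Young's inequality in the form $2ab\le a^2+b^2$. The two pointwise inequalities driving the argument are
\[
\sqrt{E}\,\|\beta_x^E\|_{\HP}\le L_E(v),\qquad \sum_{j=1}^{d}\|\partial_{x_j}\beta_x^E\|_{\HP}^2\le 2L_E(v)^2,
\]
valid for all $x\in\Geb$ and immediate from the definitions of $\beta_x^E$ and $L_E(v)$.

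First, I would bound $T_1=2\Re\int_\Geb\langle\Psi(x)|a(E\beta_x^E)\Psi(x)\rangle_{\Fock}\,\Id x$: Cauchy--Schwarz in $\Fock$ together with \cref{rbNa}, Cauchy--Schwarz in $L^2(\Geb)$, and the first pointwise inequality above give $|T_1|\le 2\sqrt{E}\,L_E(v)\,\|\Psi\|\,\|N^{1/2}\Psi\|$, where I abbreviate $\|N^{1/2}\Psi\|^2\coloneq\int_\Geb\|N^{1/2}\Psi(x)\|_{\Fock}^2\,\Id x$. Young then yields $|T_1|\le L_E(v)\bigl(E\|\Psi\|^2+\|N^{1/2}\Psi\|^2\bigr)$. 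Next, for $T_2$ I would use the first bound in \cref{rbNvp} pointwise in $x$, Cauchy--Schwarz in $\Fock$, Cauchy--Schwarz in the $j$-sum, the second pointwise inequality above, and finally Cauchy--Schwarz in $L^2(\Geb)$ to obtain
\[
|T_2|\le 2L_E(v)\,\Bigl(\sum_{j=1}^{d}\|w_j^*\Psi\|^2\Bigr)^{1/2}\bigl\|(N+1)^{1/2}\Psi\bigr\|,
\]
in which the two factors of $\sqrt{2}$ coming from \cref{rbNvp} and from the bound on $\sum_j\|\partial_{x_j}\beta_x^E\|_{\HP}^2$ combine to give the clean constant $2$. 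Young then produces $|T_2|\le L_E(v)\bigl(\sum_{j=1}^{d}\|w_j^*\Psi\|^2+\|N^{1/2}\Psi\|^2+\|\Psi\|^2\bigr)$.

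Adding the two estimates and invoking the elementary bound $\sum_{j=1}^{d}\|w_j^*\Psi\|^2+2\|N^{1/2}\Psi\|^2\le 2\fr{q}^{\min}[\Psi]$, which holds because $V_+\ge 0$ contributes non-negatively to $\fr{q}^{\min}$, together with $E+1\le 2E$ for $E\ge 1$, will then yield precisely \cref{qfbdinteraction}. The main obstacle is bookkeeping: one has to track the constants carefully so that the two $\sqrt{2}$'s in the $T_2$-estimate multiply to exactly $2L_E(v)$, so that the factor $2$ in $\sum_j\|w_j^*\Psi\|^2\le 2\fr{q}^{\min}[\Psi]$ is compensated by the factor $\tfrac12$ appearing in the definition \cref{def:posform} of $\fr{q}^{\min}$, and so that $E+1$ is absorbed into $2E$; any looseness would yield a strictly larger coefficient in front of $\fr{q}^{\min}[\Psi]$ or $E\|\Psi\|^2$ than the sharp $2L_E(v)$.
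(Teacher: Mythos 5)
Your proof is correct and follows the same route the paper intends: splitting $\fr{w}(v)[\Psi]$ via \cref{interaction1}, applying \cref{rbNa,rbNvp} with repeated Cauchy--Schwarz (in $\Fock$, in the index $j$, and in $L^2(\Geb)$), invoking the pointwise inequality $\|E^{1/2}\beta_x^E\|_{\HP}^2 + \|\nabla_x\beta_x^E\|_{\HP^d}^2/2 \le L_E(v)^2$, and finishing with Young's inequality together with $E+1\le 2E$. The only cosmetic deviation is that you split the paper's single pointwise bound into two separate (weaker) pointwise bounds; the bookkeeping nevertheless reproduces the sharp constants in \cref{qfbdinteraction}.
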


\begin{proof}
	This follows upon combining \cref{interaction1}, \cref{rbNa,rbNvp}, Cauchy--Schwarz inequalities and 
	$\|E^{1/2}\beta^E_{x}\|^2+ \|\nabla_x\beta^E_{x}\|^2/2\le L_E(v)^2$, $x\in\Geb$.
\end{proof}

\section{Strong resolvent continuity w.r.t. vector potentials}\label{sec:rescontA}

\noindent
In our proof of the Feynman--Kac formula for ultraviolet regular coupling functions, 
we approximate the possibly singular magnetic vector potential $A$ in $L_{\loc}^2$ by a sequence of more regular vector potentials $A_n$, $n\in\NN$. We make use of the fact that this entails strong resolvent convergence of the corresponding polaron Hamiltonians. This is well-known for magnetic Schrödinger operators, the strongest results going back to \cite{LiskevichManavi.1997}, and it readily follows
from Feynman--Kac formulas {\em provided} that they are available for locally square-integrable $A$.
Since this is not yet the case in the situation the next theorem is employed, we give a purely functional analytic proof
based on \cite{LiskevichManavi.1997}; for curiosity we keep the assumptions more general and prove a stronger
statement than needed in the main text.

\begin{thm}\label{thm:rescontA}
	Let $A,A_n\in L^2_{\loc}(\Geb,\RR^{d})$, $n\in\NN$, and assume that $A_n\to A$ in $L_{\loc}^2$ as $n\to\infty$, 
	i.e., $1_K A_n\to 1_KA$ in $L^2$-sense for any compact $K\subset \Geb$.
	Let $f$ be either $\vt$ or $v$ and let
	$H_{n}(f)$, $n\in\NN$, denote the polaron Hamiltonians as defined in \cref{sssec:defHam} 
	with $A$ replaced by $A_n$. Abbreviate $H_n\coloneq H_{n}(f)$ and $H\coloneq H(f)$.
	Then $H_{n}$ converges to $H$ in the strong resolvent sense as $n\to\infty$. In fact,
	for some sufficiently large $\lambda>0$,
	\begin{align*}
	\lim_{n\to\infty}(\NO+1)^a(H_n+\lambda)^{-1}(\NO+1)^b\Psi=(\NO+1)^a(H+\lambda)^{-1}(\NO+1)^b\Psi,
	\end{align*}
for all $\Psi\in L^2(\Geb,\dom(\NO^b))$ where $a=b=1/2$. In the case $f=\vt$ we can also choose $a=1$ and $b=0$.
\end{thm}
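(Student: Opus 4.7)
The plan is to reduce the claim to the magnetic Schr\"odinger case on $L^2(\Geb)$ handled in \cite{LiskevichManavi.1997} via a tensor decomposition of $H(0)$, then promote the strong resolvent limit through the electron--phonon interaction using $A$-uniform form bounds from Theorem~\ref{thm:LY}, and finally upgrade to the weighted statement using an $A$-uniform form bound of $N$ by the Hamiltonian.

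The defining form of $H(0)$, namely $\fr{h}(0)[\Psi]=\fr{q}^{\min}[\Psi]-\int_\Geb V_-\|\Psi\|_\Fock^2\Id x$, factorizes on the core $\mathrm{span}\{g\phi:\,g\in C_0^\infty(\Geb),\,\phi\in\fdom(N)\}$ as
\begin{align*}
	\fr{h}(0)[g\phi]=\Big(\tfrac{1}{2}\sum_{j=1}^d\|(-\ii\partial_j-A_j)g\|^2_{L^2(\Geb)}+\int_\Geb V|g|^2\Id x\Big)\|\phi\|_\Fock^2+\|g\|^2_{L^2(\Geb)}\langle\phi|N\phi\rangle_\Fock,
\end{align*}
with $V=V_+-V_-$. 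Hence $H(0)=h_A\otimes I_\Fock+I_{L^2(\Geb)}\otimes N$ under the isomorphism $L^2(\Geb,\Fock)\cong L^2(\Geb)\otimes\Fock$, where $h_A$ is the Dirichlet magnetic Schr\"odinger operator on $L^2(\Geb)$ with potential $V$. Since $N$ commutes with every $h_{A_n}\otimes I_\Fock$, the Liskevich--Manavi theorem $h_{A_n}\to h_A$ in the strong resolvent sense transports via functional calculus applied to $(s,t)\mapsto(s+t+\lambda)^{-1}$ to $H_n(0)\to H(0)$ strong resolvent. To include the interaction, write $H(f)=H(0)+I(f)$ with $I(f)=\fr{w}(v)$ (respectively $\vp(\vt_\bullet)$ via \eqref{rbNvp}). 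Theorem~\ref{thm:LY} and \eqref{qfbVminusq} give $|I(f)[\Psi]|+\int V_-|\Psi|^2\Id x\le\tfrac{1}{2}\fr{q}^{\min}[\Psi]+C\|\Psi\|^2$ with $C$ \emph{independent of $A$}, so $H_n(f)+\lambda\ge\tfrac{1}{4}\fr{q}^{\min}+1$ uniformly in $n$ for $\lambda$ large enough; the form version of the second resolvent identity,
\begin{align*}
	(H_n(f)+\lambda)^{-1}-(H_n(0)+\lambda)^{-1}=-(H_n(0)+\lambda)^{-1/2}\,\widetilde I_n(f)\,(H_n(f)+\lambda)^{-1/2},
\end{align*}
with $\widetilde I_n(f)$ an $A$-uniformly bounded symmetric operator encoding $I(f)-V_-$, then promotes the free limit to $H_n(f)\to H(f)$ strongly resolvent.

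For the weighted statements set $R_n=(H_n(f)+\lambda)^{-1}$ and $R=(H(f)+\lambda)^{-1}$. The $A$-uniform chain $\NO\le\fr{q}^{\min}\le 4(H_n(f)+\lambda)$ yields that $B_n:=(\NO+1)^{1/2}R_n^{1/2}$ is uniformly bounded, so $(\NO+1)^{1/2}R_n(\NO+1)^{1/2}=B_nB_n^*$ is uniformly bounded as well. Strong resolvent convergence together with the bounded continuous functional calculus gives $R_n^{1/2}(\NO+1)^{1/2}\Psi\to R^{1/2}(\NO+1)^{1/2}\Psi$ strongly for every $\Psi\in L^2(\Geb,\fdom(N))$; combined with the uniform bound on $B_n$ this immediately yields weak convergence $B_nB_n^*\Psi\rightharpoonup BB^*\Psi$, and the upgrade to strong convergence is obtained via norm convergence, exploiting the $A$-uniform form bound to compare $\langle\Psi|B_nB_n^*\Psi\rangle$ with $\langle\Psi|BB^*\Psi\rangle$ through the quadratic form of $N$ on $R_n^{1/2}(\NO+1)^{1/2}\Psi$ and lower semicontinuity. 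For the additional case $f=\vt$ with $a=1$, $b=0$, one uses that $N$ commutes with $H_n(0)$ and that $\vp(\vt_\bullet)$ is operator-bounded by $\|\vt\|_\infty(\NO+1)^{1/2}$ via \eqref{rbNvp}; this gives $\|(\NO+1)R_n\|\le C$ uniformly, and a second resolvent identity argument analogous to the one just indicated delivers the asserted convergence.

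The main obstacle is the final upgrade from strong resolvent convergence to the weighted convergence: $(\NO+1)^{1/2}$ fails to commute with $H_n(v)$ once the interaction is switched on, so even after isolating the free part one must carefully control the non-commuting pieces. The essential mechanism that makes the upgrade go through is precisely the $A$-independence of the relative form bounds provided by Theorem~\ref{thm:LY}, which ensures that $(\NO+1)^{1/2}R_n^{1/2}$ remains uniformly bounded in $n$ and hence lifts the weak limit to a strong one.
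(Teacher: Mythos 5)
Your tensor decomposition $H(0)=h_A\otimes\id_\Fock+\id\otimes\NO$ and the use of Liskevich--Manavi are essentially equivalent to the paper's argument, which decomposes along number sectors via $\NO P_\ell=\ell P_\ell$; that part and the second-resolvent-identity step to strong resolvent convergence of $H_n(f)\to H(f)$ are sound. The genuine gap is the promotion to the weighted limit with $a=b=1/2$. You correctly reduce to showing $B_nB_n^*\Psi\to BB^*\Psi$ strongly with $B_n=(\NO+1)^{1/2}(H_n(f)+\lambda)^{-1/2}$, obtain the uniform bound $\sup_n\|B_n\|<\infty$ and the weak convergence $B_nB_n^*\Psi\rightharpoonup BB^*\Psi$. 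But the final ``upgrade'' invokes lower semicontinuity of the form of $\NO$ along the strongly convergent sequence $\Xi_n\coloneq(H_n(f)+\lambda)^{-1}(\NO+1)^{1/2}\Psi\to\Xi$, and lower semicontinuity gives only $\liminf_n\langle\Xi_n|(\NO+1)\Xi_n\rangle\ge\langle\Xi|(\NO+1)\Xi\rangle$. To conclude strong convergence from weak convergence one needs $\limsup_n\|B_nB_n^*\Psi\|\le\|BB^*\Psi\|$, i.e., the opposite inequality. Strong resolvent convergence plus the uniform form bound $\NO\le C(H_n(f)+\lambda)$ do not rule out mass escaping into high number sectors: a contribution of norm $\ve_n\to0$ concentrated on a sector of particle number $\sim\ve_n^{-2}$ vanishes under strong convergence of $\Xi_n$ but contributes order one to $\langle\Xi_n|\NO\Xi_n\rangle$.

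The paper circumvents this by a different mechanism. It proves the weighted convergence directly for the \emph{free} resolvents $(S_n+\NO+\lambda)^{-1}$, exploiting that $S_n$, $S$, $\NO$ all commute with the number projections $P_\ell$ so that the weight $(\NO+1)^a$ acts as a scalar on each sector, and dominated convergence over $\ell$ applies. It then transfers the weighted limit to $H_n(\vt)$ via a Neumann series around $(S_n+\NO+\lambda)^{-1}$ whose perturbation $\Delta(\NO+1)^{1/2}(S_n+\NO+\lambda)^{-1}$ is uniformly small, so the weighted limit passes term by term. Your second resolvent identity with a single form-bounded perturbation is not enough to track the weight through the interaction. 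Finally, the paper reduces $f=v$ to $f=\vt$ by the $A$-uniform weighted resolvent bound \eqref{Aunifresconv} of Corollary~\ref{cor:normresconvUV}, a step your write-up treats only implicitly; without that uniformity one cannot interchange the ultraviolet cutoff limit with the $L^2_\loc$ limit in the vector potential.
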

\begin{proof}
Since we can approximate $v$ by its cutoff versions $v^\sigma\coloneq\tilde{v}_\sigma$, $\sigma\in\NN$, and
the corresponding weighted resolvent convergence \cref{Aunifresconv} is uniform in the vector potentials,
it suffices to treat the case $f=\vt$; note that $(\NO+1)^{1/2}(H(v)+c)^{-1/2}$
is bounded uniformly in the vector potentials when $c>0$ is chosen as in \cref{cor:normresconvUV}.

Putting $\dom(\fr{s}_{\max})\coloneq \fdom(V_+\id_{\Fock})\cap\bigcap_{j=1}^{d}\dom(w_j^*)$, we define a
closed maximal form
\begin{align*}
\fr{s}_{\max}[\Psi]\coloneq\frac{1}{2}\sum_{j=1}^{d}\|w_j^*\Psi\|^2+\int_{\Geb}V(x)\|\Psi(x)\|^2_{\Fock}\Id x,
\quad\Psi\in \dom(\fr{s}_{\max}).
\end{align*}
We let $S$ denote the non-negative selfadjoint operator on $L^2(\Geb,\Fock)$ representing the 
corresponding minimal form 
$\fr{s}_{\min}\coloneq\ol{\fr{s}_{\max}\restr_{\operatorname{span}\{f\phi\mid f\in C_0^\infty(\Geb),\phi\in\Fock\}}}$.
	Define $S_n$, $n\in\NN$, in the same way with $A_n$ replacing $A$.
	By \cite[Theorem~2.8]{LiskevichManavi.1997}, we know that
	\begin{align}\label{eq:strongresS}
		S_n \xrightarrow{\;\;n\to\infty\;\;} S \quad\text{in the strong resolvent sense.}
	\end{align}
	Next, we note that $S+N$ is selfadjoint on $\dom(S)\cap L^2(\Geb,\dom(N))$ and an analogous statement
	holds for  $S_n+N$, $n\in\NN$. The spectra of all these operators have a common lower bound $c_0\le0$ and we
	assume that $\lambda\ge1- c_0$. For $\ell\in\NN_0$, we denote by $P_\ell$
	the orthogonal projection in $\Fock$ onto the $\ell$'th direct sum component $\Fock_\ell$ in \cref{defFock}
	seen as a subspace of $\Fock$. Then 
	we further know that $S_n$, $S$ and $\NO$ commute with $P_\ell$ on their respective domains. 
	Applying $\NO P_\ell = \ell P_\ell$, $\ell\in\NN_0$, and the dominated convergence theorem (with
	dominating sequence $(4\|P_\ell\Psi\|^2)_{\ell\in\NN_0}\in\ell^1(\NN_0)$),  we find, 
	for all $\Psi\in L^2(\Geb,\dom(\NO^b))$,
	\begin{align}\nonumber
		\big\|( \NO+1)^a\big(&(S_n + \NO + \lambda)^{-1}  -(S+\NO+\lambda)^{-1}\big)(\NO+1)^b\Psi \big\|^2 \\&
		= \sum_{\ell=0}^\infty \big\|(\ell+1) P_{\ell}\big((S_n + \ell + \lambda)^{-1}-(S+\ell+\lambda)^{-1}\big)P_\ell\Psi\big\|^2
		\xrightarrow{\;\;n\to\infty\;\;}0,\label{strresconvN}
	\end{align}
	where $a,b\ge0$ satisfy $a+b=1$. 
	Put $D\Psi(x)\coloneq -\vp(\vt_x)\Psi(x)$, a.e. $x\in\Geb$, for every $\Psi\in L^2(\Geb,\fdom(N))$. 
	In view of \cref{rbNvp}, we know that $\Delta\coloneq D(\NO+1)^{-1/2}$ is bounded and, hence, 
	\begin{align*}
	\lim_{\lambda\to\infty}\sup_{n\in\NN}\|D(S_n+\NO+\lambda)^{-1}\|&=0.
	\end{align*}
	Choosing $\lambda$ sufficiently large, we thus obtain the expansions in Neumann series
	\begin{align*}
		(H_{n}+\lambda)^{-1} = \sum_{\ell=0}^\infty(S_n+\NO+\lambda)^{-1}\big(
		\Delta(\NO+1)^{1/2}(S_n+\NO+\lambda)^{-1}\big)^\ell,\quad n\in\NN.
	\end{align*}
	The same expansion holds with $H$ and $S$ put in place of $H_n$ and $S_n$, respectively.
	Moreover, after multiplying with it from the left, we can move $(\NO+1)^a$ with $a\in\{1/2,1\}$
	under the summation signs of the Neumann expansions. 
	After that, we apply both sides of the resulting identity to $(\NO+1)^{1-a}\Psi$ with $\Psi\in\dom(\NO^{1-a})$.
	The statement then follows from the dominated convergence theorem and \cref{strresconvN}; 
	recall also that products of strongly convergent bounded operators are strongly convergent. 
\end{proof}

\section{Differentiability properties of $\beta^\pm$ as $\HP$-valued maps}\label{app:HPdiff}

\noindent
By our assumptions, $v(\cdot,k)\in C^\infty(\Geb)$ when $k\in\mc{K}$ is fixed.
In this appendix we shall study the differentiability of the $\HP$-valued functions
$\Geb\ni x\mapsto\beta_{\sigma,x}^\pm$ with 
$\beta_{\sigma,x}^\pm(k)=\beta_\sigma^\pm(x,k)\coloneq\chi_{\{\lambda\ge\sigma\}}(k)v(x,k)/(\lambda(k)\mp1)$.

First, we note a slight variation of a standard argument which can used to verify the assumptions in the subsequent
\cref{lembetaC2HP}. 

In what follows, $\ball{r}(a)$ denotes the open ball of radius $r>0$ about $a$ in $\RR^d$.

\begin{lem}\label{lem:green1}
Assume that $f:\Geb\times\mc{K}\to\CC$ is measurable and $f(\cdot,k)$ is smooth on $\Geb$ for every $k\in\mc{K}$.
Assume further that $-\Delta_xf(x,k)=2\lambda(k)f(x,k)$ for all $x\in\Geb$ and $k\in\mc{K}$.
Finally, assume that $(1+\lambda)f(x,\cdot)\in\HP$ for all $x\in\Geb$. Then the 
partial derivatives $\partial_{x_j} f(\cdot,k)$ computed for fixed $k\in\mc{K}$ define functions from $\Geb$ to $\HP$ and
\begin{align*}
\sup_{x\in\ol{\ball{\rho}(a)}}\|\partial_{x_j}f(x,\cdot)\|_{\HP}&\le c_{\rho,r,d}\sup_{x\in\ol{\ball{r}(a)}}\|(1+\lambda)f(x,\cdot)\|_{\HP},
\end{align*}
for all $a\in\Geb$ and $r>\rho>0$ such that $\ol{\ball{r}(a)}\subset\Geb$ and with a constant
solely depending on $\rho$, $r$ and $d$.
\end{lem}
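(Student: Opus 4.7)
The plan is to derive an integral representation of $\partial_{x_j}f(x,\cdot)$ via the fundamental solution of $-\Delta$, combined with a cutoff that keeps the singularity of the kernel away from a region where only pointwise derivatives of $\eta$ survive. Fix $a\in\Geb$ and $r>\rho>0$ with $\overline{\ball{r}(a)}\subset\Geb$, choose some intermediate radius $\rho'\in(\rho,r)$, and pick $\eta\in C_0^\infty(\RR^d)$ with $0\le\eta\le1$, $\eta\equiv1$ on $\ball{\rho'}(a)$ and $\supp\eta\subset\ball{r}(a)$.

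For each fixed $k\in\mc K$, the map $y\mapsto(\eta f)(y,k)$ is smooth and compactly supported, so Green's identity together with the eigenvalue equation $-\Delta_y f(y,k)=2\lambda(k)f(y,k)$ yields, for all $x\in\RR^d$,
\begin{align*}
\eta(x)f(x,k)=\int_{\RR^d}E(x-y)\big[-(\Delta\eta)(y)f(y,k)-2\nabla\eta(y)\cdot\nabla_y f(y,k)+2\lambda(k)\eta(y)f(y,k)\big]\,\Id y,
\end{align*}
with $E$ as in \cref{EFund}. Restricting to $x\in\ball{\rho}(a)$, where $\eta\equiv1$ in a neighbourhood of $x$, and differentiating in $x_j$, we recover $\partial_{x_j}f(x,k)$ as the sum of three integrals. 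The integrand of the troublesome one involving $\nabla_y f$ is supported in the annulus $\ball{r}(a)\setminus\overline{\ball{\rho'}(a)}$, which has distance at least $\rho'-\rho$ from $\ball{\rho}(a)$; consequently we may integrate by parts in $y$ without encountering the singularity of $\partial_{x_j}E(x-\cdot)$, moving the derivative onto $\partial_{x_j}E(x-y)\nabla\eta(y)$ and thereby producing only bounded kernels on $\ball{\rho}(a)\times\supp\nabla\eta$.

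After these manipulations we obtain a representation
\begin{align*}
\partial_{x_j}f(x,k)=\int_{\ball{r}(a)}K(x,y)f(y,k)\,\Id y+2\lambda(k)\int_{\ball{r}(a)}L(x,y)f(y,k)\,\Id y,\quad x\in\ball{\rho}(a),\ k\in\mc K,
\end{align*}
where $K$ is uniformly bounded on $\ball{\rho}(a)\times\ball{r}(a)$ (since every contribution to it carries a factor supported in $\supp\nabla\eta$, away from the diagonal), and $L(x,y)=\partial_{x_j}E(x-y)\eta(y)$ only carries an integrable $|x-y|^{1-d}$-singularity at $y=x$. In particular $\sup_{x\in\ball{\rho}(a)}\|K(x,\cdot)\|_{L^1(\ball{r}(a))}$ and $\sup_{x\in\ball{\rho}(a)}\|L(x,\cdot)\|_{L^1(\ball{r}(a))}$ are both finite, with a bound depending only on $\rho$, $r$ and $d$. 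Using the elementary inequality $|K(x,y)|+2\lambda(k)|L(x,y)|\le 2(|K(x,y)|+|L(x,y)|)(1+\lambda(k))$ and Minkowski's integral inequality in $L^2(\mc K,\mu)$, we conclude
\begin{align*}
\|\partial_{x_j}f(x,\cdot)\|_\HP\le 2\int_{\ball{r}(a)}\big(|K(x,y)|+|L(x,y)|\big)\,\|(1+\lambda)f(y,\cdot)\|_\HP\,\Id y\le c_{\rho,r,d}\sup_{y\in\overline{\ball{r}(a)}}\|(1+\lambda)f(y,\cdot)\|_\HP,
\end{align*}
which gives the asserted estimate; the same representation shows that $k\mapsto\partial_{x_j}f(x,k)$ is measurable and lies in $\HP$ for every $x\in\ball{\rho}(a)$.

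The main obstacle will be the $\nabla_y f$ term in Green's formula: a naive bound would force $\nabla f$ to be controlled by itself, which is circular. The crucial observation is that after the cutoff the $\nabla_y f$ contribution is integrated against a kernel supported in an annulus that is strictly separated from $\ball{\rho}(a)$, so that integration by parts is unproblematic and produces only bounded kernels acting on $f$ itself; the remaining $\lambda$-independent kernel $\partial_{x_j}E$ is locally $L^1$ in dimension $d\ge2$, which handles the diagonal singularity.
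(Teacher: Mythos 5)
Your proof is correct, and it is close in spirit to the paper's argument but implemented with a different representation formula. The paper applies the solution formula for the Dirichlet problem on the ball $\ball{r}(a)$ directly to $f$: using the Dirichlet Green's function $G$ and the Poisson kernel $P$, one gets
\begin{align*}
\partial_{x_j}f(x,k)=2\lambda(k)\int_{\ball{r}(a)}\partial_{x_j}G(x,y)f(y,k)\,\Id y+\int_{\partial\ball{r}(a)}\partial_{x_j}P(x,y)f(y,k)\,\Id\Sigma(y),
\end{align*}
where no $\nabla f$ term appears because the eigenvalue equation directly replaces $-\Delta f$ inside the ball and the only remaining contribution is the boundary integral. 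You instead convolve the free-space fundamental solution $E$ with $-\Delta(\eta f)$ for a cutoff $\eta$, which produces the $\nabla\eta\cdot\nabla f$ cross term, and you remove it by integrating by parts on the annulus $\supp\nabla\eta$, which is separated from $\overline{\ball{\rho}(a)}$, keeping the resulting kernel bounded. Both routes land on the same $|x-y|^{1-d}$ diagonal singularity and conclude via Minkowski's integral inequality in $L^2(\mc{K},\mu)$. The trade-off: the paper leans on known bounds for $\partial_{x_j}G$ and $\partial_{x_j}P$ and on justification of differentiation under the integral (citing a textbook), whereas your version is a little more self-contained but needs the cutoff bookkeeping and the integration-by-parts observation — which you identify and carry out correctly, including the key point that $\supp\Delta\eta\cup\supp\nabla\eta$ is strictly separated from $\overline{\ball{\rho}(a)}$.
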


\begin{proof}
Let $\ol{\ball{r}(a)}\subset\Geb$. Denote by
$G$ the Green's function of the negative Dirichlet--Laplacian on the ball $\ball{r}(a)$
and by $P$ the Poisson kernel for $\ball{r}(a)$.
Let $\Sigma$ be the surface measure on $\partial\ball{r}(a)$. Then
the solution formula for Dirichlet problems for the Poisson equation on $\ball{r}(a)$ entails, for all $x\in\ball{r}(a)$,
\begin{align*}
\partial_{x_j}f(x,k)&=
2\lambda(k)\int_{\ball{r}(a)} \partial_{x_j}G(x,y)f(y,k)\Id y+\int_{\partial \ball{r}(a)} \partial_{x_j}P(x,y)f(y,k)\Id\Sigma(y).
\end{align*}
Here we also used 
the fact that one partial derivative $\partial_{x_j}$ can be computed under the integral sign in the integral
involving $G$; see, e.g., \cite[Satz~4.7.1 or Satz~4.8.2]{WienholtzKalfKriecherbauer.2009}. The asserted bound now is an easy
consequence of the generalized Minkowski inequality for the norm on $L^2(\mc{K},\mu)$ and the bound
$|G(x,y)|\le c_{d,r}E(x-y)/|x-y|$ where $E$ is given by \cref{EFund}; see, e.g., \cite[Satz~4.6.2]{WienholtzKalfKriecherbauer.2009}.
\end{proof}

\begin{lem}\label{lembetaC2HP}
Let $\ell\in\NN$ and assume in addition to the hypotheses in \cref{sssec:coupling}
that the partial derivative $\partial_x^\alpha v(\cdot,k)$, computed for fixed $k\in\mc{K}$, 
defines a locally bounded function from $\Geb$ to $\HP$
for every multi-index $\alpha\in\NN_0^d$ of length $|\alpha|\le\ell-1$. Let $\sigma\in[2,\infty)$.
Then the maps  $\Geb\ni x\mapsto\beta_{\sigma,x}^\pm$ belong to $C^\ell(\Geb,\HP)$ and, for all
$\alpha\in\NN_0^d$ with $|\alpha|\le\ell$ and $x\in\Geb$, the $\HP$-valued
partial derivative $\partial_x^\alpha\beta_{\sigma,x}^\pm$ is given by the expressions
$\partial_x^\alpha\beta_\sigma^\pm(x,k)$ computed for fixed $k\in\mc{K}$.
\end{lem}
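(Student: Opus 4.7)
Writing $f_\alpha(x,k) \coloneq \partial_x^\alpha\beta_\sigma^\pm(x,k)$ for the pointwise derivative in $k$, my plan is to verify, for every $\alpha\in\NN_0^d$ with $|\alpha|\le\ell$: (a) $f_\alpha(x,\cdot)\in\HP$ with $\|f_\alpha(x,\cdot)\|_\HP$ locally bounded in $x\in\Geb$; (b) $x\mapsto f_\alpha(x,\cdot)$ is continuous from $\Geb$ into $\HP$; and (c) for $|\alpha|\le\ell-1$, the pointwise $f_{\alpha+e_j}(x,\cdot)$ equals the $\HP$-valued partial derivative $\partial_{x_j}(x\mapsto f_\alpha(x,\cdot))$. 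Together these will give $\beta_\sigma^\pm\in C^\ell(\Geb,\HP)$ with the advertised identification of derivatives. The decisive structural point is that $f_\alpha(x,k)=\psi_\sigma^\pm(k)\,\partial_x^\alpha v(x,k)$ with the $x$-independent factor $\psi_\sigma^\pm(k)\coloneq\chi_{\{\lambda\ge\sigma\}}(k)/(\lambda(k)\mp 1)$ satisfying $\psi_\sigma^\pm\le 1$ and, thanks to $\sigma\ge 2$, also $(1+\lambda)\psi_\sigma^\pm\le C_\sigma$ on $\{\lambda\ge\sigma\}$ for some $C_\sigma\ge 1$. Consequently $f_\alpha(\cdot,k)\in C^\infty(\Geb)$ and $-\Delta_x f_\alpha(x,k)=2\lambda(k)f_\alpha(x,k)$, so \cref{lem:green1} is applicable whenever $(1+\lambda)f_\alpha(x,\cdot)$ lies in $\HP$ with locally bounded norm.

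For (a) in the range $|\alpha|\le\ell-1$ I would use $\|f_\alpha(x,\cdot)\|_\HP\le\|\partial_x^\alpha v(x,\cdot)\|_\HP$ together with the hypothesis of the lemma. For $|\alpha|=\ell$, pick $j$ with $\alpha_j\ge 1$; then $\|(1+\lambda)f_{\alpha-e_j}(x,\cdot)\|_\HP\le C_\sigma\|\partial_x^{\alpha-e_j}v(x,\cdot)\|_\HP$ makes \cref{lem:green1} applicable to $f_{\alpha-e_j}$ and yields $f_\alpha(x,\cdot)=\partial_{x_j}f_{\alpha-e_j}(x,\cdot)\in\HP$ with locally bounded $\HP$-norm. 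Granting (b), part (c) is then just an application of Cauchy--Schwarz in $s$ and Fubini in $(s,k)$ to the pointwise identity $f_\alpha(x+he_j,k)-f_\alpha(x,k)-hf_{\alpha+e_j}(x,k)=\int_0^h(f_{\alpha+e_j}(x+se_j,k)-f_{\alpha+e_j}(x,k))\,\Id s$, giving
\[
\bigl\|\tfrac{1}{h}(f_\alpha(x+he_j,\cdot)-f_\alpha(x,\cdot))-f_{\alpha+e_j}(x,\cdot)\bigr\|_\HP^2\le\tfrac{1}{h}\int_0^h\|f_{\alpha+e_j}(x+se_j,\cdot)-f_{\alpha+e_j}(x,\cdot)\|_\HP^2\,\Id s,
\]
which vanishes as $h\to 0$ by the $\HP$-continuity from (b).

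The step I expect to be the main obstacle is (b). When $|\alpha|\le\ell-1$ it is routine: a pointwise-in-$k$ fundamental theorem of calculus followed by generalized Minkowski in $k$ gives $\|f_\alpha(x,\cdot)-f_\alpha(x_0,\cdot)\|_\HP\le|x-x_0|\sup_{y\in[x_0,x]}\|\nabla_yf_\alpha(y,\cdot)\|_\HP$, locally bounded by (a) applied at order $|\alpha|+1\le\ell$. The hard case is $|\alpha|=\ell$, where no such a priori bound on $f_{\alpha+e_j}$ is available. Here the plan is to revisit the proof of \cref{lem:green1}: its Poisson/Green solution formula promotes, for $j$ with $\alpha_j\ge 1$, to the $\HP$-valued Bochner identity
\[
f_\alpha(x,\cdot)=2\int_{\ball{r}(a)}\partial_{x_j}G(x,y)\,\lambda f_{\alpha-e_j}(y,\cdot)\,\Id y+\int_{\partial\ball{r}(a)}\partial_{x_j}P(x,y)\,f_{\alpha-e_j}(y,\cdot)\,\Id\Sigma(y),
\]
valid on a small ball $\ball{\rho}(a)$ with $\overline{\ball{r}(a)}\subset\Geb$. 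Bochner integrability is secured by (a) applied at $|\alpha-e_j|=\ell-1$, by $\lambda/(\lambda\mp 1)\le C_\sigma$ on $\{\lambda\ge\sigma\}$, and by the classical local $L^1$-bound $|\partial_{x_j}G(x,y)|\le c|x-y|^{1-d}$ together with the smoothness of $\partial_{x_j}P(x,y)$ away from the diagonal. Continuity in $x$ then follows by a standard split-and-dominate argument: near the diagonal one exploits the translation-invariant local integrability of $|y|^{1-d}$, while away from the diagonal dominated convergence applies.
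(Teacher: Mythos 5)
Your proposal is correct, and the key technical ingredient is the same as in the paper's proof: the Green's function/Poisson kernel representation of $\partial_{x_j}$ applied to functions satisfying $-\Delta_x f = 2\lambda f$, combined with the generalized Minkowski inequality and local bounds on $\|\partial_x^\alpha v(x,\cdot)\|_\HP$. The difference is purely organizational, and mildly in your favour. The paper establishes everything in one pass by showing that the difference quotients $\tfrac{1}{t}(\partial_x^\alpha\beta(x+te_j,\cdot)-\partial_x^\alpha\beta(x,\cdot))$ converge to $\partial_{x_j}\partial_x^\alpha\beta(x,\cdot)$ in $\HP$ \emph{uniformly in $x$} on compact balls: it inserts the Green's/Poisson representation at every level $|\alpha|\le\ell-1$, bounds the deviation by $\mc{I}_G+\mc{I}_P$, and controls these by the uniform oscillation quantities $\delta_G(t),\delta_P(t)\to0$; the recursion in $|\alpha|$ then delivers $C^\ell$ because uniform convergence of difference quotients implies continuity of both the function and its limit derivative. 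Your (a)/(b)/(c) decomposition decouples these: continuity at orders $\le\ell-1$ follows from the much simpler pointwise FTC plus Minkowski plus local boundedness of the next-order pointwise derivative (item (a) at order $\le\ell$), the Green's/Poisson formula is invoked only at the single top order $|\alpha|=\ell$ for continuity, and the derivative identification (c) then follows from (b) by Jensen's inequality in $s$. This buys you a cleaner argument at the intermediate levels since no boundary-value representation is needed there; the cost is that you prove only pointwise (rather than locally uniform) convergence of the difference quotients, which is nonetheless exactly what is needed for $C^\ell(\Geb,\HP)$ once (b) supplies continuity of all orders up to $\ell$.
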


\begin{proof}
We drop the superscript $\pm$ since both $\beta_\sigma^+$ and $\beta_\sigma^-$ can be treated in the same way.
We also drop the subscript $\sigma$. Then $\beta(\cdot,k)$ is smooth on $\Geb$ for every $k\in\mc{K}$.

Let $\ell\in\NN$, $a\in\Geb$ and pick some $r>0$ such that $\ol{\ball{r}(a)}\subset\Geb$.
Further, let $\alpha\in\NN_0^d$ be some multi-index, $j\in\{1,\ldots,d\}$ and $e_j$ be the
$j$'th canonical unit vector in $\RR^d$.
For all $k\in\mc{K}$, $x\in\ol{\ball{r/4}}(a)$ and $t\in[-r/4,r/4]\setminus\{0\}$, we write
\begin{align*}
\frac{1}{t}(\partial_x^\alpha\beta(x+te_j,k)-\partial_x^\alpha\beta(x,k))-\partial_{x_j}\partial_x^\alpha\beta(x,k)&
\\
=\int_0^1\big(\partial_{x_j}\partial_x^\alpha\beta(x+ste_j,k)-\partial_{x_j}\partial_x^\alpha\beta(x,k)\big)\Id s&.
\end{align*}
For $|\alpha|\le\ell-1$, we wish to show that the term in the second line
goes to zero in $\HP$ as a function of $k$, uniformly in $x\in\ol{\ball{r/4}}(a)$.

To this end we again employ the notation $G$, $P$ and $\Sigma$ introduced in the proof of \cref{lem:green1}.
Then we can apply the formula displayed in that proof with $f=\partial_x^\alpha\beta$. 
Thus, by Minkowski's inequality,
\begin{align*}
\bigg(\int_{\mc{K}}\bigg|\int_0^1\big(\partial_{x_j}\partial_x^\alpha\beta(x+ste_j,k)
-\partial_{x_j}\partial_x^\alpha\beta(x,k)\big)\Id s\bigg|^2\Id\mu(k)\bigg)^{\frac{1}{2}}
\le \mc{I}_G(x,t)+\mc{I}_P(x,t)&,
\end{align*}
for all $x\in\ol{\ball{r/4}}(a)$ and $0<|t|\le r/4$, with $\mc{I}_G(x,t),\mc{I}_P(x,t)\ge0$ given by
\begin{align*}
&\mc{I}_G(x,t)^2
\\
&\coloneq
\int_{\mc{K}}\bigg|
2\lambda(k)\int_0^t\int_{\ball{r}(a)} \big(\partial_{x_j}G(x+ste_j,y)-\partial_{x_j}G(x,y)\big)
\partial_y^\alpha\beta(y,k)\Id y\,\Id s\bigg|^2\Id\mu(k),
\end{align*}
and
\begin{align*}
&\mc{I}_P(x,t)^2
\\
&\coloneq\int_{\mc{K}}\bigg|\int_0^t\int_{\partial\ball{r}(a)} \big(\partial_{x_j}P(x+ste_j,y)-\partial_{x_j}P(x,y)\big)
\partial_y^\alpha\beta(y,k)\Id \Sigma(y)\,\Id s\bigg|^2\Id\mu(k).
\end{align*}
With $E$ given in \cref{EFund}, we set $h(x,y)\coloneq G(x,y)-E(x-y)$, 
\begin{align*}
\delta_f(t)&\coloneq
\sup_{x\in\ol{\ball{r/4}(a)}}\int_0^1\int_{\ball{r}(a)}\big|\partial_{x_j}f(x+ste_j,y)-\partial_{x_j}f(x,y)\big|\Id y\,\Id s,
\quad f\in\{G,E,h\}.
\end{align*}
Then $\delta_h(t)\to0$ as $t\to0$ since $\partial_{x_j}h$ is continuous on $\ball{r}(a)\times\ol{\ball{r}(a)}$;
see, e.g., \cite[Lemma~4.5.4]{WienholtzKalfKriecherbauer.2009}.
Moreover, it is straightforward to show that $\delta_E(t)\to0$ as $t\to0$. Thus, $\delta_G(t)\to0$ as $t\to0$.
Further, the generalized Minkowski inequality implies
\begin{align*}
&\mc{I}_G(x,t)
\\
&\le \int_0^1\int_{\ball{r}(a)}\bigg(\int_{\mc{K}}
\big|\partial_{x_j}G(x+ste_j,y)-\partial_{x_j}G(x,y)\big|^2|2\lambda(k)\partial_y^\alpha\beta(y,k)|^2
\Id\mu(k)\bigg)^{\frac{1}{2}}\Id y\,\Id s,
\end{align*}
for all $x\in\ol{\ball{r/4}}(a)$ and $0<|t|\le r/4$, so that
\begin{align*}
\sup_{x\in\ol{\ball{r/4}(a)}}\mc{I}_G(x,t)
&\le C\delta_G(t)\sup_{y\in\ol{\ball{r}(a)}}\|\partial_y^\alpha v(y,\cdot)\|_{\HP}\xrightarrow{\;\;t\to0\;\;}0.
\end{align*}
Here $C$ is a universal constant and we used that $\lambda-1\ge(1+\lambda)/4$ on $\{\lambda\ge\sigma\}$. 
Likewise, since $\partial_{x_j}P$ is uniformly continuous on $\ol{\ball{r/2}(a)}\times\partial\ball{r}(a)$,
\begin{align*}
\delta_P(t)&\coloneq
\sup_{x\in\ol{\ball{r/4}(a)}}\int_0^1\int_{\partial\ball{r}(a)}\big|\partial_{x_j}P(x+ste_j,y)-\partial_{x_j}P(x,y)\big|\Id\Sigma( y)\,\Id s
\xrightarrow{\;\;t\to0\;\;}0.
\end{align*}
Similarly as above the generalized Minkowski inequality entails
\begin{align*}
\sup_{x\in\ol{\ball{r/4}(a)}}\mc{I}_P(x,t)&\le \delta_P(t)\sup_{y\in\partial\ball{r}(a)}\|\partial_y^\alpha\beta(y,\cdot)\|_{\HP}
\xrightarrow{\;\;t\to0\;\;}0.
\end{align*}
We conclude by applying these remarks recursively with $|\alpha|=0$, $|\alpha|=1$ and so on up to $|\alpha|=\ell-1$.
\end{proof}

\begin{lem}\label{lembetaC1HP}
Let $\sigma\in[2,\infty)$.
Then the maps  $\Geb\ni x\mapsto\beta_{\sigma,x}^\pm$ belong to $C^1(\Geb,\HP)$ and, for all
$j\in\{1,\ldots,d\}$ and $x\in\Geb$, the $\HP$-valued partial derivative $\partial_{x_j}\beta_{\sigma,x}^\pm$ 
is given by the expressions $\partial_{x_j}\beta_\sigma^\pm(x,k)$ computed for fixed $k\in\mc{K}$.
\end{lem}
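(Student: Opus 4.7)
The plan is to apply \cref{lem:green1} iteratively to a cutoff of $v$ in the $\lambda$-variable. The main obstacle is that under the mere standing assumptions $x \mapsto v(x,\cdot)$ need not define a locally bounded $\HP$-valued map (as already in the polaron \cref{exFpolaron}), so neither \cref{lembetaC2HP} nor a direct application of \cref{lem:green1} to $v$ is available; the hypothesis $(1+\lambda)f \in \HP$ of \cref{lem:green1} must be manufactured via truncation, and the tail in $\lambda$ must be absorbed into the decay $L_E(v)\to 0$ from assumption (b) in \cref{sssec:coupling}.

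For each $E \ge \sigma$, set $w_E(x,k) \coloneq \chi_{\{\sigma \le \lambda(k) < E\}} v(x,k)$. Then $w_E(\cdot,k) \in C^\infty(\Geb)$ satisfies $-\tfrac12\Delta_x w_E = \lambda w_E$, and the cutoff yields $\sup_{x \in \Geb} \|(1+\lambda) w_E(x,\cdot)\|_\HP^2 \le (1+E)^4 L_1(v)^2$, so \cref{lem:green1} applies to $w_E$. The partial derivative $\partial_{x_j} w_E$ inherits all three hypotheses, so a second application of \cref{lem:green1} produces local uniform bounds on $\|\partial_{x_i}\partial_{x_j} w_E(x,\cdot)\|_\HP$. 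Combined with the pointwise fundamental theorem of calculus and the Cauchy--Schwarz inequality, this upgrades to the local Lipschitz estimate
\begin{align*}
	\|\partial_{x_j} w_E(y,\cdot) - \partial_{x_j} w_E(x,\cdot)\|_\HP \le C_{E,K}|y-x|, \qquad x,y \in K,
\end{align*}
for every compact $K \subset \Geb$.

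To deduce continuity of $\alpha^{\pm,j}_{\sigma,\cdot}$ in $\HP$, I split the integral over $\mc K$ into $\{\sigma \le \lambda < E\}$ and $\{\lambda \ge E\}$. On the first set $(\mp 1+\lambda)^{-2} \le (\sigma-1)^{-2}$, giving a contribution bounded by $(\sigma-1)^{-2} C_{E,K}^2 |y-x|^2$ via the Lipschitz estimate. On the second set $(\mp 1 + \lambda)^{-2} \le 16(E+\lambda)^{-2}$ (as $\lambda \ge E \ge 2$), and the definition of $L_E(v)$ yields a uniform tail bound $\le C L_E(v)^2$. Given $\varepsilon > 0$, choosing first $E$ with $CL_E(v)^2 < \varepsilon/2$ and then $|y-x|$ small enough proves $\alpha^{\pm,j}_{\sigma,\cdot} \in C(\Geb,\HP)$.

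Differentiability of $\beta^\pm_\sigma$ with the claimed pointwise gradient then follows from the pointwise identity
\begin{align*}
	\frac{\beta^\pm_{\sigma,x+te_j}(k) - \beta^\pm_{\sigma,x}(k)}{t} - \alpha^{\pm,j}_{\sigma,x}(k) = \frac{\chi_{\{\lambda \ge \sigma\}}(k)}{\mp 1 + \lambda(k)} \int_0^1 \bigl(\partial_{x_j} v(x+ste_j,k) - \partial_{x_j} v(x,k)\bigr)\Id s,
\end{align*}
which, after squaring, applying Cauchy--Schwarz in $s$, and integrating with respect to $\mu$ via Fubini, gives
\begin{align*}
	\bigl\|t^{-1}(\beta^\pm_{\sigma,x+te_j} - \beta^\pm_{\sigma,x}) - \alpha^{\pm,j}_{\sigma,x}\bigr\|_\HP^2 \le \int_0^1 \|\alpha^{\pm,j}_{\sigma,x+ste_j} - \alpha^{\pm,j}_{\sigma,x}\|_\HP^2\Id s.
\end{align*}
This vanishes as $t \to 0$ by the continuity established in the previous step (and is uniform in $s \in [0,1]$ since the displacement $s|t|$ is at most $|t|$). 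Hence $\beta^\pm_{\sigma,\cdot} \in C^1(\Geb,\HP)$ with continuous partial derivatives $\alpha^{\pm,j}_{\sigma,\cdot}$.
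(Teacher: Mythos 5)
Your proof is correct, and the core strategy coincides with the paper's: truncate in $\lambda$, use \cref{lem:green1} for local regularity of the truncated piece, and absorb the tail $\{\lambda\ge E\}$ via $L_E(v)\to 0$ from hypothesis (b). The difference is in how the regularity of the truncated piece is established. The paper applies \cref{lembetaC2HP} with $\ell=1$ to $v^n\coloneq\chi_{\{\lambda<n\}}v$ (local boundedness being immediate from the cutoff and $L_1(v)<\infty$) and then invokes the uniform-convergence criterion for $C^1$ maps, using the tail estimates for $\beta_\sigma^\pm$ and its gradient. You instead apply \cref{lem:green1} twice -- first to $w_E$, then to $\partial_{x_j}w_E$ -- to obtain local uniform bounds on the second $\HP$-valued derivatives, which upgrade to a local Lipschitz estimate via the fundamental theorem of calculus and Cauchy--Schwarz; continuity of the candidate gradient and differentiability then follow by direct estimates. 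The second application of \cref{lem:green1} is legitimate precisely because the $\lambda<E$ cutoff makes $(1+\lambda)\partial_{x_j}w_E(x,\cdot)$ square-integrable with a controlled bound, which neatly sidesteps the more involved difference-quotient argument used in the proof of \cref{lembetaC2HP} (where no a priori second-derivative bound is available). Both routes are valid; the paper's is shorter because it reuses \cref{lembetaC2HP}, while yours is more self-contained, only relying on the simpler \cref{lem:green1}. One small wrinkle: the Lipschitz estimate via FTC requires the segment $[x,y]$ to lie in $\Geb$, so the statement is really local (on balls); this is all that is needed, and a ``compact $K$'' formulation would need $K$ to be suitably connected.
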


\begin{proof}
For every $n\in\NN$, we set $v^n_x\coloneq\chi_{\{\lambda<n\}}v_x$, $x\in\Geb$.
Then $v^n$ satisfy the hypotheses in \cref{lembetaC2HP} with $\ell=1$, whence the functions
$\Geb\ni x\mapsto\chi_{\{\lambda<n\}}\beta_{\sigma,x}^\pm$ belong to $C^1(\Geb,\HP)$ and their first order partial
derivatives can be computed holding $k$ fixed. On the other hand, by (b) in \cref{sssec:coupling},
we have the uniform convergences $\sup_{x\in\Geb}\|\chi_{\{\lambda\ge n\}}\beta_{\sigma,x}^\pm\|_{\HP}\to0$ and
$\sup_{x\in\Geb}\|\chi_{\{\lambda\ge n\}}\nabla_x\beta_{\sigma}^\pm(x,\cdot)\|_{\HP^d}\to0$ as $n\to\infty$.
\end{proof}


\section{On Feynman's expression for the complex action}\label{app:action}

\noindent
As promised in \cref{rem:Feynmansaction}, we verify in the next example that our formula for the complex action 
agrees with Feynman's famous expression \cref{uFeyn} when multi-polarons on $\RR^3$ are considered.
In \cref{exuconfpolaron} we consider confined multi-polarons and find a direct analogue of \cref{uFeyn}.
As mentioned earlier, our definition \cref{def:actioninfty} allows for a treatment of general coupling functions $v$
and is helpful in derivations of exponential moment bounds and suitable convergence theorems.
\begin{example}\label{exuFeyn}
In the situation of \cref{exFpolaron}, where $d=3\nu$, let $\sigma\in[2,\infty)$ and $x\in\RR^{3\nu}$. Then
we $\PP$-a.s. find
\begin{align}\label{uFeyn}
u_{\sigma,t}(x)&=2g^2\sum_{j,\ell=1}^\nu \int_0^t\int_0^s\frac{\eul^{-(s-r)}}{4\pi|x_j+b_{j,r}-x_\ell-b_{\ell,s}|}\Id r\,\Id s,
\quad t\ge0,
\end{align}
where $b_t=(b_{1,t},\ldots,b_{\nu,t})$ with independent three-dimensional $(\fr{F}_t)_{t\ge0}$-Brownian motions
$(b_{j,t})_{t\ge0}$. Similarly $x=(x_1,\ldots,x_{\nu})$ with $x_j\in\RR^3$.
The right hand side of \cref{uFeyn} is Feynman's famous and well studied complex action \cite{Feynman.1955}.
\end{example}
\begin{proof}
In fact, for every $n\in\NN$ and recalling the notation from \cref{def:cutoffv}, a short computation in polar coordinates shows that
\begin{align}\nonumber
	u_{\reg,t}(\tilde{v}_{n};x)
	&=\sum_{j,\ell=1}^\nu \int_0^t\int_0^s\eul^{-(s-r)}\int_{|k|^2<2n}\eul^{\ii k\cdot(x_\ell+b_{\ell,s}
	-x_j-b_{j,r})}\frac{2g^2}{(2\pi)^3|k|^2}\Id k\,\Id r\,\Id s
	\\\label{uvsw}
	&=\frac{g^2}{\pi^2}\sum_{j,\ell=1}^\nu \int_0^t\int_0^s\eul^{-(s-r)}E_{n}(x_j+b_{j,r}-x_\ell-b_{\ell,s})\Id r\,\Id s,
\end{align}
where $E_{n}(0)\coloneq\sqrt{2n}$ and we encounter a well-known family of Riemann integrals:
\begin{align}\label{uFeyn2}
	E_{n}(y)&\coloneq \frac{1}{|y|}\int_0^{\sqrt{2n}|y|}\frac{\sin(\rho)}{\rho}\Id \rho
	\xrightarrow{\;\;n\to\infty\;\;}\frac{\pi}{2|y|},\quad y\in\RR^3\setminus\{0\}.
\end{align}
With the help of Tonelli's theorem it is easily seen that the expectation of the right hand side of \cref{uFeyn}
is finite, whence, $\PP$-a.s., the double integral on the right hand side of \cref{uFeyn} is finite for all $t\ge0$.
On the complement of some $\PP$-zero set, we can thus use dominated convergence and \cref{uFeyn2} to argue that
the expression in the second line of \cref{uvsw} converges to the right hand side of \cref{uFeyn} for all $t\ge0$.
On the other hand, \cref{lemuregu,lem:convuLp} imply the existence of integers
$1\le n_1<n_2<\ldots$ such that, $\PP$-a.s., $u_{\reg,t}(\tilde{v}_{n_j};x)\to u_{\sigma,t}(x)$ as $j\to\infty$
for all $t\ge0$.
\end{proof}
\begin{example}\label{exuconfpolaron}
In the situation of \cref{exFS}, choose $m=3$ and $\theta(t)=t^{-1}$ for $t\ge\lambda(1)$.
Let $G_{\mc{G}}$ denote the Green's function of $\mc{G}$, i.e., the integral kernel of $(-\Delta_{\mc G})^{-1}$. 
Pick $\sigma\in[2,\infty)$ and $x\in\mc{G}^{\nu}$.
Then we $\PP$-a.s. find, for all $t\ge0$, that
\begin{align}\label{uFeynG}
u_{\sigma,t}(x)&=2g^2
\sum_{j,\ell=1}^\nu \int_0^t\int_0^s\eul^{-(s-r)}G_{\mc{G}}(x_j+b_{j,r},x_\ell+b_{\ell,s})\Id r\,\Id s,
\end{align}
on $\{t<\tau_{\mc{G}^\nu}(x)\}$, where we use the same notation for $x$ and $b$ as in \cref{exuFeyn}.
\end{example}
\begin{proof}
	Similar to the previous proof, we find
	\begin{align*}
		u_{\reg,t}(\tilde{v}_{n};x)
		&=2g^2\sum_{j,\ell=1}^\nu \int_0^t\int_0^s\eul^{-(s-r)} G_n (x_j+b_{j,r},x_{\ell}+b_{\ell,s})\Id r\,\Id s,
	\end{align*}
	for all $n\in\NN$, where
	\begin{align*}
		G_n(y,z) \coloneq\sum_{\substack{k\in\NN:\\\lambda(k)< n}}
		\frac{\phi_k(y)\ol{\phi_k(z)}}{2\lambda(k)},\quad y,z\in\mc{G}.
	\end{align*}
	The Weyl law for the eigenvalues $2\lambda(k)$ of $-\Delta_{\mc G}$ shows that $(1/\lambda(k))_{k\in\NN}$
	is a sequence in $\ell^2(\NN)$. Since $(y,z)\mapsto\phi_k(y)\ol{\phi_m(z)}$ form an orthonormal basis of $L^2(\mc{G}^2)$, 
	we find that $G_n$ converges to $G_{\mc G}$ in $L^2(\mc{G}^2)$. Employing this observation as well as the independence
	of $b_j$ and $b_\ell$ for $j\not=\ell$ and the fact that every Brownian motion has independent increments, 
	it is straightforward to show
	by direct estimations that $\chi_{\{t<\tau_{\mc{G}^\nu}(x)\}}u_{\reg,t}(\tilde{v}_{n};x)$ converges in $L^1(\PP)$ to the right hand side 
	of \cref{uFeynG} multiplied by $\chi_{\{t<\tau_{\mc{G}^\nu}(x)\}}$. We can now conclude as in the proof of \cref{exuFeyn}.
\end{proof}

\subsection*{Acknowledgements} BH acknowledges support by the Ministry of Culture and Science of the State of North Rhine-Westphalia within the project `PhoQC'.
OM is grateful for support during the early phase of this project by the Independent Research Fund Denmark via the 
project grant ``Mathematical Aspects of Ultraviolet Renormalization'' (8021-00242B).

\bibliographystyle{halpha-abbrv}
\bibliography{../../Literature/00lit}
\end{document}